\pgfplotsset{compat=newest,compat/show suggested version=false}
\newcommand\sbullet[1][.5]{\mathbin{\vcenter{\hbox{\scalebox{#1}{$\bullet$}}}}}
\definecolor{myrefcolor}{rgb}{0.067,0.5,0.5}
\definecolor{myurlcolor}{rgb}{0.1,0,0.9}
\definecolor{deepblue}{rgb}{0.1, 0.1, 0.8} 
\definecolor{darkgreen}{rgb}{0.0, 0.4, 0.0} 
\definecolor{richmaroon}{rgb}{0.6, 0.0, 0.0} 
\definecolor{myrefcolor}{rgb}{0.067,0.5,0.5}
\newcommand{\pur}[0]{\mathrm{Pur}}
\newtheorem*{theorem*}{Theorem}
\newtheorem*{corollary*}{Corollary}
\newtheorem*{definition*}{Definition}
\newcounter{thm}
\newtheorem{theorem}[thm]{Theorem}
\newtheorem{lemma}[thm]{Lemma}
\newtheorem{proposition}[thm]{Proposition}
\newtheorem{definition}[thm]{Definition}
\newtheorem{problem}[thm]{Problem}
\newtheorem{open}{Open Question}
\newtheorem{example}{Example}
\newtheorem{corollary}[thm]{Corollary}
\theoremstyle{remark}
\newtheorem{remark}[thm]{Remark}
\DeclareMathOperator{\stab}{STAB}
\newcommand{\symftwo}{\mathrm{Sym}_0(\mathbb{F}_2^{m\times m})}
\DeclareMathOperator{\de}{d\!}
\DeclareMathOperator{\sym}{sym}
\DeclareMathOperator{\cl}{Cl}
\DeclareMathOperator{\com}{Com}
\DeclareMathOperator{\diag}{diag}
\DeclareMathOperator{\otoc}{OTOC}
\newcommand{\haar}[0]{\operatorname{Haar}}
\definecolor{airforceblue}{rgb}{0.36, 0.54, 0.66}
\newcommand{\be}{\begin{equation}\begin{aligned}\hspace{0pt}}
\newcommand{\bbb}{\begin{equation*}\begin{aligned}}
\newcommand{\ee}{\end{aligned}\end{equation}}
\newcommand{\eee}{\end{aligned}\end{equation*}}
\definecolor{burgundy}{rgb}{0.5, 0.0, 0.13}
\newcommand{\eqt}[1]{\stackrel{\mathclap{ \mbox{\scriptsize #1}}}{=}}
\newcommand{\leqt}[1]{\stackrel{\mathclap{ \mbox{\scriptsize #1}}}{\leq}}
\newcommand{\geqt}[1]{\stackrel{\mathclap{ \mbox{\scriptsize #1}}}{\geq}}
\newcommand{\even}{\mathrm{Even}(\mathbb{F}_{2}^{k\times m})}
\newcommand{\symf}{\symftwo}
\renewcommand{\subsubsection}{\@startsection{subsubsection}{3}{0pt}%
  {1.5ex plus 1ex minus .2ex}%
  {1ex plus .2ex}%
  {\centering\bfseries}}
\begin{document}

\title{A complete theory of the Clifford commutant}

 \author{Lennart Bittel}
 \affiliation{Dahlem Center for Complex Quantum Systems, Freie Universit\"at Berlin, 14195 Berlin, Germany}
 \author{Jens Eisert}
  \affiliation{Dahlem Center for Complex Quantum Systems, Freie Universit\"at Berlin, 14195 Berlin, Germany}
  \affiliation{Helmholtz-Zentrum Berlin für Materialien und Energie, Berlin, Germany}
 
 \author{Lorenzo Leone}
  \affiliation{Dahlem Center for Complex Quantum Systems, Freie Universit\"at Berlin, 14195 Berlin, Germany}

\author{Antonio A.~Mele}
  \affiliation{Dahlem Center for Complex Quantum Systems, Freie Universit\"at Berlin, 14195 Berlin, Germany}
  
\author{Salvatore F.~E.~Oliviero}
 \affiliation{Scuola Normale Superiore di Pisa, Piazza dei Cavalieri 7, IT-56126 Pisa, Italy}
   \affiliation{Dahlem Center for Complex Quantum Systems, Freie Universit\"at Berlin, 14195 Berlin, Germany}

\begin{abstract}
The Clifford group plays a central role in quantum information science. It is the building block for many error-correcting schemes and matches the first three moments of the Haar measure over the unitary group—a property that is essential for a broad range of quantum algorithms, with applications in pseudorandomness, learning theory, benchmarking, and entanglement distillation. At the heart of understanding many properties of the Clifford group lies the Clifford commutant: the set of operators that commute with $k$-fold tensor powers of Clifford unitaries.
Previous understanding of this commutant has been limited to relatively small values of $k$, constrained by the number of qubits $n$. In this work, we develop a complete theory of the Clifford commutant. Our first result provides an explicit orthogonal basis for the commutant and computes its dimension for arbitrary $n$ and $k$. We also introduce an alternative and easy-to-manipulate basis formed by isotropic sums of Pauli operators. We show that this basis is generated by products of permutations— which generate the unitary group commutant— and at most three other operators. Additionally, we develop a \emph{graphical calculus} allowing a diagrammatic manipulation of elements of this basis. These results enable a wealth of applications: among others, we characterize all \emph{measurable} magic measures and identify optimal strategies for stabilizer property testing, whose success probability also offers an operational interpretation to stabilizer entropies. Finally, we show that these results also generalize to multi-qudit systems with prime local dimension.
\end{abstract}

\maketitle

\section{Introduction}

The Clifford group is fundamental to quantum information science, with
wide-ranging applications in a wealth of both theoretical and practical domains. These include notions of quantum error
correction and fault tolerance~\cite{gottesman_stabilizer_1997}, quantum simulation~\cite{aaronson_improved_2004,nest2009classicalsimulationquantumcomputation,smith2023cliffordbasedcircuitcuttingquantum,begušić2023simulatingquantumcircuitexpectation},
randomized benchmarking~\cite{Magesan_2011,Eisert_2020}, as well as in the context of resource theories~\cite{veitch_resource_2014, leone_stabilizer_2022,leone_stabilizer_2022}. It consists of unitary transformations that map Pauli operators to Pauli operators (up to a phase) under conjugation. This structure enables efficient classical simulation of Clifford circuits~\cite{aaronson_improved_2004,gottesman_heisenberg_1998}, as one can efficiently track the Heisenberg evolution of Pauli operators through the circuit gate-by-gate, making them a standard benchmark for quantum devices.  

Beyond their efficient simulability, Clifford circuits exhibit a wealth of interesting mathematical properties. In particular, the uniform distribution over the multi-qubit Clifford group constitutes an exact unitary $ 3 $-design \cite{Gross_2007}, meaning it replicates the first three moments of the Haar measure over the unitary group. This property is central to quantum learning, such as classical shadows~\cite{Huang_2020,Elben_2022,PhysRevLett.133.020602} and randomized benchmarking~\cite{Magesan_2011,Eisert_2020}—as well as in readings of quantum error correction~\cite{gottesman_stabilizer_1997}, the resource theory of magic~\cite{veitch_resource_2014, leone_stabilizer_2022}, and entanglement distillation~\cite{Dehaene_2003,gu_2025}. Consequently, random Clifford operators are fundamental in the field of quantum information science.

A useful way to characterize the effects of applying random Clifford unitaries is through the Clifford average. From a technical point of view, the Clifford average of any polynomial function $ f(C) $ (or a function that can be approximated by a polynomial) reduces to computing the so-called \textit{twirling operator}. This operator corresponds to the uniform average of an operator $ O_f $, which encodes the function $ f $, over $ k $ tensor powers of the Clifford group $ \mathcal{C}_n $
\begin{align}
\Phi_{\cl}^{(k)}(O_f) \coloneqq \frac{1}{|\mathcal{C}_n|} \sum_{C\in\mathcal{C}_n} C^{\otimes k} O_f C^{\dag\otimes k} \,.\label{eq:twirlingintro}
\end{align}
While this definition provides a formal procedure to compute the Clifford average, direct evaluation becomes infeasible as both the tensor powers $ k $ and the number of qubits $ n $ grow. A way to circumvent this obstacle is through a key observation: the twirling operator $ \Phi_{\cl}^{(k)}(\cdot) $ commutes with the $ k $-fold tensor power of any Clifford operators $ C^{\otimes k} $. Consequently, its image belongs to the $ k $-th order commutant: the set of linear operators that commute with $ k $ tensor powers of any Clifford unitary.

The commutant is a vector space that is also closed under multiplication (i.e., it is an algebra), a fact that follows directly from its definition. Therefore, a basis of the commutant is sufficient to fully characterize any of its elements, including the image of the twirling operator. Consequently, given the widespread use of random Clifford circuits, understanding the $ k $-th order commutant of the Clifford group is of fundamental importance.

For the unitary group, the commutant is well understood and is spanned by permutation operators~\cite{collins_integration_2006,collins_moments_2003}. However, for the Clifford group, the analysis is significantly more involved, as its commutant contains substantially more elements. Major progress has been made in the seminal work of Gross, Nezami, and Walter~\cite{gross_schurweyl_2019}, which provided a characterization of the Clifford group commutant for $k \leq n+1$. In this work, we extend their results to arbitrary values of $n$ and $k$, a generalization that is particularly desirable given that higher-order commutants play a central role in applications such as deriving concentration bounds of the ``L\'evy's lemma'' type—where one typically requires $k \sim 2^n$—as well as in the construction of high-order unitary designs, which often demand $k \sim \operatorname{poly}(n)$. Notably, even for small values of $k$, practical computational techniques remain limited.

In particular, in this work, we develop a complete and practical theory of the Clifford group commutant, presenting three main results:  
\begin{enumerate}[label=(\roman*)]
    \item For any $ k  \in \mathbb{N}$, we identify a minimal generating set of the algebra, consisting of at most three additional elements beyond those required for the unitary group. Specifically, we show that the $k$-th commutant of the Clifford group is generated by products of permutations (which also span the $ k $-th commutant of 
    the unitary group), together with the elements
\begin{equation}
\sum_{P \in \mathbb{P}_n} P^{\otimes 4}, \quad 
\sum_{P \in \mathbb{P}_n} P^{\otimes 6}, \quad \text{and, if $ k $ is divisible by 4,} \quad 
\sum_{P \in \mathbb{P}_n} P^{\otimes k},
\end{equation}
where $ \mathbb{P}_n $ denotes the Pauli basis on $n$ qubits. This also demonstrates that uniform sums of Pauli operators raised to tensor powers $\sum_{P \in \mathbb{P}_n} P^{\otimes 2m}$ with $m\in \mathbb{N}$ suffice to generate the Clifford commutant.
As a particular case of this finding, one can recover the main result of Ref.~\cite{zhu_clifford_2016}, where it has been shown that $ \sum_{P \in \mathbb{P}_n} P^{\otimes 4} $ has been sufficient, together with permutations, to generate the $ (k = 4) $-th Clifford group commutant. Surprisingly, we show that at most two more operators are sufficient to generate the Clifford commutant for any $ k $.
    \item We explicitly construct an orthogonal basis from first principles, and explicitly compute the dimension of the commutant for any $n,k\in\mathbb{N}$.
    \item We introduce a \textit{graphical calculus}, a useful tool for {\em practically} analyzing and manipulating the operators in the commutant.  
    In particular, this method allows us to derive explicit and compact formulas for Clifford averages for $ k=8 $-th moments averages with direct applications, despite the basis containing nearly 10 million elements.
\end{enumerate}
We explore applications of our theoretical results. Specifically, in the context of magic-state resource theory, we show that any `measurable' magic monotone~\cite{veitch_resource_2014,Leone_2024} lies within the commutant of the Clifford group. Furthermore, we demonstrate that the characterization of the Clifford group commutant is essential for analyzing the property testing of stabilizer states—a well-studied problem in the quantum learning literature~\cite{gross_schurweyl_2019,arunachalam2024polynomialtimetoleranttestingstabilizer,grewalImprovedStabilizerEstimation2023}.

\section{Overview of the results}
In this section, we provide a high-level overview of the main contributions of this work, guiding the reader through the structure of this work.
We begin in~\cref{sub:theoryINTRO} by discussing the core contribution: a complete characterization of the Clifford group commutant, both from algebraic and vector space perspectives. Next, in~\cref{sub:graphINTRO}, we introduce a graphical calculus that plays a central role in our analysis, which is developed in detail later in the manuscript.
In~\cref{sub:appINTRO}, we present an overview of the applications of our framework, specifically in magic-state resource theory and the property testing of stabilizer states.
While our primary focus is the Clifford commutant for $n$ qubits, our results naturally extend to the case of $n$ qudits with local prime dimension, as outlined in~\cref{sec:quditcase}.
Finally, in~\cref{sec:priorwork}, we discuss related prior works.
This concludes the high-level summary of our results, which should help orient the reader before diving into the more detailed analysis in the following sections.
Before proceeding, we introduce some notation that will be useful throughout the work.

\vspace{1em}
\textbf{Notation.} We denote by $ \mathbb{F}_2^k $ the $ k $-dimensional vector space over the finite field $ \mathbb{F}_2 $. For a vector $ x \in \mathbb{F}_2^k $, $ |x| $ represents the Hamming weight of $ x $, i.e., the number of non-zero components. Let $ \mathcal{H} \simeq \left( \mathbb{C}^2 \right)^{\otimes n} $ be the Hilbert space of $ n $ qubits, and $ \mathcal{H}^{\otimes k} $ its $ k $-fold tensor power. $ \mathcal{B}(\mathcal{H}^{\otimes k}) $ denotes the set of operators on $ \mathcal{H}^{\otimes k} $, and $ \mathcal{U}(2^n) $ the unitary group acting on $ \mathcal{H} $. The Pauli basis is $ \mathbb{P}_n \coloneqq \{ I, X, Y, Z \}^{\otimes n} $. The Clifford group $ \mathcal{C}_n $ is the group of unitaries that map the Pauli basis to itself under conjugation up to phase, i.e., 
$\mathcal{C}_n \coloneqq \{ C \in \mathcal{U}(2^n) : C P C^\dagger \in \pm \mathbb{P}_n \text{ for all } P \in \mathbb{P}_n \}$.
The $ k $-th commutant of the Clifford group is defined as
\begin{align}
    \com(\mathcal{C}_n^{\otimes k}) \coloneqq \{ O \in \mathcal{B}(\mathcal{H}^{\otimes k}) : C^{\dagger \otimes k} O C^{\otimes k} = O \, \text{ for all } C \in \mathcal{C}_n \}.
    \label{eq:introcomm}
\end{align}
Throughout the manuscript, we often use $ d \coloneqq 2^n $ to denote the Hilbert space dimension of an $ n $-qubit system.

\subsection{Full theory of the Clifford group commutant: generators and orthogonal basis} 
\label{sub:theoryINTRO}
We now provide a comprehensive characterization of the Clifford commutant. Here, we first present a generating set for the commutant, demonstrating that every operator in the commutant can be written as a linear combination of products of elements from this set. Then, we construct an orthogonal basis for the $k$-th order commutant, valid for any $k \in \mathbb{N}$. 
It is important to note that this subsection serves as an overview of the main results. Later in the manuscript, specifically in \cref{sec:comclif,sec:Paulimon,sec:3gen}, we will present additional properties of the commutant and its basis elements, which are not summarized here for the sake of brevity.

\vspace{1em}
\textbf{Generating set for the Clifford group commutant.} Two key ideas underpin our results. First, the commutant of the Clifford group should naturally be describable in terms of Pauli operators, as they inherently define its symmetry—specifically, for any two Pauli operators $ P, Q \in \mathbb{P}_n $, we have $ C^{\dag} P C \propto Q $ for some Clifford unitary $ C $. Second, it is well known that applying a random Clifford unitary to a Pauli operator $P$ maps it uniformly at random to another non-identity Pauli operator. Consequently, to account for this behavior, the operators spanning the commutant of the Clifford group in \cref{eq:introcomm} must include isotropic sums over Pauli operators. To this end, we introduce various types of \textit{Pauli monomials}—uniform sums of Pauli operators that remain invariant under the Clifford group. 
\begin{definition*}[Primitive Pauli monomials. Informal version of \cref{def:primitivepaulimonomials}] Let $n,k\in\mathbb{N}$. Let $v\in\mathbb{F}_{2}^{k}$ be a vector satisfying $|v|=0\pmod2$. We define a primitive Pauli monomial as an operator on $\mathcal{B}(\mathcal{H}^{\otimes k})$ defined as
\begin{align}\label{def:primitivepaulimonomialsintro}
\Omega(v)
\coloneqq
\frac{1}{d}\sum_{P\in\mathbb{P}_n}P^{\otimes v},
\end{align}
where $P^{\otimes v}\coloneqq\bigotimes_{i=1}^{k}P^{ v_i}$.
\end{definition*}

For example, when the vector $ v $ satisfies $ |v| = 2 $, the corresponding primitive Pauli monomial reduces to a swap operator. Given a binary vector, denoted as $ v_{i,j} $, with non-zero $ i $-th and $ j $-th components, we define $T_{(i\,j)} \coloneqq \Omega(v_{i,j})$,  
where $ T_{(i\,j)} $ is the swap operator between the $ i $-th and $ j $-th tensor copy. Thus, primitive Pauli monomials generalize swap operators to vectors with higher Hamming weight.  

For the vector $ (1,1,1,1,0,\dots,0) $, the corresponding Pauli monomial coincides with the projector  
\begin{align}  
    \Omega(1,1,1,1,0,\dots,0) = \frac{1}{d} \sum_{P \in \mathbb{P}_n} P^{\otimes 4}\otimes \mathbb{1}^{\otimes k-4},  
\end{align}  
originally introduced in Ref.~\cite{zhu_clifford_2016}, during the early stages of understanding the Clifford group commutant. Specifically, such an operator, together with all permutations (which generate the commutant of the unitary group~\cite{collins_moments_2003,collins_integration_2006}), has been shown to be the sole element required to \emph{gracefully} generate the $ k=4 $-th commutant of the Clifford group.

Our first main result is that only two additional primitive Pauli monomials, besides $ \Omega(v_4) $ and permutation operators, suffice to \emph{gracefully} generate the full commutant for any $ k \in \mathbb{N}$. This will be the focus of the following theorem.
\begin{theorem*}[Algebraic structure of the commutant. Informal version of \cref{th:algebraicstructurecommutantofclifford}]\label{th:1} Let $n,k\in\mathbb{N}$. Let $\Omega_l$ be the primitive Pauli monomial associated to $(\underbrace{1,\dots,1}_{\text{$l$ times}},0, \cdots,0)$. The following facts hold:
\begin{enumerate}[label=(\Alph*)]
\item The $k$-th order commutant of the Clifford group is generated, up to permutations, by $\Omega_4$, $\Omega_6$, and, if $k$ is divisible by $4$, also $\Omega_k$, referred to as \emph{fundamental primitive} Pauli monomials.

\item The above constitutes a minimal set of generators for every $k$ and $n$.

\item Any product of primitive Pauli monomials is equal to $d^{\alpha} \Omega$, where $\alpha \in \mathbb{N}$ and $\Omega$ is the product of at most $k + 1$ primitive Pauli monomials as defined in \cref{def:primitivepaulimonomialsintro}.
\end{enumerate}
\end{theorem*}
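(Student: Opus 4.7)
I would prove the three parts in the order (C), (A), (B), with (C) serving as the technical engine for the rest. The central tool is a multiplication rule for primitive Pauli monomials. Unfolding
$\Omega(v)\Omega(w) = d^{-2}\sum_{P,Q\in\mathbb{P}_n}\bigotimes_i P^{v_i}Q^{w_i}$,
changing variables to isolate the support-overlap of $v$ and $w$, and applying the identity $\sum_{P\in\mathbb{P}_n} P^{\otimes s} = d\,\Omega(s)$, the product should collapse into a power of $d$ times a small linear combination of primitive monomials whose exponent vectors lie in $\Span(v,w)$. Iterating this on a product $\Omega(v_1)\cdots\Omega(v_m)$ and pruning linearly dependent exponent vectors yields a canonical form with at most $\dim\Span(v_1,\ldots,v_m)+1 \leq k+1$ factors, establishing (C); the ``$+1$'' accounts for a scalar/identity monomial arising from normalization.

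\textbf{Part (A).} Conjugation by a permutation sends $\Omega(v)$ to $\Omega(\pi v)$, so $\Omega_\ell$ together with permutations yields $\Omega(v)$ for every $v$ of weight $\ell$. Combined with (C), the question reduces to a combinatorial one: which primitive Pauli monomials can be reached by iterated multiplication of $\Omega_4$, $\Omega_6$, and (if $4\mid k$) $\Omega_k$? I would analyze how weights combine under products---recall $|v+w|=|v|+|w|-2|v\cap w|$---and show that all even-weight vectors are reachable from $\Omega_4$ and $\Omega_6$ alone except, potentially, the all-ones vector of weight $k$ when $4\mid k$, in which case a parity-type obstruction forces $\Omega_k$ to be included. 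Coupled with the fact (established earlier in the paper) that the Clifford commutant is spanned by products of primitive Pauli monomials, this yields sufficiency.

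\textbf{Part (B).} Minimality is checked generator-by-generator. Removing $\Omega_4$ leaves only permutations, whose algebra is the unitary commutant---strictly smaller than the Clifford commutant for $k\geq 4$. To rule out $\Omega_6$ as redundant, I would exhibit a linear functional---for instance the Hilbert--Schmidt pairing with $\Omega_6$---that vanishes on all products of $\Omega_4$ and permutations but not on $\Omega_6$ itself. The analogous argument for $\Omega_k$ when $4\mid k$ reuses the obstruction identified in (A).

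\textbf{Main obstacle.} The subtlest step is the combinatorial claim at the heart of (A): pinpointing the precise algebraic invariant that is preserved by products of $\Omega_4$ and $\Omega_6$ but fails for $\Omega_k$ exactly when $4\mid k$, and checking that it is the only obstruction. This same invariant doubles as the separating linear functional needed in the corresponding piece of (B). Executing this analysis cleanly will require careful tracking of the lower-order corrections in the multiplication rule from (C), so the quality of the normal form in (C) essentially dictates how transparent the rest of the proof becomes.
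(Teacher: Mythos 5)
There is a genuine gap, and it sits at the very foundation of your plan: the multiplication rule you propose for part (C) is false. For linearly independent even-weight vectors $v,w$, the product $\Omega(v)\Omega(w)=\frac{1}{d^{2}}\sum_{P,Q\in\mathbb{P}_n}P^{\otimes v}Q^{\otimes w}$ does \emph{not} collapse into powers of $d$ times a linear combination of primitive monomials; it is a genuinely new order-two object (e.g.\ for $k=6$, $\Omega(1,1,1,1,0,0)\Omega(0,0,1,1,1,1)=\frac{1}{d^2}\sum_{P,Q}P\otimes P\otimes PQ\otimes PQ\otimes Q\otimes Q$), and such order-$m$ monomials are linearly independent of all lower-order ones whenever $k\le n+1$. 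The whole point of the paper's formalism is that the algebra generated by primitives is much larger than their span, so no change of variables makes the double sum factor. Consequently your canonical form with at most $\dim\Span(v_1,\dots,v_m)+1$ primitive factors is both unjustified and too strong: when you prune linear dependencies by column operations, phase matrices $M$ are generated (the paper's substitution rules), and undoing those phases to return to an honest product of primitives costs extra factors. The paper's actual bound is $\min(m(\Omega)+4,\,k+1)$, proved via the normal form $\Omega=\Omega_P\tilde\Omega\,\Omega_U$ together with the graphical move that converts two phased commuting even columns into four odd ones, and it exhibits an explicit $k=8$ monomial that cannot be written with only $m(\Omega)+2$ primitives. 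Your proposal never engages with this phase bookkeeping, which is exactly where the difficulty of (C) lives.

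The downstream parts inherit the problem. For (A), the paper does not use a weight-parity invariant: the correct mechanism is a diagrammatic identity showing that two copies of $\Omega_6$ raise the weight of any primitive by $4$ \emph{provided a free tensor register exists}, so the only exceptional primitive is the full-weight $\Omega_k$; when $k\equiv 0\pmod 4$ it is a projector commuting with everything and a conjugation/normal-form argument shows it cannot be a product of lower-weight primitives, whereas when $k\equiv 2\pmod 4$ it is unitary and \emph{is} so generated. A "parity of weight" obstruction cannot be the right invariant, since $\Omega_8$ for $k\ge 9$ is generated by $\Omega_4,\Omega_6$ and permutations despite $8\equiv 0\pmod 4$. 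For (B), your argument is incorrect as stated: removing $\Omega_4$ from the generating set leaves $\Omega_6$ (and possibly $\Omega_k$), not just permutations; and the Hilbert--Schmidt pairing with $\Omega_6$ does not vanish on the algebra generated by $\Omega_4$ and permutations (it is already nonzero on the identity), so it cannot serve as the separating functional--constructing one presupposes exactly the non-membership you are trying to prove. To repair the proposal you would need, at minimum, the paper's normal-form machinery (or an equivalent handle on the phase data) before any counting or generation argument can go through.
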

Thus, while the $k$-th order commutant of the unitary group—defined analogously to \cref{eq:introcomm}—can be generated by permutation operators (which themselves are generated by at most $k - 1$ swap operators), the mere addition of the operators corresponding to the primitive Pauli monomials $\Omega_4$, $\Omega_6$, and $\Omega_k$ suffices to generate the $k$-th order commutant of the Clifford group.

\vspace{1em}

\textbf{An orthogonal basis for the $k$-th .} 
To take a step further and construct an orthogonal basis, we observe that any Pauli operator $ P \in \mathbb{P}_n^{\otimes k} $ can be decomposed as $ P = \phi \prod_{i=1}^{m} P_{i}^{\otimes v_i} $, where $ \{P_1, \ldots, P_m\} $ is a set of algebraically independent Pauli operators, $ m \leq k $, $ \phi \in \{\pm1, \pm i\} $, and the vectors $V \coloneqq \{v_1, \ldots, v_m\} \in \mathbb{F}_2^{k \times m}$ are linearly independent. Given such a decomposition, we associate to $\{P_1,\ldots,P_m\}$ the (binary) adjacency matrix $G$ of their anticommutation graph, where vertices label the Pauli operators and an edge connects two vertices if the corresponding Pauli operators anticommute. The concept of anticommutation graph has also recently emerged as a crucial tool in other contexts of quantum information~\cite{anschuetz2024boundsgroundstateenergy, aguilar2024classificationpauliliealgebras, king2024triplyefficientshadowtomography, chen2024optimaltradeoffsestimatingpauli, bao2024toleranttestingstabilizerstates}.

Hence, each decomposition of $P$ determines a pair $(V,G)$. Crucially, different choices of algebraically independent generating sets for the same Pauli operator are related by the action of a matrix (gauge) $A \in \operatorname{GL}(\mathbb{F}_2^{m \times m})$, acting as $(V, G) \mapsto (V A, A^{-1} G A^{-T})$, without affecting the reconstructed operator $P \in \mathbb{P}_n^{\otimes k}$. Here $A^{-T}$ denotes the inverse transpose. We therefore define the equivalence class
\[
[V, G] \coloneqq \left\{ (V', G') \,\middle|\, \exists A \in \operatorname{GL}(\mathbb{F}_2^{m \times m}) \text{ such that } V' = V A,\ G' = A^{-1} G A^{-T} \right\},
\]
which is uniquely associated with $P$. In particular, we define the map $\mathcal{M}: P \mapsto \mathcal{M}(P) = [V,G]$. Together with a choice of algebraically independent generators $\{P_1,\ldots,P_m\}$, the equivalence class $[V,G]$ uniquely specifies $P$. We are now ready to introduce a second type of \textit{Pauli monomials}, which will provide a complete basis for the $k$-th order commutant of the Clifford group for any $n$.

\begin{definition*}[Independent graph-based Pauli monomials] 
The independent graph-based Pauli monomials are defined as the isotropic sum
\begin{align}\label{def:indeppaulimonomialsINTRO}
\mho_{I}([V, G]) = \frac{1}{|S_{[V, G]}|} \sum_{\substack{P \in \mathbb{P}_n^{\otimes k} \\ \mathcal{M}(P) = [V, G]}} \varphi(P)\, P \,,
\end{align}
where $S_{[V, G]} \coloneqq \{ P \in \mathbb{P}_n^{\otimes k} \,:\, \mathcal{M}(P) = [V, G] \}$, and $\varphi(P) \coloneqq \tr\big(P\, T_{(k\,k{-}1)} T_{(k{-}1\,k{-}2)} \cdots T_{(2\,1)} T_{(1\,k)} \big)/d$, with $T_{(i\,j)}$ denoting the swap operator between the $i$-th and $j$-th tensor copies.
\end{definition*}
It is not difficult to verify that $ \mho_{I}([V,G]) \in \com(\mathcal{C}_n^{\otimes k}) $ for any choice of $ V $ and $ G $.
We also show that $ \mho_{I}([V,G]) $ is non-zero if and only if $ \operatorname{rank}_2(G) \geq 2(m-n) $. Here, $\rank_2$ denotes the rank on $\mathbb F_2$.
Proving that these non-zero elements form an orthogonal basis of the commutant for all $ n $ and $ k $, as shown in \cref{sec:comclif}, leads to the following main result.

\begin{theorem*}[Basis of the commutant. Informal version of \cref{th:fullcommutantnk,cor:dimcom}]\label{th:B} 
For any $ n,k \in \mathbb{N}$, the following holds:
\begin{itemize}
    \item The set of non-zero operators $ \mho_{I}([V,G]) $ in \cref{def:indeppaulimonomialsINTRO} forms an orthogonal basis for $ \com(\mathcal{C}_n^{\otimes k}) $.
    \item The dimension of the commutant is given by 
    \begin{align}
\dim(\com(\mathcal{C}_n^{\otimes k}))&\simeq\begin{cases}
        2^{\frac{k^2-3k}{2}}& 2n\geq k-1\\
        2^{2kn-2n^2-3n}& 2n< k-1
        \end{cases},
\end{align} 
where $\simeq$ means up to constant factors. We refer to \cref{eq:dimcommutant} in \cref{cor:dimcom} for the complete expression.
\end{itemize}

\end{theorem*}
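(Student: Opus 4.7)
The plan is to establish the basis property in four stages. First I verify that $\mho_{I}([V,G])$ does not depend on the representative $(V,G)$: changing the algebraically independent generators $\{P_1,\dots,P_m\}$ amounts to right-multiplying $V$ by $A \in \mathrm{GL}_m(\mathbb{F}_2)$ and conjugating $G$ by $A^{-T}$, which permutes the summand Paulis in the fiber $S_{[V,G]}$ while preserving both the set and the phase factor $\varphi(P)$. For Clifford invariance, I use $C^{\otimes k} P C^{\dagger\otimes k} = \pm P'$ where $P'$ has the same equivalence class $\mathcal{M}(P') = \mathcal{M}(P)$, since conjugation by $C$ preserves anticommutation relations and acts tensor-copy-wise. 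The crucial point is that $\varphi(P)$, being the trace of $P$ against the $k$-cycle $T_{(k\,k{-}1)}\cdots T_{(2\,1)}T_{(1\,k)}$, reduces up to $d$ to a signed cyclic product of the per-copy Paulis; this sign transforms under Clifford conjugation exactly oppositely to the sign from $P \mapsto \pm P'$, so the signed sum over the orbit is genuinely invariant.

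\textbf{Spanning.} For any $O \in \com(\mathcal{C}_n^{\otimes k})$, I expand $O = \sum_P c_P P$ in the Pauli basis and apply the Clifford twirl $\Phi^{(k)}_{\cl}$, which fixes $O$. The symplectic representation of the Clifford group on $\mathbb{P}_n/Z \simeq \mathbb{F}_2^{2n}$, together with its transitivity on tuples of Paulis with a prescribed anticommutation structure, implies that two tensor Paulis lie in the same orbit (modulo the sign absorbed into $\varphi$) iff they share the equivalence class $[V,G]$. Hence each $\Phi^{(k)}_{\cl}(P)$ is a scalar multiple of $\mho_{I}(\mathcal{M}(P))$, so the $\mho_{I}$'s span the commutant. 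Orthogonality under the Hilbert--Schmidt inner product is automatic, since distinct equivalence classes give disjoint Pauli support.

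\textbf{Non-vanishing criterion---the main obstacle.} The characterization $\mho_{I}([V,G]) \neq 0 \iff \operatorname{rank}_2(G) \geq 2(m-n)$ splits into two subclaims: (i) the fiber $S_{[V,G]}$ is non-empty iff $m$ algebraically independent Paulis on $n$ qubits realizing the anticommutation graph $G$ exist, which via the symplectic form on $\mathbb{F}_2^{2n}$ is exactly the rank condition; and (ii) whenever the fiber is non-empty, the weighted sum $\sum_{P} \varphi(P) P$ does not collapse to zero. I expect (ii) to be the delicate step: one must show that at least one Pauli in the fiber has a non-vanishing cyclic product of tensor factors, and this depends intricately on the interplay between $V$, $G$, and the specific $k$-cycle chosen in the definition of $\varphi$. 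The natural approach is to pick a canonical representative of $[V,G]$ (e.g.\ block-diagonalizing $G$ by congruence into a direct sum of elementary symplectic blocks and a degenerate part) and compute $\varphi$ explicitly on a representative Pauli of that form.

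\textbf{Dimension count.} Counting non-zero basis elements reduces to counting equivalence classes $[V,G]$ with $V$ of full column rank $m$ in $\mathbb{F}_2^{k\times m}$, $G \in \symftwo$, modulo $\mathrm{GL}_m(\mathbb{F}_2)$ acting as $(V,G) \mapsto (VA, A^{-1}GA^{-T})$, and subject to the rank constraint above. Summing over $m$ the Gaussian-binomial count of $m$-dimensional column spans in $\mathbb{F}_2^k$ times the number of $\mathrm{GL}_m(\mathbb{F}_2)$-orbits on admissible $G$'s---given by the classical enumeration of alternating bilinear forms over $\mathbb{F}_2$ by rank---yields the explicit closed form. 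The two asymptotic regimes correspond to whether $m \leq n$ throughout the sum (the constraint is vacuous, giving the combinatorial upper bound $2^{(k^2-3k)/2}$) or $m$ can exceed $n$ (the constraint is binding and forces $G$ to have rank at least $2(m-n)$, reducing the count to $2^{2kn-2n^2-3n}$).
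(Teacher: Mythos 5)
Your overall strategy is the same as the paper's proof of \cref{th:fullcommutantnk,cor:dimcom}: twirl Pauli operators, identify Clifford orbits with the classes $[V,G]$ (transitivity on tuples with prescribed anticommutation structure is \cref{lem:alg_transforms}), obtain spanning from the twirl image, orthogonality from disjoint Pauli support, realizability of $G$ on $n$ qubits from the symplectic rank condition, and a count by Gaussian binomials times the enumeration of alternating forms by rank. However, the step you single out as ``the delicate step'' is left unresolved, and it is in fact misdiagnosed: once $V\in\even$, \emph{every} $P$ in the fiber $S_{[V,G]}$ has $\varphi(P)\in\{\pm1,\pm i\}$, because the per-copy Paulis multiply to a phase times the identity (each generator $P_i$ appears $|v_i|$ times, which is even, so the total bit string vanishes); since distinct elements of $\mathbb{P}_n^{\otimes k}$ are linearly independent, $\sum_P\varphi(P)P$ vanishes if and only if the fiber is empty --- equivalently $\|\mho_I([V,G])\|_2^2=d^k/|S_{[V,G]}|$ as in \cref{eq:two_norm_of_mho}. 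No interplay with the choice of $k$-cycle needs to be controlled and no canonical-form computation of $\varphi$ is required. The case you do not address is the one that actually removes classes: if some column of $V$ has odd weight then $-P\in\operatorname{orb}(P)$, so $\Phi_{\cl}^{(k)}(P)=0$ and $\varphi$ vanishes identically on the fiber; only classes with $V\in\even$ survive.

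This omission propagates into your dimension count: the subspaces $\operatorname{span}(V)$ to be enumerated are the $m$-dimensional subspaces of $\mathbb{F}_2^k$ consisting of even-weight vectors, counted by $\binom{k-1}{m}_2$ (\cref{lem:gaussiancoeff}), not all $m$-dimensional column spans, which would give $\binom{k}{m}_2$ and an incorrect total. Your gloss of the two asymptotic regimes is also imprecise: the constraint $r\ge m-n$ is not ``vacuous iff $m\le n$ throughout the sum''; nonzero classes exist for all $m\le\min(k-1,2n)$, and the stated asymptotics follow from isolating the dominant term at $m_0=\min(k-1,2n)$, $r=\lfloor m_0/2\rfloor$, and bounding the exponentially decaying corrections, as done explicitly in \cref{cor:dimcom}.
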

This theorem provides a complete characterization of the commutant of the Clifford group, as well as explicitly computing its dimension, which has previously been known only for $ k \leq n+1 $ \cite{gross_schurweyl_2019}. 

We also provide new insights into the commutant basis introduced in Ref.~\cite{gross_schurweyl_2019}. In \cref{ssec:equi_basis}, we demonstrate that this basis exactly coincides with the one formed by products of primitive Pauli monomials, defined in \cref{def:primitivepaulimonomialsintro} and referred to as \textit{reduced Pauli monomials} (see \cref{def:paulimonomials}), for which we establish a range of important properties.

Furthermore, we establish the existence of an invertible Fourier-like transformation between independent graph-based Pauli monomials and reduced Pauli monomials (see \cref{Sec:invertiblemapspaulimonomials}). Although the set of reduced Pauli monomials remains linearly independent for $k \leq n+1$, linear dependencies emerge when $k > n+1$. Our work characterizes these dependencies, which are captured by the condition $\mho_{I}([V,G]) = 0$, holding if and only if $\operatorname{rank}_2(G) < 2(m - n)$. In particular, for $k \leq n+1$, this condition $\operatorname{rank}_2(G) \geq 2(m - n)$ is automatically satisfied, which naturally explains why the set of reduced Pauli monomials forms a linearly independent set in this regime. Additional properties of these operators are discussed in \cref{sec:comclif,sec:Paulimon,sec:commutantCliffordgen}, and we encourage the reader to consult these sections for further details.

\subsection{Graphical calculus}
\label{sub:graphINTRO}
The graphical calculus provides a powerful yet reader-friendly method for manipulating Pauli monomials, central objects in the commutant of the Clifford group. In general, a product of \textit{primitive} Pauli monomials takes the form (see \cref{def:paulimonomials})
\begin{align}\label{eq:paulimonomialsintro}
\Omega(V, M) \coloneqq \frac{1}{d^m} \sum_{P_1, \ldots, P_m \in \mathbb{P}_n}  
P_1^{\otimes v_1} P_2^{\otimes v_2} \cdots P_m^{\otimes v_m} \times \left( \prod_{\substack{i, j \in [m] \\ i < j}} \chi(P_i, P_j)^{M_{i,j}} \right),
\end{align}
where $ \Omega(V,M) $ is identified by the columns $ \{v_j\}_{j=1}^m $ of the binary matrix $ V \in \mathbb{F}_{2}^{k \times m} $ along with the matrix $ M \in \mathbb{F}_{2}^{m \times m} $, which encodes the phase information and is assumed to be symmetric with a zero diagonal. Here, $ \chi(P_i,P_j) = 1 $ if $ [P_i,P_j] = 0 $, and $ \chi(P_i,P_j) = -1 $ otherwise. We call the operators in \cref{eq:paulimonomialsintro} \textit{Pauli monomials} because they are isotropic sums over Pauli operators in $ \mathbb{P}_n $ and arise from products of \textit{primitive} Pauli monomials, as defined in \cref{def:primitivepaulimonomialsintro}. The set of Pauli monomials with $V$ being full rank form the set of reduced Pauli monomials and provide a basis for the commutant, provided that $ k \leq n+1 $ (see \cref{lem:linearindependency}). As discussed above, if instead $ k > n+1 $, linear dependencies arise. Given that Pauli monomials form a basis for the $k$-th commutant, we refer to general elements of the commutant as \textit{Pauli polynomials}.

We can represent the operator in \cref{eq:paulimonomialsintro} diagrammatically using the following rules:  
\begin{itemize}
    \item The columns $ \{v_j\}_{j=1}^m $ are represented sequentially, with each column represented as a vector of black and white dots, corresponding to entries 1 and 0, respectively (see the following example for clarity).  
    \item Each off-diagonal element $ M_{i,j} $ of the matrix $ M $ is represented by a line (denoted as a \emph{phase}) connecting the columns $ i $-th and $ j $-th if $ M_{i,j} = 1 $, with no line drawn if $ M_{i,j} = 0 $ (again, see the next example).
\end{itemize}
The power of this graphical representation lies in its ease of manipulation, which, at a mathematical level, corresponds to applying elementary transformations $ A \in \operatorname{GL}(\mathbb{F}_{2}^{m \times m}) $ that leave the overall operator $ \Omega(V,M) $ (see \cref{th:gaussOP}) unchanged. These transformations are straightforward in the graphical representation but become significantly more cumbersome when working directly with products of primitive Pauli monomials.

We now illustrate an example that demonstrates both a non-trivial manipulation of Pauli monomials and its graphical representation, within the $ k = 6 $-th commutant $ \com(\mathcal{C}_n^{\otimes 6}) $. The operator 
\begin{align}
\Omega\left(\begin{pmatrix}
        1 & 0  \\
        1 & 0 \\
        1 & 1 \\
        1 & 1 \\
        0 & 1 \\
        0 & 1 
    \end{pmatrix},
    \begin{pmatrix}
        0 & 1  \\
        1 & 0 
    \end{pmatrix}\right)=\frac{1}{d^2}\sum_{P_1,P_2}P^{\otimes 2}\otimes (PQ)^{\otimes 2}\otimes Q^{\otimes 2}\chi(P_1,P_2) \label{eq:intropaulimonom}
\end{align}
belongs to $\com(\mathcal{C}_n^{\otimes 6})$.
Using the simple rules outlined above, we can represent the operator in terms of the diagram
\setmonomialscale{3.5mm}  
\begin{equation}
    \Omega\left(\begin{pmatrix}
        1 & 0  \\
        1 & 0 \\
        1 & 1 \\
        1 & 1 \\
        0 & 1 \\
        0 & 1 
    \end{pmatrix},
    \begin{pmatrix}
        0 & 1  \\
        1 & 0 
    \end{pmatrix}\right) = \monomialdiagram{6}{{1,2,3,4},{3,4,5,6}}{0:1} \label{eq:diagramintro}
    \, .
\end{equation}
The advantage of this approach lies in a set of rules, proven analytically and developed in this manuscript, particularly in \cref{th:graphrules}, which facilitate the manipulation of Pauli monomials and demonstrate their equivalences. While hiding the non-trivial manipulation of the diagram in \cref{eq:diagramintro}, we present one such equivalence
below,

\setmonomialscale{3.5mm}  
\begin{equation}
     \monomialdiagram{6}{{1,2,3,4},{3,4,5,6}}{0:1} 
    \quad\eqt{\text{(i)}} \quad \monomialdiagram{6}{{1,2,3,4,5,6},{1,2},{3,4},{5,6}}{} \quad=\quad \Omega_{6}T_{(12)}T_{(34)}T_{(56)}\, ,\label{eq:equivalenceintro}
\end{equation}
where (i) encapsulates the non-trivial manipulation of Pauli monomials using graphical calculus. \cref{eq:equivalenceintro} shows that the Pauli monomial in \cref{eq:diagramintro} is, up to permutations, exactly equal to $ \Omega_6 $ defined above. Although finding such relations is non-trivial, the graphical calculus allowed us, among many other technical proofs, to characterize the commutant up to $ k = 8 $, up to left/right multiplication by permutations, as detailed in \cref{sec:cliffordcommutantexample}. Pauli monomials, up to left/right multiplication by permutations, are particularly relevant when dealing with tensor powers of pure states.

\subsection{Applications: Clifford-Weingarten Calculus, Magic-state resource theory and property testing}\label{sub:appINTRO}

\textbf{Magic-state resource theory.} As an application of this framework, we analyze the deep connection between magic-state resource theory and the commutant of the Clifford group.
Magic-state resource theory is based on a key dichotomy: while stabilizer operations can be implemented fault-tolerantly in a direct manner—through transversal gates in many settings~\cite{campbell_bound_2010}—the fault-tolerant implementation of non-stabilizer operations is considered significantly more challenging~\cite{PhysRevLett.102.110502}. Consequently, stabilizer operations are designated as the \textit{free operations} within this resource-theoretic framework, with stabilizer states identified as the corresponding \textit{free states}. In contrast, non-Clifford operations and non-stabilizer states are considered \textit{resources}. Additionally, the Gottesman--Knill theorem establishes that quantum circuits composed exclusively of Clifford gates can be efficiently simulated on a classical computer~\cite{aaronson_improved_2004}. However, the inclusion of non-Clifford components markedly increases the difficulty of classical simulation, further highlighting the resourceful nature of such elements in the context of quantum advantage.

A central challenge in any resource theory lies in identifying an appropriate resource monotone—a function that faithfully quantifies the resource content of a given state. Typically, such monotones are positive scalar functions. Although general-purpose resource monotones exist and can be applied across different resource theories~\cite{chitambar_quantum_2019}, they often suffer from computational or experimental intractability. For a resource theory to be useful in practice—especially in the realms of quantum many-body physics and quantum computation—it is essential to identify resource measures that are both experimentally measurable and computationally tractable.

In this context, \emph{stabilizer entropies}~\cite{leone_stabilizer_2022} have recently emerged as the only known magic monotones~\cite{Leone_2024} that are experimentally accessible in an efficient manner~\cite{haugEfficientQuantumAlgorithms2024,olivieroMeasuringMagicQuantum2022}. 
The set of primitive Pauli monomials is deeply connected to magic-state resource theory. To see this, it suffices to note that the stabilizer $ k $-R\'enyi entropies $ M_{k}(\psi) $ can be expressed in terms of $ 2k $-order primitive Pauli monomials (\cref{def:primitivepaulimonomialsintro}) as
\be
M_{k}(\psi) \coloneqq \frac{1}{1-k} \log \tr(\Omega_{2k} \psi^{\otimes 2k})\,.\label{eq:stabilizerentropiesintro}
\ee
In the manuscript, we further explore the connection between magic-state resource theory and the structure of the Clifford commutant.

First, it is worth noting that any (polynomial) function that quantifies magic should be invariant under Clifford operations. 
This invariance property implies that the operators characterizing such functions must lie in the commutant of the Clifford group for some order $ k $, as shown in \cref{lem:measurablefunctionbelongtothecommutant}.
Given this, it is natural to extend the definition of stabilizer purities - the argument of the logarithm in the definition of stabilizer entropy in \cref{eq:stabilizerentropiesintro}—to any Pauli monomial $ \Omega(V,M) $ arising from products of primitive Pauli monomials. 
We therefore introduce the \textit{generalized stabilizer purities} as
\begin{equation}
\Delta_{\Omega}(\psi) \coloneqq \Re\tr(\Omega \psi^{\otimes k})\,, \label{eq:generalizedstabilizerpurityintro}
\end{equation}
and demonstrate that $ \Delta_{\Omega}(\psi) $ satisfies the fundamental properties required of a magic measure: it is naturally Clifford-invariant, and $ \Delta_{\Omega}(\psi) = 1 $ if and only if $ \psi $ is a stabilizer state. 
Since any measurable magic monotone is fully characterized by an operator lying in the commutant of the Clifford group, it must be expressible as a linear combination of the generalized stabilizer purities introduced in \cref{eq:generalizedstabilizerpurityintro}.

We summarize our findings in the following informal theorem: all magic monotones (restricted to pure states) that admit an (unbiased) experimental measurement procedure are fully specified by operators belonging to the commutant of the Clifford group. Thus, all magic measures are ultimately expectation values of \emph{Pauli polynomials}.

\begin{theorem*}[Magic-state resource theory and the commutant. Informal summary of \cref{Sec:magicstateresourcetheory}] The following facts hold:
\begin{enumerate}[label=(\Alph*)]
    \item Any measurable magic monotone is fully specified by an operator belonging to the $k$-th order commutant of the Clifford group.
    \item Any measurable magic monotone is a linear combination of generalized stabilizer purities $\Delta_{\Omega}(\psi)\coloneqq\tr(\Omega\psi^{\otimes k})$ where $\Omega$ is a product of primitive Pauli monomials, defined in \cref{def:primitivepaulimonomialsintro}, obeying the following fundamental property $\Delta_{\Omega}(\psi)=1$ if and only if $\psi$ is a pure stabilizer state.
\end{enumerate}
\end{theorem*}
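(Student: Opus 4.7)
The plan is to combine Clifford invariance of magic monotones with the algebraic generating set produced by Theorem~1 above. I would first formalize the notion: a \emph{measurable} magic monotone on pure states is a real-valued function $\mu$ admitting an unbiased estimator, i.e.\ there exist $k\in\mathbb{N}$ and $O\in\mathcal{B}(\mathcal{H}^{\otimes k})$ with $\mu(\psi)=\tr(O\psi^{\otimes k})$ for every pure $\psi$. For statement (A), the monotone property implies $\mu(C\psi C^\dagger)=\mu(\psi)$ for all $C\in\mathcal{C}_n$, which via cyclicity of the trace becomes $\tr(C^{\dagger\otimes k}OC^{\otimes k}\psi^{\otimes k})=\tr(O\psi^{\otimes k})$ for all such $\psi$. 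Averaging $O$ over the Clifford group yields $\tilde O\coloneqq\Phi_{\cl}^{(k)}(O)\in\com(\mathcal{C}_n^{\otimes k})$, and since $\{\psi^{\otimes k}:\psi\text{ pure}\}$ spans the symmetric subspace, $\tilde O$ still reproduces $\mu$ as $\mu(\psi)=\tr(\tilde O\psi^{\otimes k})$. Hence $\mu$ is fully specified by a commutant element.

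For the first half of (B), I would invoke Theorem~1: the commutant is generated, as an algebra, by permutations together with the fundamental primitive Pauli monomials, and permutations themselves decompose into products of swap operators $T_{(ij)}=\Omega(v_{i,j})$ which are primitive Pauli monomials. Thus $\tilde O$ is a linear combination of Pauli monomials $\Omega(V,M)$ in the sense of \cref{eq:paulimonomialsintro}. Substituting into $\mu(\psi)=\tr(\tilde O\psi^{\otimes k})$ and taking real parts (harmless because $\mu$ is real) expresses $\mu$ as a real linear combination of generalized stabilizer purities $\Delta_\Omega(\psi)$.

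The remaining task is to establish the characterization $\Delta_\Omega(\psi)=1$ if and only if $\psi$ is a pure stabilizer state, and this is the main technical obstacle. Expanding
\begin{equation}
\Delta_\Omega(\psi)=\Re\frac{1}{d^m}\sum_{P_1,\ldots,P_m\in\mathbb{P}_n}\prod_{j=1}^{k}\tr\!\Bigl(\prod_{i=1}^{m}P_i^{V_{j,i}}\,\psi\Bigr)\prod_{i<j}\chi(P_i,P_j)^{M_{i,j}} ,
\end{equation}
I would first establish the bound $|\Delta_\Omega(\psi)|\leq1$ using $|\tr(Q\psi)|\leq1$ for every Pauli $Q$, together with Parseval's identity $\sum_{Q\in\mathbb{P}_n}\tr(Q\psi)^2=d$ for pure states, exploiting the fact that each column $v_j$ of $V$ has even Hamming weight at least $2$ (inherent to the primitive-monomial construction). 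Saturation of this bound forces $\tr(Q\psi)\in\{0,\pm1\}$ for every Pauli appearing with non-vanishing contribution, which is equivalent to $\psi$ being stabilized by a Pauli subgroup of size $d$, i.e.\ a pure stabilizer state. The hard part will be ruling out accidental saturation arising from the phase matrix $M$: in principle, the $\chi$-phases could conspire with non-saturating magnitudes to produce a real part equal to $1$. I expect this to be handled by first gauge-reducing $(V,G)$ to a canonical form via the equivalence relation discussed after \cref{def:indeppaulimonomialsINTRO}, and then verifying that on stabilizer states the $\chi$-phases align with the signs of $\tr(P\psi)$ so the sum genuinely saturates, while on non-stabilizer states any cancellation strictly decreases the real part.
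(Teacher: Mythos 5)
Your treatment of (A) and of the decomposition half of (B) follows essentially the paper's route: Clifford invariance of the monotone, projection of the estimator operator into $\com(\mathcal{C}_n^{\otimes k})$ by twirling (the paper does this in \cref{lem:measurablefunctionbelongtothecommutant} and \cref{th:measurablemagicmonotones}, deriving the operator from a dilation of the measurement strategy rather than postulating it, but the invariance argument is the same), and then the expansion in Pauli monomials, i.e.\ products of primitives, via the generating-set/normal-form results. Up to that point the proposal is fine.

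The genuine gap is the faithfulness property, $\Delta_\Omega(\psi)\le 1$ with saturation characterizing stabilizer states, which is the real technical content of part (B) and is proved in the paper as \cref{th:faithfulness}. Your plan is to expand $\Delta_\Omega$ into Pauli expectation values and bound it by the triangle inequality plus Parseval, then analyze saturation. This does not go through as stated: for non-unitary monomials the operator $\Omega/d^{m}$ is a projector, so $\|\Omega\|_\infty=d^{m}$ and no norm- or termwise bound of the naive kind is available; moreover, once the columns of $V$ overlap, the row factors are correlated objects like $\tr(P_iP_j\psi)$ with $\chi$-phases, and there is no clean Parseval extraction column by column. The paper handles exactly this obstruction by a different mechanism: it first reduces to normal form $\Omega=\Omega_P\Omega_U$, disposes of the unitary part by H\"older, and then, for the projective part, uses the pivot structure of $V$ together with a partial transpose on the pivot copy, which turns one projector primitive at a time into a unitary that can be removed by a $2$-norm (Cauchy--Schwarz) step, iterating until only a primitive stabilizer purity $\Delta_{|v_m|}(\psi)\le 1$ remains; the saturation statement then reduces to the known faithfulness of primitive stabilizer purities. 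You explicitly defer the corresponding step ("ruling out accidental saturation arising from the phase matrix $M$ \ldots I expect this to be handled by\ldots"), so the key inequality and the "only if" direction are asserted rather than proved. To repair the proposal you would either need to import an argument of the partial-transpose/iteration type, or show concretely how a canonical-form reduction makes your Parseval bound and its equality analysis rigorous for arbitrary $(V,M)$, including the phase factors and the overlapping-column cross terms.
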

The above result suggests that every measurable magic monotone can be related to stabilizer entropies~\cite{leone_stabilizer_2022,Leone_2024}, which constitute ``primitive'' stabilizer purities. One of the few measurable magic metrics introduced in the literature beyond stabilizer entropies is the so-called Bell magic $B(\psi)$~\cite{haugbellmagic} (defined in \cref{def:bellmagic}). 

Here, we provide a proof of this connection by establishing lower and upper bounds 
\begin{equation}
2M_{3}(\psi) \le B(\psi) \le 4M_{3}(\psi),
\end{equation}
on the Bell magic, as derived in \cref{th:stabentropyownsbellmagic} using the graphical calculus developed above. Moreover, we express $B(\psi)$ explicitly as the expectation value of a specific Pauli monomial evaluated on $\psi^{\otimes 8}$ (see~\cref{lem:bellmagicpaulimonomials}).

\vspace{1em}
\textbf{Property testing.} We apply our framework to analyze the commutant structure of the Clifford group; we derive meaningful insights into property testing for stabilizer states (\cref{sec:propertytesting}). This topic has recently attracted considerable attention in the quantum information community~\cite{montanaro2018surveyquantumpropertytesting,gross_schurweyl_2019,bu2023stabilizertestingmagicentropy,grewalImprovedStabilizerEstimation2023,arunachalam2024polynomialtimetoleranttestingstabilizer,bao2024toleranttestingstabilizerstates,chen2024stabilizerbootstrappingrecipeefficient,hinsche2024singlecopystabilizertesting}.
Property testing of stabilizer states aims to determine, with high probability, whether an unknown quantum state is close to or far from the stabilizer set (e.g., in trace distance), based on measurements performed on multiple copies of the state. Assuming the state is pure, the task reduces to deciding whether its stabilizer fidelity~\cite{veitch_resource_2014,chen2024stabilizerbootstrappingrecipeefficient} exceeds or falls below given thresholds. Mathematically, the goal is to construct a two-outcome POVM $ \{E, I - E\} $ that can reliably distinguish between these two cases.
A key observation from the literature~\cite{montanaro2018surveyquantumpropertytesting,gross_schurweyl_2019,hinsche2024singlecopystabilizertesting}, which follows naturally from symmetry arguments, is that the optimal stabilizer-testing POVM $ E $, acting on $ k $ copies of the unknown state, must lie within the $ k $-th order commutant of the Clifford group. This establishes the commutant as the central object to analyze when designing efficient testing protocols. 
Using our framework, it immediately follows that no stabilizer testing strategy can succeed when fewer than six copies of the unknown state are available, a fact previously noted in Refs.~\cite{Damanik2018Optimality,gross_schurweyl_2019}. We show how this conclusion arises directly from the fact that, for $ k \leq 5 $, the commutant of the Clifford group consists solely (up to permutations) of primitive Pauli monomials that are proportional to projectors, rather than unitaries.
While we establish the impossibility of testing with fewer than six copies (i.e., the success probability of any strategy, up to exponentially small corrections in the number of qubits, is $1/2$, equivalent to random guessing), we also show that when access to six copies is available, no strategy outperforms the POVM associated with $\Omega_6$. This POVM can be efficiently implemented using Bell sampling~\cite{haugEfficientQuantumAlgorithms2024} and is intimately related to the $(k=3)$-stabilizer entropy in \cref{eq:stabilizerentropiesintro}. This, in turn, provides an \emph{operational interpretation} to the $(k=3)$-stabilizer entropy. Namely, given $6$ copies of a state $\psi$, the optimal (and achievable) success probability for testing whether $\psi$ is a stabilizer state or not is given by
\begin{equation}
p_{\text{succ}}^{\text{optimal}} = \frac{1}{2} + \frac{1}{4} \qty(1 - 2^{-2M_3(\psi)}).
\end{equation}

We can also generalize this result to multiple copies. In \cref{prop:bestpovmisaunitary}, we prove
that for any fixed $ k = O(n^{2-\gamma})$ with any $\gamma>0$, the optimal stabilizer-testing POVM must be a linear combination of Pauli monomials corresponding to unitary operators. These findings establish a meaningful connection between stabilizer purities and stabilizer property testing, highlighting the relationship between these two areas under the umbrella of the Clifford group commutant.

\subsection{Generalization to qudits}\label{sec:quditcase}
All of the results discussed above can be generalized to the qudit setting. While the main focus of this work is on qubits, in \cref{sec:genquditsapp} we generalize our findings to systems of $n$ qudits of local dimension $q$. Crucially, we assume that $q$ is a prime number, which ensures that the generalized Pauli operators—also known as Weyl operators—share the same eigenvalue structure. This enables a meaningful generalization of the multi-qudit Clifford group, analogous to the qubit case. For simplicity, we adopt the same notation used in the qubit setting: $\mathbb{P}_n$ and $\mathcal{C}_n$ now denote the Pauli group and the Clifford group on $n$ qudits, respectively, and $d = q^n$ denotes the dimension of the corresponding Hilbert space.

Analogous to the qubit case, the independent graph-based Pauli monomials introduced in \cref{def:indeppaulimonomialsINTRO} form an orthogonal basis for the commutant for any number of qudits $n$ and tensor power $k$, provided that $\rank_q(G) \ge 2(m - n)$, where $\rank_q$ is the rank of $\mathbb{F}_q$ matrices in the finite field $\mathbb{F}_q$. Here, the matrices $V \in \mathbb{F}_q^{k \times m}$ and $G \in \mathbb{F}_q^{m \times m}$ are defined over the finite field $\mathbb{F}_q$, with $G$ encoding the commutation structure of the Weyl operators. From this construction, one can explicitly compute the dimension of the commutant, given by
\begin{equation}
\dim(\mathcal{C}_n^{\otimes k}) \simeq 
\begin{cases}
q^{\frac{k^2 - 3k}{2}} & \text{if } 2n \ge k - 1, \\
q^{2kn - 2n^2 - 3n} & \text{if } 2n < k - 1,
\end{cases}
\end{equation}
where the full expression is provided in \cref{eq:dimensioncommutantcliffordgroupqudit}.

Using a Fourier-like transform, we define the qudit analogue of Pauli monomials (introduced in \cref{eq:paulimonomialsintro}) as \emph{Weyl monomials}, which may also admit a graphical calculus for manipulation—though more involved than in the qubit case. As in the qubit setting, the Weyl monomial basis can be generated from products of primitive Weyl monomials, defined analogously to \cref{def:primitivepaulimonomialsintro}, which therefore provide generators for the algebra of the commutant.

Although we do not explore specific applications in the qudit regime, we expect that nearly all of our conclusions regarding magic-state resource theory and stabilizer property testing naturally extend to the qudit case. Notably, as observed in Ref.~\cite{Turkeshi_2025}, Weyl monomials allow one to define a family of qudit generalizations of stabilizer entropies, as introduced in \cref{eq:stabilizerentropiesintro}. In particular, when restricted to primitive Weyl monomials, these entropies have been shown to be monotones under magic-state-free operations, as discussed in Ref.~\cite{Turkeshi_2025}.

\subsection{Prior works}
\label{sec:priorwork}
In this section, we contextualize our results by discussing previous literature. The Clifford group commutant has garnered significant attention because, although Clifford operators form classically simulable quantum circuits, their randomness properties are remarkable. The uniform distribution over multiqudits Clifford unitaries is a unitary $2$-design for prime-dimensional qudits~\cite{divincenzo_quantum_2002, dankert_efficient_2005, dankert_exact_2009} and can be shown to be $3$-designs for qubits~\cite{zhu_multiqubit_2017, webb_clifford_2016}. While demonstrating that the Clifford group forms a unitary $3$-design does not necessarily require understanding its commutant for $k=3$, this understanding becomes fundamental for larger values of $k$.

As noted in Ref.~\cite{gross_schurweyl_2019}, early connections to the commutant of the Clifford group can already be found in the work of Nebe, Rains, and Sloane on self-dual codes and Clifford-Weil groups~\cite{nebe_2006}. Their work studies invariant spaces associated with tensor powers of the Clifford group in a broad algebraic setting. From the perspective of quantum information theory, the commutant of the Clifford group arises naturally within this framework, although explicit descriptions of its structure were not the primary focus of their work. Beyond $k=3$, one of the first significant results is presented in a series of works, see Refs.~\cite{zhu_clifford_2016, helsen_representations_2018}, which proved that the Clifford group is not a $4$-design but fails \textit{gracefully} to be a unitary $4$-design by requiring only an additional element to generate the commutant of the Clifford group for $k=4$. This result marked the beginning of new insights into the theory of nonstabilizerness or magic, indicating that $k=4$ copies are the minimal number sufficient for quantifying magic, leading to the inception of stabilizer entropies~\cite{leone_stabilizer_2022}. However, until then, the commutant of the Clifford group for higher values of $k$ remained unknown.

 Significant progress has been made in Ref.~\cite{gross_schurweyl_2019}. Building on key insights from Nebe et al.~\cite{nebe2000invariantscliffordgroups}, Gross, Nezami, and Walter exploited the fundamental observation that the Clifford commutant can be characterized by examining Clifford orbits — specifically, via the Clifford twirling introduced in \cref{eq:twirlingintro} applied to Pauli operators, an approach that had been tentatively explored in~\cite{Van_den_Nest_2005,zhu_clifford_2016}. This key observation greatly simplifies the characterization of the commutant. Using this strategy, the authors developed a basis — corresponding to our \emph{independent graph-based Pauli monomials} basis — to compute the dimension of the commutant for $k \leq n+1$. Additionally, they introduced an alternative basis for $k \leq n+1$ using the notion of Lagrangian subspaces, which coincides with the basis of the reduced Pauli monomials (see \cref{def:reducedpaulimonomials}), leading to a number of interesting applications.

Our work builds on all the insightful and fundamental results discussed above. We extend the results of Ref.~\cite{gross_schurweyl_2019} by characterizing the commutant for any $n$ and $k$, fully characterizing the \emph{independent graph-based Pauli monomials}, exploring some of their properties, and computing its dimension. We also present an alternative way of understanding the basis of the commutant initially introduced in Ref.~\cite{gross_schurweyl_2019} using Pauli monomials, providing fruitful insights, and showing how the two bases of the commutant can be transformed into each other via a Fourier-like transform. We conclude by demonstrating that the Clifford group continues to gracefully fail to be a unitary $k$-design for any $k$, requiring at most three additional generating elements compared to the full unitary group, generalizing the seminal findings of Ref.~\cite{zhu_clifford_2016}.

\subsection{Structure of this work} 
In the remainder of the manuscript, we present a self-contained and pedagogical exposition of the Clifford group commutant through our framework, covering all relevant details in a systematic manner.  

Specifically, we begin with \cref{sec:preliminaries}, which introduces essential notation and foundational concepts necessary for the proofs throughout this work.  
In \cref{sec:comclif}, we establish one of our main results: a simple construction of a basis for the $k$-th order commutant of the $n$-qubit Clifford group for arbitrary $n$ and $k$, using only the algebraic properties of Pauli operators. While conceptually straightforward—relying on independent graph-based Pauli monomials (\cref{def:indeppaulimonomials})—this basis may not be the most convenient for direct manipulation.  

To address this, \cref{sec:Paulimon} introduces the broader notion of \textit{Pauli monomials}, defined as various isotropic sums of Pauli operators. We develop a \textit{graphical calculus}, a diagrammatic framework for handling these operators, and demonstrate invertible transformations between different Pauli monomial bases. This 
formalism allows us to systematically explore alternative bases for the commutant.  

In \cref{sec:commutantCliffordgen}, we prove our second main result that provides a minimal generating set for the commutant of the Clifford group. This reveals the algebraic structure of the commutant through the lens of the semigroup of Pauli monomials.  
The generalization of our results from qubits to qudits is outlined in~\cref{sec:quditcase}.

\cref{sec:applications} is devoted to applications and examples, demonstrating the practical implications of our results.  
Specifically, in \cref{Sec:haaraveragecliffordbeyond}, we develop tools and techniques for computing Haar averages over the Clifford group—arguably the most natural motivation in quantum information for studying the Clifford group commutant. We introduce the so-called \textit{Clifford-Weingarten calculus}, inspired by the analogous Weingarten calculus used for Haar averages over the unitary group. Additionally, we provide asymptotic behaviors for large $n$ ($n \gg \sqrt{k}$), which significantly simplify the computation of the twirling operator in \cref{eq:twirlingintro} for relatively low values of $k$.  
In \cref{Sec:magicstateresourcetheory}, we explore the deep connection between the commutant of the Clifford group and magic-state resource theory, while in \cref{Sec:haaraveragecliffordbeyond}, we discuss its relation to stabilizer property testing.
Finally, in \cref{sec:cliffordcommutantexample}, we explicitly construct the commutant of the Clifford group up to $k=8$, leveraging the  framework of graphical calculus. In this context, we also demonstrate how to perform high-order Clifford averages through a series of practical calculations and examples. Furthermore, we provide a detailed \texttt{Mathematica} notebook containing Clifford-Weingarten matrices encoded up to $k=8$, enabling readers to experiment with these tools using their own examples~\cite{leone_mathematica_2023}.

\section{Preliminaries}\label{sec:preliminaries}

We use the notation $[n]$ to denote the set $[n]\coloneqq\{1,\dots,n\}$, where $n\in\mathbb{N}$.
The finite field $\mathbb{F}_2$ consists of the elements $\{0, 1\}$ with addition and multiplication defined modulo 2. The vector space $\mathbb{F}_2^k$ comprises all $k$-dimensional vectors over $\mathbb{F}_2$. 
The Hamming weight of a vector $x \in \mathbb{F}_2^k$, denoted by $|x|$, is defined as $|x| \coloneqq \sum_{i=1}^k x_i.$
The set of $ k \times m $ binary matrices over $ \mathbb{F}_2 $ is denoted by $ \mathbb{F}_2^{k \times m} $. The rank of a matrix is given by the number of linearly independent columns (or equivalently, row vectors). We denote the rank of a matrix $ A $ over $ \mathbb{F}_2 $ as $ \mathrm{rank}_2(A) $.

The set $ \symftwo $ denotes the set of all symmetric $ m \times m $ matrices over $ \mathbb{F}_2 $, having a null diagonal. That is,
\begin{equation}
\symftwo \coloneqq \left\{ M \in \mathbb{F}_2^{m \times m} : M^T = M\,,\, M_{j,j}=0\,\,\forall j\in[m] \right\}.
\end{equation}
Similarly, we define set $\even$ as the set of binary matrices $V\in\mathbb{F}_{2}^{k\times m}$ with $m\in[k]$ with column vectors $V_{\sbullet,i}$ of even Hamming weight
\begin{equation}
\mathrm{Even}(\mathbb{F}_{2}^{k\times m})\coloneqq\{V\in\mathbb{F}_{2}^{k\times m}\,:\, |V_{\sbullet,i}|=0\bmod 2\quad\forall i\in[m]\}.
\end{equation}

\begin{lemma}[Cardinality of $\symf$ ~\cite{Mathf2Dim}]\label{lem:cardinalitysymf} The cardinality of $\symf$ matrices of $\mathrm{rank}_2$ equal to $2r$ is given by 
    \begin{align}
        N_0(m,r)&=\prod_{i=1}^r \frac{2^{2i-2}}{2^{2i}-1}\times \prod_{i=0}^{2r-1}(2^{m-i}-1)\\
        \nonumber
        &=2^{2mr-r-2r^2}\prod_{i=1}^r \frac{(1-2^{-m+i-1})(1-2^{-m+r+i-1})}{1-2^{-2i}}\,.
    \end{align}
    The total number of $\symf$ matrices is given by $2^{m(m-1)/2}$.

\end{lemma}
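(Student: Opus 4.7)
The plan is to identify $\symf$ with the space of alternating bilinear forms on $\mathbb{F}_2^m$ and reduce the rank-$2r$ count to a standard orbit--stabilizer computation under the natural congruence action $M\mapsto AMA^T$ of $GL_m(\mathbb{F}_2)$. The total count $|\symf|=2^{m(m-1)/2}$ is then immediate, since an element of $\symf$ is specified by its $\binom{m}{2}$ independent strictly upper-triangular entries.

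For the rank-$2r$ count, I would first note that in characteristic $2$ the joint conditions ``symmetric'' and ``zero diagonal'' are equivalent to $\omega_M(x,y)\coloneqq x^T M y$ being alternating, i.e.\ $\omega_M(x,x)=0$ for all $x\in\mathbb{F}_2^m$. In particular, the rank of any $M\in\symf$ is automatically even, since an alternating form on an odd-dimensional space must be degenerate (standard Gram--Schmidt-type normal form argument on a complement of the radical). This explains why only even ranks $2r$ appear in the enumeration.

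The structural heart of the argument is a bijection between rank-$2r$ matrices $M\in\symf$ and pairs $(K,\bar\omega)$, where $K\subset\mathbb{F}_2^m$ is a subspace of dimension $m-2r$ and $\bar\omega$ is a non-degenerate alternating form on $\mathbb{F}_2^m/K\cong\mathbb{F}_2^{2r}$: in one direction $M\mapsto(\ker M,\bar\omega_M)$ with $\bar\omega_M$ the well-defined induced form, and in the other direction one pulls $\bar\omega$ back along the quotient, observing that non-degeneracy of $\bar\omega$ forces the kernel of the pullback to be exactly $K$. Therefore,
\begin{equation}
N_0(m,r)=\binom{m}{2r}_{\!2}\cdot\frac{|GL_{2r}(\mathbb{F}_2)|}{|Sp_{2r}(\mathbb{F}_2)|},
\end{equation}
where the Gaussian binomial counts the admissible kernels and the ratio $|GL_{2r}|/|Sp_{2r}|$ counts non-degenerate alternating forms on $\mathbb{F}_2^{2r}$ via orbit--stabilizer, using the classification-of-alternating-forms statement (Witt's theorem / symplectic normal form) that $GL_{2r}(\mathbb{F}_2)$ acts transitively on such forms with stabilizer the symplectic group.

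The remainder is purely algebraic. Substituting the standard formulas $|GL_m(\mathbb{F}_2)|=\prod_{i=0}^{m-1}(2^m-2^i)$, $|Sp_{2r}(\mathbb{F}_2)|=2^{r^2}\prod_{i=1}^r(2^{2i}-1)$, and $\binom{m}{2r}_2=\prod_{i=0}^{2r-1}(2^m-2^i)/\prod_{i=0}^{2r-1}(2^{2r}-2^i)$, the denominator of the Gaussian binomial cancels against $|GL_{2r}|$, leaving $N_0(m,r)=\prod_{i=0}^{2r-1}(2^m-2^i)/\bigl(2^{r^2}\prod_{i=1}^r(2^{2i}-1)\bigr)$. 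Factoring $2^m-2^i=2^i(2^{m-i}-1)$ and collecting the resulting exponent $\sum_{i=0}^{2r-1} i=r(2r-1)$ together with $2^{-r^2}$ gives $2^{r^2-r}=\prod_{i=1}^r 2^{2i-2}$, which produces the first stated expression; the second form follows by pulling a factor of $2^{m-i}$ (respectively $2^{m-r-i+1}$) out of each of the $2r$ factors and regrouping exponents. The main obstacle is thus bookkeeping in the algebraic simplification; the only nontrivial structural input is the transitivity of $GL_{2r}(\mathbb{F}_2)$ on non-degenerate alternating forms, which I would cite as the classification of alternating bilinear forms in characteristic $2$.
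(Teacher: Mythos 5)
Your proposal is correct. The paper itself gives no proof of this lemma—it simply cites~\cite{Mathf2Dim}—so there is nothing internal to compare against; your argument is the standard one for this count: identify $\symf$ with alternating forms on $\mathbb{F}_2^m$ (which also explains the even rank and the trivial count $2^{m(m-1)/2}$ of upper-triangular entries), decompose a rank-$2r$ form into its radical $K$ of dimension $m-2r$ plus a non-degenerate alternating form on the quotient, and count the latter by orbit--stabilizer as $|GL_{2r}(\mathbb{F}_2)|/|Sp_{2r}(\mathbb{F}_2)|$. I checked the bookkeeping: with $|GL_{2r}(\mathbb{F}_2)|=\prod_{i=0}^{2r-1}(2^{2r}-2^i)$ cancelling the denominator of $\binom{m}{2r}_2$ and $|Sp_{2r}(\mathbb{F}_2)|=2^{r^2}\prod_{i=1}^{r}(2^{2i}-1)$, the extracted power of two is $2^{r(2r-1)-r^2}=2^{r^2-r}=\prod_{i=1}^{r}2^{2i-2}$, reproducing the first displayed expression, and the second form follows by factoring $2^{m-i}$ and $2^{2i}$ out of each bracket, so the exponent $2mr-r-2r^2$ and the two residual products match as stated.
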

\begin{lemma}[Number of subspaces with even Hamming weight]\label{lem:gaussiancoeff}
The number of $ m $-dimensional subspaces $ V \subset \mathbb{F}_2^k $ such that all vectors in $ V $ have even Hamming weight, i.e., for all $ v \in V $, $ |v| = 0 \pmod{2} $, is given by the Gaussian 
binomial coefficient 
\begin{align}
        \binom{k-1}{m}_2&\coloneqq\prod_{j=1}^{m}\frac{2^{k-j}-1}{2^j-1}
        =2^{km-m^2-m}\prod_{j=1}^{m}\frac{1-2^{-k+j}}{1-2^{-j}}\,.
\end{align}
The maximal dimension is $m\leq k-1$.
\end{lemma}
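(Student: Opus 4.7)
The plan is to reduce the counting problem to counting $m$-dimensional subspaces inside a single ambient space of dimension $k-1$, after which the statement becomes the standard Gaussian binomial identity.

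First, I would observe that the parity map $\sigma : \mathbb{F}_2^k \to \mathbb{F}_2$, defined by $\sigma(v) \coloneqq \sum_{i=1}^k v_i$, is $\mathbb{F}_2$-linear. Its kernel $E \coloneqq \ker(\sigma)$ is therefore a linear subspace of $\mathbb{F}_2^k$, and it coincides precisely with the set of even-Hamming-weight vectors. Since $\sigma$ is a nonzero linear functional, $\dim E = k-1$, which immediately gives the upper bound $m \leq k-1$ stated in the lemma. The key reduction is then the equivalence: a subspace $V \subset \mathbb{F}_2^k$ has the property that every $v \in V$ has even weight if and only if $V \subseteq E$. The forward direction is the definition, and the reverse direction follows because $E$ is closed under addition, so every vector of a subspace contained in $E$ lies in $E$ and thus has even weight.

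Having made this reduction, the count of $m$-dimensional even-weight subspaces of $\mathbb{F}_2^k$ equals the count of $m$-dimensional subspaces of a fixed $(k-1)$-dimensional $\mathbb{F}_2$-space, which is the Gaussian binomial coefficient $\binom{k-1}{m}_2$. To derive the explicit product formula, I would count in two ways. The number of ordered $m$-tuples of linearly independent vectors in $\mathbb{F}_2^{k-1}$ is
\begin{equation*}
\prod_{j=0}^{m-1} (2^{k-1} - 2^{j}),
\end{equation*}
since the $(j+1)$-st vector can be any vector outside the $2^{j}$-element span of the previous ones. Similarly, the number of ordered bases of a fixed $m$-dimensional subspace is $\prod_{j=0}^{m-1}(2^{m}-2^{j})$. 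Dividing, and factoring $2^{j}$ out of each factor in numerator and denominator, yields
\begin{equation*}
\binom{k-1}{m}_2 = \prod_{j=0}^{m-1} \frac{2^{k-1} - 2^{j}}{2^{m} - 2^{j}} = \prod_{j=1}^{m} \frac{2^{k-j} - 1}{2^{j} - 1},
\end{equation*}
matching the stated expression. The second equality in the displayed identity in the lemma then follows by pulling a factor of $2^{k-j}$ out of each numerator and $2^{j}$ out of each denominator, which gives the overall prefactor $2^{km - m^2 - m}$ and the claimed ratio.

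Essentially no step is difficult here; the only thing that requires a moment of care is justifying that closure under addition lets one move from ``all generators have even weight'' to ``all vectors of the subspace have even weight,'' which is immediate once one writes the parity map as a linear functional. Hence the proof is really just the observation that even-weight vectors form a hyperplane, combined with the standard derivation of the Gaussian binomial coefficient.
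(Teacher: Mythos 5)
Your proof is correct and follows essentially the same route as the paper: identify the even-weight vectors as the kernel of the parity functional, hence a $(k-1)$-dimensional subspace, and reduce the count to the Gaussian binomial coefficient $\binom{k-1}{m}_2$. The only difference is that you derive the standard subspace-counting product formula explicitly (via ordered independent tuples and ordered bases) where the paper simply cites the operational definition of the Gaussian binomial coefficient, which makes your argument slightly more self-contained but not substantively different.
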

\begin{proof}
The dimension of the subspace of $ \mathbb{F}_2^k $ consisting of all vectors with even Hamming weight is $ k-1 $. Also, all even weight subspaces are themselves subspaces of this subspace. The proof is concluded by using the operational definition of the Gaussian binomial coefficient $ \binom{k-1}{m}_2 $ which counts the number of these subspaces (see Ref.~\cite{chebolu2021gaussianbinomialcoefficientsgroup}).
\end{proof}

Let us also set the notation for the general linear group over $\mathbb{F}_2^{m\times m}$
\begin{equation}
\mathrm{GL}( \mathbb{F}_2^{m\times m}) \coloneqq \left\{ A \in \mathbb{F}_2^{m \times m} \mid \text{$A$ is invertible} \right\}
\end{equation}
that consists of all the invertible matrices over $\mathbb{F}_2^{m\times m}$.

\subsection{Haar measure}  
We denote by $ \mathcal{H} $ the Hilbert space of $ n $ qubits, that is, $ \mathcal{H} \cong \left(\mathbb{C}^{2}\right)^{\otimes n} $, and consider $ k $ copies of it, $ \mathcal{H}^{\otimes k} $. Throughout this manuscript, we use $ d \coloneqq 2^n $ to represent the dimension of the Hilbert space for an $ n $-qubit system. The set of Hermitian operators on $ \mathcal{H}^{\otimes k} $ is denoted by $ \mathcal{B}(\mathcal{H}^{\otimes k}) $, and the unitary group on $ \mathcal{H} $ is denoted by $ \mathcal{U}(2^n) $.
Let $ \mathcal{G}_n \subseteq \mathcal{U}(2^n) $ be a continuous or discrete subgroup of the unitary group. The Haar measure on $ \mathcal{G}_n $ is defined as the unique probability measure that is both left- and right-invariant. Throughout this manuscript, we denote the Haar measure as $ \mu_{\mathrm{H}} $. 
Mathematically, the Haar measure satisfies the  invariance property~\cite{watrous_theory_2018,collins_moments_2003,collins_integration_2006,kliesch_theory_2021,Mele_2024}  
\begin{align}
    \int \de \mu_H(G) = 1, \quad \int \de \mu_H(G) \, f(G) = \int \de \mu_H(G) \, f(GG') = \int \de \mu_H(G) \, f(G'G),
    \label{eq:leftrightinvariance}
\end{align}
for all $ G' \in \mathcal{G}_n $ and any arbitrary function $ f $.  

For discrete groups, such as the Clifford group, the Haar measure reduces to a sum over delta functions, and integration simplifies to a normalized summation over the group's elements.  
With this formalism in place, we now introduce a central object of this work: the twirling super-operator.
\begin{definition}[Twirling (super-)operator over a group $\mathcal{G}_n$] 
Let $O \in \mathcal{B}(\mathcal{H}^{\otimes k})$, and $\mu_H$ be the Haar measure over $\mathcal{G}_n$. The twirling super-operator is defined as
\begin{equation}
\Phi_{\mathcal{G}_n}^{(k)}(O) \coloneqq \int \de \mu_H(G) \, G^{\otimes k} O G^{\dag\otimes k}\,.
\label{twirlingofoperatorOunitary}
\end{equation}
The super-operator $\Phi_{\mathcal{G}_n}(\cdot)$ is sometimes referred to as the \textit{$k$-fold channel} of the group $\mathcal{G}_n$ applied to the operator $O$, or simply the \textit{Haar average} of $O$ over $\mathcal{G}_n$.  
\end{definition}
A crucial tool in analyzing the twirling super-operator is the so-called commutant, which we now define.
\begin{definition}[$k$-th order commutant of a group $\mathcal{G}_n$]\label{def:commutantofagroup}
Let $ k \in \mathbb{N}$ and let $\mathcal{G}_n$ be a subgroup of the unitary group $\mathcal{U}_n$. The $k$-fold commutant of the group $\mathcal{G}_n$ is the set defined as
\begin{equation}\label{eq:comm}
\com(\mathcal{G}_n^{\otimes k}) \coloneqq \{ O \in \mathcal{B}(\mathcal{H}^{\otimes k}) \mid U_g^{\dagger \otimes k} O U_g^{\otimes k} = O, \, \forall U_g \in \mathcal{G}_n \}.
\end{equation}
\end{definition}

The commutant is a subspace of linear operators, and moreover, the product of two operators in the commutant remains in the commutant. Thus, the commutant is an \emph{algebra}.
The following lemma establishes the fundamental principles for computing the $k$-fold channel $\Phi_{\mathcal{G}_n}^{(k)}$, linking it to the commutant of the subgroup $\mathcal{G}_{n}^{\otimes k}$.
\begin{lemma}[The image of the twirling operator belongs to the commutant]\label{lem:twirlingbelongstothecommutant} 
Let $\mathcal{G}_n \subseteq \mathcal{U}(2^n)$ be any subgroup of the unitary group, and let $\Phi_{\mathcal{G}_n}^{(k)}(\cdot)$ be the twirling operator. For any operator $O \in \mathcal{B}(\mathcal{H}^{\otimes k})$, we have $\Phi_{\mathcal{G}_n}^{(k)}(O) \in \com(\mathcal{G}_{n}^{\otimes k})$.
\end{lemma}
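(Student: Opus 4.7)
The plan is to verify the defining commutation relation directly: I need to show that for every $U_g \in \mathcal{G}_n$,
\begin{equation*}
U_g^{\dagger\otimes k}\, \Phi_{\mathcal{G}_n}^{(k)}(O)\, U_g^{\otimes k} \;=\; \Phi_{\mathcal{G}_n}^{(k)}(O),
\end{equation*}
which by \cref{def:commutantofagroup} is exactly membership in $\com(\mathcal{G}_n^{\otimes k})$. The argument is a one-line manipulation using the invariance of the Haar measure stated in \cref{eq:leftrightinvariance}.

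Concretely, I would first expand the left-hand side using the definition of the twirling super-operator and pull the fixed conjugation by $U_g^{\otimes k}$ inside the integral by linearity, obtaining
\begin{equation*}
U_g^{\dagger\otimes k}\, \Phi_{\mathcal{G}_n}^{(k)}(O)\, U_g^{\otimes k} \;=\; \int \de\mu_H(G)\, (U_g^{\dagger}G)^{\otimes k}\, O\, (U_g^{\dagger}G)^{\dagger\otimes k}.
\end{equation*}
I would then apply left-invariance of the Haar measure from \cref{eq:leftrightinvariance} with the fixed group element $U_g^{\dagger} \in \mathcal{G}_n$, taking the function under the integral to be $f(G)=G^{\otimes k} O\, G^{\dagger\otimes k}$. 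This identifies the right-hand side with $\int \de\mu_H(G)\, G^{\otimes k} O\, G^{\dagger\otimes k} = \Phi_{\mathcal{G}_n}^{(k)}(O)$, completing the argument. For discrete subgroups such as the Clifford group $\mathcal{C}_n$, the same manipulation goes through with the integral replaced by a normalized sum, with left-invariance being merely the statement that $G \mapsto U_g^{\dagger} G$ is a bijection of the group.

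The statement is elementary, so there is no real obstacle; the only point requiring a small amount of care is choosing the \emph{side} of invariance one invokes. Here I use left-invariance to absorb $U_g^{\dagger}$ on the left, but an analogous proof using right-invariance to absorb $U_g$ on the right works equally well. It is also worth noting as a corollary that the super-operator $\Phi_{\mathcal{G}_n}^{(k)}$ is a projector onto $\com(\mathcal{G}_n^{\otimes k})$: idempotency follows from applying the above identity inside a second copy of the twirl, and Hermiticity with respect to the Hilbert--Schmidt inner product follows from the unitarity of the $G^{\otimes k}$. This projector structure will be the workhorse for extracting the explicit Clifford averages studied in the later sections.
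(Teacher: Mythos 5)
Your proof is correct and is exactly the argument the paper has in mind: the paper simply states that the lemma "follows directly from the invariance of the Haar measure," and your computation — conjugating inside the integral and absorbing $U_g^{\dagger}$ via left-invariance (or a relabeling bijection in the discrete case) — makes that one-line claim explicit. The additional remark that $\Phi_{\mathcal{G}_n}^{(k)}$ is a Hilbert--Schmidt-orthogonal projector onto the commutant is true but not needed for the statement.
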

The proof follows directly from the invariance of the Haar measure. 
For twirling operations over a finite group, one can establish the following general property.
\begin{lemma}[Effect of group twirling over finite groups]\label{lem:twirl_finite}
    Let $ G $ be a finite group, $ x \in X $ an element of a vector space, and $(g,x)\mapsto g \star x \in X $ the group action. Then, it holds that
    \begin{align}
        \frac{1}{|G|} \sum_{g\in G} g \star x = \frac{1}{|\mathrm{orb}(x)|} \sum_{y\in \mathrm{orb}(x)} y\,,
    \end{align}
    where the orbit of $ x $ is defined as $ \mathrm{orb}(x) \coloneqq \{ y \in X \mid \exists g \in G : y = g \star x \} $.
\end{lemma}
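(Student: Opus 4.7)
The plan is to reduce the sum on the left-hand side to a sum indexed by the orbit, using the orbit-stabilizer theorem to count how many group elements send $x$ to each $y \in \mathrm{orb}(x)$. First I would introduce the stabilizer subgroup
\begin{equation*}
\mathrm{Stab}(x) \coloneqq \{g \in G : g \star x = x\} \leq G,
\end{equation*}
and recall the orbit-stabilizer identity $|G| = |\mathrm{orb}(x)| \cdot |\mathrm{Stab}(x)|$.

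Next I would partition $G$ according to the value of $g \star x$. For each $y \in \mathrm{orb}(x)$, fix some $g_y \in G$ with $g_y \star x = y$; then the set $\{g \in G : g \star x = y\}$ is precisely the left coset $g_y \mathrm{Stab}(x)$, which has cardinality $|\mathrm{Stab}(x)|$. Consequently, using the linearity of the action of $G$ on the vector space $X$ and reorganizing the sum by orbits,
\begin{equation*}
\sum_{g \in G} g \star x \;=\; \sum_{y \in \mathrm{orb}(x)} \sum_{\substack{g \in G \\ g \star x = y}} y \;=\; |\mathrm{Stab}(x)| \sum_{y \in \mathrm{orb}(x)} y.
\end{equation*}

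Finally, dividing by $|G| = |\mathrm{orb}(x)| \cdot |\mathrm{Stab}(x)|$ and cancelling the factor $|\mathrm{Stab}(x)|$ gives the claimed identity. There is no serious obstacle here, as the result is a direct application of the orbit-stabilizer counting; the only care required is to justify that the action preserves the addition used on the right-hand side, which is built into the hypothesis that $X$ is a vector space (so that summation of elements of $X$ is well-defined and the action of $G$ is compatible with reindexing).
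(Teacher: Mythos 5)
Your proof is correct and follows essentially the same route as the paper's: both partition $G$ into the fibers $\{g : g\star x = y\}$, identify each fiber as a coset of the stabilizer of $x$ (so all fibers have equal size), and invoke the orbit-stabilizer theorem to normalize. No gaps; the coset identification and the counting are exactly the content of the paper's argument.
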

\begin{proof}
    We can express the group twirling as 
    \begin{align}
        \frac{1}{|G|} \sum_{g\in G} g \star x = \frac{1}{|G|} \sum_{y\in \mathrm{orb}(x)} |\mathrm{costab}(y, x)| \, y\,,
    \end{align}
    where $\mathrm{costab}(y, x) \coloneqq \{ g \in G \mid y = g \star x \}$ denotes the set of elements in $ G $ that map $ x $ to $ y $.
    Now, let $ g \in \mathrm{costab}(y, x) $. Then, we can rewrite the set as
    \begin{align}
        \mathrm{costab}(y, x) = g^{-1}   \mathrm{costab}(x, x) = g^{-1}  \mathrm{stab}(x),
    \end{align}
    where the multiplication acts entry-wise, and we have used that $\mathrm{stab}(x)\coloneqq \mathrm{costab}(x,x)$. 
    This implies that $|\mathrm{costab}(y, x)|$ is the same for all $ y \in \mathrm{orb}(x)$. Using the orbit-stabilizer theorem~\cite{fulton_representation_2004}, which states that $|\mathrm{stab}(x)| = \frac{|G|}{|\mathrm{orb}(x)|}$, the result follows.
\end{proof}

\subsection{Pauli operators} 
 In this section, we provide a brief introduction to Pauli operators and their relevant properties, which will be used throughout the manuscript.
Let us introduce the qubit Pauli matrices $ \{I,X,Y,Z\} $ in the standard way as
\begin{equation}
I =
\begin{pmatrix}
    1 & 0 \\
    0 & 1
\end{pmatrix}, \quad
X =
\begin{pmatrix}
    0 & 1 \\
    1 & 0
\end{pmatrix}, \quad
Y =
\begin{pmatrix}
    0 & -i  \\
    i &  0
\end{pmatrix}, \quad
Z =
\begin{pmatrix}
    1 & 0 \\
    0 & -1 
\end{pmatrix}.
\end{equation}
The Pauli matrices are Hermitian, and the Pauli group on $ n $-qubits is defined as the $ n $-fold tensor product of the Pauli matrices multiplied by a phase factor from $ \{-1, 1, -i, +i\} $. We denote the Pauli group modulo phases as $ \mathbb{P}_n \coloneqq \{I, X, Y, Z\}^{\otimes n} $, which, seen as an operator basis, is often referred to as the Pauli basis.
Let us define the bitstring representation of Pauli operators, which enables efficient manipulation of Pauli operators.
\begin{definition}[bitstring representation of a Pauli]
\label{def:bitstring}
Let $ P \in \mathbb{P}_n $ be a Pauli operator. The \emph{bitstring representation} of $ P $ is the binary vector 
\begin{equation}
b(P) \coloneqq (b_1, b_2, \dots, b_{2n})^T \in \mathbb{F}_2^{2n}
\end{equation}
which uniquely determines the Pauli operator by the relation
\begin{align}
    P =\phi(P) \times X_1^{b_1} Z_1^{b_2} \otimes X_2^{b_3} Z_2^{b_4} \otimes \cdots \otimes X_n^{b_{2n-1}} Z_n^{b_{2n}}\,,
\end{align}  
with $\phi(P)\coloneqq (i)^{b(P)^T J b(P)}=(i)^{\# Y(P)}$,
where the symplectic inner product of $ x, y \in \mathbb{F}_2^{2n} $ is defined as 
\begin{align}
\label{eq:symplinner}
    a^T J b &= \sum_{j=1}^{n}  a_{2j} b_{2j-1},
    \quad \text{where } J \coloneqq \bigoplus_{j=1}^{n} \begin{pmatrix} 0 & 0 \\ 1 & 0 \end{pmatrix}.
\end{align}
\end{definition}
In the following, we summarize some well-known properties of the bitstring representation of Pauli operators. Specifically, for any two Pauli operators $ P, Q \in \mathbb{P}_n $, the multiplication can be represented in the following way:
$\phi PQ = R$ where $b(R) = b(P) + b(Q)\pmod 2$ and $ \phi =\phi(R)\phi(P)^{-1}\phi(Q)^{-1} (-1)^{b(P)^TJ b(Q)}$.
It also holds that $\phi^2=\chi(P, Q)$ (because of \cref{le:trivialPQ}).
As such, we have
\begin{align}
[P, Q] = 0 \quad \text{if and only if} \quad b(P)^T (J+J^T)\, b(Q) = 0 \pmod 2\,.
\label{eq:commsymp}
\end{align}
\begin{lemma}
    For $\boldsymbol P=(P_1,\dots,P_m)\in\mathbb P_n^{m}$, with $Q=\phi(Q,\boldsymbol P) P_1P_2\cdots P_m$, it holds that 
    \begin{align}
        \phi(Q,\boldsymbol P)&=\phi(Q)\prod_{i=1}^m \phi^{-1}(P_i)\times (-1)^{\sum_{i< j} b(P_i)^TJ b(P_j)}\\
        &=i^{\#Y(Q)-\sum_{i=1}^m \#Y(P_i)}\times (-1)^{\sum_{i< j} b(P_i)^TJ b(P_j)}.
        \nonumber
    \end{align}
\end{lemma}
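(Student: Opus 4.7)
The plan is to prove this by induction on $m$, with the two-operator multiplication formula recalled in the paragraph preceding the lemma serving as the base case. The key identity underlying everything is that for bit-string-only operators (no $\phi$ phase), $\tilde{P}\tilde{Q} = (-1)^{b(P)^T J b(Q)} \tilde{R}$ with $b(R) = b(P) + b(Q) \bmod 2$, since the only nontrivial commutation is $ZX = -XZ$ and $J$ exactly counts how many $Z$-slots of $P$ must be commuted past $X$-slots of $Q$ to reach the canonical form ``all $X$'s before all $Z$'s per qubit.'' Incorporating the $\phi(\cdot)=i^{\#Y(\cdot)}$ factors that convert between bare bit-string operators and elements of $\mathbb{P}_n$ yields the base case $m=2$:
\begin{equation*}
\phi(Q,(P_1,P_2)) = \phi(Q)\,\phi(P_1)^{-1}\phi(P_2)^{-1}(-1)^{b(P_1)^T J b(P_2)},
\end{equation*}
which coincides with the stated formula when $m=2$ (the sum $\sum_{i<j}$ collapses to the single term $(1,2)$).

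For the inductive step, I would suppose the formula holds for $m-1$ operators and let $Q_{m-1}\in\mathbb{P}_n$ denote the Pauli satisfying $Q_{m-1}=\phi(Q_{m-1},(P_1,\dots,P_{m-1}))\,P_1\cdots P_{m-1}$, so that $b(Q_{m-1})=\sum_{i<m} b(P_i) \bmod 2$. Writing $P_1\cdots P_m = \phi(Q_{m-1},(P_1,\dots,P_{m-1}))^{-1}\,Q_{m-1}\,P_m$ and applying the two-operator formula to the product $Q_{m-1}P_m$ (whose result is necessarily proportional to $Q$, since $b(Q_{m-1})+b(P_m)=b(Q)$), I would collect the prefactors. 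The $\phi(Q_{m-1})$ factor contributed by the inductive hypothesis cancels exactly against the $\phi(Q_{m-1})^{-1}$ appearing in the two-operator formula, leaving the product $\phi(Q)\prod_{i=1}^m\phi(P_i)^{-1}$ together with a $(-1)$-phase.

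The remaining step is a combinatorial identity for the exponent of $(-1)$. After substitution the exponent reads
\begin{equation*}
\sum_{1\le i<j\le m-1} b(P_i)^T J b(P_j) \;+\; b(Q_{m-1})^T J b(P_m).
\end{equation*}
Since $b(Q_{m-1})=\sum_{i=1}^{m-1}b(P_i) \bmod 2$ and $J$ is $\mathbb{F}_2$-bilinear, the second term expands as $\sum_{i=1}^{m-1} b(P_i)^T J b(P_m)$, and combining with the first sum gives exactly $\sum_{1\le i<j\le m} b(P_i)^T J b(P_j)$, closing the induction. The second equality in the statement then follows immediately from $\phi(Q)\prod_i\phi(P_i)^{-1}=i^{\#Y(Q)-\sum_i\#Y(P_i)}$.

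The only real subtlety—and the one step worth verifying carefully rather than waving through—is the cancellation of the intermediate $\phi(Q_{m-1})$ contributions and the additivity $b(Q_{m-1})^T J b(P_m) = \sum_{i<m} b(P_i)^T J b(P_m) \bmod 2$; everything else is routine bookkeeping once the base case is correctly derived from the bit-string representation in \cref{def:bitstring}.
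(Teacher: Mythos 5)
Your proof is correct and is essentially the paper's argument made explicit: the paper's proof simply says to apply the two-operator product formula repeatedly together with $b(PQ)=b(P)+b(Q)\pmod 2$, which is exactly your induction (base case = the stated product formula, inductive step = the mod-2 bilinearity of the $J$-form to merge $b(Q_{m-1})^T J b(P_m)$ into $\sum_{i<j} b(P_i)^T J b(P_j)$). No gaps; you have just written out the bookkeeping the paper leaves implicit.
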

\begin{proof}
    This follows from applying the product formula multiple times and using $b(PQ)=b(P)+b(Q)\pmod 2$.
\end{proof}
\begin{definition}[Algebraically independent Pauli operators]
Let $ \boldsymbol{P} \coloneqq (P_1, P_2, \dots, P_m) \subseteq \mathbb{P}_n^m $ be an ordered collection of Pauli operators. The set of Pauli operators $ \boldsymbol{P} $ is said to be algebraically independent if no operator $ P_i \in \boldsymbol{P} $ can be expressed as a product of the remaining operators in $ \boldsymbol{P} $ (up to a phase factor). 
\end{definition}
We call
$B_{\boldsymbol P}=(b(P_1),\dots b(P_m))\in \mathbb F_2^{2n\times m}$ the bitstring matrix of the collection $B_{\boldsymbol P}$.
A collection of Pauli operators \( \boldsymbol{P} \) is algebraically independent if and only if the matrix \( B_{\boldsymbol{P}} \) has full rank over \( \mathbb{F}_2 \), i.e., the bitstring representations \( b(P_1), \dots, b(P_m) \) are linearly independent in \( \mathbb{F}_2^{2n} \).

To simplify the notation, we often denote a subset of Pauli operators by a bold character, as in the previous definition.
\begin{definition}
Given a vector $ v \in \mathbb{F}_2^{k} $ and a Pauli operator $ P \in \mathbb{P}_n $, we define the tensor product of Pauli operators weighted by the vector $ v $ as
\begin{equation}
P^{\otimes v} \coloneqq P^{v_1} \otimes P^{v_2} \otimes \cdots \otimes P^{v_k},
\end{equation}
where $ P^{v_i} $ denotes the Pauli operator $ P $ raised to the power corresponding to the $ i $-th entry of the vector $ v $. In particular, if $ v_i = 0 $, then $ P^{v_i} = \mathbb 1 $, where $ \mathbb 1 $ denotes the identity operator. We also define the binary matrix representation $B(P^{\otimes v})=v b(P)^T \in \mathbb{F}_2^{k\times 2n}$, where the copies represent rows and the system columns. 
\end{definition}
Note that the binary matrix representation $B(P^{\otimes v})=v b(P)^T$ corresponds to the vectorization of the bitstring: $ b(P^{\otimes v}) = v \otimes b(P)$.

We now introduce the concept of \emph{anticommutation graph} associated with a subset of Pauli operators, which is a common tool in other contexts of quantum information theory~\cite{anschuetz2024boundsgroundstateenergy,king2024triplyefficientshadowtomography,chen2024optimaltradeoffsestimatingpauli,aguilar2024classificationpauliliealgebras,bao2024toleranttestingstabilizerstates,diaz2023showcasingbarrenplateautheory,west2025nogotheoremssublineardepthgroup}. To define it formally, we first establish the notation for graphs and their adjacency matrices.

\begin{definition}[Graph and its adjacency matrix]
\label{def:graph}
    A \emph{graph} $ G $ is a pair $ G = (W, E) $, where
    \begin{itemize}
        \item $ W = \{w_1, w_2, \ldots, w_m\} $ is a finite set of elements called \emph{vertices} (also referred to as \emph{nodes}),
        \item $ E \subseteq \{(w_i, w_j) \mid w_i, w_j \in W, w_i \neq w_j \} $ is a set of edges, where each edge is an unordered pair of distinct vertices from $ W $, i.e., $ (w_i, w_j) $ represents an 
        edge between $ w_i $ and $ w_j $.
    \end{itemize}
    The \emph{adjacency matrix} $ A \in \mathbb{F}_2^{m \times m} $ of a graph $ G = (W, E) $ is a (symmetric) matrix where the rows and columns correspond to the vertices $ w_1, w_2, \ldots, w_m $. The entries $ A_{i,j} $ are defined as
    \begin{equation}
    A_{i,j} = \begin{cases}
        1, & \text{if } (w_i, w_j) \in E, \\
        0, & \text{otherwise.}
    \end{cases}
    \end{equation}
\end{definition}

\begin{definition}[anticommutation graph]
\label{def:anticommgraph}
    Given a collection of Pauli operators $ \mathbf{P} \coloneqq (P_1, \dots, P_m ) \in \mathbb{P}_n^{\times m} $, we define its associated \emph{anticommutation graph} as the graph given by the pair $ (\mathbf{P}, E) $, where
    \begin{itemize}
        \item The  vertices are given by $ \mathbf{P} \coloneqq (P_1, \dots, P_m ) $,
        \item The set of edges is defined by
        \begin{align}
             E \coloneqq \{ (P_i, P_j) \mid P_i, P_j \in \mathbf{P}, [P_i, P_j] \neq 0 \},
        \end{align}
        where an edge connects two vertices $ P_i $ and $ P_j $ if their corresponding Pauli operators anticommmute. We denote by $ \mathcal{A}(\mathbf{P}) $ the adjacency matrix of \emph{anticommutation graph} associated with $ \mathbf{P} $. Note that, by construction, $\mathcal{A}(\boldsymbol{P})\in\symf$.
    \end{itemize}
\end{definition}
The element $ \mathcal{A}(\mathbf{P})_{i,j} $ can be expressed in terms of the bitstring representation of the Pauli operators in $ \mathbf{P}$ as  
\begin{align}
    \mathcal{A}(\mathbf{P})_{i,j} &= b(P_i)^T (J+J^{T}) b(P_j), \label{eq:anticomm1}
\end{align}
as it follows from \cref{eq:commsymp} and the definition of anticommutation graph
(see also Ref.~\cite{aguilar2024classificationpauliliealgebras}).
Thus, we have
\begin{align}
\label{eq:anticommJ}
    \mathcal{A}(\boldsymbol{P}) &= B_{\boldsymbol{P}}^T (J+J^T) B_{\boldsymbol{P}},
\end{align}
where we recall that $ B_{\boldsymbol{P}} $ denotes the matrix whose columns correspond to the bitstring representations of the Pauli operators in $ \boldsymbol{P} $.
We now present the following lemma, which will be crucial in subsequent sections.
\begin{lemma}[Decomposition of Pauli operators into tensor products]
\label{le:uniquenessPV}
For any Pauli operator $ Q \in \mathbb{P}_n^{\otimes k} $, there exists a unique integer $ m \leq k $ (referred to as the \emph{order}) such that $ Q $ can be expressed in terms of linearly independent binary vectors $ v_1, \dots, v_m \in \mathbb{F}_2^{k} $ and algebraically independent Pauli operators $ \mathbf{P} \coloneqq ( P_1, \dots, P_m)  \in \mathbb{P}_n^m $, satisfying  
\begin{align}
    Q = \phi P_1^{\otimes v_1} P_2^{\otimes v_2} \cdots P_m^{\otimes v_m},
    \label{eq:QP1P2}
\end{align}
for some phase factor $ \phi \in \{\pm 1, \pm i\} $.
Moreover, any two such decompositions of the same operator $Q$ are related by an invertible transformation $ A \in \mathrm{GL}(\mathbb{F}_{2}^{m\times m}) $, acting as  
\begin{align}
    V \mapsto V'=V A, \quad B_{\boldsymbol P} \mapsto B_{\boldsymbol P'}=B_{\boldsymbol P} A^{-T},
\end{align}
where  
\begin{itemize}
    \item $ V \in \mathbb{F}_2^{k \times m} $ is the matrix whose columns are $ v_1, \dots, v_m $,
    \item $ B_{\boldsymbol P} \in \mathbb{F}_2^{2n \times m} $ is the matrix whose columns correspond to the bitstring representations of the Pauli operators in $ \mathbf{P} $,
    \item $ A^{-T} $ denotes the inverse transpose of $ A $.
\end{itemize}
Equivalently, $ Q $ can be written as $Q \propto P_1^{\prime \otimes v_1^{\prime}} P_2^{\prime \otimes v_2^{\prime}} \cdots P_m^{\prime \otimes v_m^{\prime}},$ where $ v_i^{\prime} $ is the $ i $-th column of $ V A $, and $ P_i^{\prime}$ is the Pauli operator whose bitstring representation is encoded in the $ i $-th column of $ B_{\boldsymbol P} A^{-T} $, i.e., $ P_i^{\prime} \propto \prod^m_{j=1} P_j^{A^{-1}_{i,j}} $, up to an overall phase factor.

\end{lemma}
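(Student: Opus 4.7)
The approach is to construct the decomposition copy-by-copy in the bit-string representation, identify $m$ as the rank of $B_{\boldsymbol{Q}}$, and view any two valid decompositions as two bases of a single $m$-dimensional subspace of $\mathbb{F}_2^{2n}$.

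For \emph{existence}, I would write $Q = Q_1\otimes\cdots\otimes Q_k$ with each $Q_i\in\mathbb{P}_n$ and form the bit-string matrix $B_{\boldsymbol{Q}}\in\mathbb{F}_2^{2n\times k}$ whose $i$-th column is $b(Q_i)$. Setting $m := \operatorname{rank}_2(B_{\boldsymbol{Q}})$, I would pick any basis $b(P_1),\ldots,b(P_m)$ of its column space and lift each $b(P_j)$ to a Pauli $P_j$; these are algebraically independent by construction. Writing $b(Q_i) = \sum_j V_{i,j}\,b(P_j)$ defines a unique $V\in\mathbb{F}_2^{k\times m}$ with $B_{\boldsymbol{Q}} = B_{\boldsymbol{P}}\,V^T$. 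Because $B_{\boldsymbol{P}}$ has rank $m$ and $B_{\boldsymbol{Q}}$ has rank $m$, this product identity forces $\operatorname{rank}_2(V^T) = m$, i.e.\ the columns $v_j$ of $V$ are linearly independent in $\mathbb{F}_2^k$. Since $Q_i$ and $\prod_j P_j^{V_{i,j}}$ share the bit string $b(Q_i)$, they agree up to a phase $\phi_i\in\{\pm 1,\pm i\}$, and the mixed-product identity $\bigotimes_i\prod_j P_j^{V_{i,j}} = \prod_j\bigotimes_i P_j^{V_{i,j}} = \prod_j P_j^{\otimes v_j}$ then yields $Q = \phi\,P_1^{\otimes v_1}\cdots P_m^{\otimes v_m}$ with $\phi = \prod_i\phi_i$.

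For the \emph{uniqueness of $m$ and the gauge freedom}, I would observe that in any admissible decomposition the linear independence of the columns of $V'$ gives $\operatorname{rank}_2(V'^T) = m'$, so that $B_{\boldsymbol{Q}} = B_{\boldsymbol{P}'}\,V'^T$ has column span equal to that of $B_{\boldsymbol{P}'}$; hence $m' = \operatorname{rank}_2(B_{\boldsymbol{Q}}) = m$. Thus $B_{\boldsymbol{P}}$ and $B_{\boldsymbol{P}'}$ are two $2n\times m$ matrices whose columns are bases of the same subspace, so there exists a unique invertible $C\in\mathrm{GL}(\mathbb{F}_2^{m\times m})$ with $B_{\boldsymbol{P}'} = B_{\boldsymbol{P}}\,C$. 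Substituting into $B_{\boldsymbol{Q}} = B_{\boldsymbol{P}}\,V^T = B_{\boldsymbol{P}'}\,V'^T = B_{\boldsymbol{P}}\,C\,V'^T$ and cancelling the left-invertible $B_{\boldsymbol{P}}$ yields $V^T = C\,V'^T$, i.e.\ $V' = V\,C^{-T}$. Setting $A := C^{-T}$ gives the claimed $V' = VA$ and $B_{\boldsymbol{P}'} = B_{\boldsymbol{P}}\,A^{-T}$, and reading the latter column-wise gives $b(P'_i) = \sum_j A^{-1}_{i,j}\,b(P_j)$, i.e.\ $P'_i\propto\prod_j P_j^{A^{-1}_{i,j}}$.

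The main obstacle, in my view, is the phase bookkeeping rather than the linear-algebra content: every time one rewrites a Pauli product in a different factorization, signs or factors of $i$ can appear via the Pauli product formula recalled above \cref{def:bitstring}. These phases are harmless because the lemma only asserts equality up to a global phase in $\{\pm 1,\pm i\}$, but I would still need to verify explicitly that the mixed-product rearrangement in the existence step introduces no phase beyond $\prod_i\phi_i$, and that applying the gauge transformation $A$ to any valid decomposition reproduces $Q$ up to only a global phase.
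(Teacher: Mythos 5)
Your proposal is correct and follows essentially the same route as the paper's proof: both reduce the statement to a minimal-rank factorization $B_{\boldsymbol Q}=B_{\boldsymbol P}V^T$ of the bit-string matrix (you use the $2n\times k$ convention, the paper the transposed $k\times 2n$ one), identify $m$ with $\operatorname{rank}_2(B_{\boldsymbol Q})$, and relate any two decompositions by an invertible change of basis in $\mathrm{GL}(\mathbb{F}_2^{m\times m})$. The phase bookkeeping you flag is indeed harmless, exactly as in the paper, since the mixed-product rearrangement is phase-free and the statement only claims equality up to a global phase.
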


\begin{proof}
The bitstring representation of $ Q \in \mathbb{P}^{\otimes k}_n$, denoted as $ b(Q) \in \mathbb{F}_2^{2nk}  \cong \left(\mathbb{F}_2^{2n}\right)^{\otimes k} $, can be mapped to a matrix $ B(Q) \in \mathbb{F}_2^{k \times 2n} $ via the vectorization isomorphism, where each row corresponds to the Pauli bitstring in one of the $ k $ copies.
For a Pauli operator $ P \in \mathbb{P}_n $ and a vector $ v \in \mathbb{F}_2^k $, the bitstring representation of $ P^{\otimes v} $  via the vectorization isomorphism is given by
\begin{align}
    B(P^{\otimes v}) \coloneqq v b(P)^T.
\end{align}
Furthermore, given a collection of Pauli operators $ \boldsymbol P=(P_1,\dots P_m) \in \mathbb{P}_n^{\times k} $ and vectors $V=(v_1,\dots,v_m) \in \mathbb{F}_2^{k\times m} $,  
by the vectorization isomorphism, this leads to the matrix expression  
\begin{align}
  B(P_1^{\otimes v_1} P_2^{\otimes v_2} \cdots P_m^{\otimes v_m}) 
  = \sum_{j=1}^{m} v_j b(P_j)^T = V B_{\mathbf{P}}^T\,.
\end{align}
This implies that the claim of the lemma, which asserts the existence of a decomposition of $ Q $ into tensor products of Pauli operators, can be shown by decomposing $ B(Q) $ in the form
\begin{equation}
    B(Q) = V B_{\mathbf{P}}^T,
\end{equation}
where $ V \in \mathbb{F}_2^{k \times m} $ and $ B_{\mathbf{P}} \in \mathbb{F}_2^{m \times 2n} $ are matrices with linearly independent columns. The linearly independent columns of $ V $ correspond to the vectors $ v_1, \dots, v_m $, while the linearly independent columns of $ B_{\mathbf{P}} $ correspond to the bitstring representations of the Pauli operators $ P_1, \dots, P_m $, ensuring that these operators are algebraically independent.

Furthermore, to ensure that $ m $ is unique, we take $ m $ to be $ m = \mathrm{rank}(B(Q)) $, which represents the number of linearly independent rows of $ B(Q) $ over $ \mathbb{F}_2 $. Consequently, we can choose $ V $ and $ B_{\mathbf{P}} $ to be the \emph{minimal rank decompositions} of $ B(Q) $.
This decomposition can be achieved by performing Gaussian elimination, writing $ B(Q) $ as 
\begin{align}
    B(Q) = A \begin{pmatrix}
    X \\
    0_{(k- m) \times 2n}
\end{pmatrix},    
\end{align}
where $ X \in \mathbb{F}_2^{m \times 2n} $ has maximal rank $ m = \mathrm{rank}(B(Q)) $, $ 0_{(k - m) \times 2n} \in \mathbb{F}_2^{(k - m) \times 2n} $ is the zero matrix, and $ A \in \mathrm{GL}(\mathbb{F}_2^{k, k}) $ is the invertible matrix resulting from Gaussian elimination on the rows. Thus, by taking $ V $ to be the matrix formed by the first $ m $ columns of $ A $ (which are linearly independent, since $ A \in \mathrm{GL}(\mathbb{F}_2^{k, k}) $), and defining $ B_{\mathbf{P}} \coloneqq X^T $, we obtain the desired decomposition
\begin{align}
   B(Q) = A \begin{pmatrix}
    X \\
    0_{k-m , 2n}
\end{pmatrix} = V B_{\mathbf{P}}^T.
\end{align}

Moreover, this decomposition is not unique. Specifically, we have $ B(Q) = V B_{\mathbf{P}}^T = \left(V A\right) \left(A^{-1} B_{\mathbf{P}}^T\right) $ for any $ A \in \mathrm{GL}(\mathbb{F}_2^{m \times m}) $, which implies that also $ V' \coloneqq V A $ and $ B'_{\mathbf{P}} \coloneqq B_{\mathbf{P}} A^{-T} $ also give rise to a valid decomposition. This establishes the desired transformation, with all minimal $ m $-rank transformations related by this gauge transformation. Furthermore, all the possible decompositions must be of this form: in fact, the condition $ \mathrm{span}(V) = \mathrm{span}(B(Q)) = \mathrm{span}(V') $ must hold, meaning that the column vectors of $ V $, $ B(Q) $, and $ V' $ span the same vector space, regardless of the specific decomposition. Since all bases are related by an invertible linear transformation, we conclude that $ V' = V A $ for some invertible matrix $ A $, which is captured by the gauge transformation.

\end{proof}

\begin{lemma}[Canonical graph form]\label{lem:can_pauli_graph}  
Let $ \boldsymbol{P} \coloneqq (P_1, \dots, P_m) \in \mathbb{P}_n^m $ be a set of Pauli operators, and define $ r \coloneqq \mathrm{rank}_2(\mathcal{A}(\boldsymbol{P}))/2 $, where $ \mathcal{A}(\boldsymbol{P}) $ is the adjacency matrix of the anticommutation graph associated with $ \boldsymbol{P} $. Then, there exists a transformation $ P_i \mapsto Q_i $ given by  
\begin{align}  
\label{eq:defMap}
    Q_i \coloneqq \prod_{j=1}^{m} P_j^{A_{j,i}}, \quad \text{for all } i \in [m],  
\end{align}  
for some $ A \in \mathrm{GL}(\mathbb{F}_2^{m \times m}) $, such that the adjacency matrix of the anticommutation graph $ \mathcal{A}(\boldsymbol{Q}) $ associated with $ \boldsymbol{Q} \coloneqq \{Q_1, \dots, Q_m\} $ takes the block-diagonal form  
\begin{align}  
    \mathcal{A}(\boldsymbol{Q}) = \bigoplus_{i=1}^{r}  
    \begin{pmatrix}  
        0 & 1 \\  
        1 & 0  
    \end{pmatrix}  
    \oplus 0_{m-2r,m-2r}.  
\end{align}  
\end{lemma}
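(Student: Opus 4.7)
The plan is to reduce the lemma to a classical normal-form statement about alternating bilinear forms over $\mathbb{F}_2$ and then carry out a symplectic Gram--Schmidt argument.

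First I would track how the anticommutation graph transforms under the change of generators $P_i \mapsto Q_i = \prod_j P_j^{A_{j,i}}$. Since the bit-string representation is additive in $\mathbb{F}_2$ modulo phases, and phases do not affect commutation, the bit-string matrix transforms as $B_{\boldsymbol{Q}} = B_{\boldsymbol{P}} A$. Using the identity $\mathcal{A}(\boldsymbol{P}) = B_{\boldsymbol{P}}^T (J+J^T) B_{\boldsymbol{P}}$ from \cref{eq:anticommJ}, this gives $\mathcal{A}(\boldsymbol{Q}) = A^T \mathcal{A}(\boldsymbol{P}) A$. Hence the lemma is equivalent to the following purely algebraic statement: any $M \in \symf$ is congruent (via $M \mapsto A^T M A$ for some $A \in \mathrm{GL}(\mathbb{F}_2^{m \times m})$) to $\bigoplus_{i=1}^{r}\!\left(\begin{smallmatrix} 0 & 1 \\ 1 & 0\end{smallmatrix}\right) \oplus 0_{m-2r, m-2r}$, where $2r = \mathrm{rank}_2(M)$. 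Note that the congruence action preserves the rank and also preserves the zero-diagonal property (in characteristic two, $A^T M A$ has zero diagonal whenever $M$ does, since the diagonal entries are $\sum_{i,j} A_{i,k} M_{i,j} A_{j,k}$ and the diagonal terms $M_{i,i} = 0$ together with the symmetry of $M$ makes the off-diagonal contributions cancel in pairs modulo $2$).

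Next I would establish the normal form by induction on $m$. If $M = 0$, then $r = 0$ and the claim is immediate. Otherwise pick indices $i \neq j$ with $M_{i,j} = 1$; after a permutation congruence, assume $i=1,\, j=2$, so the top-left $2 \times 2$ block of $M$ is $\left(\begin{smallmatrix} 0 & 1 \\ 1 & 0\end{smallmatrix}\right)$. Using elementary column operations of the form $C_k \mapsto C_k + M_{2,k} C_1 + M_{1,k} C_2$ for $k \geq 3$ (and the symmetric row operations), one zeros out the first two rows and columns outside the top-left $2 \times 2$ block, while preserving the zero diagonal and symmetry. This expresses $M$ as congruent to $\left(\begin{smallmatrix} 0 & 1 \\ 1 & 0\end{smallmatrix}\right) \oplus M'$, with $M' \in \mathrm{Sym}_0(\mathbb{F}_2^{(m-2)\times (m-2)})$ of rank $\mathrm{rank}_2(M) - 2$. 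Applying the induction hypothesis to $M'$ concludes the decomposition, and the counting of symplectic blocks gives exactly $r = \mathrm{rank}_2(M)/2$.

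Finally I would assemble the sequence of elementary congruences into a single $A \in \mathrm{GL}(\mathbb{F}_2^{m \times m})$, and translate back to the Pauli picture: defining $\boldsymbol{Q}$ by \cref{eq:defMap} with this $A$ yields $\mathcal{A}(\boldsymbol{Q})$ in the desired canonical block form. The main subtle point, which I would emphasize carefully, is that $\mathrm{rank}_2(M)$ is automatically even for $M \in \symf$ in characteristic two (so $r$ is indeed an integer): this follows from the induction itself, since each reduction step decreases the rank by exactly $2$ until $M' = 0$. Beyond this even-rank verification, the rest is routine linear algebra over $\mathbb{F}_2$; the only bookkeeping hazard is making sure the row/column operations are applied symmetrically so that we stay within the congruence orbit rather than the general equivalence orbit.
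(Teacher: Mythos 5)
Your proposal is correct and takes essentially the same route as the paper: the paper performs the same symplectic Gram--Schmidt reduction directly on the Pauli operators (multiplying each remaining $P_\alpha$ by $P_j$ or $P_i$ to remove its anticommutations with a chosen anticommuting pair), which at the level of bit strings is exactly your elementary congruence $M \mapsto A^T M A$ on the alternating form, and it likewise records $\mathcal{A}(\boldsymbol{Q}) = A^T \mathcal{A}(\boldsymbol{P}) A$ to conclude rank preservation. Your explicit checks that congruence preserves the zero diagonal in characteristic two and that each step lowers the rank by exactly two are just a more careful bookkeeping of the same induction.
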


\begin{proof}
The proof follows inductively. We begin by selecting two Pauli operators $ P_i $ and $ P_j $ which anticommmute. Next, for $ \alpha \in [m] \setminus \{i,j\} $, if $ P_\alpha $ anticommmutes with $ P_i $, we map $ P_\alpha \mapsto P_\alpha P_j $ so that it now commutes with $ P_i $. After this transformation, if $ P_\alpha $ anticommmutes with $ P_j $, we map $ P_\alpha \mapsto P_\alpha P_i $, ensuring that it now commutes with $ P_j $ as well. This ensures that $ P_i $ and $ P_j $ anticommmute with each other but commute with any other Pauli operator in the new set.  
Repeating this process for all other pairs of anticommmuting Pauli operators results in a new set $ \boldsymbol{Q} $ that has the desired structure of the graph after relabeling.

Let $ B_{\boldsymbol{P}} $ be the binary matrix whose columns are the bitstring representations of the Pauli operators in $ \boldsymbol{P} $. Similarly, let $ B_{\boldsymbol{Q}} $ be the corresponding matrix for $ \boldsymbol{Q} $.
Since the operation $ P_\alpha \mapsto P_\alpha P_j $ is invertible, there must exist an invertible mapping between the associated bitstring matrices $ B_{\boldsymbol{P}} $ and $ B_{\boldsymbol{Q}} $, meaning that $B_{\boldsymbol{Q}} = B_{\boldsymbol{P}} A,$ for some $ A \in \mathrm{GL}(\mathbb{F}_2^{m \times m}) $.  
Thus, $ \boldsymbol{Q} $ can be expressed in terms of $ \boldsymbol{P} $ as in \cref{eq:defMap}.  
This implies that the adjacency matrix of the anticommutation graph of $ \boldsymbol{Q} $, $ \mathcal{A}(\boldsymbol{Q}) $, satisfies  
\begin{align}
    \mathcal{A}(\boldsymbol{Q}) = (B_{\boldsymbol{P}} A)^T (J+J^{T}) (B_{\boldsymbol{P}} A) = A^T \mathcal{A}(\boldsymbol{P}) A,
\end{align}
where $ J + J^{T} $ is the standard symplectic matrix (see \cref{eq:anticommJ}).  
Furthermore, since  
\begin{align}
    \mathrm{rank}_2(\mathcal{A}(\boldsymbol{Q})) = \mathrm{rank}_2(A^T \mathcal{A}(\boldsymbol{P}) A) = \mathrm{rank}_2(\mathcal{A}(\boldsymbol{P})),
\end{align}
the rank of the anticommutation graph is preserved.
\end{proof}

So from the previous Lemma it follows this standard fact regarding the normal form of skew-symmetric binary matrices:

Let \( M \in \symftwo \) be a symmetric binary matrix with vanishing diagonal. Then there exists an invertible matrix \( A \in \mathrm{GL}(\mathbb{F}_2^{m \times m}) \) such that
\begin{align}
A^T M A = \bigoplus_{i=1}^r 
\begin{pmatrix}
0 & 1 \\
1 & 0
\end{pmatrix}
\oplus 0_{m - 2r},
\end{align}
where \( \operatorname{rank}(M) = 2r \).

\subsection{Clifford group} 
The Clifford group can be defined as follows:
\begin{definition}[Clifford group]
    The Clifford group $\mathcal{C}_n$ is a subgroup of the unitary group $\mathcal{U}(2^n)$, consisting of all unitary operators that map the Pauli basis $\mathbb{P}_n$ onto itself under conjugation
    \begin{equation}
    \mathcal{C}_n \coloneqq \{ C \in \mathcal{U}(2^n) \mid C P C^{\dagger} \in \mathbb{P}_n \text{ up to a $\pm$ phase}, \, \forall P \in \mathbb{P}_n \}\,.
    \end{equation}
\end{definition}
It is straightforward to verify that the set of Clifford operators forms a group under matrix multiplication.
\begin{definition}[Stabilizer states] 
The set of $n$-qubit stabilizer states consists of all pure quantum states that can be obtained from the computational basis state vector $\ket{0}^{\otimes n}$ via the action of a Clifford unitary $C \in \mathcal{C}_n$. This set is denoted as $\stab(n)$.
\end{definition}

We now state a lemma that will be used extensively in our analysis.
Although the lemma can be shown as a corollary of the Witt's extension lemma~\cite{lamIntroductionQuadraticForms2005}, here we provide a proof for self-consistency and pedagogical clarity.
\begin{lemma}[Clifford action on multiple Pauli operators]\label{lem:alg_transforms}
    For two lists of algebraically independent Pauli operators $\boldsymbol P = (P_1, \dots, P_m)$ and $\boldsymbol Q = (\pm Q_1, \dots, \pm Q_m) \in (\pm \mathbb{P}_n)^{\times m}$, where $\boldsymbol Q$ represents any choice of signed Pauli operators, there exists a Clifford $C$ that maps $\boldsymbol{P}$ to $\boldsymbol{Q}$ under conjugation (i.e., $C \boldsymbol{P} C^\dagger = \boldsymbol{Q}$) if and only if $\mathcal{A}(\boldsymbol{P}) = \mathcal{A}(\boldsymbol{Q})$.
\end{lemma}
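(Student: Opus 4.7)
The plan is to prove the two implications in turn. The forward direction is immediate: conjugation by any unitary preserves commutators, $[CP_iC^\dagger, CP_jC^\dagger] = C[P_i,P_j]C^\dagger$, and overall signs have no effect on whether two Paulis commute, so $C\boldsymbol{P}C^\dagger = \boldsymbol{Q}$ (for any sign choice) forces $\mathcal{A}(\boldsymbol{P}) = \mathcal{A}(\boldsymbol{Q})$.

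For the converse I would work in the symplectic bit-string picture. Clifford conjugation acts linearly on bit strings as an element $S_C$ of the symplectic group $\operatorname{Sp}(2n,\mathbb{F}_2)$ preserving $J+J^T$, via $b(CPC^\dagger) = S_C\,b(P)$, and it is classical that every element of $\operatorname{Sp}(2n,\mathbb{F}_2)$ is realized by some Clifford. Individual sign flips on the targets $Q_i$ can then be arranged by conjugating with an appropriate Pauli (itself a Clifford): since $B_{\boldsymbol{Q}}$ has column rank $m$ and $J+J^T$ is invertible, the linear map $b(R) \mapsto ((J+J^T)B_{\boldsymbol{Q}})^T b(R) \in \mathbb{F}_2^m$ is surjective, so any prescribed sign pattern is achievable. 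It therefore suffices to find $S \in \operatorname{Sp}(2n,\mathbb{F}_2)$ satisfying $S B_{\boldsymbol{P}} = B_{\boldsymbol{Q}}$.

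To construct $S$, I would invoke Witt's extension theorem for the nondegenerate alternating form $J+J^T$ on $\mathbb{F}_2^{2n}$: any linear isomorphism between two subspaces that preserves the restricted form extends to a symplectic automorphism of the whole space. By \cref{eq:anticommJ}, the hypothesis $\mathcal{A}(\boldsymbol{P}) = \mathcal{A}(\boldsymbol{Q})$ is precisely $B_{\boldsymbol{P}}^T(J+J^T)B_{\boldsymbol{P}} = B_{\boldsymbol{Q}}^T(J+J^T)B_{\boldsymbol{Q}}$, which says exactly that the assignment $\phi \colon b(P_i) \mapsto b(Q_i)$, extended by linearity, preserves the restricted symplectic form; algebraic independence guarantees that $\phi$ is a well-defined isomorphism between $m$-dimensional subspaces. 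A concrete realization of the Witt extension can be read off from \cref{lem:can_pauli_graph}: one brings both subspaces to the same canonical Gram-matrix form, consisting of $r = \operatorname{rank}_2(\mathcal{A}(\boldsymbol{P}))/2$ hyperbolic pairs and an $(m-2r)$-dimensional radical, and then completes each to a full symplectic basis of $\mathbb{F}_2^{2n}$ by supplying the radical vectors with fresh symplectic partners and appending the remaining hyperbolic pairs in the orthogonal complement.

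The main obstacle is the Witt-extension bookkeeping in characteristic $2$: assigning symplectic partners to the radical vectors requires iteratively subtracting contributions from the previously chosen hyperbolic pairs so as not to spoil the canonical block structure, and one must confirm that the dimension bound $r + (m-2r) \leq n$ holds automatically. It does, because this sum is the dimension of an isotropic subspace realized on $n$ qubits, which is at most $n$; realizability of the common anticommutation graph on both sides is exactly what guarantees this. Once $S$ is in place, lifting to a Clifford via the standard surjection $\mathcal{C}_n \twoheadrightarrow \operatorname{Sp}(2n,\mathbb{F}_2)$ (modulo Paulis) and post-multiplying by the sign-correcting Pauli identified above completes the proof.
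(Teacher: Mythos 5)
Your proof is correct, but it takes a genuinely different route from the paper's. You work entirely in the symplectic bit-string picture: reduce the problem to finding $S\in\operatorname{Sp}(2n,\mathbb{F}_2)$ with $SB_{\boldsymbol P}=B_{\boldsymbol Q}$, obtain $S$ from Witt's extension theorem for the nondegenerate alternating form $J+J^T$ (valid in characteristic $2$, with the isometry hypothesis being exactly $\mathcal{A}(\boldsymbol P)=\mathcal{A}(\boldsymbol Q)$ and well-definedness coming from linear independence of the bit strings), lift $S$ through the classical surjection $\mathcal{C}_n\twoheadrightarrow\operatorname{Sp}(2n,\mathbb{F}_2)$, and fix the residual signs by conjugating with a Pauli, using surjectivity of $b(R)\mapsto B_{\boldsymbol Q}^T(J+J^T)b(R)$ — a clean observation, and your dimension check $r+(m-2r)\le n$ via an isotropic subspace is right. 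The paper instead gives a self-contained construction: it first treats $m=2n$ by writing the candidate Clifford explicitly as $\mathcal{C}(X)=\frac{1}{d}\sum_\alpha\tr(P_\alpha^\dagger X)Q_\alpha$ (with the target signs built into the $Q_\alpha$, so no separate sign-correction step) and verifies unitarity by computing the purity of its Choi state, and then handles $m<2n$ by completing both tuples to full generating sets with equal anticommutation graphs via \cref{lem:can_pauli_graph} — the step that plays the role of your Witt extension. The trade-off is that your argument is shorter and conceptually transparent but imports two external classical facts (Witt's theorem for symplectic spaces and the surjectivity of the Clifford group onto $\operatorname{Sp}(2n,\mathbb{F}_2)$, the latter being of essentially the same depth as the lemma itself), whereas the paper's proof, as it states explicitly, is written to be self-contained and in effect reproves that surjectivity through the Choi-state computation.
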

\begin{proof}
The fact that the two sets $\boldsymbol P$ and $\boldsymbol Q$ must have the same anticommutation graph when mapped into each other by a Clifford follows directly from the unitarity of Clifford operators.
The non-trivial aspect of the proof is the existence of a Clifford operator, given that $\mathcal{A}(\boldsymbol{P}) = \mathcal{A}(\boldsymbol{Q})$, which we outline below.
The proof is constructive. 
First of all, note that it cannot be $m > 2n$, because otherwise the Pauli operators in $\boldsymbol{P}$ must necessarily be algebraically dependent. Thus, let us first assume that $m = 2n$. This means that both sets form a generating set of the Pauli group $\mathbb{P}_n$. Let us denote $P_{\alpha} \coloneqq \prod_{i=1}^{m}P_i^{\alpha_i}$, where $\alpha \in \mathbb{F}_2^m$, and analogously $Q^{\alpha} \coloneqq \prod_{i=1}^{m}Q_i^{\alpha_i}$.
We can define the linear map $ \mathcal{C}: \mathcal{B}(\mathcal{H}) \to \mathcal{B}(\mathcal{H}) $
as
\begin{align}
    \mathcal{C}(X) = \frac{1}{d} \sum_{\alpha \in \{0,1\}^m} \tr(P^\dagger_{\alpha} X) Q_\alpha.
\end{align}
Notice that $ \mathcal{C}(P_{\alpha}) = Q_{\alpha} $ as desired, so we are just left to show that $ \mathcal{C} $ is a unitary transformation in order to conclude that it corresponds to a Clifford operator, thus finishing the proof.

To do this, it is sufficient to show that the corresponding Choi state, defined as 
\begin{equation}
\rho_\mathcal{C}\coloneqq  \mathcal{C}\otimes\mathcal{I}\left(\frac{1}{\sqrt{d}}\sum^d_{i,j=1}\ket{i,i}\bra{j,j}\right)\in\mathcal{B}\left(\mathcal{H}^{\otimes 2}\right), 
\end{equation}
is a pure state and that $\tr_1(\rho_\mathcal{C})=\frac{\mathbb{1}}{d}$, where $\tr_1$ denotes the partial trace over the first copy of $\mathcal{H}$ (to ensure trace preservicity)~\cite{watrous_theory_2018}. It is not difficult to be convinced that the Choi state of the map $\mathcal{C}(\cdot)$ is $\rho_\mathcal{C}=\frac{1}{d^2}\sum_{\alpha\in\{0,1\}^m} Q_\alpha \otimes P_{\alpha}^* $. To show that $\rho_\mathcal{C}$ is a pure state, let us compute the purity. The following chain of identities holds
    \begin{align}
        \rho_\mathcal{C}^2&=\frac{1}{d^4}\sum_{\alpha,\beta\in\{0,1\}^m} (Q_\alpha \otimes P_{\alpha}^*)( Q_\beta \otimes P_{\beta}^*)\\
        \nonumber
        &=\frac{1}{d^4}\sum_{\alpha,\beta\in\{0,1\}^m} Q_\alpha Q_\beta \otimes P^*_{\alpha}P^*_{\beta} \\
        \nonumber
        &=\frac{1}{d^4}\sum_{\alpha,\beta\in\{0,1\}^m} Q_{\alpha+\beta} \otimes P^*_{\alpha+\beta}\\
         \nonumber
        &=\frac{2^{m}}{d^{4}}\sum_{\alpha\in\{0,1\}^m}Q_{\alpha}\otimes  P^*_{\alpha}=\frac{2^m}{d^2}\rho_\mathcal{C},
        \nonumber
    \end{align}
where the third step holds because $P_i$ and $Q_i$ have the same anticommutation relations, meaning the signs arise from the commutation relations when reordering cancel. This means that for $m=2n$, the Choi-state of $\mathcal{C}$ is a pure state. As $\tr_1(\rho_\mathcal{C})=\frac{\mathbb 1}{d}$, one can conclude that $\mathcal{C}(\cdot)$ is a unitary and, therefore, a Clifford operation, since it maps Pauli operators to Pauli operators.

In the case when $ m < 2n $, it is sufficient to add $ 2n - m $ algebraically independent Pauli operators to both sets $ \boldsymbol{P} $ and $ \boldsymbol{Q} $, completing them to generating sets $ \bar{\boldsymbol{P}} $ and $ \bar{\boldsymbol{Q}} $, respectively, for the full Pauli group $\mathbb{P}_n$, while ensuring that
\begin{align}
\mathcal{A}(\bar{\boldsymbol{P}}) = \mathcal{A}(\bar{\boldsymbol{Q}}).
\end{align}
This allows us to proceed with the previous proof.
We can perform this completion step due to \cref{lem:can_pauli_graph}, as it suffices to only consider the canonical graph
\begin{align}
    \mathcal{A}(\boldsymbol{P}) = \bigoplus_{i=1}^{r}  
    \begin{pmatrix}  
        0 & 1 \\  
        1 & 0  
    \end{pmatrix}  
    \oplus 0_{m-2r,m-2r},  
\end{align}  
as $\mathcal{A}(\boldsymbol{P})\mapsto A\mathcal{A}(\boldsymbol{P}) A^T$  ($A\in\mathrm{GL}(\mathbb{F}_2^{m\times m}$) only corresponds to a generator change of the algebra generated by $\boldsymbol P$ and $\boldsymbol Q$, meaning the algebra is the same). Adding Pauli operators to $\boldsymbol P$ and $\boldsymbol Q$  as in the proof of \cref{lem:can_pauli_graph} to maintain the canonical forms until $m=2n$ ensures that \begin{align}
    \mathcal{A}(\bar{\boldsymbol{P}}) = \mathcal{A}(\bar{\boldsymbol{Q}})=\bigoplus_{i=1}^{n}  
    \begin{pmatrix}  
        0 & 1 \\  
        1 & 0  
    \end{pmatrix}  
\end{align} holds. Afterwards the transformation ($A$) can be inverted.
\end{proof}

Following \cref{def:commutantofagroup}, we denote the commutant of $k$-tensor powers of the Clifford group as $\com(\mathcal{C}_n^{\otimes k})$, we denote the twirling super-operator over the Clifford group (see \cref{twirlingofoperatorOunitary}) as
\begin{align}\label{def:twirlingoperatoroverthecliffordgroup}
\Phi_{\cl}^{(k)}(\cdot) \coloneqq \frac{1}{|\mathcal{C}_n|} \sum_{C \in \mathcal{C}_n} C^{\otimes k} (\cdot) C^{\dagger \otimes k}.
\end{align}
According to \cref{lem:twirlingbelongstothecommutant}, the twirling channel projects any operator onto the $k$-th order commutant of the group. Thus, $\Phi_{\cl}^{(k)}(O)$ for any operator $O$ can be expressed as a linear combination of the basis elements of the $k$-th order Clifford commutant. Finding a basis for the commutant is,  therefore, crucial for computing such object and, consequently, the average properties of random Clifford circuits—an essential component in numerous applications in quantum information~\cite{low2010pseudorandomnesslearningquantumcomputation,huang_classical_2020, Magesan_2011, Eisert_2020, Elben_2022, schuster2025randomunitariesextremelylow, haferkamp2020QuantumHomeopathyWorks}.
Since the Clifford group $ \mathcal{C}_n $ is a subgroup of the unitary group $ \mathcal{U}(2^n) $, it follows that the commutant of the unitary group is contained in the commutant of the Clifford group, i.e., 
\begin{align}
    \com(\mathcal{U}(2^n)^{\otimes k}) \subseteq \com(\mathcal{C}_n^{\otimes k}).    
\end{align}
It has been shown that the two commutants are equal, i.e., $ \com(\mathcal{U}(2^n)^{\otimes k}) = \com(\mathcal{C}_n^{\otimes k}) $, only if $ k \leq 3 $, making the Clifford group a unitary 3-design, but not a 4-design~\cite{zhu_clifford_2016}.
Additionally, the $ k $-th commutant of the unitary group is spanned by the so-called permutation operators acting on $ k $ copies of the Hilbert space, which we define in the next subsection.

Let us conclude the section, by introducing the function  
\begin{align}\label{def:chidef}
\chi(A, B) \coloneqq \frac{1}{d} \tr(A B A^\dagger B^\dagger),
\end{align}  
where $ A $ and $ B $ are operators. This function satisfies the easy-to-verify symmetry properties
\begin{align}
\label{eq:trivialpropCHI}
\chi(A, B) &= \chi(\phi_A A, \phi_B B), \\  
\chi(A, B) &= \chi(B^\dagger, A) = \chi(A^\dagger, B^\dagger), \\  
\chi(A, B) &= \chi(UAU^{\dagger},UBU^{\dagger}),
\end{align} 
where $ \phi_A, \phi_B \in \{\pm 1, \pm i\} $ and $U$ is any unitary operator.

\begin{lemma}
\label{le:trivialPQ}
Let $ P, Q $ be operators from the set $\mathbb{P}_{n}$. The following properties hold:
\begin{enumerate}
    \item $\chi(P, Q)$ can be written as
    \begin{align}
        \chi(P, Q) = 
        \begin{cases} 
        \ \ 1, & \text{if } [P, Q] = 0, \\ 
        -1, & \text{if } \{P, Q\} = 0,
        \end{cases}
    \end{align}
    and we have that $ PQP = \chi(P, Q) Q $.
    \item It holds that $\sum_{Q\in\mathbb{P}_n}\chi(P, Q)= d^2\delta_{P,\mathbb{1}}.$

    \item If $R$ belongs to $ \mathbb{P}_n $ up to a phase factor, we have $\chi(R, PQ) = \chi(R, P) \chi(R, Q)$.
    \item If $ K = \sqrt{\chi(P, Q)} PQ $, then $ K $ belongs to $ \mathbb{P}_n $ up to a $ \pm 1 $ factor, and
    \begin{align}
        \chi(P, Q) = \chi(P, K)  = \chi(K, P) = \chi(Q, K) = \chi(K, Q) .
    \end{align}
\end{enumerate}
\end{lemma}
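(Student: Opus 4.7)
All four claims rely on two elementary facts about Pauli operators: (i) each $P\in\mathbb{P}_n$ is Hermitian and squares to the identity, so $P^\dagger=P$ and $P^2=\id$; and (ii) any two Pauli operators either commute or anticommute, and indeed $PQ=\pm QP$ depending on the sign $\chi(P,Q)$. The plan is to use these together with the definition $\chi(A,B)=\tr(ABA^\dagger B^\dagger)/d$ from \cref{def:chidef} and the phase/unitary invariance of $\chi$ already noted in \cref{eq:trivialpropCHI}.

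For part 1, the plan is to reduce $\chi(P,Q)=\tr(PQPQ)/d$ using $P^\dagger=P$, then note that $PQPQ=P^2Q^2=\id$ when $[P,Q]=0$ and $PQPQ=-P^2Q^2=-\id$ when $\{P,Q\}=0$; taking the trace gives the stated values. The identity $PQP=\chi(P,Q)Q$ follows from the same case split: $PQP=QP^2=Q$ in the commuting case and $PQP=-QP^2=-Q$ in the anticommuting case. For part 2, if $P=\id$, every summand equals $1$ and the sum is $|\mathbb{P}_n|=d^2$. Otherwise, the plan is to argue that a nontrivial Pauli commutes with exactly half of $\mathbb{P}_n$ and anticommutes with the other half—this is the standard statement that the symplectic form $b(P)^T(J+J^T)\,\cdot\,$ from \cref{eq:commsymp} defines a nonzero $\mathbb{F}_2$-linear functional on $\mathbb{F}_2^{2n}$, whose kernel has index two—so the $\pm 1$ contributions cancel.

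For part 3, the plan is a direct computation: write $R\cdot (PQ)=(RP)Q=\chi(R,P)(PR)Q=\chi(R,P)P(RQ)=\chi(R,P)\chi(R,Q)(PQ)R$, where the middle steps use part 1 applied to the pairs $(R,P)$ and $(R,Q)$ (lifted trivially to the ``up to phase'' setting because $\chi$ ignores phases by \cref{eq:trivialpropCHI}). Comparing with $R(PQ)=\chi(R,PQ)(PQ)R$ gives the multiplicativity. For part 4, the plan is to consider the two cases. If $[P,Q]=0$, then $\chi(P,Q)=1$ so $K=PQ$, and $K\in\pm\mathbb{P}_n$ up to a phase follows from closure of Pauli multiplication. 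If $\{P,Q\}=0$, then $K=iPQ$; using $(PQ)^\dagger=QP=-PQ$ shows $K$ is Hermitian, and $K^2=-(PQ)^2=-\chi(P,Q)\id=\id$ shows $K$ squares to identity, so $K\in\pm\mathbb{P}_n$. The equalities $\chi(P,Q)=\chi(P,K)=\chi(K,P)=\chi(Q,K)=\chi(K,Q)$ then follow immediately from part 3 together with the symmetries in \cref{eq:trivialpropCHI}: for instance $\chi(P,K)=\chi(P,PQ)=\chi(P,P)\chi(P,Q)=\chi(P,Q)$, and the remaining equalities reduce to the same calculation after relabeling.

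The whole argument is essentially bookkeeping; the only subtle point is the half-commuting statement used in part 2, which is the only place where the global structure of $\mathbb{P}_n$ (via its symplectic representation) enters rather than just pairwise (anti)commutation. Everything else is a clean case analysis on commutation versus anticommutation.
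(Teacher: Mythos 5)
Your proof is correct, and for items 1 and 3 it is essentially the paper's argument (case analysis on commutation, and the chain $R(PQ)=\chi(R,P)\chi(R,Q)(PQ)R$, which the paper carries out inside the trace instead). Where you genuinely diverge is item 2: the paper evaluates $\sum_Q\chi(P,Q)=\frac1d\sum_Q\tr(QPQP)$ by invoking the $1$-design identity $\frac{1}{d^2}\sum_Q QPQ=\frac{\tr(P)}{d}\,\mathbb{1}$, whereas you count directly, using that for $P\neq\mathbb{1}$ the functional $b(Q)\mapsto b(P)^T(J+J^T)b(Q)$ is a nonzero linear form on $\mathbb{F}_2^{2n}$, so exactly half of $\mathbb{P}_n$ commutes with $P$ and the $\pm1$ contributions cancel; both are standard, and your version makes the ``half-commuting'' structure explicit while the paper's is a one-line trace manipulation. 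In item 4 you also deduce the equalities $\chi(P,K)=\chi(P,Q)$ etc.\ from the multiplicativity of item 3 together with phase invariance, which is arguably cleaner than the paper's computation of $[P,PQ]=(1-\chi(P,Q))Q$. One small point of polish: in the commuting case of item 4 you attribute $K=PQ\in\pm\mathbb{P}_n$ to ``closure of Pauli multiplication'', but closure alone only gives a phase in $\{\pm1,\pm i\}$; the Hermiticity-plus-$K^2=\mathbb{1}$ argument you give in the anticommuting case applies verbatim here (since $(PQ)^\dagger=QP=PQ$ when $[P,Q]=0$) and should be cited for both cases. This is a cosmetic fix, not a gap.
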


\begin{proof}
The first claim follows directly from the definition of $ \chi(P, Q) $ and the commutation relations of Pauli operators.
The second claim follows from the fact that
\begin{align}
    \sum_{Q\in\mathbb{P}_n} \chi(P, Q) = \frac{1}{d}\sum_{Q\in\mathbb{P}_n}  \tr(QPQP) = \tr(P)^2=d^2 \delta_{P,\mathbb{1}},
\end{align}
where in the third step we have used  that $\frac{1}{d^2}\sum_{\mathbb{P}_{n}}QPQ=\frac{\tr(P)}{d} \mathbb{1}$, which follows from the fact that the uniform distribution over the Pauli basis is a $1$-design.

For the third claim, we have
\begin{align}
    \chi(R, PQ) &= \frac{1}{d} \tr(R PQ R^\dagger (PQ)^\dagger) = \frac{1}{d} \tr(RPQRQP) = \frac{1}{d} \chi(R, Q) \tr(RP R^\dagger P^\dagger) = \chi(R, Q)\chi(R, P),
\end{align}
where in the first step we have used  the definition of $ \chi $, and in the third step we have used  that $ QRQ = \chi(R, Q) R $.

For the last claim, define $ K' = \phi PQ $, where $ \phi \in \{ \pm 1, \pm i \} $ is a phase factor such that $ K' \in \mathbb{P}_n $ up to a $ \pm 1 $ factor. 
Since $ K'^2 = \mathbb{1} $, we have
\begin{align}
    \mathbb{1} = K'^2 = (\phi PQ)^2 = \phi^2 P Q P Q = \phi^2 \chi(P, Q) \mathbb{1}.
\end{align}
This implies $ \phi^2 \chi(P, Q) = 1 $, so $ \phi = \pm \sqrt{\chi(P, Q)} $. Thus, $ K' = \sqrt{\chi(P, Q)} PQ $ is in $ \mathbb{P}_n $, up to a $ \pm 1 $ factor.

Finally, $ \chi(P, Q) = \chi(P, K)  $ follows from the first item in this list, along with the fact that $ [P, K] $, up to a $ \pm 1 $ factor, is given by
\begin{align}
 [P, K] \propto [P, PQ] &= P^2 Q - PQP = Q - \chi(P, Q) Q = (1 - \chi(P, Q)) Q,
\end{align}
i.e., $ P $ and $ Q $ commute if and only if $ P $ commutes with $ K $.
We have $ \chi(P, K) = \chi(K, P) $ due to \cref{eq:trivialpropCHI} 
and the fact that $K$ and $P$ are Hermitian. 
Similarly, it follows $\chi(P, Q) =\chi(Q, K)$.
\end{proof}

\subsection{Permutation operators}

We denote $ S_k $ as the group of permutations of $ k $ objects, also known as the \textit{symmetric group}~\cite{fulton_representation_2004}. The symmetric group $ S_k $ is a discrete group of order $ |S_k| = k! $. For $ \pi \in S_k $, we define $ T_{\pi} \in \mathcal{B}(\mathcal{H}^{\otimes k}) $ as the unitary representation of $ \pi $ on $ k $ copies of the Hilbert space of $ n $-qubits, given by the following definition.

\begin{definition}[Permutation operators]
    Given $ \pi \in S_k $, an element of the symmetric group $ S_k $, the permutation matrix $ T_{\pi} $ is the unitary matrix satisfying:
    \begin{align}
    T_{\pi} \ket{\psi_1} \otimes \cdots \otimes \ket{\psi_k} = \ket{\psi_{\pi^{-1}(1)}} \otimes \cdots \otimes \ket{\psi_{\pi^{-1}(k)}},
    \label{def:permutation}
    \end{align}
    for all $ \ket{\psi_1}, \dots, \ket{\psi_k} \in \mathbb{C}^{d} $.
\end{definition}
Permutation operators are linearly independent for $k\le d$, and linearly dependent otherwise.

With this definition in hand, we can state the following result, often referred to as Schur-Weyl duality, which provides a full characterization of the commutant of the unitary group.
The following lemma characterizes a basis for the commutant of the unitary group as permutation operators $T_{\pi}$ with $\pi\in S_k$.
\begin{lemma}[Commutant of the unitary group]\label{lem:commutantunitary} Let $\mathcal{U}(d)$ be the unitary group, $\com(\mathcal{U}(d))$ the $k$-th order commutant of the unitary group, and $S_k$ the symmetric group. Then
\be
\com(\mathcal{U}(d)^{\otimes k})=\operatorname{span}\{T_{\pi}\,:\, \pi\in S_k\}, 
\ee
i.e., permutation operators provide a basis of the commutant of the unitary group. For $k\le d$, then the dimension of the commutant coincides with the number of permutation operators, i.e., $\dim(\com(\mathcal{U}(d)^{\otimes k}))=k!$. For the proof, we refer to Refs.~\cite{collins_moments_2003,collins_integration_2006}.
\end{lemma}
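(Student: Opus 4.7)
The plan is to prove the two inclusions separately and then use representation theory to extract the dimension. The easy containment $\operatorname{span}\{T_\pi : \pi \in S_k\} \subseteq \com(\mathcal{U}(d)^{\otimes k})$ follows directly from the defining property of $T_\pi$: acting on a product vector $\ket{\psi_1} \otimes \cdots \otimes \ket{\psi_k}$, one checks $T_\pi U^{\otimes k} = U^{\otimes k} T_\pi$ by simply relabeling the tensor factors, and product vectors span $\mathcal{H}^{\otimes k}$.

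For the reverse inclusion, I would frame the problem in terms of the two unital $*$-closed subalgebras $\mathcal{A} \coloneqq \operatorname{span}\{T_\pi : \pi \in S_k\}$ and $\mathcal{B} \coloneqq \operatorname{span}\{U^{\otimes k} : U \in \mathcal{U}(d)\}$ of $\mathcal{B}(\mathcal{H}^{\otimes k})$. Both are finite-dimensional von Neumann algebras, so by the double commutant theorem $\mathcal{A} = \mathcal{A}''$ and $\mathcal{B} = \mathcal{B}''$. The first step already shows $\mathcal{A} \subseteq \mathcal{B}'$, so it remains to prove the reverse inclusion $\mathcal{B}' \subseteq \mathcal{A}$. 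The cleanest route is via the Schur--Weyl decomposition
\begin{equation}
\mathcal{H}^{\otimes k} \;\cong\; \bigoplus_{\lambda \vdash k,\, \ell(\lambda) \leq d} V_\lambda \otimes W_\lambda,
\end{equation}
where $V_\lambda$ carries an irreducible polynomial representation of $\mathrm{GL}(d)$ (restricting to an irrep of $\mathcal{U}(d)$) and $W_\lambda$ carries the associated Specht module of $S_k$. Irreducibility of each $V_\lambda$ under $\mathcal{B}$ gives, by Schur's lemma, $\mathcal{B}' = \bigoplus_\lambda \mathds{1}_{V_\lambda} \otimes \mathrm{End}(W_\lambda)$; irreducibility of each $W_\lambda$ under $\mathcal{A}$, together with the Jacobson density theorem (equivalently, Burnside's theorem), shows that the image of $\mathcal{A}$ already fills this block-diagonal algebra, yielding $\mathcal{A} = \mathcal{B}'$.

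For the dimension statement, when $k \leq d$ every partition $\lambda \vdash k$ automatically satisfies $\ell(\lambda) \leq d$, so all Specht modules appear and
\begin{equation}
\dim \com(\mathcal{U}(d)^{\otimes k}) \;=\; \sum_{\lambda \vdash k} (\dim W_\lambda)^2 \;=\; |S_k| \;=\; k!,
\end{equation}
where the middle equality is the classical sum-of-squares identity applied to the regular representation of $S_k$. Linear independence of $\{T_\pi\}_{\pi \in S_k}$ in the regime $k \leq d$ follows from this count combined with the surjection $\mathbb{C}[S_k] \twoheadrightarrow \mathcal{A}$.

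The main obstacle is justifying the Schur--Weyl decomposition itself, specifically the irreducibility of each $\mathcal{U}(d)$-isotypic component $V_\lambda$. This is where the nontrivial content sits, and the standard route is either (i) a polarization argument establishing that $\mathcal{B}$ equals the full symmetric subspace of $\mathcal{B}(\mathcal{H})^{\otimes k}$ under $S_k$-conjugation, followed by a direct check that $\mathcal{A}$ and the symmetric operator subspace are mutual commutants, or (ii) explicit construction of highest-weight vectors in each $V_\lambda$ via Young symmetrizers. Either route requires more representation-theoretic machinery than the commutant calculation proper; for this reason the paper sensibly offloads it to the references \cite{collins_moments_2003, collins_integration_2006}.
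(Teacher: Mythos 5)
The paper gives no proof of this lemma at all --- it simply cites Refs.~\cite{collins_moments_2003,collins_integration_2006} --- so there is no in-paper argument to compare against; your task here was effectively to reconstruct the standard proof, and your outline is the correct standard one. The easy inclusion, the identification of $\mathcal{A}$ and $\mathcal{B}$ as unital $*$-subalgebras, the use of the double commutant theorem, the block structure $\mathcal{B}' = \bigoplus_\lambda \mathds{1}_{V_\lambda}\otimes\mathrm{End}(W_\lambda)$, and the dimension count $\sum_{\lambda\vdash k}(\dim W_\lambda)^2 = k!$ for $k\le d$ (with linear independence of the $T_\pi$ following from the surjection $\mathbb{C}[S_k]\twoheadrightarrow\mathcal{A}$ onto an algebra of dimension $k!$) are all sound. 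The one point worth stating more carefully is that invoking the Schur--Weyl decomposition with $W_\lambda$ identified as Specht modules is not a shortcut around the duality --- that identification \emph{is} the duality --- so the genuine content of the hard inclusion $\mathcal{B}'\subseteq\mathcal{A}$ lives entirely in your step (i) (showing $\operatorname{span}\{U^{\otimes k}\}$ exhausts the $S_k$-symmetric part of $\mathcal{B}(\mathcal{H})^{\otimes k}$ via polarization, then taking commutants) or step (ii) (Young symmetrizers); you flag this explicitly, so the write-up is honest about where the weight sits, and deferring that machinery to the cited references is exactly what the paper itself does.
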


Note that from the definition of permutation operators, it follows that $ T_{\sigma} T_{\pi} = T_{\sigma \pi} $ and  $ T_{\pi^{-1}} = T^\dagger_{\pi} $. 
Equivalently, we can write the permutation matrix as
\begin{align}
T_{\pi} = \sum_{i_1, \dots, i_k \in [d]^k} \ketbra{i_{\pi^{-1}(1)}, \ldots, i_{\pi^{-1}(k)}}{i_1, \dots, i_k}.
\end{align}
Thus, we have the property
\begin{align}
T_{\pi} \left( A_1 \otimes \cdots \otimes A_k \right) T^\dagger_{\pi} = A_{\pi^{-1}(1)} \otimes \cdots \otimes A_{\pi^{-1}(k)},
\end{align}
for $ A_1, \dots, A_k \in \mathcal{B}(\mathcal{H}) $.

Another notable property of permutation operators $ T_{\pi} $ is that they factorize on $ n $-qubits. Specifically, for any $ \pi \in S_k$, one can express $ T_{\pi} $ as a product of permutation operators 
\begin{align}
    T_{\pi} = t_{\pi}^{\otimes n}
\end{align}
acting on individual qubits,
where $ t_{\pi}\coloneqq \sum_{i_1, \dots, i_k \in \{0,1\}^k} \ketbra{i_{\pi^{-1}(1)}, \ldots, i_{\pi^{-1}(k)}}{i_1, \dots, i_k}$ is the operator representation of $\pi$ on the single-qubit Hilbert space $\mathcal{B}(\mathbb{C}^2)$.
Swap operators are defined as the permutation operators $ T_{(ij)} $, where $ \pi = (ij) $ corresponds to swapping the $ i $-th and $ j $-th registers.  
We can also express swap operators in the Pauli basis. For example,  
\begin{align}
\label{eq:swapoperator}
    T_{(1\,2)} = \frac{1}{d} \sum_{P \in \mathbb{P}_n} P^{\otimes 2} \otimes \mathbb{1}_{d^{k-2}}. \,
\end{align}
A generating set of permutations consists of the $ k-1 $ neighboring swap operations $ \{T_{(i\,i+1)}\}_{i \in [k-1]} $. 
 
In fact, the minimal number of swaps $ m$ required to generate a given permutation $ \pi $ follows from its cycle notation: since a cycle of length $ l $ can be generated by $ l-1 $ swaps, we obtain $m = k - \#\text{cycles}\le k-1$.

\vspace{1em}
\subsection{Haar average over the unitary group}
We now review the Haar average over the full unitary group and introduce the key techniques for computing the twirling operator when $ \mathcal{G}_n = \mathcal{U}(2^n) $. For a more comprehensive tutorial on this topic, we refer to other references (e.g., see Refs.~\cite{kliesch_theory_2021,Mele_2024}).
For simplicity, we denote the twirling operator over the full unitary group as $ \Phi_{\haar}^{(k)}(\cdot) $, following the common convention in the literature, where the Haar measure over the unitary group is often referred to simply as the \textit{Haar measure}. Exploiting \cref{lem:commutantunitary}, we can write the twirling operator $\Phi_{\haar}^{(k)}(O)$ in \cref{twirlingofoperatorOunitary} as
\be
\Phi_{\haar}^{(k)}(O)=\sum_{\pi\in S_k}c_{\pi}(O)T_{\pi}\,.
\label{expressionintermsofpermutaitonsTpi}
\ee
To determine the explicit form of the coefficients, we now introduce the so-called Weingarten calculus~\cite{weingarten_asymptotic_1978}.
\begin{lemma}[Weingarten calculus]\label{cor:weingartencalculushaar}
Let $ \Phi_{\haar}^{(k)}(O) $ be the $ k $-fold twirling of an operator $ O \in \mathcal{B}(\mathcal{H}^{\otimes k}) $, and let $ S_k $ be the symmetric group. Then, we have
\begin{equation}
    \Phi_{\haar}^{(k)}(O) = \sum_{\pi,\sigma \in S_k} (\boldsymbol{\Lambda}^{-1})_{\pi,\sigma} \operatorname{tr}(T_{\sigma}O) T_{\pi},
\end{equation}
where the coefficients $ (\boldsymbol{\Lambda}^{-1})_{\pi,\sigma} $ are the Weingarten functions, defined through the (pseudo-)inverse of the symmetric matrix $ \boldsymbol{\Lambda} $ with components 
\begin{equation}
    \boldsymbol{\Lambda}_{\pi,\sigma} = \operatorname{tr}(T_{\pi}T_{\sigma}) = d^{|\pi\sigma|}.
\end{equation}
Here, $ |\pi| $ denotes the \textit{length} of the permutation, defined as the number of cycles in the cycle decomposition of $ \pi $.
\end{lemma}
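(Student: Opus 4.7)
The plan is to combine Schur--Weyl duality with a simple linear algebra argument. By the earlier lemma characterizing the commutant of the unitary group, $\com(\mathcal{U}(d)^{\otimes k}) = \operatorname{span}\{T_\pi : \pi \in S_k\}$, while the lemma establishing that twirling lands in the commutant gives $\Phi_{\haar}^{(k)}(O) \in \com(\mathcal{U}(d)^{\otimes k})$. Therefore we may write $\Phi_{\haar}^{(k)}(O) = \sum_{\pi \in S_k} c_\pi(O)\, T_\pi$ for some coefficients $c_\pi(O)$. Identifying these coefficients is then a finite-dimensional linear algebra problem in the basis of permutation operators, which will be solved by inner-producting both sides with each $T_\sigma$.

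First I would compute $\operatorname{tr}(T_\pi T_\sigma)$. Using the group law $T_\pi T_\sigma = T_{\pi\sigma}$ and the basis expansion of $T_\rho$ in the definition of permutation operators, a direct evaluation $\operatorname{tr}(T_\rho) = \sum_{i_1,\dots,i_k} \prod_j \delta_{i_j, i_{\rho^{-1}(j)}}$ shows that the indices must be constant on each cycle of $\rho$, yielding $\operatorname{tr}(T_\rho) = d^{\#\mathrm{cycles}(\rho)} = d^{|\rho|}$. Thus $\boldsymbol{\Lambda}_{\pi,\sigma} = \operatorname{tr}(T_\pi T_\sigma) = d^{|\pi\sigma|}$, establishing the stated form of $\boldsymbol{\Lambda}$ (and in particular its symmetry, since $|\pi\sigma| = |\sigma\pi|$).

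Next I would extract the coefficients. Applying $\operatorname{tr}(T_\sigma \cdot)$ to $\Phi_{\haar}^{(k)}(O)$ and using left-right invariance of the Haar measure together with the fact that $T_\sigma$ commutes with $U^{\otimes k}$ for every unitary $U$ (since $T_\sigma$ lies in the commutant), one obtains the identity
\begin{equation}
\operatorname{tr}(T_\sigma \Phi_{\haar}^{(k)}(O)) = \int d\mu_H(U)\, \operatorname{tr}(U^{\dagger \otimes k} T_\sigma U^{\otimes k} O) = \operatorname{tr}(T_\sigma O).
\end{equation}
On the other hand, expanding in the permutation basis gives $\operatorname{tr}(T_\sigma \Phi_{\haar}^{(k)}(O)) = \sum_\pi \boldsymbol{\Lambda}_{\sigma,\pi} c_\pi(O)$. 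This is a linear system $\boldsymbol{\Lambda}\, c(O) = t(O)$, where $t(O)_\sigma \coloneqq \operatorname{tr}(T_\sigma O)$. Solving gives $c_\pi(O) = \sum_\sigma (\boldsymbol{\Lambda}^{-1})_{\pi,\sigma} \operatorname{tr}(T_\sigma O)$, and substitution yields the claimed formula.

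The main obstacle will be the regime $k > d$, where the permutation operators $\{T_\pi\}_{\pi\in S_k}$ become linearly dependent and $\boldsymbol{\Lambda}$ is singular, which is why the statement invokes the pseudo-inverse. To handle this, I would argue as follows: the vector $t(O) \in \mathbb{R}^{S_k}$ with entries $\operatorname{tr}(T_\sigma O)$ lies in the image of $\boldsymbol{\Lambda}$, because $t(O) = t(\Phi_{\haar}^{(k)}(O))$ by the invariance argument above, and $\Phi_{\haar}^{(k)}(O)$ itself is a linear combination of permutations so that $t(\Phi_{\haar}^{(k)}(O)) = \boldsymbol{\Lambda} c(O)$ for any choice of coefficients $c(O)$ representing $\Phi_{\haar}^{(k)}(O)$. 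Hence the Moore--Penrose pseudo-inverse applied to $t(O)$ returns a valid coefficient vector, and ambiguity in the kernel of $\boldsymbol{\Lambda}$ corresponds exactly to the linear dependencies among the $T_\pi$, so that $\sum_\pi c_\pi(O)T_\pi = \Phi_{\haar}^{(k)}(O)$ holds independently of the particular representative chosen. This closes the proof.
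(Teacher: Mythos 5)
Your proposal is correct and follows essentially the same route as the paper's proof: expand the twirl in permutation operators via Schur--Weyl, trace against $T_\sigma$ using Haar invariance to obtain the linear system $\boldsymbol{\Lambda}\,c(O)=t(O)$, and invert. Your additional argument that $t(O)$ lies in the image of $\boldsymbol{\Lambda}$ when $k>d$ is a welcome bit of extra care for the pseudo-inverse case, which the paper only asserts via the Gram-matrix remark after the lemma.
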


\begin{proof}
Multiplying both sides of \cref{expressionintermsofpermutaitonsTpi} by a permutation operator $ T_{\pi} $ and taking the trace, we obtain
\begin{equation}
    \operatorname{tr}(\Phi_{\haar}^{(k)}(O)T_{\sigma}) = \sum_{\pi\in S_k} c_{\pi}(O) \operatorname{tr}(T_{\pi}T_{\sigma}).
\end{equation}
From this expression, it is evident that $ \boldsymbol{\Lambda} $ is a symmetric matrix.
Using \cref{twirlingofoperatorOunitary} and the fact that $ T_{\pi} \in \com(\mathcal{U}(2^n)^{\otimes k}) $, we observe that $\operatorname{tr}(\Phi_{\haar}^{(k)}(O)T_{\sigma}) = \operatorname{tr}(O T_{\sigma}).$
Now, defining the $ k! \times k! $ matrix $ \boldsymbol{\Lambda} $ with components $ \Lambda_{\pi,\sigma} = \operatorname{tr}(T_{\pi}T_{\sigma}) $, we can invert the above relation to solve for the coefficients $ c_{\pi}(O) $, to get
\begin{equation}
    c_{\pi}(O) = \sum_{\sigma \in S_k} (\boldsymbol{\Lambda}^{-1})_{\pi,\sigma} \operatorname{tr}(T_{\sigma} O).
    \label{coefficientsweingartencalculus}
\end{equation}
\end{proof}
The matrix $ \boldsymbol{\Lambda} $ is invertible for $ k \leq d $, since it is the Gram matrix of the linearly independent family $\{T_\pi\}_{\pi\in S_k}$ with respect to the Hilbert--Schmidt inner product. When $k> d$, it is still possible to employ the Moore-Penrose pseudoinverse.

Hence, after obtaining the matrix $\boldsymbol{\Lambda}$—which is independent of the specific choice of operator $O$—fully characterizing the twirling channel as given in \cref{twirlingofoperatorOunitary} requires computing the $k!$ coefficients $\tr(O T_{\sigma})$ for every $\sigma \in S_k$.  

Lastly, let us introduce the asymptotic behavior of the Weingarten functions (see \cref{cor:weingartencalculushaar}) in the limit of large $d$ and $k = o(d)$. In many cases--if not most--we are primarily interested in studying the asymptotic behavior of averaged quantities as $d$ grows. Notably, Don Weingarten has been the first to analyze the asymptotic behavior of Weingarten functions in Ref.~\cite{weingarten_asymptotic_1978}, and later, Collins~\cite{collins_integration_2006} provided a general formula for integration over the unitary group $\mathcal{U}(2^n)$.

\begin{lemma}[Asymptotics of Weingarten functions~\cite{weingarten_asymptotic_1978}] \label{asymptoticweingarten} 
Let $\boldsymbol{\Lambda}$ be the $k!\times k!$ matrix with components $\boldsymbol{\Lambda}_{\pi,\sigma}=\tr(T_{\pi}T_{\sigma})$. Then, in the limit of large $d$, the components of the inverse matrix $\boldsymbol{\Lambda}^{-1}$ behave as
\be
(\boldsymbol{\Lambda}^{-1})_{\pi,\pi}&=&d^{-k}+O(d^{-(k+2)}),\\
(\boldsymbol{\Lambda}^{-1})_{\pi,\sigma}&=&O(d^{-(k+|\pi\sigma|)}),\quad\pi\neq\sigma.
\ee    
\end{lemma}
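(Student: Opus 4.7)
The plan is to treat $\boldsymbol{\Lambda}$ perturbatively in the large-$d$ expansion, exploiting that $\boldsymbol{\Lambda}_{\pi,\sigma}=d^{|\pi\sigma|}$ attains its maximal value $d^k$ precisely when $\pi\sigma=e$, and decays by at least one factor of $d$ otherwise (since $|\tau|<k$ for every non-identity $\tau\in S_k$). The first step is to isolate this leading structure by a relabeling that places the $d^k$-entries on the diagonal: introducing the involution $P_{\pi,\sigma}\coloneqq\delta_{\sigma,\pi^{-1}}$ one finds $(\boldsymbol{\Lambda}P)_{\pi,\sigma}=d^{|\pi\sigma^{-1}|}$, so that $\boldsymbol{\Lambda}P=d^k(\mathbb{1}+E)$ with $E_{\pi,\pi}=0$ and $E_{\pi,\sigma}=d^{-d_C(\pi,\sigma)}$, where $d_C(\pi,\sigma)\coloneqq k-|\pi\sigma^{-1}|\geq 1$ is the Cayley distance on $S_k$.

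Next, I would expand the inverse by the Neumann series $(\mathbb{1}+E)^{-1}=\sum_{n\geq 0}(-E)^n$, so that $\boldsymbol{\Lambda}^{-1}=d^{-k}P(\mathbb{1}+E)^{-1}$. Since $E$ has vanishing diagonal, the first-order correction $-E_{\pi,\pi}$ contributes nothing along the diagonal, and the leading non-trivial correction is $(E^2)_{\pi,\pi}=\sum_{\tau\neq\pi}d^{-2d_C(\pi,\tau)}$, which is $O(d^{-2})$ because the dominant summands come from the $\binom{k}{2}$ single-transposition neighbors of $\pi$ (at Cayley distance $1$), and all further neighbors are suppressed by at least $d^{-4}$. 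After composing with $P$ and multiplying by $d^{-k}$, this yields the claimed diagonal expansion $(\boldsymbol{\Lambda}^{-1})_{\pi,\pi}=d^{-k}+O(d^{-(k+2)})$. For off-diagonal entries, the leading contribution is a single insertion of $-E$, already suppressed by $d^{-d_C}$; translating the Cayley distance back into the cycle-length notation $|\pi\sigma|=k-d_C$ employed in the lemma recovers the stated bound $O(d^{-(k+|\pi\sigma|)})$.

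The main obstacle is ensuring that the Neumann series converges entrywise in the intended asymptotic regime and that the subleading terms are controlled uniformly. Convergence is straightforward at fixed $k$ with $d\to\infty$, where the crude bound $\|E\|_{\mathrm{op}}\leq\sum_{\tau\neq e}d^{-d_C(e,\tau)}\leq (k!-1)/d$ gives $\|E\|_{\mathrm{op}}\to 0$. The more delicate point is to enumerate walks in the Cayley graph of $S_k$ contributing to $(E^n)_{\pi,\sigma}$ carefully enough to guarantee that higher-order terms are indeed absorbed into the claimed error terms, in particular that no accidental cancellation in the signed sum $\sum_{n}(-E)^n$ inflates the diagonal leading coefficient beyond $1$. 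This combinatorial bookkeeping—together with the Moore--Penrose variant required when $\boldsymbol{\Lambda}$ becomes singular for $k>d$—is carried out rigorously in~\cite{collins_integration_2006}, and I would invoke that analysis to close the argument.
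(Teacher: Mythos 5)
The paper does not actually prove this lemma—it is quoted from Weingarten and Collins—so there is no in‑paper argument to compare against; your Neumann‑series derivation is the standard route, and its skeleton is sound. With $P_{\pi,\sigma}=\delta_{\sigma,\pi^{-1}}$ one indeed gets $\boldsymbol{\Lambda}P=d^{k}(\mathbb{1}+E)$ with $E_{\pi,\pi}=0$, $E_{\pi,\sigma}=d^{-d_C(\pi,\sigma)}$, $\|E\|_{\mathrm{op}}\le (k!-1)/d$; the diagonal correction of $(\mathbb{1}+E)^{-1}$ starts at $(E^{2})_{\pi,\pi}=O(d^{-2})$, the off‑diagonal leading term is a single $-E$ insertion, and at fixed $k$ the higher‑order walk counting is routine (triangle inequality for the Cayley distance plus a crude $k!$ bound on the number of walks), so deferring that bookkeeping to Collins is acceptable; the worry about sign cancellations is a non‑issue since only upper bounds on the corrections are needed.

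However, the final identification with the stated lemma contains two concrete slips. (i) Since $\boldsymbol{\Lambda}^{-1}=d^{-k}P(\mathbb{1}+E)^{-1}$, i.e.\ $(\boldsymbol{\Lambda}^{-1})_{\pi,\sigma}=d^{-k}\big((\mathbb{1}+E)^{-1}\big)_{\pi^{-1},\sigma}$, composing with $P$ places the entries $d^{-k}\big(1+O(d^{-2})\big)$ at the positions $(\pi,\pi^{-1})$, not on the diagonal: under the paper's own convention $\boldsymbol{\Lambda}_{\pi,\sigma}=\tr(T_{\pi}T_{\sigma})$ the literal diagonal claim holds only for involutions (the paper's explicit $k=3$ inverse shows the $\big((123),(123)\big)$ entry is $\Theta(d^{-5})$ while the $\big((123),(132)\big)$ entry is $\approx d^{-3}$). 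Your computation is correct and proves the statement in the convention $\boldsymbol{\Lambda}_{\pi,\sigma}=d^{|\pi\sigma^{-1}|}$, which is evidently what the lemma intends, but as written you assert the literal diagonal expansion without noticing the relabeling. (ii) The closing ``translation'' of the off‑diagonal exponent is arithmetically wrong: your argument yields $O\big(d^{-(k+d_C)}\big)$ with $d_C=k-\#\mathrm{cycles}(\pi\sigma)$, i.e.\ $O\big(d^{-(2k-\#\mathrm{cycles}(\pi\sigma))}\big)$, and this coincides with the stated $O\big(d^{-(k+|\pi\sigma|)}\big)$ only if $|\pi\sigma|$ means the minimal number of transpositions (Cayley weight), not the number of cycles as $|\cdot|$ is defined in the Weingarten‑calculus lemma just above; writing $|\pi\sigma|=k-d_C$ and then claiming the displayed bound conflates the two conventions, which agree only when $\#\mathrm{cycles}=k/2$. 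Both mismatches are inherited from the lemma's own loose notation, but a complete proof should state explicitly which convention it establishes (the expansion for $\pi\sigma=e$ and the bound in terms of the transposition distance of $\pi\sigma$) rather than bridge the gap with an equality that does not hold.
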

The above asymptotics provide an extremely useful tool for applications (see also Ref.~\cite{liu_entanglement_2018}).
In the following proposition, we present an example of a Haar average over the unitary group in the context of the orbit of a pure quantum state, which serves as a useful illustration for the subsequent discussion on averaging over the Clifford group. Additionally, in \cref{app:haaraverage}, we provide a user-friendly guide on how to perform such averages for relatively low values of $k$.  

An example of particular interest arises when the operator $O \in \mathcal{B}(\mathcal{H}^{\otimes k})$ in \cref{twirlingofoperatorOunitary} is chosen as $O = \ketbra{\psi_0}^{\otimes k}$, where we consider $k$ identical copies of a  state vector $\ket{\psi_0}$.

\begin{proposition}[Haar orbit of pure states]\label{Sec: orbitofhaarstates}
Let $d$ and $k$ be natural numbers. Consider a pure quantum state $\psi_0$. Throughout this manuscript, we denote the uniform (Haar) measure over pure states in the Hilbert space by $\de\psi$. Specifically, we have the  expression
\begin{align}
    \Psi_{\haar}^{(k)} \coloneqq \Phi_{\haar}^{(k)}(\psi_0^{\otimes k}) = \int \de\psi \, \psi^{\otimes k} = \frac{\Pi_{\sym}}{\tr(\Pi_{\sym})}
\end{align}
for the Haar average,
where $\Pi_{\sym} \coloneqq \frac{1}{k!} \sum_{\pi} T_{\pi}$, and
\begin{align}
    \tr(\Pi_{\sym}) = \binom{d+k-1}{d-1} = \frac{1}{k!} d (d+1)(d+2) \cdots (d+k-1).
\end{align}
\begin{proof}
The Haar orbit of a pure state vector $\ket{\psi_0}$ is independent of the chosen specific state, due to the invariance of the Haar measure. To determine the coefficients $c_{\pi}(\psi_0^{\otimes k})$ in \cref{expressionintermsofpermutaitonsTpi}, we can make use of the symmetric property of $\Psi_{\haar}^{(k)}$, which is invariant under left and right multiplication by permutation operators $\Psi_{\haar}^{(k)} = T_{\pi} \Psi_{\haar}^{(k)} = \Psi_{\haar}^{(k)} T_{\pi}.$
This symmetry implies that the coefficients $c_{\pi}(\psi_0^{\otimes k})$ must be equal for all $\pi$, i.e., $c_{\pi} (\psi_0^{\otimes k}) = N^{-1},$ where $N$ is a normalization constant. Importantly, this step does not require the permutations to be linearly independent.
Thus, we can express $\Psi_{\haar}$ as $\Psi_{\haar} = N^{-1} \sum_{\pi} T_{\pi}.$
To determine the normalization constant $N$, we impose the condition $\tr(\Psi_{\haar}) = 1$, which gives us the final result. The dimension of the symmetric subspaces can be verified explicitly (see, e.g. Ref.~\cite{harrow_church_2013}).
\end{proof}
\end{proposition}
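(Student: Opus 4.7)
The plan is to exploit the combination of Schur--Weyl duality and the strong symmetry satisfied by $\Psi_{\haar}^{(k)}$. By \cref{lem:twirlingbelongstothecommutant}, $\Psi_{\haar}^{(k)}$ lies in $\com(\mathcal{U}(d)^{\otimes k})$, which by \cref{lem:commutantunitary} is spanned by the permutation operators $\{T_\pi\}_{\pi\in S_k}$. Therefore I may write $\Psi_{\haar}^{(k)}=\sum_{\pi\in S_k}c_\pi T_\pi$ for some coefficients $c_\pi$ (which need not be unique when $k>d$, but any valid choice will suffice for what follows).

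Next I would use the fact that any state of the form $\psi_0^{\otimes k}$ is permutation-invariant, since $T_\pi \ket{\psi_0}^{\otimes k}=\ket{\psi_0}^{\otimes k}$. Hence $T_\sigma \psi_0^{\otimes k} T_{\sigma}^\dagger = \psi_0^{\otimes k}$ and, more strongly, $T_\sigma \psi_0^{\otimes k}=\psi_0^{\otimes k} T_\sigma=\psi_0^{\otimes k}$. These identities survive the Haar average, giving $T_\sigma \Psi_{\haar}^{(k)} = \Psi_{\haar}^{(k)} T_\sigma = \Psi_{\haar}^{(k)}$ for every $\sigma\in S_k$. Because $\Pi_{\sym}=\tfrac{1}{k!}\sum_\pi T_\pi$ is the orthogonal projector onto the symmetric subspace and satisfies $\Pi_{\sym}T_\pi=T_\pi\Pi_{\sym}=\Pi_{\sym}$, I can multiply the expansion on the left by $\Pi_{\sym}$ to obtain
\begin{equation}
\Psi_{\haar}^{(k)}=\Pi_{\sym}\Psi_{\haar}^{(k)}=\Pi_{\sym}\sum_{\pi\in S_k}c_\pi T_\pi=\Bigl(\sum_{\pi\in S_k}c_\pi\Bigr)\Pi_{\sym},
\end{equation}
so that $\Psi_{\haar}^{(k)}=\alpha\,\Pi_{\sym}$ for some scalar $\alpha$. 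This circumvents the potential non-uniqueness of the coefficients $c_\pi$ that arises when permutations fail to be linearly independent.

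To fix $\alpha$, I would take traces on both sides. Since $\tr(\psi_0^{\otimes k})=1$, linearity of the Haar integral gives $\tr(\Psi_{\haar}^{(k)})=1$, hence $\alpha=1/\tr(\Pi_{\sym})$. The remaining ingredient is the standard combinatorial identity $\tr(\Pi_{\sym})=\binom{d+k-1}{d-1}$, which I would justify by identifying $\Pi_{\sym}$ with the projector onto the symmetric subspace of $(\mathbb{C}^d)^{\otimes k}$ and counting multi-indices $1\le i_1\le i_2\le\cdots\le i_k\le d$, or equivalently evaluating $\tfrac{1}{k!}\sum_\pi d^{\#\mathrm{cycles}(\pi)}$ via the exponential generating function $\sum_k \binom{d+k-1}{k}z^k=(1-z)^{-d}$.

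I do not expect any substantive obstacle here: the argument is essentially symmetry plus normalization. The only subtle point is the one flagged above, namely that $\{T_\pi\}$ may be linearly dependent when $k>d$; using the symmetric-projector squeeze $\Psi_{\haar}^{(k)}=\Pi_{\sym}\Psi_{\haar}^{(k)}$ sidesteps that issue cleanly and keeps the proof valid for all $d,k\in\mathbb{N}$.
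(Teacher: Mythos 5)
Your proof is correct and follows essentially the same route as the paper: expand $\Psi_{\haar}^{(k)}$ in permutation operators via the unitary commutant, use permutation symmetry to reduce it to a multiple of $\Pi_{\sym}$, and fix the constant by taking the trace. Your use of $\Pi_{\sym}T_\pi=\Pi_{\sym}$ to collapse the expansion is just a slightly more explicit way of handling the possible linear dependence of the $T_\pi$ than the paper's "all coefficients equal" argument, but the substance is the same.
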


\section{Characterization of the Clifford commutant from first principles}\label{sec:comclif}
In this section, we present a rigorous construction of the $k$-th order commutant of the Clifford group from first principles. Our approach is based solely on the fundamental properties of Pauli operators, introduced in \cref{sec:preliminaries}, ensuring a clear and self-contained derivation. 

To this end, we begin by defining an equivalence class associated with a Pauli operator acting on multiple tensor copies of the system. This class is characterized by a \emph{decomposition matrix} $V$ and the \emph{anticommutation graph} $G$ corresponding to the Pauli operators appearing in the decomposition.

\begin{definition}[Equivalence class of Pauli operators]\label{def:equivalencerelationVG} Let $P\in\mathbb{P}_n^{\otimes k}$ be a Pauli operator. Let $P_1,\ldots, P_m\in\mathbb{P}_n$ be algebraically independent Paulis and $V\in\mathbb{F}_{2}^{k\times m}$ with columns $V=(v_1,\ldots, v_m)$ such that $P\propto P_1^{\otimes v_1}\cdots P_m^{\otimes v_m}$. Let $G\in\mathbb{F}_{2}^{m\times m}$ be the anticommutation graph of $P_1,\ldots, P_m$. We define the equivalence class of the pair $(V,G)$ as
\begin{align}
[V,G] \coloneqq \{(V',G') \mid \exists\, A \in \operatorname{GL}(\mathbb{F}_{2}^{m\times m}) \text{ such that } V' = V A,\, G' = A^{-1} G A^{-T} \},
\end{align}
where $A^{-T}$ denotes the inverse transpose of $A$. Furthermore, we can define the map $\mathcal M: \mathbb{P}_n^{\otimes k}\rightarrow \{[V,G]\}_{V,G}$ which maps $ P \mapsto [V,G]$.
\end{definition}
We now verify that the above definition is well posed, i.e., the mapping $ P \mapsto [V,G] $ is consistent.
\begin{proof}  
By \cref{le:uniquenessPV}, any Pauli operator $ P \propto P_1^{\otimes v_1} \cdots P_m^{\otimes v_m} $ is associated with a matrix $ V \in \mathbb{F}_{2}^{k \times m} $ whose columns are $ ( v_1, \dots, v_m ) $, and a matrix $ B_{\boldsymbol{P}} $ whose columns encode the bitstring representation of the Pauli operators $ \mathbf{P} \coloneqq (P_1, \dots, P_m) \in \mathbb{P}_n^{\times k} $. Any other valid decomposition, say $ (V', B_{\boldsymbol{P}'}) $, must be related to $ (V, B_{\boldsymbol{P}}) $ through an invertible transformation $ A \in \operatorname{GL}(\mathbb{F}_{2}^{m \times m}) $, such that $ V' = V A $ and $ B_{\boldsymbol{P}'} = B_{\boldsymbol{P}} A^{-T} $ (\cref{le:uniquenessPV}). Since the adjacency matrix of the anticommutation graph associated with $ \mathbf{P} $ is $ G \coloneqq B_{\boldsymbol{P}}^T (J+J^T) B_{\boldsymbol{P}} $ (by \cref{eq:anticommJ}), it follows that under the transformation $ B_{\boldsymbol{P}'}  = B_{\boldsymbol{P}} A^{-T} $, the corresponding adjacency matrix becomes $ G' = A^{-1} G A^{-T} $. This confirms that $ [V, G] $ is well defined.  
\end{proof}
\subsection{An orthogonal basis for the commutant: the algebraic independent graph-based basis}
We now present the first of our main results, namely an orthogonal basis for the $k$-th order commutant of the Clifford group for any $k$ and $n$, denoting number of qubits. In subsequent sections, we will refer to this basis as \emph{algebraically independent graph-based basis} (see \cref{def:indeppaulimonomials}). 
\begin{theorem}[Orthogonal basis for the Clifford commutant]\label{th:fullcommutantnk}
    Let $\mathcal{M}\colon P \in \mathbb{P}_n^{\otimes k} \mapsto \mathcal{M}(P) = [V,G]$ be the map from the Pauli operators $\mathbb{P}_n^{\otimes k}$ to the equivalence class $[V,G]$, defined in \cref{def:equivalencerelationVG}. Then, the set of operators
\begin{equation}\label{eq:omhI}
    \mho_{I}([V,G]) \coloneqq \frac{1}{|S_{[V,G]}|} \sum_{P\in S_{[V,G]}}\varphi(P) P
\end{equation}
where $S_{[V,G]}=\{Q\in \mathbb{P}_n^{\otimes k}| \mathcal{M}(Q)=[V,G]\}$, $\varphi(P)\coloneqq\tr(PT_{(k\cdots 21)})$ with $V \in \mathrm{Even}(\mathbb{F}_2^{k \times m})$ and $G \in \symftwo$ with $\rank_2(G) \geq 2(m - n)$ (which is equivalent to requiring that $\mho_{I}([V,G]) \neq 0$), forms an orthogonal basis for $\com(\mathcal{C}_n^{\otimes k})$ for any $n, k$.
\end{theorem}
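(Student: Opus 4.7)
The plan is to exploit the fact that any operator on $\mathcal{H}^{\otimes k}$ has a unique Pauli expansion, then use Clifford invariance plus the classification of Pauli orbits (\cref{lem:alg_transforms}) to identify the basis elements as orbit sums. Finally, the non-vanishing condition is read off from a symplectic-embedding count in $\mathbb{F}_2^{2n}$.

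\textbf{Step 1 (membership in the commutant).} For each equivalence class $[V,G]$ with $V$ having even-weight columns, I would verify directly that $\mho_{I}([V,G])\in\com(\mathcal{C}_n^{\otimes k})$. The key points are that (i) the fiber $S_{[V,G]}=\mathcal{M}^{-1}([V,G])$ is invariant under conjugation by $C^{\otimes k}$ for any $C\in\mathcal{C}_n$, because the matrix $V$ is intrinsic to the vectorized bit-string rank decomposition (\cref{le:uniquenessPV}) and the anticommutation graph $G$ is preserved by Clifford conjugation on the single-qubit factor (\cref{lem:alg_transforms}); and (ii) the phase $\varphi(P)=\tr\bigl(P\,T_{(k\cdots 2\,1)}\bigr)$ is Clifford-invariant, since $T_{(k\cdots 2\,1)}$ commutes with $C^{\otimes k}$ and the trace is cyclic. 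Hence $P\mapsto C^{\otimes k}PC^{\dagger\otimes k}$ permutes the summands of $\mho_I([V,G])$ while leaving each $\varphi(P)\,P$ invariant.

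\textbf{Step 2 (spanning and orthogonality).} Take an arbitrary $O\in\com(\mathcal{C}_n^{\otimes k})$ and expand $O=\sum_{Q\in\mathbb{P}_n^{\otimes k}}c_Q\,Q$. Applying the Clifford twirl $\Phi_{\cl}^{(k)}$ (which is the identity on $O$) together with \cref{lem:twirl_finite} reduces $O$ to a linear combination of Clifford-orbit averages of individual Paulis. By \cref{lem:alg_transforms}, the Clifford orbit of a Pauli $Q\propto P_1^{\otimes v_1}\cdots P_m^{\otimes v_m}$ is precisely the set of Paulis whose bit-string decomposition has the same matrix $V$ (up to gauge) and the same anticommutation graph $G$ (again up to the $\mathrm{GL}(\mathbb{F}_2^{m\times m})$ gauge), which is exactly $S_{[V,G]}$; the canonical phase $\varphi(P)$ is the natural Clifford-compatible choice that resolves the $\pm$ ambiguity intrinsic to the $\mathcal{C}_n$-action on signed Paulis (and it also enforces the even-Hamming-weight condition on the columns of $V$, since an odd column would allow $\varphi(P)$ and $\varphi(-P)$ to cancel in the orbit sum). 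Orthogonality is then automatic: distinct equivalence classes $[V,G]\neq[V',G']$ give disjoint fibers $S_{[V,G]}\cap S_{[V',G']}=\emptyset$, so the $\mho_I$'s have disjoint Pauli support and hence are Hilbert--Schmidt orthogonal.

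\textbf{Step 3 (non-vanishing criterion).} Finally, I would show that $\mho_I([V,G])\neq 0$ if and only if $S_{[V,G]}$ is non-empty if and only if $\rank_2(G)\ge 2(m-n)$. By \cref{lem:can_pauli_graph}, we may assume $G$ is in its canonical block form with $r=\rank_2(G)/2$ symplectic pairs and an $(m-2r)$-dimensional radical. Realizing such a $G$ as the anticommutation graph of $m$ algebraically independent Paulis in $\mathbb{P}_n$ amounts to exhibiting an $m$-dimensional subspace $W\subseteq\mathbb{F}_2^{2n}$ whose restricted symplectic form has rank $2r$, i.e.\ $\dim(W\cap W^{\perp})=m-2r$. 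Since $W\cap W^\perp\subseteq W^\perp$ and $\dim W^\perp=2n-m$, this forces $m-2r\le 2n-m$, equivalently $\rank_2(G)=2r\ge 2(m-n)$; conversely, when this inequality holds, such a $W$ is easily constructed by filling up with independent centralizing Paulis, so $S_{[V,G]}$ is non-empty, and none of the phases $\varphi(P)$ vanish on it (a computation I would do once for each canonical representative).

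\textbf{Main obstacle.} The subtlest point is step~2: making precise the correspondence between the Clifford orbit of a Pauli $Q$ (which lives in the \emph{signed} Pauli group) and the equivalence class $[V,G]$ defined in the unsigned setting, and showing that the canonical phase $\varphi(P)$ exactly matches the sign assignment produced by Clifford conjugation. This requires a careful bookkeeping of the $\pm$ factors arising from products of Pauli operators and an explicit verification that $\varphi$ gives a well-defined function on the fiber $S_{[V,G]}$ that is compatible with the orbit structure; essentially everything else reduces to the symplectic linear algebra already developed in the preliminaries.
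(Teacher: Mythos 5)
Your proposal follows essentially the same route as the paper's proof: span the commutant by Clifford twirls of Pauli operators, identify each orbit with an equivalence class $[V,G]$ via \cref{le:uniquenessPV} and \cref{lem:alg_transforms}, use the Clifford invariance of $\varphi$ (because $T_{(k\cdots 2\,1)}$ lies in the unitary commutant) to fix the relative signs and force the even-column condition, get orthogonality from disjoint Pauli supports, and settle non-vanishing via the canonical graph form and a dimension count giving $\rank_2(G)\ge 2(m-n)$. The sign bookkeeping you flag as the main obstacle is resolved exactly as you suggest (if $C^{\otimes k}PC^{\dagger\otimes k}=\theta Q$ then $\varphi(P)=\theta\varphi(Q)$, so the weighted summands are permuted, not individually fixed as your Step~1 loosely states), so there is no substantive gap.
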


\begin{proof} Our proof strategy is similar to that used in the proof of Lemma 4.8 in Ref.~\cite{gross_schurweyl_2019}, which exploits the fact that a basis of the commutant is given by the image of the twirling operator acting on Pauli operators.

In particular, we start by noticing that the image of the Clifford twirling super-operator $\Phi_{\cl}^{(k)}(\cdot)$ defined in \cref{def:twirlingoperatoroverthecliffordgroup} is precisely the commutant of the Clifford group.
So that we can consider the operator basis of Pauli operators $\mathbb{P}_{n}^{\otimes k}$ and construct the commutant as
\begin{align}
\com(\mathcal{C}_n^{\otimes k})=\operatorname{span}\{\Phi_{\cl}^{(k)}(P)\,:\, P\in\mathbb{P}_{n}^{\otimes k}\}\,.
\end{align}
Using \cref{lem:twirl_finite}, we obtain
\begin{align}
    \Phi_{\cl}^{(k)}(P) &= \frac{1}{|\mathcal{C}_n|} \sum_{C \in \mathcal{C}_n} C^{\otimes k} P C^{\dagger \otimes k} 
    = \frac{1}{|\mathrm{orb}(P)|} \sum_{Q \in \mathrm{orb}(P)} Q\,,
\end{align}
where $\mathrm{orb}(P) \coloneqq \{ Q \in \pm \mathbb{P}_n \mid \exists \, C \in \mathcal{C}_n : Q = C^{\otimes k} P C^{\dagger \otimes k} \}$ is the orbit of $ P $ under the Clifford group which maps Pauli operators to Pauli operators up to phases. 

We observe that \( \Phi_{\cl}^{(k)}(P) \) can be zero (for instance, if \( -P \in \operatorname{orb}(P) \), then \( \Phi_{\cl}^{(k)}(P) = 0 \)). Thus, we seek necessary and sufficient conditions under which the twirl is non-zero. 
To this end, we apply \cref{le:uniquenessPV} and write \( P = \phi P_1^{\otimes v_1} \cdots P_m^{\otimes v_m} \), with associated matrices \( V \), \( B_{\boldsymbol{P}} \), and anticommutation graph \( G \) associated to the tuple \( \boldsymbol{P} = (P_1, \ldots, P_m) \).

We now show that the following conditions are equivalent:
\begin{align}
\Phi_{\cl}^{(k)}(P) \neq 0 
\quad \Longleftrightarrow \quad 
\varphi(P) \neq 0 
\quad \Longleftrightarrow \quad 
V \in \mathrm{Even}(\mathbb{F}_2^{k \times m}),
\label{eq:condition}
\end{align}
where we define
\begin{align}
\label{eq:phase_of_twirl}
\varphi(P) \coloneqq \frac{1}{d} \tr\left( P T_{(k \cdots 21)} \right),
\end{align}
which is invariant under Clifford conjugation, since \( T_{(k \cdots 21)} \in \com(\mathcal{C}_n^{\otimes k}) \) (due to the fact that it lies in the unitary group commutant).

First, we show that \( \varphi(P) \neq 0 \) implies \( \Phi_{\cl}^{(k)}(P) \neq 0 \). Indeed, we have \( \varphi(P) = \varphi\left( \Phi_{\cl}^{(k)}(P) \right) \), because \( \varphi \) is linear and Clifford-invariant. Therefore, if \( \varphi(P) \neq 0 \), then \( \Phi_{\cl}^{(k)}(P) \neq 0 \).

Next, we show that \( \varphi(P) \neq 0 \) is equivalent to \( V \in \mathrm{Even}(\mathbb{F}_2^{k \times m}) \). We can write
\begin{align}
\varphi(P)
= \frac{1}{d} \tr\left( P T_{(k \cdots 21)} \right) 
= \frac{1}{d} \phi \tr\left( \prod_{\alpha=1}^{k} \prod_{i=1}^{m} P_i^{(v_i)_\alpha} \right),
\end{align}
where we used the identity \( \tr\left( O_1 \otimes \cdots \otimes O_k \, T_{(k \cdots 1)} \right) = \tr(O_1 \cdots O_k) \), valid for any operators \( O_1, \ldots, O_k \) (see \cref{def:permutation}). The trace is non-zero if and only if the total product of Pauli operators is proportional to the identity, which holds when \( \sum_{i,j} (v_i)_j \, b(P_i) \equiv \sum_i |v_i| \, b(P_i) \equiv 0 \pmod{2} \). Since the \( P_i \) are algebraically independent (see \cref{le:uniquenessPV}), this condition is equivalent to requiring \( |v_i| \equiv 0 \pmod{2} \) for all \( i \), i.e., \( V \in \mathrm{Even}(\mathbb{F}_2^{k \times m}) \).

Finally, we show that \( \Phi_{\cl}^{(k)}(P) \neq 0 \) implies \( V \in \mathrm{Even}(\mathbb{F}_2^{k \times m}) \) (and therefore \( \varphi(P) \neq 0 \)). Suppose, for contradiction, that \( V \notin \mathrm{Even}(\mathbb{F}_2^{k \times m}) \). Then there exists an index \( i \) such that \( |v_i| \equiv 1 \pmod{2} \). Consider a Clifford operator that maps \( P_i \mapsto -P_i \) while leaving all \( P_j \) with \( j \neq i \) unchanged. Under this conjugation, \( P \mapsto -P \), which implies \( \Phi_{\cl}^{(k)}(P) = -\Phi_{\cl}^{(k)}(P) \), and thus \( \Phi_{\cl}^{(k)}(P) = 0 \), a contradiction.

We have therefore established the equivalence claimed in Eq.~\eqref{eq:condition}.

We now show that the elements of \( \operatorname{orb}(P) \) correspond to the elements of \( S_{[V,G]} = \{ Q \in \mathbb{P}_n^{\otimes k} \mid \mathcal{M}(Q) = [V,G] \} \), each multiplied by a specific sign \( \pm 1 \), denoted \( \theta(P,Q) \), depending on \( P \).

Given a Pauli operator written as \( P = \phi P_1^{\otimes v_1} \cdots P_m^{\otimes v_m} \), the action of a Clifford unitary \( C \in \mathcal{C}_n \) on \( P \) takes the form
\begin{align}
C^{\otimes k} P C^{\dagger \otimes k}
= \phi \, C^{\otimes k} \left( P_1^{\otimes v_1} \cdots P_m^{\otimes v_m} \right) C^{\dagger \otimes k}
= \phi \prod_{i=1}^m \left( C^{\otimes k} P_i^{\otimes v_i} C^{\dagger \otimes k} \right).
\end{align}
and, therefore, both $V$ and $G$ are preserved by the action of $C$. Thus, \( \mathcal{M}(C^{\otimes k} P C^{\dagger \otimes k}) = \mathcal{M}(P) \), and so \( C^{\otimes k} P C^{\dagger \otimes k} \in S_{[V,G]} \), up to a global sign.
Conversely, for any \( Q \in S_{[V,G]} \), there exists a Clifford unitary \( C \in \mathcal{C}_n \) such that \( C^{\otimes k} P C^{\dagger \otimes k} = \theta(P,Q) Q \), where \( \theta(P,Q) \in \{ \pm 1 \} \). Hence, the Clifford orbit of \( P \) consists of all the elements of \( S_{[V,G]} \), each weighted by a sign \( \theta(P,Q) \).

We can therefore express the Clifford twirl of \( P \) as
\begin{align}
\Phi_{\cl}^{(k)}(P)
= \frac{1}{|\operatorname{orb}(P)|} \sum_{R \in \operatorname{orb}(P)} R
= \frac{1}{|S_{[V,G]}|} \sum_{Q \in S_{[V,G]}} \theta(P,Q) Q,
\label{eq:twirlingcalculationproofmain}
\end{align}
where \( [V,G] = \mathcal{M}(P) \).
 
We can now observe that $\theta(P,Q)=\varphi(Q)/\varphi(P)=\varphi^*(P)\varphi(Q)$. To see this, consider a Clifford $C$ and $C^{\otimes k}PC^{\dag\otimes k}=\theta Q$ with $\theta=\pm 1$. Since $\phi(P)$ is Clifford invariant, we have $\phi(P)=\phi(C^{\otimes k}PC^{\dag\otimes k})=\theta\phi(Q)$, which implies $\theta=\phi(P)/\phi(Q)=\varphi(Q)/\varphi(P)=\varphi^{*}(P)\varphi(Q)$ proving \cref{eq:twirlingcalculationproofmain}.

We can, therefore, write 
\begin{align}
    \Phi_{\cl}^{(k)}(P)=\varphi^*(P)\mho_I([V,G])\quad\text{where}\quad [V,G]=\mathcal{M}(P).
\end{align}
Thus, we have established that the set of operators $ \mho_I([V,G])$ generates the commutant. Next, we demonstrate that these operators are orthogonal with respect to the Hilbert-Schmidt inner product,  
\begin{align}
    \tr( \mho_I([V,G])^\dagger \mho_I([V',G']) ) = 0 \quad \text{for} \quad [V,G] \neq [V',G'].
\end{align}
In fact, we have
\begin{align}\label{eq:two_norm_of_mho}
\tr(\mho_I([V,G])\mho_I^\dagger([V',G'])) =\frac{1}{|S_{[V,G]}||S_{[V',G']}|} \sum_{\substack{P\in S_{[V,G]}\\P'\in S_{[V',G']}}} \varphi(P) \varphi^{-1}(P') \tr(PP'^\dag) =\frac{d^{k}\delta_{[V,G],[V',G']}}{|S_{[V,G]}|},
\end{align}  
since the trace \( \tr(P P'^\dagger) \) vanishes unless \( P = P' \), which requires \( P \) and \( P' \) to belong to the same orbit. That is, the contribution is nonzero only when \( [V, G] = [V', G'] \), because orbits are disjoint.

This result demonstrates that the operators $\mho_I([V,G])$ are linearly independent, provided that $\mho_I([V,G]) \neq 0$. Since $P \in \mathbb{P}_n^{\otimes k}$ serves as an operator basis, it follows that the set of non-zero, independent graph-based monomials forms a basis of the commutant for any $n$ and $k$.  

The remainder of the proof is now devoted to determining the conditions on $V, G$ under which $\mho_I([V,G])$ is non-zero, which is the case if and only if the set $S_{[V,G]}$ is not empty.
First we note, that this property is independent of $V$, as long as $V\in\mathrm{Even}(\mathbb{F}_{2}^{k\times m})$.
As such, it reduces to asking whether a matrix $G \in \symf$ can represent a valid anticommutation graph of Pauli operators on $n$ qubits. By \cref{lem:can_pauli_graph}, we can take a 
representant of $[V,G]$, where 
\begin{align}
G\coloneq\bigoplus_{i=1}^r\begin{pmatrix}
    0&1\\1&0
\end{pmatrix}\oplus 0_{m-2r,m-2r}
\end{align}  
and where $r\coloneqq \mathrm{rank}_2(G)/2$. Naturally, we can construct a set of Pauli operators that satisfy the anticommutation graph by choosing the set $\{X_i,Z_i\}_{i\leq r} \cup \{Z_{i}\}_{r+1 \leq i\leq m}$. This requires $n\geq m-r$ many qubits.
If instead $n\leq m-r$, we have that $S_{[V,G]}$ must be the empty set, as it must contain $m-r$ algebraically independent and commuting Pauli operators, which requires at least $m-r$ many qubits. 
As such, $r\geq m-n$ needs to be satisfied for $\mho_I([V,G])$ to not vanish. This concludes the proof.
\end{proof}

Another way to express \( \mho_I([V,G]) \), up to a global phase factor, is as follows. Given \( V \in \even \) and \( G \in \symf \), one can define the operator
\begin{align}
\mho_I(V, G) \coloneqq \frac{1}{|S_{[V,G]}|} \sum_{\substack{\mathbf{P} \in \mathbb{P}_n^{\times m} \\ \mathcal{A}(\mathbf{P}) = G \\ \mathbf{P} \text{ alg. ind.}}} \prod_{j=1}^m P_j^{\otimes v_j},
\end{align}
where the sum runs over algebraically independent tuples of Pauli operators \( \mathbf{P} = (P_1, \dots, P_m) \) such that \( \mathcal{A}(\mathbf{P}) = G \). By construction, \( \mho_I(V, G) \) is proportional to \( \mho([V,G]) \), as defined in the previous theorem, with a proportionality factor given by a global phase in \( \{ \pm 1, \pm i \} \), which depends on the choice of representative \( V \) for the equivalence class \( [V,G] \).

\subsection{Dimension of the Clifford commutant}
\cref{th:fullcommutantnk} provides a basis for the commutant for any $k, n$, by leveraging the fact that the image of the Clifford twirling on the Pauli group $\mathbb{P}_n^{\otimes k}$ yields an orthogonal basis of the commutant. The crucial element of the proof is the identification of nonzero orbits through the equivalence classes $[V,G]$ into which each Pauli operator $P \in \mathbb{P}_n^{\otimes k}$ falls. Specifically, the vector space $V$ and the anticommutation graph $G$ associated with $P$ correspond to the only invariants of the Clifford orbit of $P$. By establishing a necessary and sufficient condition for $\Phi_{\cl}^{(k)}(P) \neq 0$ through the equivalence class $[V,G]$, we can directly determine the dimension of the commutant, as stated in the following theorem.

\begin{theorem}[Dimension of the Clifford group commutant]\label{cor:dimcom} The dimension of the Clifford group commutant for any $n$ and $k$ is given by the  expression
\begin{align}\label{eq:dimcommutant}
\dim(\com(\mathcal{C}_n^{\otimes k}))&=\sum_{m=0}^{k-1}\sum_{r=\max(m-n,0)}^{\lfloor m/2\rfloor} 2^{(k+2r-m-1)m-r-2r^2}\prod_{j=1}^{m}\frac{1-2^{-k+j}}{1-2^{-j}}\prod_{i=1}^r \frac{(1-2^{-m+i-1})(1-2^{-m+r+i-1})}{1-2^{2i}}\\
&\simeq\begin{cases}
        2^{\frac{k^2-3k}{2}}& 2n\geq k-1,\\
        2^{2kn-2n^2-3n}& 2n< k-1,
        \end{cases}
        \nonumber
\end{align} 
where `$\simeq$' means `up to factors' bounded with constants from above and below in the subsequent proof. In particular, if $n\geq k-1$, the dimension is independent of $n$, and is equal to $\dim(\com(\mathcal{C}_n^{\otimes k}))=\prod_{i=0}^{k-2}(2^i+1)$.
\begin{proof} 
Using \cref{th:fullcommutantnk}, to determine the dimension of the commutant it is sufficient to count the number of equivalence classes $[V,G]$ corresponding to an operator $\mho_I([V,G])\neq 0$. First, the number of binary vector subspaces of $\mathbb{F}_{2}^{k}$ spanned by $m$ vectors with even Hamming weight is given by the Gaussian Binomial coefficient $\binom{k-1}{m}_2$, see \cref{lem:gaussiancoeff}, which we can rewrite as
    \begin{align}
        \binom{k-1}{m}_2
        =2^{km-m^2-m}\prod_{j=1}^{m}\frac{1-2^{-k+j}}{1-2^{-j}}\,.
    \end{align}
    for future convenience. This determines the number of matrices $V$, up to the linear transformation $A\in\operatorname{GL}(\mathbb{F}_{2}^{m \times m})$, see \cref{def:equivalencerelationVG}.
    Thus, we have:
\begin{align}
\dim(\com(\mathcal{C}_n^{\otimes k})) 
= \sum_{m=0}^{k-1} \binom{k-1}{m}_2 \times (\# \text{ of allowed } m \times m \text{ graphs } G).
\end{align}
    We are now left to count the number of allowed anticommutation graphs $G$ on $n$ qubits. The requirement for this is computed in \cref{th:fullcommutantnk}, namely $n\geq m-r$, where $2r=\mathrm{rank}_2(G)$.
The number $N_0(m,r)$  of $m$-dimensional symmetric matrices with zero diagonal with a given even binary rank $2r$ is given by \cref{lem:cardinalitysymf}:
    \begin{align}
        N_0(m,r)
        &=2^{2mr-r-2r^2} \prod_{i=1}^r \frac{(1-2^{-m+i-1})(1-2^{-m+r+i-1})}{1-2^{-2i}}
    \end{align}
    which exactly counts the number of graphs.
Therefore, to determine the dimension of the $k$-th order commutant of the Clifford group, we use that this corresponds to the number of all even Hamming weight vector spaces of dimension $m$ times the number different anticommutation graphs of $m$ Pauli operators allowed on $n$ qubits. Hence:
    \begin{align}\label{eq1dimensioncommutant}
        \dim(\com(\mathcal{C}_n^{\otimes k}))&=\sum_{m=0}^{k-1}\binom{k-1}{m}_2\sum_{r=\max(m-n,0)}^{\lfloor m/2\rfloor}
        N_0(m,r)\\
        &=\sum_{m=0}^{k-1}\sum_{r=\max(m-n,0)}^{\lfloor m/2\rfloor} 2^{(k+2r-m-1)m-r-2r^2}\prod_{j=1}^{m}\frac{1-2^{-k+j}}{1-2^{-j}}\prod_{i=1}^r \frac{(1-2^{-m+i-1})(1-2^{-m+r+i-1})}{1-2^{-2i}}\nonumber\\
        &\eqqcolon \sum_{m=0}^{k-1}\sum_{r=\max(m-n,0)}^{\lfloor m/2\rfloor}(\#\mho_I)(k,m,r)\nonumber
    \end{align}
    where
    \begin{align}
        (\#\mho_I)(k,m,r)=c_{k,m,r}\times2^{(k+2r-m-1)m-r-2r^2},
    \end{align}
     and $c_{k,m,r}$ is the factor arising from the product terms in \cref{eq1dimensioncommutant}. In \cref{eq1dimensioncommutant}, we remark that whenever $m-n\le\lfloor\frac{m}{2}\rfloor$, then $(\#\mho_I)(k,m,r)\neq0$. This condition is equivalent to requiring that $m\le 2n$.  Hence, overall $m\le \max\{k-1,2n\}$. We now determine simplified lower and upper bounds.
    By setting $r=\lfloor m/2\rfloor-x$
    and defining $m_{0}=\min(k-1,2n)$ to choose $m=m_0-y$, we arrive at the following expression:
    \begin{align}
        (\#\mho_I)(k,m\! =\! m_{0}-y,r\! =\!\lfloor m/2\rfloor-x)
         &= 2^{(k-3/2)m_0-m_0^2/2}\!\times\! 2^{(m_0-k+3/2)y-y^2/2}\!\times\! 2^{(-1)^{m_0-y}x-2x^2}\!\times\! c_{k,m_0-y,\lfloor (m_0-y)/2\rfloor-x}.
    \end{align}
    As can be seen, the number of elements drops exponentially in $x$ and $y$. This allows us to bound the overall size of the commutant by computing the prefactor explicitly
    \begin{align}
        &\sum_{y=0}^{m_0} 2^{(m_0-k+3/2)y-y^2/2}\sum_{x=0}^{\lfloor (m_0-y)/2\rfloor}2^{(-1)^{m_0-y}x-2x^2}c_{k,m_0-y,\lfloor (m_0-y)/2\rfloor-x}\nonumber\\
        &=\sum_{y=0}^{m_0} 2^{(m_0-k+3/2)y-y^2/2}\sum_{x=0}^{\lfloor (m_0-y)/2\rfloor}2^{(-1)^{m_0-y}x-2x^2}\nonumber\\ 
        &\times \prod_{j=1}^{m_0-y}\frac{1-2^{-k+j}}{1-2^{-j}}\prod_{i=1}^{\lfloor (m_0-y)/2\rfloor-x} \frac{(1-2^{-m_0+y+i-1})(1-2^{-m_0+y+\lfloor (m_0-y)/2\rfloor+i-1-x})}{1-2^{-2i}}\\
        &\leqt{(i)} \sum_{y=0}^{\infty} 2^{y/2-y^2/2} \times  \sum_{x=0}^{\infty}2^{x-2x^2}\times \prod_{j=1}^{\infty}\frac{1}{1-2^{-j}}\times\prod_{i=1}^{\infty} \frac{1}{1-2^{-2i}}\nonumber\\
        &\leqt{(ii)} 20.3.
        \nonumber
    \end{align}
    In (i) we have used that $m_0\le k-1$, hence $m_0-k+3/2\le 1$ and then we upper bounded $(1-2^{-m_0+y+i-1})(1-2^{-m_0+y+\lfloor (m_0-y)/2\rfloor+i-1-x})\le 1$; further we noticed that the arguments of the products are increasing functions greater then $1$. In (ii) we have used a numerical solver.
    Let us now turn to discussing the lower bound. We can use only the term when $y=x=0$
    \begin{align}
        \prod_{j=1}^{m_0}\frac{1-2^{-k+j}}{1-2^{-j}}\prod_{i=1}^{\lfloor m_0/2\rfloor} \frac{(1-2^{-m_0+i-1})(1-2^{-m_0+\lfloor m_0/2\rfloor+i-1})}{1-2^{2i}} 
        \geqt{(i)} \prod_{j=1}^{m_0}\frac{1-2^{-m_0+j-1}}{1-2^{-j}}\prod_{i=1}^{\infty} (1-2^{i})\geqt{(ii)} 0.28\,.
    \end{align}
    In (i) we have used the fact that $0\le m_0\le k-1$ and in (ii) we have employed  a numerical solver. This gives the following bounds
    \begin{align}
        0.28 \times  2^{(k-3/2)m_0-m_0^2/2 }\leq \dim(\com(\mathcal{C}_n^{\otimes k}))\leq 20.3\times  2^{(k-3/2)m_0-m_0^2/2}\,.
    \end{align}
We now discriminate two cases, namely $2n< k-1$ and $2n\ge k-1$, to determine $m_0=\min\{2n,k-1\}$, leading to:
    \begin{align}
        0.56\times 
        2^{k^2/2-3k/2}&\leq \dim(\com(\mathcal{C}_n^{\otimes k}))\leq 40.6\times  2^{k^2/2-3k/2} & 2n\geq k-1\nonumber\\
        0.28\times  2^{2kn-2n^2-3n }&\leq \dim(\com(\mathcal{C}_n^{\otimes k}))\leq 20.3\times  2^{2kn-2n^2-3n } &2n<k-1\,.
    \end{align}
    In the case where $n \geq k-1$, all graphs are allowed, and thus there is no restriction on their rank. Consequently, the number of distinct basis elements $\mho_I$ is solely a function of $k$ and $m$. This results in a much simpler expression for the dimension of the commutant, given by
\begin{equation}\label{simpleexpressioncommutatndimension}
\dim(\mathcal{C}_n^{\otimes k}) = \sum_{m=0}^{k-1} \binom{k-1}{m}_2 2^{\frac{m(m-1)}{2}} = \prod_{i=0}^{k-2} (2^i + 1), \quad \text{if } n \geq k-1
    \end{equation}
using the Gaussian binomial theorem. We also observe that \cref{simpleexpressioncommutatndimension} coincides with the dimension of the commutant derived in 
Ref.\  \cite{gross_schurweyl_2019}.

\end{proof}
\end{theorem}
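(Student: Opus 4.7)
The plan is to reduce the dimension count to a pure counting problem over equivalence classes $[V,G]$ by invoking \cref{th:fullcommutantnk}, which guarantees that the nonzero operators $\mho_I([V,G])$ (with $V \in \even$ of column rank $m$ and $G \in \symf$ satisfying $\rank_2(G) \geq 2(m-n)$) form an orthogonal basis for $\com(\mathcal{C}_n^{\otimes k})$. Since distinct equivalence classes give orthogonal basis elements, the dimension equals the number of such classes, so the work is entirely combinatorial.

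First I would enumerate the column-space component. The columns of $V$ must span an $m$-dimensional subspace of the hyperplane of even-weight vectors in $\mathbb{F}_2^k$, and two matrices $V, V'$ define the same class if and only if $V' = VA$ for some $A \in \operatorname{GL}(\mathbb{F}_2^{m \times m})$. Hence the number of admissible column spaces is exactly the Gaussian binomial coefficient $\binom{k-1}{m}_2$ from \cref{lem:gaussiancoeff}. Next I would count anticommutation graphs: once the column space (and thus the gauge) is fixed, $G$ ranges over $\symf$ with $\rank_2(G) = 2r$, contributing $N_0(m,r)$ choices via \cref{lem:cardinalitysymf}. The constraint $\rank_2(G) \geq 2(m-n)$ from \cref{th:fullcommutantnk} translates directly into $r \geq \max(m-n,0)$. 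Summing over $m \in \{0,\dots,k-1\}$ and $r \in \{\max(m-n,0),\dots,\lfloor m/2\rfloor\}$ produces the exact expression in \cref{eq:dimcommutant}; plugging in the explicit forms of $\binom{k-1}{m}_2$ and $N_0(m,r)$ gives the stated product.

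For the asymptotics, I would reparametrize by $y = m_0 - m$ and $x = \lfloor m/2\rfloor - r$, where $m_0 = \min(k-1, 2n)$ is the value of $m$ that maximizes the leading exponent $(k-3/2)m - m^2/2$. The key observation is that the exponent of $2$ decays quadratically in both $x$ and $y$ away from $(x,y)=(0,0)$, while the remaining $q$-Pochhammer type products are uniformly bounded above and below by absolute constants (independent of $k, n$). The upper bound is obtained by extending the sums over $x, y$ to $\infty$ and using that $\prod_{j\geq 1}(1-2^{-j})^{-1}$ converges; the lower bound is obtained by retaining only the $(x,y)=(0,0)$ term and using $\prod_{i\geq 1}(1-2^{-i}) > 0$. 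Plugging in the two cases $m_0 = k-1$ and $m_0 = 2n$ yields the two asymptotic regimes $2^{(k^2-3k)/2}$ and $2^{2kn - 2n^2 - 3n}$ up to constants.

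The special case $n \geq k-1$ is immediate: then $\max(m-n,0) = 0$ for all $m \leq k-1$, so the rank constraint on $G$ disappears and the inner sum $\sum_{r=0}^{\lfloor m/2 \rfloor} N_0(m,r)$ simply counts \emph{all} symmetric zero-diagonal matrices, giving $2^{m(m-1)/2}$. The total collapses to $\sum_{m=0}^{k-1} \binom{k-1}{m}_2 2^{m(m-1)/2}$, which by the Gaussian binomial theorem (the $q$-analogue of the binomial theorem applied at $q=2$ with $z=1$) equals $\prod_{i=0}^{k-2}(2^i + 1)$, matching the result of Ref.~\cite{gross_schurweyl_2019}. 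The main obstacle, and where I expect most of the care is needed, is the asymptotic bounding step: verifying that the $q$-Pochhammer prefactors remain bounded uniformly in $k, n, m, r$ once one has shifted to the variables $(x,y)$, and that the inequalities used (e.g.\ monotonicity of $1 - 2^{-m + i - 1}$ in the various regimes) are valid throughout the admissible summation range.
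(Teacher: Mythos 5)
Your proposal follows essentially the same route as the paper's proof: reduce to counting nonzero classes $[V,G]$ via \cref{th:fullcommutantnk}, count column spaces with $\binom{k-1}{m}_2$ and graphs with $N_0(m,r)$ under the constraint $r\geq\max(m-n,0)$, reparametrize with $m=m_0-y$, $r=\lfloor m/2\rfloor-x$ for the asymptotic bounds, and collapse to the Gaussian binomial theorem when $n\geq k-1$. The argument is correct and matches the paper's structure, including the identification of the uniform bounding of the $q$-Pochhammer prefactors as the only step requiring care.
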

\subsection{Properties of the orthogonal basis elements}
The following lemma establishes some key properties of the operators $\mho_I$ that provide a orthogonal basis for the Clifford group commutant for any $n,k$.
\begin{lemma}[Properties of the operators $\mho_{I}$]\label{lem:orb_ofPauli} 
Let $\mho_{I}([V,G])$ be the set of operators defined in \cref{th:fullcommutantnk}, with $V\in\even$ and $G\in\symf$ with $\rank_2(G)\ge2(m-n)$. They obey the following properties:
\begin{enumerate}[label=(\roman*)]    
    \item $\tr(\mho_I([V,G]))=0$ for all $V\neq 0$;
    \item $\|\mho_I([V,G])\|_{\infty}\le 1$;
    \item $i^{\mathrm{rank}_2(G)}\mho_I([V,G])$ is Hermitian;
    \item For $S_{[V,G]}=\{Q\in \mathbb{P}_n^{\otimes k}| \mathcal{M}(Q)=[V,G]\}$, it holds that 
        \begin{align}
    |S_{[V,G]}|
        =2^{2mn-m^2/2+m/2}\prod_{j=r-(m-n)+1}^{n} (1-4^{-j}).
        \end{align}
    \item $\|\mho_{I}([V,G])\|_2=\sqrt{\frac{2^{nk}}{|S_{[V,G]}|}}=2^{(k/2-m)n-m/4+m^2/4}\times \prod_{i=0}^{m-r-1} \sqrt{1-2^{-2(n-i)}}$.
\end{enumerate}

\end{lemma}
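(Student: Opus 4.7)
My overall plan is to prove the five items in order, reusing the structural identity $\Phi_{\cl}^{(k)}(P) = \varphi(P)^*\mho_I([V,G])$ with $\mathcal{M}(P)=[V,G]$ that was established during the proof of \cref{th:fullcommutantnk}.

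Items (i) and (ii) are immediate. For (i), each summand in $\mho_I([V,G])$ is a scalar multiple of an unsigned Pauli $P \in \mathbb{P}_n^{\otimes k}$, and $\tr(P)=0$ unless $P = I^{\otimes nk}$. Since $V \neq 0$ forces at least one tensor factor of every $P \in S_{[V,G]}$ to be a non-identity Pauli, every summand is traceless. For (ii), I would write $\mho_I = \varphi(P_0)\,\Phi_{\cl}^{(k)}(P_0)$ for any fixed representative $P_0 \in S_{[V,G]}$, invoke $|\varphi(P_0)| = 1$ (which is verified inside the proof of \cref{th:fullcommutantnk}, where $\varphi$ is shown to be a fourth root of unity whenever the orbit is nonzero), and use that the twirling map is a convex combination of unitary conjugations and hence a contraction in the operator norm to conclude $\|\mho_I\|_\infty \leq \|P_0\|_\infty = 1$.

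For (iii), since every unsigned Pauli is Hermitian, we have $\mho_I^\dagger = \frac{1}{|S_{[V,G]}|}\sum_P \varphi(P)^*\,P$, and the plan is to show that $\varphi(P)^*/\varphi(P)$ is a sign depending only on the equivalence class $[V,G]$. Writing $\varphi(P) = \tr(A_1 \cdots A_k)/d$ for the tensor factors $A_j$ of $P$, complex conjugation reverses the product; reshuffling back to forward order via pairwise transpositions picks up a sign $\prod_{j<j'}\chi(A_j, A_{j'})$. Using $\chi(A_j, A_{j'}) = (-1)^{(VGV^T)_{j,j'}}$ and the easily checked gauge invariance of $VGV^T$ under $(V,G)\mapsto (VA, A^{-1}GA^{-T})$, this sign is a well-defined invariant of $[V,G]$. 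Evaluating it in the canonical form of $G$ given by \cref{lem:can_pauli_graph}, where $G$ is block-diagonal with $r$ copies of $\left(\begin{smallmatrix}0&1\\1&0\end{smallmatrix}\right)$, I would show it reduces to $(-1)^{\mathrm{rank}_2(G)}$, yielding $\mho_I^\dagger = (-1)^{\mathrm{rank}_2(G)}\mho_I$, which is equivalent to $i^{\mathrm{rank}_2(G)}\mho_I$ being Hermitian.

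For (iv), I would count $|S_{[V,G]}|$ as the number of ordered tuples of $m$ algebraically independent Paulis on $n$ qubits realizing the anticommutation graph $G$. After placing $G$ in canonical block-diagonal form via \cref{lem:can_pauli_graph}, I would count iteratively: the $r$ anticommuting pairs contribute $\prod_{i=0}^{r-1}(4^{n-i}-1)(4^{n-i}/2)$ choices (each new pair is chosen within the centralizer of those already picked, which is effectively $\mathbb{P}_{n-i}$), and the remaining $m-2r$ mutually commuting generators contribute $\prod_{i=0}^{m-2r-1}(4^{n-r}-4^i)/2^i$ choices by the same centralizer argument. Simplifying the total exponent of $2$ and telescoping the $(1-4^{-j})$ factors yields the displayed product formula. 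Finally, (v) follows immediately from (iv) combined with the identity $\|\mho_I\|_2^2 = d^k/|S_{[V,G]}|$ already established in \cref{eq:two_norm_of_mho}: substituting the explicit expression for $|S_{[V,G]}|$ and simplifying powers of $2$ gives the claimed closed form. The main obstacle will be item (iii), where correctly identifying the gauge-invariant sum $\sum_{j<j'}(VGV^T)_{j,j'}\bmod 2$ with $\mathrm{rank}_2(G)\bmod 2$ is a nontrivial congruence that only becomes transparent after passing to the canonical graph form.
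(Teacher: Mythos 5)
Items (i), (ii), (iv), and (v) of your proposal are correct and follow essentially the paper's own route: (i)--(ii) from $\mho_I([V,G])\propto \Phi_{\cl}^{(k)}(P_0)$ together with tracelessness of non-identity Paulis and contractivity of the twirl, (iv) by canonicalizing $G$ via \cref{lem:can_pauli_graph} and counting first the $r$ anticommuting pairs and then the $m-2r$ commuting independent generators on effectively $n-r$ qubits (your factors $(4^{n-r}-4^i)/2^i$ coincide with the paper's $(2^{2(n-r)-i}-2^{i})$), and (v) from (iv) combined with \cref{eq:two_norm_of_mho}.

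The gap is in (iii). Your intermediate computation is sound, and in fact more explicit than the paper's one-line remark: conjugating $\varphi(P)$ reverses the ordered product of the tensor factors, and restoring the order gives $\overline{\varphi(P)}=(-1)^{s}\varphi(P)$ with the gauge-invariant parity $s\coloneqq\sum_{j<j'}(VGV^T)_{j,j'}\bmod 2$, hence $\mho_I^\dagger=(-1)^{s}\mho_I$. But the final step --- evaluating $s$ in the canonical form of $G$ and concluding that it reduces to $\mathrm{rank}_2(G)\bmod 2$ --- cannot work: $s$ is an invariant of the class $[V,G]$, not of $G$ alone, because the gauge that canonicalizes $G$ also transforms $V$, and $VGV^T$ retains the dependence on the column overlaps of $V$. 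Concretely, for $m=2$ and $G=\left(\begin{smallmatrix}0&1\\1&0\end{smallmatrix}\right)$ (so $\mathrm{rank}_2(G)=2$, $r=1$) one finds $s\equiv v_1\cdot v_2\pmod 2$ since the columns have even weight; with $k=6$ and columns $(1,1,1,1,0,0)^T,(0,0,1,1,1,1)^T$ this gives $s\equiv 0$, all phases $\varphi=\pm1$, and $\mho_I$ Hermitian, whereas with $k=3$ and columns $(1,1,0)^T,(0,1,1)^T$ it gives $s\equiv 1$, all phases $\varphi=\pm i$, and $\mho_I$ anti-Hermitian (this basis element is proportional to $T_{(123)}-T_{(132)}$). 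The same canonical $G$ therefore yields both parities. Moreover, since $\mathrm{rank}_2(G)$ is always even, your claimed conclusion $\mho_I^\dagger=(-1)^{\mathrm{rank}_2(G)}\mho_I$ would force every $\mho_I$ to be Hermitian, which the $k=3$ example refutes. Note that the paper's own argument for (iii) only records the robust fact that $\varphi(P)\,\mho_I([V,G])$ is Hermitian for $P\in S_{[V,G]}$; the $k=3$ example shows that the Hermiticity phase is genuinely controlled by your $V$-dependent invariant $s$ (equivalently by $\overline{\varphi(P)}/\varphi(P)$) and not by $\mathrm{rank}_2(G)$ alone, so the honest endpoint of your (correct) calculation is $\mho_I^\dagger=(-1)^{s}\mho_I$ rather than the literal statement with $i^{\mathrm{rank}_2(G)}$.
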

\begin{proof}
Condition (i) and (ii) can be obtained by noticing that $\mho_{I}([V,G])=\varphi^{*}(P)\Phi_{\cl}^{(k)}(P)$ with $\mathcal{M}(P)=[V,G]$. For condition (iii), we note that $\varphi(P)\mho_I([V,G])$ is Hermitian, for $P\in S_{[V,G]}$.

Condition (v) follows from (iv) and \cref{eq:two_norm_of_mho}. Indeed, computing the $2$-norm reduces to computing the size of the set $S_{[V,G]}=\{Q\in\mathbb{P}_{n}^{\otimes k}\,|\,\mathcal{M}(Q)=[V,G]\}$.  Note that by \cref{lem:alg_transforms}, the size of the set is equal to
\begin{align}
    |S_{[V,G]}|=|\{\boldsymbol P \in \mathbb{P}_n^{\times k} \, : \text{ $\boldsymbol P $ is an algebraic independent Pauli set and } \mathcal{A}(\boldsymbol P)=G\}|
\end{align}
we can use the invariance of the equivalence class under $A\in\mathrm{GL}(\mathbb{F}_{2}^{m\times m})$ and \cref{lem:can_pauli_graph} to rewrite $G\mapsto G'=A^{-1}GA^{-T}$ such that 
\begin{equation}
G'=\bigoplus_{i=1}^r \begin{pmatrix}
    0&1\\1&0
\end{pmatrix}\oplus 0_{m-2r,m-2r}
\end{equation}
with $\mathrm{rank}(G)=2r$. One particular set of Pauli operators that satisfies the anticommutation graph $G$ is consisting of $(Z_i,X_i)_{i\in [r]}$ pairs, followed by $(Z_{r+i})_{i\in m-2r}$ stabilizers. This consideration helps in computing $|\mathrm{orb}(P)|$. To count them, we first start by selecting the first anticommuting pair of $Z_1$-like and $X_1$-like operators. There are $(4^n-1)\times 4^n/2$ many options. As any further selected Pauli operators need to commute with the previous ones to ensure algebraic independency. This restricts our choices effectively reducing the number of qubits by $1$ each time. As such, we have that there are
\begin{align}\label{eq:norm2proof1}
    \prod_{i=0}^{r-1} (2^{2(n-i)}-1)\times 2^{2(n-i)-1}
\end{align}
many options to select algebraically independent pairs of $X$-like, $Z$-like operators. We now count the $Z$ components. 
Here we have effectively $m'=m-2r$ many $Z$ type Pauli operators on $n'=n-r$ many qubits. This reduces the counting on selecting algebraically independent Pauli operators all commuting with the previous ones, giving 
\begin{align}\label{eq:norm2proof2}
    \prod_{i=0}^{m'-1} (2^{2n'-i}-2^{i}).
\end{align}
Multiplying \cref{eq:norm2proof1,eq:norm2proof2} gives the desired result.
\end{proof}

\section{A natural and easy-to-manipulate basis for the Clifford commutant: Pauli monomials}\label{sec:Paulimon}
While in \cref{sec:comclif} we derived the commutant of the Clifford group for any $n,k$, here we aim to develop a framework that allows us to handle the commutant in a more direct and simplified manner. In particular, this section introduces a fundamental concept of our work: Pauli monomials. These are constructed as products of isotropic sums of Pauli operators. In the following, we familiarize the reader with various types of Pauli monomials, which will serve as a key tool for constructing a simple basis for the commutant of the Clifford group. 

The underlying intuition is that, since the Clifford group leaves invariant the set of Pauli operators, all operators that commute with every Clifford element must be constructed as isotropic sums over Pauli operators. 

To begin with, we introduce the simplest and most intuitive type of Pauli monomials, referred to as \textit{primitive Pauli monomials}. These monomials encapsulate many essential properties of the full commutant and, as we demonstrate in \cref{th:algebraicstructurecommutantofclifford}, generate the entire commutant.

\begin{definition}[Primitive Pauli monomials]\label{def:primitivepaulimonomials}  
Let $v \in \mathbb{F}_{2}^{k}$ be a vector satisfying $|v| \in 2\mathbb{N}$. Define the operator  
\be
\Omega(v) \coloneqq \frac{1}{d} \sum_{P \in \mathbb{P}_n} P^{\otimes v}.
\ee  
\end{definition}
The primitive Pauli monomials satisfy the following properties.
\begin{lemma}[Properties of primitive Pauli monomials]\label{lem:propertiesprimitivepauli}
The primitive Pauli monomials satisfy the following properties:
\begin{enumerate}[label=\Alph*)]
    \item If $|v|/2 = 0 \pmod{2}$ ($v$ is even), then $d^{-1} \Omega(v)$ is a projector of rank $d^{k-2}$.
    \item If $|v|/2 = 1 \pmod{2}$  ($v$ is odd), then $\Omega(v)$ is a Hermitian unitary operator.
    \item If $v, w \in \mathbb{F}_2^k$ satisfy $v \cdot w = 0 \pmod{2}$, then $[ \Omega(v), \Omega(w)] = 0$.
\end{enumerate}
\end{lemma}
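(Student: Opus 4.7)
The plan is to treat all three properties by direct computation of products $\Omega(v)\Omega(w)$, using the multiplicative phase calculus built into Lemma \ref{le:trivialPQ} together with the identity $\sum_{P\in\mathbb{P}_n}\chi(P,R)=d^2\delta_{R,\mathbb{1}}$. First, I would observe that each $P^{\otimes v}$ is Hermitian (since $P=P^\dagger$ and $I$ commute with Hermitian conjugation, the tensor factor is Hermitian), so $\Omega(v)$ is automatically Hermitian; this will be used to upgrade idempotency to projection in (A) and to upgrade the involutive property to unitarity in (B).

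The core computation is $\Omega(v)^2$. I would expand
\begin{equation*}
\Omega(v)^2=\frac{1}{d^2}\sum_{P,Q\in\mathbb{P}_n}P^{\otimes v}Q^{\otimes v}=\frac{1}{d^2}\sum_{P,Q}(PQ)^{\otimes v},
\end{equation*}
using that on each slot where $v_i=0$ we get $I$ and where $v_i=1$ we get $PQ$. Writing $PQ=\phi R$ with $R\in\mathbb{P}_n$ and $\phi\in\{\pm1,\pm i\}$, we get $(PQ)^{\otimes v}=\phi^{|v|}R^{\otimes v}$. Since $|v|$ is even and $\phi^2=\chi(P,Q)$ by Lemma \ref{le:trivialPQ}, this simplifies to $\chi(P,Q)^{|v|/2}R^{\otimes v}$. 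For fixed $P$ the map $Q\mapsto R(P,Q)$ is a bijection on $\mathbb{P}_n$ (phases being irrelevant for the sum); moreover $Q\propto PR$, so using Lemma \ref{le:trivialPQ}(3) and $\chi(P,P)=1$, $\chi(P,\text{phase})=1$, we have $\chi(P,Q)=\chi(P,R)$. Then the sum factorizes:
\begin{equation*}
\Omega(v)^2=\frac{1}{d^2}\sum_{R}R^{\otimes v}\sum_{P}\chi(P,R)^{|v|/2}.
\end{equation*}

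Now I split by parity of $|v|/2$. If $|v|/2$ is even, $\chi(P,R)^{|v|/2}=1$, so $\sum_P$ gives $d^2$ and we obtain $\Omega(v)^2=\sum_R R^{\otimes v}=d\,\Omega(v)$. Hermiticity then implies $d^{-1}\Omega(v)$ is an orthogonal projector. Its rank equals its trace, which I would compute directly as $\tr(d^{-1}\Omega(v))=d^{-2}\sum_P \tr(P)^{|v|}d^{k-|v|}$; only $P=\mathbb{1}$ contributes, giving rank $d^{k-2}$, proving (A). If $|v|/2$ is odd, $\chi(P,R)^{|v|/2}=\chi(P,R)$, and by Lemma \ref{le:trivialPQ}(2) the inner sum is $d^2\delta_{R,\mathbb{1}}$, yielding $\Omega(v)^2=\mathbb{1}$. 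Combined with Hermiticity this gives (B).

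Finally, for (C), I would perform the analogous slot-by-slot comparison for $\Omega(v)\Omega(w)$ versus $\Omega(w)\Omega(v)$. On slots where only one of $v_i,w_i$ is $1$, the factors $P^{v_i}$ and $Q^{w_i}$ commute trivially; on the $|v\wedge w|$ slots where both are $1$, swapping $PQ$ to $QP$ produces a global phase $\chi(P,Q)^{v\cdot w}$, where $v\cdot w=\sum_i v_iw_i$. Hence
\begin{equation*}
\Omega(v)\Omega(w)=\frac{1}{d^2}\sum_{P,Q}\chi(P,Q)^{v\cdot w}\bigotimes_{i=1}^{k}Q^{w_i}P^{v_i}=\chi(P,Q)^{v\cdot w}\cdot\Omega(w)\Omega(v),
\end{equation*}
and when $v\cdot w=0\pmod 2$ the phase $\chi(P,Q)^{v\cdot w}=1$ uniformly in $P,Q$, giving $[\Omega(v),\Omega(w)]=0$. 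The only subtle step is the phase bookkeeping in part (B), where one must be careful that the identification $\chi(P,Q)=\chi(P,R)$ is independent of the arbitrary phase chosen in $PQ=\phi R$; this is handled by the gauge invariance $\chi(A,cB)=\chi(A,B)$ for any scalar $c$, which follows immediately from the definition of $\chi$.
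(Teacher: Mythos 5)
Your proof is correct and follows essentially the same route as the paper's: the same reindexing $PQ\mapsto R$ with $\phi^2=\chi(P,Q)$ and $\chi(P,Q)=\chi(P,R)$, the same use of $\sum_P\chi(P,R)=d^2\delta_{R,\mathbb{1}}$ to split the two parities, and the same slot-by-slot commutation argument for (C); you additionally spell out the rank-$d^{k-2}$ trace computation, which the paper leaves implicit. The only blemish is notational: in the last display for (C) the factor $\chi(P,Q)^{v\cdot w}$ appears outside the sum over $P,Q$, which should instead be phrased as the phase being identically $1$ for every term when $v\cdot w=0\pmod 2$.
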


\begin{proof}
That $\Omega(v)$ is Hermitian follows directly from the Pauli operators being Hermitian. We can compute
\begin{align}
    \Omega(v)^2 &= \frac{1}{d^2} \sum_{P, Q \in \mathbb{P}_n} (PQ)^{\otimes v} = \frac{1}{d^2} \sum_{P, R \in \mathbb{P}_n} \big(\frac{1}{\sqrt{\chi(P, R)}} R\big)^{\otimes v},
\end{align}
where we have defined $R \coloneqq \sqrt{\chi(P, Q)} PQ$ and we have used that $\chi(P, Q) = \chi(P, R)$ (because of \cref{le:trivialPQ}).
Thus, we have
\begin{align}
    \Omega(v)^2 &= \frac{1}{d^2} \sum_{P, R \in \mathbb{P}_n} \chi(P, R)^{|v|/2} R^{\otimes v} \\&=\begin{cases}
     \frac{1}{d^2} \sum_{P, R \in \mathbb{P}_n} R^{\otimes v} = \sum_{R \in \mathbb{P}_n} R^{\otimes v} = d \Omega(v),&|v|\in 4\mathbb N,\\
         \frac{1}{d^2} \sum_{R \in \mathbb{P}_n} \bigg(\sum_{P \in \mathbb{P}_n} \chi(P, R)\bigg) R^{\otimes v} = \sum_{R \in \mathbb{P}_n} \delta_{R, \mathbb{1}} R^{\otimes v}= \mathbb{1} &|v|\in 4\mathbb N+2,
         \nonumber
    \end{cases}
\end{align}
where we have used that $\sum_{P \in \mathbb{P}_n} \chi(P, R)= d^2 \delta_{R, \mathbb{1}}$.
For this, property A) and property B) follow directly.
Property C) follows straightforwardly from
\begin{align}
    P^{\otimes v_1} Q^{\otimes v_2} 
    &= (-1)^{v_1 \cdot v_2} Q^{\otimes v_2} P^{\otimes v_1},
\end{align}
where the commutator vanishes for $v_1 \cdot v_2 = 0 \pmod{2}$.

\end{proof}

Properties \textit{A)} and \textit{B)} were already established in Remark 3.9 of~\cite{gross_schurweyl_2019}.

From now on, we classify Pauli monomials as \textit{even} (projectors) or \textit{odd} (unitaries) depending on whether $ \frac{|v|}{2} $ is even or odd, respectively. Importantly, this classification is based on the parity of $ \frac{|v|}{2} $, rather than on whether $ |v| $ itself is even or odd.
We remark that a specific type of Pauli monomial is given by the swap operator, which corresponds to an even vector $ v $; see \cref{eq:swapoperator}.

We are now ready to introduce the \textit{Pauli monomials}, which generalize the primitive Pauli monomials introduced in \cref{def:primitivepaulimonomials} and arise from their products, as rigorously shown below.
\begin{definition}[Pauli monomials]\label{def:paulimonomials}
Let $ k,m \in \mathbb{N}$, $V\in\mathrm{Even}(\mathbb{F}_{2}^{k\times m})$ and $M\in\symftwo$. A Pauli monomial, denoted $ \Omega(V, M) \in \mathcal{B}(\mathcal{H}^{\otimes k}) $, is defined as
\begin{align}
\Omega(V, M) \coloneqq \frac{1}{d^m} \sum_{\boldsymbol{P} \in \mathbb{P}_n^m}  
P_1^{\otimes v_1} P_2^{\otimes v_2} \cdots P_m^{\otimes v_m}\times \left( \prod_{\substack{i, j \in [m] \\ i < j}} \chi(P_i, P_j)^{M_{i,j}} \right),
\end{align}
where  $ \chi(P_i, P_j)$ is defined in \cref{le:trivialPQ}. 
\end{definition}
Thus, the primitive Pauli monomial $ \Omega(v) $ for $ v \in \mathbb{F}^k_2 $, as defined in \cref{def:primitivepaulimonomials}, can be written as $ \Omega(v) = \Omega((v), 0) $, where $ (v) $ is the single-column matrix containing $ v $ as its column. Consequently, this structure hints at the role of the primitive Pauli monomials as fundamental building blocks of the set of Pauli monomials. Given a Pauli monomial, we refer to the linear combination of Pauli monomials as a \textit{Pauli polynomial}.

An intriguing characteristic of Pauli monomials is that, while they are expressed as a sum of tensor products of Pauli operators, they can still be decomposed into a tensor product over individual qubits, as stated by the following lemma.
 
    \begin{lemma}[Tensor product structure of Pauli monomials]\label{lem:monomialsfactorizeonqubits}
Pauli monomials factorize as tensor products over qubits. Specifically, for any $ \Omega(V, M)  $, we have 
\begin{align}
    \Omega(V, M) &= \big( \omega(V, M) \big)^{\otimes n}, \\
    \omega(V, M) &\coloneqq \frac{1}{2^m} 
    \sum_{\boldsymbol{P} \in \mathbb{P}_1^m} 
    \prod_{i=1}^m P_i^{\otimes v_i} 
    \prod_{1 \leq i < j \leq m} \chi(P_i, P_j)^{M_{i,j}},
\end{align}
\begin{proof}
We have
\begin{align}
    \big( \omega(V, M) \big)^{\otimes n} 
    &= \frac{1}{(2^{n})^m} \bigotimes_{k=1}^n 
    \left( \sum_{\boldsymbol{P}^{(k)} \in \mathbb{P}_1^m} 
    \prod_{b=1}^m P_b^{(l) \otimes v_b} 
    \prod_{1 \leq i < j \leq m} \chi(P_i^{(l)}, P_j^{(l)})^{M_{i,j}} \right)  \\
    &= \frac{1}{(2^{n})^m} 
    \sum_{\boldsymbol{P}^{(1)}, \dots, \boldsymbol{P}^{(n)} \in \mathbb{P}_1^m} 
    \bigotimes_{l=1}^n 
    \left( \prod_{b=1}^m P_b^{(l) \otimes v_b} 
    \prod_{1 \leq i < j \leq m} \chi(P_i^{(l)}, P_j^{(l)})^{M_{i,j}} \right)\\
    &= \frac{1}{(2^{n})^m} 
    \sum_{\boldsymbol{P}^{(1)}, \dots, \boldsymbol{P}^{(n)} \in \mathbb{P}_1^m} 
    \left( \prod_{l=1}^n \prod_{1 \leq i < j \leq m} \chi(P_i^{(l)}, P_j^{(k)})^{M_{i,j}} \right) 
    \prod_{b=1}^m \bigotimes_{l=1}^n P_b^{(l) \otimes v_b}.
    \nonumber
\end{align}
Using the property $\chi(P_1, Q_1) \chi(P_2, Q_2) = \chi(P_1 \otimes P_2, Q_1 \otimes Q_2)$ valid for Pauli operators $P_1, P_2, Q_1, Q_2$, we get
\begin{align}
    \big( \omega(V, M) \big)^{\otimes n} 
    &= \frac{1}{(2^{n})^m} 
    \sum_{\boldsymbol{P}^{(1)}, \dots, \boldsymbol{P}^{(n)} \in \mathbb{P}_1^m} 
    \prod_{1 \leq i < j \leq m} 
    \chi\left( \bigotimes_{l=1}^n P_i^{(l)}, \bigotimes_{k=1}^n P_j^{(l)} \right)^{M_{i,j}} 
    \prod_{b=1}^m \bigotimes_{l=1}^n P_b^{(l) \otimes v_b}\\
    &= \frac{1}{(2^{n})^m} 
    \sum_{\boldsymbol{P} \in \mathbb{P}_n^{\times m}} 
    \prod_{1 \leq i < j \leq m} \chi(P_i, P_j)^{M_{i,j}} 
    \prod_{b=1}^m P_b^{\otimes v_b},
    \nonumber
\end{align}
where in the last step we collected the sums into a single sum over $ \mathbb{P}_n^{\times m} = \big(\{I, X, Y, Z\}^{\otimes n}\big)^{\times m} $.
This matches the definition of $ \Omega(V, M) $, concluding the proof.
\end{proof}
\end{lemma}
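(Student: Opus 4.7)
The plan is to work backwards from the right-hand side, expand the $n$-fold tensor power, and then use the multiplicativity of $\chi$ over tensor products to collapse the expression back into the definition of $\Omega(V,M)$.

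First, I would write out $(\omega(V,M))^{\otimes n}$ explicitly. Since $\omega(V,M)$ is a sum over $m$-tuples of single-qubit Pauli operators $\boldsymbol P \in \mathbb{P}_1^m$, the $n$-fold tensor product becomes a sum over $n$ independent $m$-tuples $(\boldsymbol P^{(1)}, \dots, \boldsymbol P^{(n)})$, one for each qubit site $l \in [n]$. Each term carries a scalar $\prod_{l=1}^n \prod_{i<j} \chi(P_i^{(l)}, P_j^{(l)})^{M_{i,j}}$ and an operator $\bigotimes_{l=1}^n \prod_{b=1}^m P_b^{(l)\otimes v_b}$, where the inner tensor product runs over the $k$ copies of the Hilbert space.

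The key step is to recognize that the bundled object $P_b \coloneqq \bigotimes_{l=1}^n P_b^{(l)} \in \mathbb{P}_n$ naturally parametrizes an $n$-qubit Pauli, and that the assignment $(\boldsymbol P^{(1)}, \dots, \boldsymbol P^{(n)}) \mapsto \boldsymbol P = (P_1, \dots, P_m)$ is a bijection between $(\mathbb{P}_1^m)^n$ and $\mathbb{P}_n^m$. Using this, I would swap the order of the tensor product over qubits $l$ and the tensor product over copies of the Hilbert space (which corresponds to the $v_b$-indexed positions), so that $\bigotimes_{l=1}^n P_b^{(l)\otimes v_b} = P_b^{\otimes v_b}$ as an operator on $\mathcal H^{\otimes k}$. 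This step is essentially formal but must be justified by the standard isomorphism $(\mathbb C^2)^{\otimes n k} \cong ((\mathbb C^2)^{\otimes n})^{\otimes k}$ with the appropriate reordering.

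Next, I would collapse the scalar factor using the multiplicativity of $\chi$ on tensor products of Pauli operators: $\prod_{l=1}^n \chi(P_i^{(l)}, P_j^{(l)}) = \chi\!\bigl(\bigotimes_{l=1}^n P_i^{(l)}, \bigotimes_{l=1}^n P_j^{(l)}\bigr) = \chi(P_i, P_j)$. Plugging this in yields the sum
\begin{align*}
(\omega(V,M))^{\otimes n} = \frac{1}{d^m}\sum_{\boldsymbol P \in \mathbb{P}_n^m} \prod_{i<j} \chi(P_i, P_j)^{M_{i,j}} \prod_{b=1}^m P_b^{\otimes v_b},
\end{align*}
which is precisely $\Omega(V,M)$ by Definition~\ref{def:paulimonomials} (noting that $2^m \cdot (2^n)^0 = d^m/(d^m/2^{m\cdot n}) \cdot 2^{-mn}$ simplifies correctly because each of the $n$ factors contributes a $1/2^m$, giving the $1/d^m$ normalization).

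The main obstacle I anticipate is bookkeeping rather than conceptual: carefully handling the two interleaved tensor-product structures (qubit sites versus tensor copies of the state space) and confirming that the normalization constants match. Once the multiplicativity of $\chi$ under tensor products is invoked correctly, the identification is immediate.
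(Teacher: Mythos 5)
Your proposal is correct and follows essentially the same route as the paper's proof: expand the $n$-fold tensor power into independent per-qubit sums, regroup the per-site Pauli tuples into $n$-qubit Paulis via the bijection $(\mathbb{P}_1^m)^n \cong \mathbb{P}_n^m$, and use the multiplicativity $\chi(P_1\otimes P_2, Q_1\otimes Q_2)=\chi(P_1,Q_1)\chi(P_2,Q_2)$ to recover $\Omega(V,M)$ with the normalization $(1/2^m)^n = 1/d^m$. The only cosmetic issue is your parenthetical normalization remark, which is muddled as written; the clean statement is simply that the $n$ prefactors of $1/2^m$ multiply to $1/d^m$.
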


We now show that Pauli monomials belong to the $k$-th order commutant of the Clifford group. 
\begin{lemma}[Pauli monomials belong to the Clifford commutant]\label{lem:paulimonomialbelongtothecommutant}
    Every Pauli monomial $ \Omega(V, M)  $ belongs to the $ k $-th order commutant of the Clifford group.
    \begin{proof}
        For all $ C \in \mathcal{C}_n $, we have
\begin{align}
            C^{\dagger \otimes k} \Omega(V,M) C^{\otimes k} 
            &= \frac{1}{d^m} \sum_{\boldsymbol{P} \in \mathbb{P}_n} (C^{\dagger} P_1 C)^{\otimes v_1} \cdots (C^{\dagger} P_m C)^{\otimes v_m} \times \left( \prod_{\substack{i, j \in [m] \\ i < j}} \chi(P_i, P_j)^{M_{i,j}} \right) \\
            \nonumber
            &\eqt{\text{(i)}} \frac{1}{d^m} \sum_{\boldsymbol{Q} \in \mathbb{P}_n} Q_1^{\otimes v_1} \cdots Q_m^{\otimes v_m} \times \left( \prod_{\substack{i, j \in [m] \\ i < j}} \chi(C^{\dagger} Q_i C, C^{\dagger} Q_j C)^{M_{i,j}} \right) \\
              \nonumber
            &\eqt{\text{(ii)}} \frac{1}{d^m} \sum_{\boldsymbol{Q} \in \mathbb{P}_n} Q_1^{\otimes v_1} \cdots Q_m^{\otimes v_m} \times \left( \prod_{\substack{i, j \in [m] \\ i < j}} \chi(Q_i, Q_j)^{M_{i,j}} \right) \\
            &= \Omega(V,M),
              \nonumber
        \end{align}
where in step \text{(i)} we have used  that $ C^{\dagger} P_j C = \pm Q_j $ for each $ j\in[m] $, and since $ |v_j| = 0 \pmod{2} $, it follows that $(C^{\dagger} P_j C)^{\otimes v_j} = Q_j^{\otimes v_j}. $
        In step \text{(ii)}, we have used  the fact that if two matrices commute, their adjoint actions with respect to any unitary also commute.
    \end{proof}
\end{lemma}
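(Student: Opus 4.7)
The plan is to verify the defining condition $C^{\dagger\otimes k}\Omega(V,M)C^{\otimes k}=\Omega(V,M)$ for every $C\in\mathcal{C}_n$ directly from \cref{def:paulimonomials}. Since conjugation by $C^{\otimes k}$ distributes across the tensor structure, it acts on each factor $P_i^{\otimes v_i}$ by mapping $P_i\mapsto C^\dagger P_i C$. Because $C$ is Clifford, this latter operator has the form $\epsilon_i Q_i$ with $\epsilon_i\in\{\pm1\}$ and $Q_i\in\mathbb{P}_n$. So the goal is to show that, after this substitution, the expression rearranges back to the original $\Omega(V,M)$.

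The first key step is that the phase $\epsilon_i$ drops out of each Pauli factor: we have $(\epsilon_i Q_i)^{\otimes v_i}=\epsilon_i^{|v_i|}Q_i^{\otimes v_i}=Q_i^{\otimes v_i}$, using crucially that $V\in\mathrm{Even}(\mathbb{F}_{2}^{k\times m})$ forces every $|v_i|$ to be even. This even-weight hypothesis is exactly what the definition of the Pauli monomials encodes, and it is what distinguishes the operators living in the commutant from generic isotropic Pauli sums. The second step is to re-parametrize the summation: the assignment $P\mapsto Q$ defined by $C^\dagger P C=\epsilon(P)Q$ is a bijection $\mathbb{P}_n\to\mathbb{P}_n$ (its inverse is conjugation by $C$), so the sum over $\boldsymbol{P}\in\mathbb{P}_n^m$ may be re-indexed as a sum over $\boldsymbol{Q}\in\mathbb{P}_n^m$ without change.

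The third step deals with the phase factor $\prod_{i<j}\chi(P_i,P_j)^{M_{i,j}}$. Here I invoke the conjugation invariance $\chi(UAU^\dagger,UBU^\dagger)=\chi(A,B)$ from \cref{eq:trivialpropCHI}, together with the fact that $\chi$ is insensitive to overall $\pm$ phases of its arguments. Applied to $U=C^\dagger$, this gives $\chi(P_i,P_j)=\chi(\epsilon_iQ_i,\epsilon_jQ_j)=\chi(Q_i,Q_j)$. Combining the three steps, the conjugated operator is identical in form to $\Omega(V,M)$ after relabeling $\boldsymbol{P}\to\boldsymbol{Q}$, which yields the claim.

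The main (and essentially only) potential pitfall is the bookkeeping of the Clifford-induced signs $\epsilon_i$: one might worry that anticommutation signs among the $P_i$'s interact nontrivially with the rearrangement, but because each $P_i^{\otimes v_i}$ is treated as a single block and the only Clifford-induced sign sits uniformly on all $v_i$ tensor slots, the evenness of $|v_i|$ makes this cancellation immediate. A parallel concern would be the $\chi$-phases contributing spurious signs after substitution; however, the invariance property \cref{eq:trivialpropCHI} was formulated precisely to cover both unitary conjugation and $\pm$-phase ambiguities, so no further work is needed. Hence the proof is short and essentially a bookkeeping verification, with the substantive content being the role of $V\in\mathrm{Even}(\mathbb{F}_{2}^{k\times m})$.
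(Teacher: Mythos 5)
Your proposal is correct and follows the same route as the paper's own proof: substitute $C^\dagger P_i C=\pm Q_i$, use the evenness of each $|v_i|$ to kill the Clifford-induced signs, re-index the sum via the bijection on $\mathbb{P}_n$, and invoke the unitary/phase invariance of $\chi$ from \cref{eq:trivialpropCHI} to restore the phase factor. No gaps; the bookkeeping is exactly the paper's argument.
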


\subsection{Graphical calculus to manipulate Pauli monomials}\label{sec:graphcalc}
As a useful tool to enhance the reader's understanding of Pauli monomials, we introduce a graphical representation that simplifies their manipulation.
This graphical calculus has applications beyond our focus on the Clifford commutant's basis elements, proving useful in any context involving products of isotropic sums over tensor products of Pauli operators.

\subsubsection{Pauli monomial diagrams}

A Pauli monomial $ \Omega(V,M) $ can be identified from the columns $ \{v_j\}_{j=1}^m $ of the binary matrix $ V $, along with the matrix $ M $, which encodes the `phase' information, see \cref{def:paulimonomials}. To facilitate this identification, we now introduce in the following definition a graphical representation for keeping track of the columns of $ V $ and for the entries of $ M $. For rapid understanding, refer to the example immediately following.
\begin{definition}[Pauli monomial diagram]
Let $ \Omega(V,M) $ denote a Pauli monomial, where $ V\in\even $, and $ M\in\symf $. The graphical representation is defined as

\begin{itemize}
    \item The columns $ \{v_j\}_{j=1}^m $ are depicted sequentially, where each column is represented as a vector of black and white dots corresponding to entries 1 and 0, respectively (see the following example for clarity).
    \item Each off-diagonal element $ M_{i,j} $ of the matrix $ M $ is represented by a line (denoted \emph{phase}) connecting the $ i $-th and $ j $-th columns if $ M_{i,j} = 1 $, with no line drawn if $ M_{i,j} = 0 $ (again, see the next example).
\end{itemize}
\end{definition}

\begin{example}
Consider $ V \in \even $ and $ M \in \symf $, where $ k = 6 $ and $ m = 3 $, and their corresponding Pauli monomial $\Omega(V,M)$, as follows: 
\setmonomialscale{3.5mm}
\begin{align}
    \Omega(V,M) &= \Omega\left( \begin{pmatrix}
        1 & 1 & 0 \\
        1 & 1 & 0 \\
        1 & 1 & 0 \\
        1 & 1 & 0 \\
        0 & 1 & 1 \\
        0 & 1 & 1
    \end{pmatrix},
    \begin{pmatrix}
        0 & 0 & 1 \\
        0 & 0 & 1 \\
        1 & 1 & 0
    \end{pmatrix} \right) = \monomialdiagram{6}{{1,2,3,4},{1,2,3,4,5,6},{5,6}}{0:2,1:2},
\end{align}

where $ \Omega(V,M) $ can be written explicitly as
\begin{align}
     \Omega(V,M)
    &= \frac{1}{d^m} \sum_{\boldsymbol{P} \in \mathbb{P}_n} 
    \left( P_1^{\otimes v_1} P_2^{\otimes v_2} \cdots P_m^{\otimes v_m} \right)
    \prod_{\substack{j, l \in [m] \\ j < l}} \chi(P_j, P_l)^{M_{j,l}} \\
    \nonumber
    &= \frac{1}{d^3} \sum_{P_1, P_2, P_3 \in \mathbb{P}_n}  
    \left( P_1^{\otimes (1,1,1,1,0,0)} P_2^{\otimes (1,1,1,1,1,1)} P_3^{\otimes (0,0,0,0,1,1)} \right) 
    \chi(P_1, P_3) \chi(P_2, P_3) \\
    &= \frac{1}{d^3} \sum_{P_1,P_2,P_3} P_1 P_2 \otimes P_1 P_2 \otimes P_1 P_2 \otimes P_1 P_2 \otimes P_2 P_3 \otimes P_2 P_3 \times \chi(P_1,P_2) \chi(P_2,P_3) \quad.
    \nonumber
\end{align}
\end{example}
Utilizing this diagrammatic representation, as we will see, will facilitate the execution of operations on Pauli monomials.

\subsubsection{Substitution rules}
Given a Pauli monomial $ \Omega(V, M) $, it is useful to establish rules for manipulating the matrices $ V $ and $ M $ in such a way that the Pauli monomial remains invariant. In particular, we focus on the case where Gaussian column operations (such as adding and swapping columns) are applied to $ V $. Indeed, as we will see in full generality in \cref{th:gaussOP}, $\Omega(V, M) = \Omega(VA, M(A))$ for any $A \in \mathrm{GL}(\mathbb{F}_{2}^{m \times m})$, where $M(A)$ denotes a suitable transformation of the matrix $M$. At the level of $V$, this transformation corresponds to performing Gaussian column operations.

To start, we investigate the case where two nearest-neighbor columns of $ V $ are added, and we explore how the matrix $ M $ must be adjusted to maintain the invariance of the Pauli monomial under this transformation. Throughout, we will label a vector ($v$) odd or even based on the halved Hamming weight $\frac{|v|}{2}$.

\begin{theorem}[Equivalence rule for adding columns in Pauli monomials]  
\label{th:rule1}  
Let $ a_{\pm} \in [m] $ denote either $ a+1 $ or $ a-1 $, where $ a $ is a fixed index.  
Let $ V \in \even $ be a matrix with columns $\{v_j\}_{j=1}^m$, where each $ |v_j| = 0 \pmod{2} $, and let $ M \in \symftwo $ be a symmetric matrix with a zero diagonal.

Define $ V' \in \even $, a matrix with columns $\{v'_j\}_{j=1}^m$, identical to those of $ V $, except for the column $ v'_{a_{\pm}} $, which is defined as  
\begin{equation}
    v'_{a_{\pm}} = v_a + v_{a_{\pm}},
\end{equation}
where the addition is taken modulo 2.

Let $ M' \in \symftwo $ be a symmetric matrix with a zero diagonal, whose elements are identical to those of $ M $, except for the following modifications:
\begin{align}
\label{eq:M1rule1}
    M'_{a,a_{\pm}} &= M_{a,a_{\pm}} + \frac{|v_a|}{2}, \\
    M'_{j,a_{\pm}} &= M_{j,a_{\pm}} + M_{j,a}, \quad \text{for any } j \in [m] \text{ such that } j \notin \{a, a_{\pm}\},
    \label{eq:M2rule1}
\end{align}
and their corresponding symmetric entries, ensuring that $ M' $ remains symmetric.
The substitution $ (V, M) \mapsto (V', M') $ leaves the corresponding Pauli monomial invariant, i.e., 
\begin{equation}
    \Omega(V, M) = \Omega(V', M').
\end{equation}

\begin{proof}  
Let us analyze the case in which $ a_\pm = a+1 $; the case $ a_\pm = a-1 $ follows analogously.  
We recall the definition of a Pauli monomial:  
\begin{align}  
\Omega(V, M) = \frac{1}{d^m} \sum_{P_1, \ldots, P_m \in \mathbb{P}_n}  
\left( P_1^{\otimes v_1} P_2^{\otimes v_2} \cdots P_m^{\otimes v_m} \right)  
\prod_{\substack{j, l \in [m] \\ j < l}} \chi(P_j, P_l)^{M_{j,l}}.  
\end{align}  
Focusing on terms relevant to the substitution, we isolate the summation over $ P_a $ and $ P_{a+1} $, to get  
\begin{align}  
\sum_{P_a, P_{a+1}} P_a^{\otimes v_a} P_{a+1}^{\otimes v_{a+1}} \chi(P_a, P_{a+1})^{M_{a,a+1}}  
\prod_{j \notin \{a,a+1\}} \chi(P_j, P_a)^{M_{a,j}} \chi(P_j, P_{a+1})^{M_{a+1,j}}.  
\label{eq:sumPij}  
\end{align}  
Define the operator $ R $ as  
\begin{align}  
R = P_a P_{a+1} \sqrt{\chi(P_a, P_{a+1})},  
\end{align}  
which belongs to the Pauli basis up to a $\pm 1$ phase (see \cref{le:trivialPQ}, Property 3).  
Inverting the definition of $ R $, we can rewrite $ P_a $ (up to a $\pm 1$ phase) as  
\begin{align}  
P_a \propto \sqrt{\chi(P_a, P_{a+1})} R P_{a+1}  
= \sqrt{\chi(R, P_{a+1})} R P_{a+1},  
\end{align}  
where, in the last step, we have  used Property 3 of \cref{le:trivialPQ}.  

Substituting this expression for $ P_a $ into \cref{eq:sumPij}, recognizing that the global $ \pm 1 $ phase does not affect the result since $ |v_a| = 0 \pmod{2} $, and summing over $ R $ instead of $ P_a $, we obtain  
\begin{align*}
&\sum_{R, P_{a+1}}\! \left( R P_{a+1} \sqrt{\chi(R, P_{a+1})} \right)^{\otimes v_a} P_{a+1}^{\otimes v_{a+1}} \chi(R, P_{a+1})^{M_{a,a+1}}  \!\times\!\!\! \prod_{j \notin \{a, a+1\}}\!\!\! \chi(P_j, R P_{a+1} \sqrt{\chi(R, P_{a+1})})^{M_{a,j}} \chi(P_j, P_{a+1})^{M_{a+1,j}} \\
&\eqt{(i)} \sum_{R, P_{a+1}} \chi(R, P_{a+1})^{|v_a|/2 + M_{a,a+1}} R^{\otimes v_a} P_{a+1}^{\otimes (v_a + v_{a+1})}   \times \prod_{j \notin \{a, a+1\}} \chi(P_j, R P_{a+1} \sqrt{\chi(R, P_{a+1})})^{M_{a,j}} \chi(P_j, P_{a+1})^{M_{a+1,j}} \\
&\eqt{(ii)} \sum_{R, P_{a+1}} \chi(R, P_{a+1})^{|v_a|/2 + M_{a,a+1}} R^{\otimes v_a} P_{a+1}^{\otimes (v_a + v_{a+1})}    \times \prod_{j \notin \{a, a+1\}} \chi(P_j, R)^{M_{a,j}} \chi(P_j, P_{a+1})^{M_{a+1,j} + M_{a,j}}\\
&\eqt{(iii)} \sum_{P_a, P_{a+1}}  P_a^{\otimes v_a} P_{a+1}^{\otimes (v_a + v_{a+1})} \chi(P_a, P_{a+1})^{|v_a|/2 + M_{a,a+1}}   \times \prod_{j \notin \{a, a+1\}} \chi(P_j, P_a)^{M_{a,j}} \chi(P_j, P_{a+1})^{M_{a+1,j} + M_{a,j}},  
\end{align*}
where in step (i) we have used  the identity  
\begin{align}  
\left( \sqrt{\chi(R, P_{a+1})} \right)^{|v_a|} = \chi(R, P_{a+1})^{|v_a|/2},  
\end{align}  
and the fact that $ P_{a+1}^{\otimes v_a} P_{a+1}^{\otimes v_{a+1}} = P_{a+1}^{\otimes (v_a + v_{a+1})} $.  
In step (ii), we 
have used  the property  
\begin{align}  
\chi(P_j, R P_{a+1} \sqrt{\chi(R, P_{a+1})})^{M_{a,j}}  
= \chi(P_j, R P_{a+1})^{M_{a,j}}  
= \chi(P_j, R)^{M_{a,j}} \chi(P_j, P_{a+1})^{M_{a,j}},  
\end{align}  
where, in the last step, we applied $\chi(P, QR) = \chi(P, Q) \chi(P, R)$ (see \cref{le:trivialPQ}, Property 2).  In step (iii), we just renamed  the variable from $ R $ to $ P_a $.  
\end{proof}
\end{theorem}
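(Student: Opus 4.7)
The plan is to prove the invariance $\Omega(V,M) = \Omega(V',M')$ by performing a suitable change of summation variable in the definition of $\Omega(V,M)$. Since only the $a$-th and $a_\pm$-th columns of $V$ are being modified, and $M$ changes only in the row/column indexed by $a_\pm$, it suffices to treat the case $a_\pm = a+1$ (the other follows by symmetric relabeling) and to re-parametrize the pair of Pauli operators $(P_a, P_{a+1})$, leaving all other summation variables $P_j$ with $j \notin \{a, a+1\}$ untouched.

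The key substitution I would introduce is $R \coloneqq \sqrt{\chi(P_a, P_{a+1})}\, P_a P_{a+1}$, which, by Property 4 of \cref{le:trivialPQ}, is an element of $\mathbb{P}_n$ up to a $\pm 1$ phase and satisfies $\chi(R, P_{a+1}) = \chi(P_a, P_{a+1})$. Inverting gives $P_a = \pm\sqrt{\chi(R, P_{a+1})}\, R P_{a+1}$, so as $(P_a, P_{a+1})$ ranges uniformly over $\mathbb{P}_n^2$ (up to the irrelevant $\pm 1$), the new variables $(R, P_{a+1})$ do too. Substituting and using that tensor products of single-qubit Paulis distribute over multiplication, I would rewrite
\begin{equation*}
P_a^{\otimes v_a} P_{a+1}^{\otimes v_{a+1}} \;=\; \chi(R, P_{a+1})^{|v_a|/2}\; R^{\otimes v_a}\, P_{a+1}^{\otimes(v_a + v_{a+1})},
\end{equation*}
where the $\chi$-factor arises from collecting one copy of the square root per position where $(v_a)_i = 1$, and the overall $(\pm 1)^{|v_a|}$ sign drops out because $|v_a|$ is even (this is exactly where the hypothesis $V \in \even$ enters). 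Relabeling $R \mapsto P_a$ in the sum then yields the new column $v'_{a+1} = v_a + v_{a+1}$ while keeping $v_a$ unchanged, matching the proposed $V'$.

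It remains to track how the $\chi$-prefactors transform. The term $\chi(P_a, P_{a+1})^{M_{a,a+1}}$ combines with the $\chi(R, P_{a+1})^{|v_a|/2}$ produced above to give $\chi(R, P_{a+1})^{M_{a,a+1} + |v_a|/2}$, which reproduces $M'_{a,a+1}$ as in \cref{eq:M1rule1}. For each $j \notin \{a, a+1\}$, I would use the multiplicativity $\chi(P_j, P_a) = \chi(P_j, R)\,\chi(P_j, P_{a+1})$ from Property 3 of \cref{le:trivialPQ}, so that $\chi(P_j, P_a)^{M_{j,a}}\chi(P_j, P_{a+1})^{M_{j,a+1}}$ becomes $\chi(P_j, R)^{M_{j,a}}\chi(P_j, P_{a+1})^{M_{j,a+1} + M_{j,a}}$, reproducing \cref{eq:M2rule1} after the relabeling $R \to P_a$. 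The entries $M_{j,l}$ with $j,l \notin \{a+1\}$ are untouched since neither variable changes.

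The main technical obstacle is the careful accounting of the phase $\chi(R, P_{a+1})^{|v_a|/2}$, which is the only place that the parity of $|v_a|/2$ implicitly contributes and is the reason $M$ must be modified at all — the monomial is \emph{not} invariant under a naive column addition. A secondary subtlety is that the substitution $P_a \mapsto R$ is only well-defined up to $\pm 1$, so I must verify that these signs enter to an even total power everywhere (guaranteed by $|v_j| \in 2\mathbb{N}$ for all $j$ and by the quadratic nature of $\chi$) so they cancel. Once both verifications are in place, the identification $\Omega(V,M) = \Omega(V',M')$ follows by inspection of the two expanded sums.
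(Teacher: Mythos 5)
Your proposal is correct and follows essentially the same route as the paper's proof: the substitution $R = \sqrt{\chi(P_a,P_{a+1})}\,P_a P_{a+1}$, the extraction of the phase $\chi(R,P_{a+1})^{|v_a|/2}$ (with the $\pm1$ sign absorbed by the evenness of $|v_a|$), the multiplicativity $\chi(P_j,P_a)=\chi(P_j,R)\chi(P_j,P_{a+1})$, and the final relabeling $R\to P_a$ are exactly the steps used there. No gaps.
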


After introducing the notion of the sum between columns, it is useful to establish a rule that determines the parity of the sum of two columns based on the parity (even or odd) of the individual columns.
\begin{lemma}[Parity of the sum of two columns]\
\label{le:paritysum}
Let $v_1 $ and $v_2$ be two columns as defined above. The following relation holds:  
\begin{align}
    \frac{1}{2}|v_1+v_2| = \frac{1}{2}|v_1| + \frac{1}{2}|v_2| + v_1\cdot v_2 \pmod{2}.
\end{align}
\end{lemma}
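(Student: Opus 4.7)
The plan is to reduce this to the standard integer-level identity for Hamming weights, namely $|v_1+v_2|=|v_1|+|v_2|-2\,(v_1\cdot v_2)$ where on the left the sum is taken in $\mathbb{F}_2$ and on the right $v_1\cdot v_2=\sum_{i=1}^k v_{1,i}v_{2,i}$ is evaluated as an integer (equivalently, the size of the coordinate-wise intersection of the supports). First I would establish this integer identity by a direct per-coordinate case analysis: at each index $i$ the pair $(v_{1,i},v_{2,i})$ lies in $\{(0,0),(0,1),(1,0),(1,1)\}$, and the contribution to each side matches in every case (the $(1,1)$ case contributing $0$ to the left and $2-2=0$ to the right, etc.).

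Next, since both columns $v_1,v_2$ have even Hamming weight by the hypothesis $V\in\even$ of the surrounding Definition~\ref{def:paulimonomials}, the numbers $|v_1|$, $|v_2|$, and $|v_1+v_2|$ are all even (the latter because the parity of the Hamming weight is preserved modulo $2$ by coordinate-wise XOR up to the parity of the overlap). Hence we may divide through by $2$ to obtain
\begin{align}
\tfrac{1}{2}|v_1+v_2|=\tfrac{1}{2}|v_1|+\tfrac{1}{2}|v_2|-(v_1\cdot v_2),
\end{align}
as an honest integer equation.

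Finally, I would reduce modulo $2$. Since $-x\equiv x\pmod 2$, the term $-(v_1\cdot v_2)$ becomes $v_1\cdot v_2\pmod 2$, and moreover $v_1\cdot v_2\pmod 2$ coincides with the $\mathbb{F}_2$-inner product of $v_1$ and $v_2$ used throughout the paper. This yields
\begin{align}
\tfrac{1}{2}|v_1+v_2|\;\equiv\;\tfrac{1}{2}|v_1|+\tfrac{1}{2}|v_2|+v_1\cdot v_2\pmod{2},
\end{align}
which is exactly the claim.

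I do not anticipate any real obstacle; the only subtlety is being careful that $v_1\cdot v_2$ on the right-hand side is interpreted modulo $2$ (so the minus sign from inclusion–exclusion becomes a plus), and that the hypothesis $V\in\even$ justifies halving the integer identity. This parity lemma will later be used to track how the evenness/oddness classification of columns (which controls whether a primitive Pauli monomial is a projector or a unitary, cf.~Lemma~\ref{lem:propertiesprimitivepauli}) behaves under the column-addition substitution rule of Theorem~\ref{th:rule1}.
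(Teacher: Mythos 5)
Your proposal is correct and follows essentially the same route as the paper: both hinge on the integer inclusion–exclusion identity $|v_1+v_2|=|v_1|+|v_2|-2\,(v_1\cdot v_2)$ and then halve and reduce modulo $2$, with your write-up merely spelling out the per-coordinate check and the sign flip $-x\equiv x\pmod 2$ that the paper leaves implicit. No gaps.
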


This relation leads to the following rules:
\begin{itemize}
    \item If $v_1 \cdot v_2 = 0 \pmod{2}$, then $v_1 + v_2$ is odd if exactly one of $v_1$ or $v_2$ is odd; otherwise, it is even.
    \item If $v_1 \cdot v_2 = 1 \pmod{2}$, then $v_1 + v_2$ is even if exactly one of $v_1$ or $v_2$ is even; otherwise, it is odd.
\end{itemize}
\begin{proof}
The quantity $|v_1+v_2|$ counts the number of dots in $v_1+v_2$, that is the sum of dots in which $v_1$ and $v_2$ do not overlap.
That is, the number of dots in $v_1+v_2$ is
\begin{align}
    |v_1+v_2|=|v_1|+|v_2|-2v_1\cdot v_2,
\end{align}
from which the claim follows. 
\end{proof}
From now on, we say that two columns $v_1$, $v_2$ commute if $v_1\cdot v_2 = 0 \pmod 2$ (which also implies that the respective Pauli monomials commute, see \cref{lem:propertiesprimitivepauli}).

 The rules introduced in \cref{th:rule1} provide the moves on such diagrams that leave the Pauli monomial invariant. Here we illustrate the graphical implementation of this rule.
\begin{example}[Graphical implementation of \cref{th:rule1}]
Consider the following Pauli monomial $ \Omega(V,M) $:
\setmonomialscale{3.5mm}
\begin{align}
\label{eq:diagramex}
    \Omega(V,M) = \monomialdiagram{6}{{1,2,3,4,5,6},{3,4,5,6},{2,5},{1,2,4,6}}{0:1,1:2,0:3} \quad .
\end{align}
We now demonstrate how the equivalence relation shown in \cref{th:rule1} can be implemented graphically. Specifically, \cref{th:rule1} describes how the matrix $ M $ should change when we add nearest-neighbor columns of $ V $, to ensure that the Pauli monomial remains invariant.

If we add the first column to the second, the following `phases addition' applies:
\begin{itemize}
    \item \textbf{Add a phase if the vector being added is odd}: A phase is added between the first and second columns if $ |v_1| = 2 \pmod{4} $, as dictated by \cref{eq:M1rule1} of \cref{th:rule1}. (Note that if a phase already exists between these columns, it is removed, since $ 1 + 1 = 0 \pmod{2} $.)
    \item \textbf{Add shared phases between columns}: Phases are added between the second column and any other column that shares a phase with the first column, as dictated by \cref{eq:M2rule1} of \cref{th:rule1}.
\end{itemize}
Hence, the resulting equivalent diagram obtained by adding the first column to the second is
\setmonomialscale{3.5mm}
\begin{align}
    \Omega(V,M) = \monomialdiagram{6}{{1,2,3,4,5,6},{3,4,5,6},{2,5},{1,2,4,6}}{0:1,1:2,0:3} \quad = \monomialdiagram{6}{{1,2,3,4,5,6},{1,2},{2,5},{1,2,4,6}}{1:2,1:3,0:3} \quad.
\end{align}
Similarly, if we had instead added the second column to the first in \cref{eq:diagramex}, we would have obtained:
\begin{align}
    \Omega(V,M) = \monomialdiagram{6}{{1,2,3,4,5,6},{3,4,5,6},{2,5},{1,2,4,6}}{0:1,1:2,0:3}= \monomialdiagram{6}{{1,2},{3,4,5,6},{2,5},{1,2,4,6}}{0:1,0:2,0:3,1:2} \quad .
\end{align}
\end{example}

Let us now consider the case where two nearest-neighbor columns of $ V $ are swapped and investigate how the matrix $ M $ must be adjusted to ensure the invariance of the Pauli monomial $ \Omega(V, M) $. Although this rule can be derived from the previous one (e.g., by sequentially adding the first column to the second, then the second to the first, and again the first to the second), we provide a direct proof here for simplicity and clarity.
\begin{theorem}[Swapping columns in Pauli monomials]
\label{th:rule2}
Let $ a \in [m-1] $.  
Let $ V \in \even $ have columns $\{v_j\}_{j=1}^m$, where $ |v_j| = 0 \pmod{2} $, and let $ M \in \symftwo $ be a symmetric matrix with a null diagonal.  

Define $ V' \in \mathbb{F}_2^{k \times m} $ with the columns $\{v'_j\}_{j=1}^m$ identical to those of $ V $, except that columns $ a $ and $ a+1 $ are swapped, i.e., 
\begin{equation}
v'_{a} = v_{a+1} \quad \text{and} \quad v'_{a+1} = v_{a}.
\end{equation}

Define $ M' \in \symftwo $, a symmetric matrix with a null diagonal, whose elements are defined as those of $ M $, apart from the element:  
\begin{align}
\label{eq:Mrule2}
    M'_{a,a+1} &= M_{a,a+1} + v_a\cdot v_{a+1}, \\
    M'_{j,a+1} &= M_{j,a}\quad \text{and} \quad M'_{j,a} = M_{j,a+1} \quad \text{for any } j \in [m] \text{ such that } j \notin \{a, a+1\},
\end{align}
and its corresponding symmetric entries.

Then the substitution $(V, M) \mapsto (V', M')$ leaves the corresponding Pauli monomial invariant  
\begin{equation}
    \Omega(V, M) = \Omega(V', M').
\end{equation}

\begin{proof}
The invariance of the Pauli monomial follows directly from the fact that:  
\begin{equation}
P^{\otimes v_a}_a P^{\otimes v_{a+1}}_{a+1} = \chi(P_a, P_{a+1})^{(v_a\cdot v_{a+1})} P^{\otimes v_{a+1}}_{a+1} P^{\otimes v_a}_a,
\end{equation}
where $ \chi(P_a, P_{a+1}) $ is the phase factor arising from the commutation relation between $P_a$ and $ P_{a+1} $.  
Thus, swapping $ v_a $ and $ v_{a+1} $ in $ V $ introduces the phase adjustment $v_a \cdot v_{a+1}$ into the matrix $ M' $. 
\end{proof}
\end{theorem}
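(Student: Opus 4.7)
The plan is to verify the claim directly from the definition of $\Omega(V,M)$: physically swap the two tensor-product factors $P_a^{\otimes v_a}$ and $P_{a+1}^{\otimes v_{a+1}}$ inside the operator product $P_1^{\otimes v_1}\cdots P_m^{\otimes v_m}$, then relabel the summation variables by the transposition $P_a \leftrightarrow P_{a+1}$, and finally read off the resulting expression as a Pauli monomial whose data is exactly $(V', M')$. The whole argument reduces to two short bookkeeping items: (a) computing the phase produced when the two adjacent tensor factors are commuted past each other, and (b) relabelling the $\chi$-factors inside $\prod_{i<j}\chi(P_i,P_j)^{M_{i,j}}$ under that transposition.

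The key identity to establish is
$P^{\otimes u}\, Q^{\otimes w} = \chi(P,Q)^{u\cdot w}\, Q^{\otimes w}\, P^{\otimes u}$
for any $P,Q \in \mathbb{P}_n$ and $u,w \in \mathbb{F}_2^k$. This is immediate by inspecting each tensor slot: the only slots contributing a non-trivial reordering sign are those where both $u_i = w_i = 1$, and each such slot contributes exactly one factor of $\chi(P,Q)$, so the total phase is $\chi(P,Q)^{u\cdot w}$. Specializing to $(u,w)=(v_a,v_{a+1})$ gives the phase $\chi(P_a,P_{a+1})^{v_a\cdot v_{a+1}}$, which is precisely the extra $v_a\cdot v_{a+1}$ correction absorbed into $M'_{a,a+1}$ in \cref{eq:Mrule2}.

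After performing the swap and renaming $Q_a \coloneqq P_{a+1}$, $Q_{a+1} \coloneqq P_a$, and $Q_j \coloneqq P_j$ otherwise, the operator product becomes $Q_1^{\otimes v'_1}\cdots Q_m^{\otimes v'_m}$ with $V'$ as in the statement. The phase product $\prod_{i<j}\chi(P_i,P_j)^{M_{i,j}}$ then rewrites, using the symmetry $\chi(A,B)=\chi(B,A)$ for Hermitian Paulis (a consequence of \cref{eq:trivialpropCHI}), as $\prod_{i<j}\chi(Q_i,Q_j)^{M'_{i,j}}$, where $M'$ coincides with $M$ outside the rows and columns indexed by $a$ and $a+1$, swaps the off-diagonal entries $M_{j,a}\leftrightarrow M_{j,a+1}$ for every $j\notin\{a,a+1\}$, and picks up the additional $v_a\cdot v_{a+1}$ contribution on the $(a,a+1)$ entry. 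No step should present a genuine obstacle; the only minor care needed is handling the mod-$2$ arithmetic in the exponent of $\chi(P_a,P_{a+1})$, which is automatic since $\chi(P_a,P_{a+1})^2=1$. Combining the relabelling with the swap phase yields $\Omega(V,M)=\Omega(V',M')$, completing the proof.
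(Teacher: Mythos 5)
Your proposal is correct and follows essentially the same route as the paper: the central fact is the commutation identity $P_a^{\otimes v_a} P_{a+1}^{\otimes v_{a+1}} = \chi(P_a,P_{a+1})^{v_a\cdot v_{a+1}} P_{a+1}^{\otimes v_{a+1}} P_a^{\otimes v_a}$, which is exactly what the paper's one-line proof invokes. You merely make explicit the relabelling of the summation variables and the resulting permutation of the off-diagonal entries of $M$, details the paper leaves implicit.
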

\begin{example}[Graphical implementation of \cref{th:rule2}]
In this example, we show the graphical implementation of the equivalence relation in \cref{th:rule2}, which deals with the operation of swapping two nearest-neighbor columns.

Specifically, according to \cref{eq:Mrule2}, if we swap two nearest-neighbor columns (say $ v_2 $ and $ v_3 $), an additional phase is introduced between these two columns if their overlap is one modulo 2 (i.e., $  v_2\cdot v_3 = 1 \pmod{2} $), and each of these columns retains its previous phases when they are swapped.

Consider $ \Omega(V,M) $ as defined in the previous example. If we swap the second and third columns, we obtain
\setmonomialscale{3.5mm}
\begin{align}
    \Omega(V,M) = \monomialdiagram{6}{{1,2,3,4,5,6},{3,4,5,6},{2,5},{1,2,4,6}}{0:1,1:2,0:3} 
    = \monomialdiagram{6}{{1,2,3,4,5,6},{2,5},{3,4,5,6},{1,2,4,6}}{0:2,0:3} \quad .
\end{align}
\end{example}

With these two theorems in hand, we are now confident that arbitrary Gaussian-like column operations (i.e., addition and swapping of arbitrary columns) can be performed on the matrix $ V $, with corresponding modifications to the matrix $ M $, which can be derived by iteratively applying the previous two rules. Alternatively, general Gaussian operations can be tracked explicitly through the following result:

\begin{theorem}[Arbitrary Gaussian-like column operations]\label{th:gaussOP}
    Let $ V \in \even $ and $ M \in \symftwo $, and let $ \Omega(V,M) $ denote the corresponding Pauli monomial. Then, for any matrix $ A \in \mathrm{GL}(\mathbb{F}_2^{m\times m}) $, it holds that  
    \begin{align}
        \Omega(V,M) = \Omega(VA, M(A))\,,
    \end{align}
    where $ M(A)\in \symftwo $ is a function of the matrix $A$, determined as follows. Define  
    \begin{align}
        H &\coloneqq V^T V, \\
        w_i &\coloneqq \frac{|v_i|}{2}, \\
        \Lambda_{i,j} &\coloneqq M_{i,j} + w_i \delta_{i,j} + H_{i,j} \delta_{i<j},  
    \end{align}
    where all operations are $\pmod 2$. Then, for any $ A \in \mathrm{GL}(\mathbb{F}_{2}^{m\times m}) $, we obtain the transformations  
    \begin{align}
        \Lambda(A) &= A^T \Lambda A, \\
        H(A) &= \Lambda(A) + \Lambda(A)^T, \\
        w_i(A) &= \Lambda_{i,i}(A), \\
        M_{i,j}(A) &= \Lambda_{i,j}(A) \delta_{i>j} + \Lambda_{j,i}(A) \delta_{j>i}.
    \end{align}
\end{theorem}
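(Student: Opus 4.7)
The plan is to reduce arbitrary $A \in \mathrm{GL}(\mathbb{F}_2^{m\times m})$ to a composition of the two elementary moves already covered by \cref{th:rule1,th:rule2}, namely nearest-neighbor column swaps and nearest-neighbor column additions, using that these generate $\mathrm{GL}(\mathbb{F}_2^{m\times m})$. Combined with the consistency check $\Omega(VA_1A_2,M(A_1A_2))=\Omega((VA_1)A_2,M(A_1)(A_2))$, which reduces the statement to verifying the formula on a generating set, this decomposition strategy transforms the theorem into a finite, explicit bookkeeping problem.

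The key observation is that the matrix $\Lambda$ was introduced precisely so that the complicated update rules involving $M$, $H$, and the column weights $w_i$ in \cref{th:rule1,th:rule2} collapse into the single congruence $\Lambda\mapsto A^T\Lambda A$. Concretely, first I would verify that the definitions are self-consistent: for any $V\in\even$ and $M\in\symf$, the packaging $\Lambda_{i,j} = M_{i,j} + w_i\delta_{i,j} + H_{i,j}\delta_{i<j}$ can be inverted as stated, since the diagonal of $H$ vanishes mod $2$ (columns of $V$ have even Hamming weight, so $H_{i,i}=|v_i|=0\pmod 2$), and $M$ is recovered by symmetrizing $\Lambda$'s off-diagonal part modulo the $H$ contribution. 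Then the composition property $\Lambda(A_1A_2)=A_2^T\Lambda(A_1)A_2=(A_1A_2)^T\Lambda A_1A_2$ follows immediately from the two successive applications $V\mapsto VA_1\mapsto VA_1A_2$, while $H\mapsto A^T H A$ is automatic from its definition $H=V^TV$.

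Next I would check the two generators separately. For the adjacent transvection $A = I + E_{a,a_{\pm}}$, a direct entry-wise computation of $A^T\Lambda A$ yields, for $j\notin\{a,a_\pm\}$, the rule $\Lambda(A)_{j,a_\pm} = \Lambda_{j,a_\pm} + \Lambda_{j,a}$, which after disentangling the $H$ contribution recovers \cref{eq:M2rule1}, i.e.\ $M'_{j,a_\pm} = M_{j,a_\pm} + M_{j,a}$. The diagonal entry gives $\Lambda(A)_{a_\pm,a_\pm} = w_{a_\pm} + w_a + H_{a,a_\pm}$, which by \cref{le:paritysum} is exactly the new column weight $|v_a+v_{a_\pm}|/2\pmod 2$. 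Finally $\Lambda(A)_{a,a_\pm}$ reproduces the phase update $M'_{a,a_\pm} = M_{a,a_\pm} + w_a$ of \cref{eq:M1rule1}, once the new $H(A)_{a,a_\pm}$ is subtracted. For the adjacent transposition, the same computation is even simpler: off-diagonal entries merely permute, and the only non-trivial contribution is $\Lambda(A)_{a,a+1} = \Lambda_{a+1,a} = M_{a+1,a}$, which upon subtracting the transformed $H(A)_{a,a+1} = H_{a,a+1} = v_a\cdot v_{a+1}$ gives $M'_{a,a+1} = M_{a,a+1} + v_a\cdot v_{a+1}\pmod 2$, matching \cref{eq:Mrule2}.

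The main obstacle is the bookkeeping in the transvection case: although the congruence $\Lambda\mapsto A^T\Lambda A$ looks symmetric, the matrix $\Lambda$ itself is upper-triangular in structure (it lumps the $H$ contribution into its strict upper triangle), so one has to carefully track which off-diagonal entries below versus above the diagonal flip under $A^T\Lambda A$ and how they redistribute between $M(A)$ and $H(A)$. Once these checks are carried out for a single pair of nearest-neighbor indices, the general case follows by writing $A$ as a product of such elementary matrices (a standard fact for $\mathrm{GL}$ over any field) and invoking the composition property. Note that adjacent transpositions and adjacent transvections do suffice, because together they generate all of $\mathrm{GL}(\mathbb{F}_2^{m\times m})$ via row-reduction to the identity.
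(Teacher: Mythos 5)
Your proposal is correct and follows essentially the same route as the paper's proof: decompose $A\in\mathrm{GL}(\mathbb{F}_2^{m\times m})$ into adjacent elementary moves, use the congruence/composition property of $\Lambda$, and verify entry-by-entry that $\Lambda\mapsto A^T\Lambda A$ reproduces the update rules of \cref{th:rule1} (and, in your version, \cref{th:rule2}). The only cosmetic difference is that the paper uses adjacent transvections alone as generators (adjacent swaps being products of three of them over $\mathbb{F}_2$), so it never needs the swap check you include.
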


    \begin{proof}
       First we note that the transformation $(w,M,H)\leftrightarrow \Lambda$ is bijective, meaning we can transfer between them arbitrarily. Next, we note the action of $A$ is decomposable $B^{T}\Lambda(A)B=\Lambda(AB)$. We can also use that every $A\in \mathrm{GL}(\mathbb{F}_2^{m\times m})$ can be decomposed into $A=\prod_{\alpha}(\mathbb{1}+E_{i_\alpha,i_\alpha+(-1)^{s_\alpha}})$ for some lists of integers and signs $(i_\alpha,s_\alpha)$, where $E_{i,j}$ is the canonical matrix basis. Using the argument above it suffices to show that the generating transformation  
       \begin{align}
           (\mathbb{1}+E_{i\pm 1,i})\Lambda (\mathbb{1}+E_{i,i\pm 1})&=\Lambda+E_{i\pm 1,i}\Lambda+\Lambda E_{i,i\pm 1}+E_{i\pm 1,i}\Lambda E_{i,i\pm 1}\\
       \end{align}
       correctly preserves $(w,M,H)$. This corresponds to the elementary transformation of adding the $i$-th column vector to the $i\pm 1$-th. For $w$, we have
       \begin{align}
           w_j\mapsto(A^T\Lambda A)_{j,j}=\Lambda_{j,j}+\delta_{j,i\pm 1}(\Lambda_{i,j}+\Lambda_{j,i}+\Lambda_{i,i})=w_j+\delta_{j,i\pm 1}(H_{i,j}+w_{i})
       \end{align}
       meaning that $w_j$ is unchanged, unless the $j$-th column vector is added onto the $(j\pm1)$-th, where it changes if the one added is an odd weight or the two primitives do not commute, as it is the requirement, see \cref{th:rule1}.
       For $M$, we have for $j> k$
       \begin{align}
           M_{jk}&\mapsto M_{jk}+\delta_{j,i\pm 1}\Lambda_{i,k}+
           \delta_{k,i\pm 1}\Lambda_{j,i}+
          \delta_{j,i\pm1}\delta_{k,i\pm1}\Lambda_{i,i}\\
           &=M_{jk}+\delta_{j,i\pm 1}(\delta_{i,k}w_i+\delta_{i>k}M_{ik}) +
           \delta_{k,i\pm 1}(\delta_{j,i}w_i+\delta_{j>i}M_{j,i})\\
           &=M_{jk}+w_i(\delta_{j,i\pm 1}\delta_{i,k}+\delta_{k,i\pm 1}\delta_{j,i})+\delta_{j,i\pm1}M_{i,k}+
           \delta_{k,i\pm1}M_{j,i}
       \end{align}
       which corresponds to adding the phases onto the $i\pm 1$-th column and row as well as changing the phase between $i\pm1$ if the respective $w_{i}$ odd weight primitive as required. Finally, for $H$, we already know that $H(A)=A^TV^TVA=A^T(\Lambda+\Lambda^T)A$ transforms adjointly meaning that  $H$ and $\Lambda+\Lambda^T$ transform identically.
    \end{proof}
The above theorem establishes that a Pauli monomial remains invariant under column transformations $ A \in \mathrm{GL}(\mathbb{F}_2^{m \times m}) $ when applied appropriately and explicitly describes how these operations affect the matrix $ M $.
Therefore, following the same approach as in \cref{def:equivalencerelationVG}, we can define an equivalence relation for Pauli monomials. In fact, the substitution rules discussed above correspond precisely to exploring the equivalence class of a Pauli monomial. 

We now present a theorem that encompasses several important properties derived from the aforementioned basic rules. These properties serve as the foundation for establishing the following key fact: primitive Pauli monomials generate the full commutant of the Clifford group.

\begin{theorem}[Graphical moves for Pauli monomials]
\label{th:graphrules}
Consider a Pauli monomial $\Omega(V,M)$, where $V \in \even$ has columns $\{v_j\}_{j=1}^m$. The following properties hold:
\begin{enumerate}[label=(\arabic*)]
    \item Arbitrary Gaussian-elimination operations can be performed on the columns of the Pauli monomial diagram, by appropriately adjusting the phases.

    \item If there exists an `odd' column $v$ (i.e., $|v| = 2 \pmod{4}$) in the diagram, then it is possible to remove all phases attached to $v$ and reposition $v$ to any desired position by suitably modifying the remaining columns of $V$ and $M$.

    \item If we have linear dependence between the columns, by performing Gaussian-elimination operations on these columns, we can reduce $\Omega(V,M)$ to $\Omega(V',M')$, which satisfies one of the following two cases:
    \begin{itemize}
        \item There is a column $v = 0$ with no phase attached to it in the transformed diagram $\Omega(V',M')$. In this case, we can further reduce it to
        \be
        \Omega(V',M') = d \, \Omega(V_r, M_r),
        \ee
        where $V_r  \in \mathbb{F}_2^{k \times m-1}$ and $M_r\in \mathbb{F}_2^{m-1 \times m-1}$ are the matrices obtained by removing the respective column (and its associated rows) corresponding to $v$.
        \item There is a column $v = 0$ with one or more phases attached to it in the transformed diagram $\Omega(V',M')$. In this case, we have
        \be
        \Omega(V',M') = \Omega(V_r, M_r),
        \ee
        where $V_r  \in \mathbb{F}_2^{k \times m-2}$ and $M_r\in \mathbb{F}_2^{m-2 \times m-2}$ are the matrices obtained by removing both the zero column $v$ and the column connected to it through the phase  (and their associated rows).
    \end{itemize}

\item If there are two even non-identical columns that share a phase and commute (i.e., overlap in an even number of dots), they can be transformed into four odd columns without a phase. More specifically, the matrix $V$ containing two even, commuting columns that share a phase can be converted into a new matrix $V' \in \mathbb{F}_2^{k \times (m+2)}$, where these columns are replaced by four odd commuting columns that do not share any phases.
\end{enumerate}
\end{theorem}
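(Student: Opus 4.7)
The plan is to address the four items in order, with items (1)--(3) following fairly directly from rules already established and (4) carrying the real technical content.

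Item (1) is essentially a restatement of \cref{th:gaussOP}: that theorem explicitly constructs the updated phase matrix $M(A)$ for any $A \in \mathrm{GL}(\mathbb{F}_2^{m\times m})$ acting on the columns of $V$, which is precisely what ``arbitrary Gaussian column operations'' means. For item (2), I would invoke (1) to pivot on the odd column: since $v_i$ odd means $w_i = \Lambda_{i,i} = 1$ in the notation of \cref{th:gaussOP}, this diagonal entry of $\Lambda$ serves as a standard Gaussian pivot, so a single transformation $A$ clears every other entry of row and column $i$. After the transformation the $i$-th row and column of $M$ are zero, so all phases attached to $v_i$ have been eliminated; repositioning then follows by composing with a further permutation matrix $A'$ that sends column $i$ to the desired slot, whose effect on $M$ is captured by \cref{th:rule2}.

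For item (3), linear dependence of the $v_j$ together with (1) reduces to a representative with, say, $v_m = 0$. In the definition of $\Omega(V,M)$ the factor $P_m^{\otimes v_m}$ trivializes to the identity and the sum over $P_m$ collapses to $\sum_{P_m}\chi(P_m, Q) = d^2\,\delta_{Q,\mathbb{1}}$ with $Q=\prod_{j:M_{m,j}=1} P_j$, via properties 2 and 3 of \cref{le:trivialPQ}. If no phases touch $v_m$, then $Q=\mathbb{1}$ automatically, the sum contributes $d^2$, and combined with the $1/d^m$ normalization this produces the advertised factor $d$ times the reduced monomial $\Omega(V_r, M_r)$. If phases do touch $v_m$, I first apply item (1) to consolidate them to a single phase at $(m,j)$ by column additions among the involved indices (each move updating $M_{m,\cdot}$ linearly via \cref{th:rule1}); then $\delta_{P_j,\mathbb{1}}$ forces $P_j = \mathbb{1}$, and dropping both redundant columns delivers $\Omega(V_r,M_r)$ with $m-2$ columns.

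Item (4) is the main obstacle, and unlike (1)--(3) it is not a Gaussian-type operation on a fixed-size $V$ but a genuine expansion $m \mapsto m+2$. Given two distinct even commuting columns $v_a, v_b$ with $M_{a,b}=1$, the goal is to realize the two-variable phased sum $\sum_{P_a, P_b} P_a^{\otimes v_a} P_b^{\otimes v_b}\chi(P_a, P_b)$ as an unphased four-variable sum over four odd columns. My plan is to split $v_a = u_1 + u_2$ and $v_b = u_3 + u_4$ into odd-weight vectors (possible because every even-weight vector of weight at least two decomposes into two odd-weight pieces, using \cref{le:paritysum}) chosen so that the four new columns are pairwise commuting, and then to pass from $P_a, P_b$ to independent Paulis $Q_1,\ldots,Q_4$ via the substitution $K = \sqrt{\chi(P,Q)}\,PQ$ of item 4 of \cref{le:trivialPQ}. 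Each substitution introduces phase factors $\chi(\cdot,\cdot)^{|u_\ell|/2}$ whose parities must exactly cancel the original $\chi(P_a,P_b)$; the hypothesis $M_{a,b}=1$ supplies the one extra $\chi$, and the commutation assumption $v_a\cdot v_b\equiv 0\pmod 2$ is precisely what makes the parity arithmetic consistent. The real technical difficulty is the combinatorial bookkeeping: one must choose the splitting so that the four new columns are genuinely independent elements of $\mathrm{Even}(\mathbb{F}_2^{k\times(m+2)})$ and pairwise commuting, and then verify that after all rearrangements no residual $\chi$-phase survives among them. This phase-cancellation argument is where I expect the bulk of the work to lie.
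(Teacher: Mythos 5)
Your treatments of items (1)--(3) are correct and essentially follow the paper's route: (1) is indeed just \cref{th:gaussOP}; (2) by pivoting on the odd column (the paper instead walks the column rightward with alternating swaps and additions, but the two procedures are equivalent — only note that if you clear phases \emph{before} repositioning, the swaps of \cref{th:rule2} can reintroduce phases attached to $v$ whenever it passes an anticommuting column, so you should permute first and clear last, or clear again at the end); and (3) via the collapse $\sum_{P_m}\chi(P_m,P_j)=d^2\delta_{P_j,\mathbb{1}}$ after consolidating all phases on the zero column into a single one, which matches the paper's case analysis, including the normalization factors $d$ and $1$.

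Item (4), however, has a genuine gap, and it is exactly at the point you flag as "combinatorial bookkeeping." Your proposed mechanism — split $v_a=u_1+u_2$, $v_b=u_3+u_4$ and pass to new variables via $K=\sqrt{\chi(P,Q)}\,PQ$ — cannot work as stated, because a substitution of the form $K=\sqrt{\chi(P,Q)}\,PQ$ is a bijective relabelling of an existing summation variable: it preserves the number of independent Pauli sums. The left-hand side is a two-variable isotropic sum ($1/d^2$ normalization, variables $P_a,P_b$), the right-hand side a four-variable one ($1/d^4$, variables $Q_1,\dots,Q_4$); no change of variables alone bridges these, and "splitting" one column into two columns carrying independent summation variables is not a move available in the calculus (writing $P_a^{\otimes v_a}=P_a^{\otimes u_1}P_a^{\otimes u_2}$ keeps a single variable, and decoupling it into two would silently introduce a spurious factor of $d^2$). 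What is missing is the step that actually supplies the two extra columns: the paper inserts, "for free," a pair of identical weight-two columns with no phase — i.e.\ multiplies by $\Omega(u)\Omega(u)=T_{(ij)}^2=\mathbb{1}$ with the two-dot support chosen inside the (even, hence at least two-point) overlap of $v_a$ and $v_b$ so that the inserted columns commute with both — and only then redistributes weights by the addition rule of \cref{th:rule1}, using \cref{le:paritysum} to track parities, until the original pair plus the inserted pair become four odd, pairwise commuting, phase-free columns. Without this insertion (or an equivalent resolution-of-identity device), your parity-cancellation argument has nothing to act on, so as written the proof of item (4) does not go through.
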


\begin{proof}
We prove the properties one by one:
\begin{enumerate}[label=(\arabic*)]
    \item This is a restating of \cref{th:gaussOP} in diagram terminology, where Gaussian elimination on the columns of the diagram corresponds to permissible transformations that preserve the Pauli monomial structure, up to phases.

    \item Assume there is an odd column $v$ in the diagram, possibly with some phases attached. Begin by swapping this column to the leftmost position. During this process, additional phases may appear, but these are inconsequential for the argument.

    Once $v$ is in the first position, swap it with the second column. If a phase exists between the first and second columns after this swap, add the second column $v$ to the first. Since $v$ is odd ($|v| = 2 \pmod{4}$), this operation generates a new phase between the first and second columns, which cancels the pre-existing phase.

    Repeat this process by swapping $v$ from the second position to the third, and so on, always adding the now-swapped $v$ to the left column if a phase exists between them. By continuing this procedure all the way to the rightmost position, $v$ will have no attached phases and will occupy the far-right position.

    To move $v$ to any desired position, reverse the process: bring $v$ leftward, adding to the right to removing phases as necessary to ensure that no phases remain attached to $v$ at each step. This completes the proof of the second property.

    Below is an example of such a procedure. Without loss of generality, assume that the column is already positioned on the leftmost side and we position it on the rightmost side without phases attached to it:
    \setmonomialscale{3.5mm}
    \begin{align}
        \monomialdiagram{4}{{1,2},{2,3},{1,2,3,4},{3,4}}{0:1,0:2,0:3} 
        \quad\textbf{=}\quad
        \monomialdiagram{4}{{2,3},{1,2},{1,2,3,4},{3,4}}{1:2,1:3} 
        \quad\textbf{=}\quad
        \monomialdiagram{4}{{2,3},{1,2,3,4},{1,2},{3,4}}{1:2,2:3} 
        \quad\textbf{=}\quad
        \monomialdiagram{4}{{2,3},{3,4},{1,2},{3,4}}{2:3,1:3} 
        \quad\textbf{=}\quad
        \monomialdiagram{4}{{2,3},{3,4},{3,4},{1,2}}{1:2,2:3} 
        \quad\textbf{=}\quad
        \monomialdiagram{4}{{2,3},{3,4},{1,2,3,4},{1,2}}{1:2}.
    \end{align}

    \item If there is linear dependence between the columns, we can, by virtue of property 1, perform Gaussian-elimination moves to transform the diagram into an equivalent form $\Omega(V',M')$, where at least one column is $v = 0$. 

There are three possible cases to consider: the column $v = 0$ has no phases attached to it, exactly one phase, or more than one phase. We analyze these cases one by one and explicitly show that the last case can always be reduced to one of the first two.
This will conclude the proof.
\begin{itemize}
    \item No attached phases: If a column $v = 0$ has no connected phases, it corresponds to a free sum contributing only a global factor. By \cref{def:paulimonomials}, this implies:
\begin{equation}
\Omega(V',M') = d \, \Omega(V_r, M_r),
\end{equation}
where $V_r$ and $M_r$ are obtained by removing the respective columns and rows of $v$. The factor $d$ arises from the $d^2$-element sum, normalized by $1/d$.

\item One attached phase: If a column $v = 0$ has exactly one connected phase, the relation 
\begin{equation}
\frac{1}{d} \sum_P \chi(P, Q) = d \, \delta_{Q, \mathbb{1}}
\end{equation}
ensures that the sum associated to the connected column vanishes. Consequently, we have
\begin{equation}
\Omega(V',M') = \Omega(V_r, M_r),
\end{equation}
where $V_r$ excludes both the column $v = 0$ and the column associated with the phase.

\item Multiple attached phases: Without loss of generality, assume the first column is $v = 0$, and subsequent columns have phases attached to it in order, followed by the columns that do not share any phase with the first zero column. This can be achieved without loss of generality through Gaussian elimination moves (column swaps). We will focus only on the phases interacting with the first zero column and disregard those between other columns.

For example, consider the following configuration:
\setmonomialscale{3.5mm}
\begin{align}
    \monomialdiagram{6}{{x},{1,2},{1,2,3,4,5,6},{2,6},{1,4,5,6},{2,3,4,5}}{0:1,0:2,0:3}. 
\end{align}
   
We begin by handling the rightmost column that has a phase with the first zero column. Applying the graphical addition rule, we add this rightmost column to its left neighbor, thereby canceling the phase between them. After this step, the diagram becomes (neglecting to draw any possible new phase generated between columns that are not connected to the first zero column for simplicity):
\setmonomialscale{3.5mm}
\begin{align}
    \monomialdiagram{6}{{x},{1,2},{1,3,4,5},{2,6},{1,4,5,6},{2,3,4,5}}{0:1,0:3}.
\end{align}

Next, we swap the rightmost column with the one on the left, yielding the diagram:
\setmonomialscale{3.5mm}
\begin{align}
    \monomialdiagram{6}{{x},{1,2},{2,6},{1,3,4,5},{1,4,5,6},{2,3,4,5}}{0:1,0:2}.
\end{align}

At this point, we are in a similar situation as initially, but with one fewer column. By induction, we repeat this process until only one column remains connected to $v = 0$. For instance, the progression would proceed as
\setmonomialscale{3.5mm}
\begin{align}
    \monomialdiagram{6}{{x},{1,2},{2,6},{1,3,4,5},{1,4,5,6},{2,3,4,5}}{0:1,0:2}=\monomialdiagram{6}{{x},{1,6},{2,6},{1,3,4,5},{1,4,5,6},{2,3,4,5}}{0:2}  = \monomialdiagram{6}{{x},{2,6},{1,6},{1,3,4,5},{1,4,5,6},{2,3,4,5}}{0:1}.
\end{align}
At this stage, the system has effectively reduced to the case of a single attached phase, which we have already addressed.
\end{itemize}

\item If there are two even terms that share a phase and commute (overlap in an even number of dots modulo two), we can transform them into $4$ odd commuting columns with no phase between them.

Let us explain the reason why. First, we can assume these terms to be nearest neighbors (due to Gaussian elimination moves). We can `add for free' two identical two-dot columns with no phase next to these columns. This follows because it corresponds to applying two identical swap operators consecutively, which is equivalent to multiplying the monomial by the identity.  

We are free to place such two-dot columns in any position, and the following argument remains essentially unchanged, provided that the added columns commute with both commuting primitives. In the example below, we place them where the two original columns both have dots (or both have no dots). Such a position exists because the columns are assumed to overlap in an even number of points.

An example for such setup looks like 
\begin{align}
    \monomialdiagram{6}{{1,2,3,4},{3,4,5,6}}{0:1}\quad=\quad\monomialdiagram{6}{{1,2,3,4},{3,4,5,6},{3,4},{3,4}}{0:1}.
\end{align}

Now, the idea is to sum these two-dot columns to the left, changing the parity of the neighboring columns.  
In detail, we proceed as follows:  
We add the first two-dot column to the column on its left, introducing a phase between these two columns (because the two-dot column is odd).  
For example, in the diagram above, we would obtain  
\begin{align}
    \monomialdiagram{6}{{1,2,3,4},{3,4,5,6},{3,4},{3,4}}{0:1} = \monomialdiagram{6}{{1,2,3,4},{5,6},{3,4},{3,4}}{0:1,1:2}.
\end{align}
By \cref{le:paritysum}, this changes the parity of the column we added to (i.e., the second column) from even to odd.  

Next, we add the second column (now odd) to the first column (even). Since these two columns still overlap in an even number of dots (modulo two), the first column becomes odd after this transformation (by \cref{le:paritysum}). Furthermore, this introduces an additional phase between the first and second columns (due to adding an odd column), while the second column retains its phases with the third column.  

In the example, we obtain  
\begin{align}
    \monomialdiagram{6}{{1,2,3,4},{5,6},{3,4},{3,4}}{0:1,1:2}=\monomialdiagram{6}{{1,2,3,4,5,6},{5,6},{3,4},{3,4}}{1:2,0:2}.
\end{align}

At this point, we have four (commuting) odd columns, but we still need to remove the introduced phases.

To remove the phases, recall that there is now a phase between the first column and the third, and between the second column and the third.  
We begin by adding the second column (odd) to the third (odd). Since they commute, the third column becomes even after this addition (by \cref{le:paritysum}), and we remove the phase (because adding an odd column add an additional phase).  

In the example, this yields 
\begin{align}
  \monomialdiagram{6}{{1,2,3,4,5,6},{5,6},{3,4},{3,4}}{1:2,0:2}=\monomialdiagram{6}{{1,2,3,4,5,6},{5,6},{3,4,5,6},{3,4}}{0:2}.
\end{align}

Now, we are left with only the phase between the first and the second columns, but the third column is now even. To address this, we swap the first column to the second position. Since the columns commute, this introduces no additional phase.  
Finally, we add the second column (odd) to the third (even), removing the remaining phase. After this, the third column becomes odd (by \cref{le:paritysum}), and we achieve the desired configuration with no phases. We also observe that all columns commute, as the operations of swapping and summing even columns preserve the fact that they commute.

In the example, we obtain  
\begin{align}
    \monomialdiagram{6}{{1,2,3,4,5,6},{5,6},{3,4,5,6},{3,4}}{0:2} = \monomialdiagram{6}{{5,6},{1,2,3,4,5,6},{3,4,5,6},{3,4}}{1:2} = \monomialdiagram{6}{{5,6},{1,2,3,4,5,6},{1,2},{3,4}}{}.
\end{align}

\end{enumerate}
\end{proof}

\subsection{Properties of Pauli monomials: reduction, normal form and inner product}
Having outlined the substitution rules for Pauli monomials, we now use them to establish their fundamental properties. We begin by introducing the notion of \emph{reduced Pauli monomials}.

\begin{definition}[Reduced Pauli monomial]\label{def:reducedpaulimonomials}
    Let $ \Omega(V, M)  $ be a Pauli monomial with $V\in\even$ and $M\in \symf$. We say that $ \Omega(V, M) $ is a \textit{reduced Pauli monomial} if and only if $\rank(V)=m$, i.e., the column vectors of $ V $ are linearly independent.
\end{definition}
At this point, one might question the reasoning behind introducing an extra set of Pauli monomials; however, this is justified by the following lemma.
\begin{lemma}[Pauli monomial reduction]\label{lem:everypaulimonomialcanbereduced}
    Every Pauli monomial is proportional to a reduced Pauli monomial. More precisely, for any $ \Omega(V,M) $ with $V\in\even$, there exists a reduced Pauli monomial $ \Omega(V', M')$ with $V'\in\mathrm{Even}(\mathbb{F}_{2}^{k\times m'})$ with $m'\le m$, such that 
    \begin{equation}
    \Omega(V, M) = d^{\alpha} \Omega(V', M'),
    \end{equation}
    where $\alpha \in \mathbb{N}$ is a natural number such that $ \alpha \leq m - m' $.
\begin{proof}
    The statement descends from  graphical rule (3) of \cref{th:graphrules}. Specifically, in the presence of linear dependencies, we can perform Gaussian elimination to produce zero columns, which, without loss of generality, can be positioned at the beginning of the matrix. By then applying rule (3) from \cref{th:graphrules}, we can further reduce the number of columns, repeating this process until no linear dependencies remain. This yields the desired result.
\end{proof}
\end{lemma}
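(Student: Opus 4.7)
The plan is to proceed by strong induction on the number of columns $m$. In the base case, if $\mathrm{rank}(V)=m$ then the columns of $V$ are linearly independent and $\Omega(V,M)$ is already a reduced Pauli monomial, so we take $V'=V$, $M'=M$, and $\alpha=0$, trivially satisfying $\alpha\le m-m'$.

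For the inductive step, assume the columns of $V$ are linearly dependent. By item (1) of Theorem \ref{th:graphrules}, I apply Gaussian-elimination column moves, with the accompanying updates of $M$ supplied by Theorem \ref{th:gaussOP}, to pass to an equivalent pair $(\tilde V,\tilde M)$ with $\Omega(V,M)=\Omega(\tilde V,\tilde M)$, in which at least one column of $\tilde V$ is the zero vector. Crucially, $\tilde V$ remains in $\even$ throughout: by Lemma \ref{le:paritysum}, the Hamming weight of a sum of two even-weight columns is $|v_1|+|v_2|-2\,v_1\cdot v_2$, which is even, so the $\even$ condition is preserved under column additions (and trivially under column swaps).

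I then invoke item (3) of Theorem \ref{th:graphrules} at the zero column. Exactly one of two possibilities occurs. Either the zero column carries no phase, in which case
\begin{equation}
\Omega(\tilde V,\tilde M)=d\cdot \Omega(V_1,M_1)
\end{equation}
for some $(V_1,M_1)$ obtained by deleting that single column (and the associated row of $\tilde M$), so $m_1=m-1$; or the zero column carries at least one phase, in which case
\begin{equation}
\Omega(\tilde V,\tilde M)=\Omega(V_1,M_1),
\end{equation}
where now $(V_1,M_1)$ is obtained by also deleting a column to which a phase is attached, so $m_1=m-2$. In both cases $V_1$ is a column-submatrix of $\tilde V$ and therefore still lies in $\mathrm{Even}(\mathbb{F}_2^{k\times m_1})$. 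Writing $\beta\in\{0,1\}$ for the exponent of $d$ picked up in this single step, we have $m-m_1\ge\beta$ in both cases. Applying the inductive hypothesis to $\Omega(V_1,M_1)$ yields a reduced Pauli monomial $\Omega(V',M')$ with $\Omega(V_1,M_1)=d^{\gamma}\Omega(V',M')$ and $\gamma\le m_1-m'$. Combining, $\Omega(V,M)=d^{\beta+\gamma}\Omega(V',M')$ with $\alpha\coloneqq\beta+\gamma\le(m-m_1)+(m_1-m')=m-m'$.

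The only delicate point, and the place I would check most carefully, is that the $\even$ property and hence the well-definedness of all intermediate Pauli monomials survive the Gaussian-elimination preprocessing step; this is exactly what Lemma \ref{le:paritysum} guarantees. Everything else is a direct application of the graphical moves already proved in Theorem \ref{th:graphrules}, and the procedure must terminate because $m$ is a nonnegative integer strictly decreasing at each iteration.
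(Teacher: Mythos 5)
Your proof is correct and follows essentially the same route as the paper: Gaussian-eliminate to produce a zero column (rule (1)/Theorem \ref{th:gaussOP}), remove it via rule (3) of Theorem \ref{th:graphrules}, and iterate; you have merely organized the iteration as an explicit induction and added the (correct, and implicitly assumed in the paper) check via Lemma \ref{le:paritysum} that the even-weight condition on columns is preserved.
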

\cref{lem:everypaulimonomialcanbereduced} is crucial because it implies the existence of an essential set of Pauli monomials, to which all other monomials are proportional to, thereby allowing us to focus our analysis exclusively on the set of reduced Pauli monomials.

It is evident that reduced Pauli monomials differ based on the number of independent column vectors. Consequently, we can classify Pauli monomials according to the rank of the associated reduced Pauli monomials, leading to the concept of \emph{order}.

\begin{definition}[Order of Pauli monomials]\label{def:orderpaulimonomial}
    Let $ \Omega(V, M)$ be a Pauli monomial. Let $\Omega(V',M')$ be the reduced Pauli monomial corresponding to $\Omega$, i.e., there exists $\alpha$ such that $\Omega(V,M)=d^{\alpha}\Omega(V',M')$. The \textit{order} of $\Omega$, hereby denoted as $m(\Omega)$, is defined as $m(\Omega)\coloneqq\rank(V')$.
\end{definition}

\begin{definition}[Set of reduced Pauli monomials]\label{def:distinctpaulimonomials} We define the set of reduced Pauli monomials as
\be
\mathcal{P}\coloneqq\{\Omega(V,M)\,|\, V\in\even\,:\,\rank(V)=m\,,\,M\in\symf\,,\,m\in[k-1]\}\,.
\ee
\end{definition}

\begin{remark}[Representative Pauli monomials]
   Two remarks about \cref{def:distinctpaulimonomials} are in order. First, by the definition of the set, only a single representative of the equivalence class $ [V,M] \coloneqq \{(VA, M(A)) \,:\, A \in \operatorname{GL}(\mathbb{F}_{2}^{m\times m})\} $ is sufficient to specify $\mathcal{P}$, similarly to \cref{def:equivalencerelationVG}. Here, $ M(A) $ is defined in \cref{th:gaussOP}. Second, since we require $ \operatorname{rank}(V) = m $, the corresponding Pauli monomial $ \Omega(V,M) $ is reduced, as all the column vectors in $ V $ are linearly independent.
\end{remark}

The following lemma determines the value of the trace of a general Pauli monomial.

\begin{lemma}[Trace of Pauli monomials]\label{lem:tracesofpaulimonomials}
    Let $\Omega$ be a Pauli monomial with $\beta$ many iteratively removable linear dependencies, corresponding to $m$ many column vectors. It holds that
    \begin{align}
        \tr(\Omega)=d^{k-m+2\beta} .
    \end{align}
    \begin{proof}
    Let us first assume that $\beta=0$, i.e., the Pauli monomial is reduced. For every $\Omega$ there exists a transformation that puts $\Omega$ in a pivotal structure, i.e., $(V,M)\mapsto(V',M')$, where $V'$ contains pivotal vectors with no linear dependencies. As such, we can perform the trace starting from the first column of $V$. Since $(v_{1})_{1}=1$ and $(v_{i})_1=0$ for $i\neq 1$, this sets $P_1$ to the identity, since $\tr(P)=d^k\delta_{I,P}$ and there are no linear dependencies. Hence, the only term contributing is the all identity term. As $\tr(\mathbb{1})=d^k$, then we have $\tr(\Omega)=d^{k-m}$. Now, let us assume that $\beta>0$. By \cref{lem:everypaulimonomialcanbereduced}, every Pauli monomial can be reduced as $\Omega=d^{\alpha}\Omega_{R}$, with $\Omega_R\in\mathcal{P}$. Then, $\tr(\Omega)=d^{\alpha}\tr(\Omega_{R})=d^{k-(m(\Omega)-\alpha)}$, where $m(\Omega)$ denotes the order of $\Omega$ according to \cref{def:orderpaulimonomial}. Iterating rule 3 \cref{th:graphrules} for $\beta$ many linear dependencies, we know that $m(\Omega)-\alpha=m-2\beta$. Therefore, the result follows.
\end{proof}
\end{lemma}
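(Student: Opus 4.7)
The plan is to prove the formula via a two-step reduction: first, establish the result in the full-rank case $\beta = 0$, and then bootstrap to general $\beta$ via the reduction procedure from \cref{lem:everypaulimonomialcanbereduced}.

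\textbf{Step 1: The reduced case $\beta = 0$.} When $V$ has full column rank $m$, I would use \cref{th:gaussOP} to perform invertible column operations, transforming $\Omega(V,M) = \Omega(VA, M(A))$ until $V$ is brought into a pivotal form: there exist distinct row indices $\alpha_1, \ldots, \alpha_m \in [k]$ such that $(v_i)_{\alpha_j} = \delta_{ij}$. Using the tensor-product structure of the monomial across the $k$ copies, the trace factorizes as
\begin{equation*}
    \tr(\Omega(V,M)) = \frac{1}{d^m} \sum_{\boldsymbol P \in \mathbb{P}_n^m} \left(\prod_{\alpha=1}^k \tr\bigl(\textstyle\prod_i P_i^{(v_i)_\alpha}\bigr)\right)\prod_{i<j}\chi(P_i, P_j)^{M_{ij}}.
\end{equation*}
The pivotal row $\alpha_j$ isolates $P_j$ on that single tensor factor, producing $\tr(P_j) = d\,\delta_{P_j, I}$. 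Consequently every $P_i$ must equal $I$, the $\chi$-phases trivialize, and a single term contributes $d^{-m}\cdot d^k = d^{k-m}$, as required.

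\textbf{Step 2: General $\beta \geq 0$.} Invoke \cref{lem:everypaulimonomialcanbereduced} to write $\Omega = d^\alpha\,\Omega_R$ with $\Omega_R \in \mathcal{P}$ reduced of order $m_R = m(\Omega)$. Then $\tr(\Omega) = d^\alpha \tr(\Omega_R) = d^{\alpha + k - m_R}$ by Step 1. It therefore suffices to show that $\alpha - m_R = 2\beta - m$. To do so, I would iterate rule (3) of \cref{th:graphrules}: each reduction step eliminates exactly one linear dependency, so exactly $\beta$ steps are needed. Let $s$ count the ``simple'' steps (zero column with no attached phase: factor $d$, drops one column, rank unchanged) and $c$ count the ``complex'' steps (zero column with an attached phase: factor $1$, drops two columns, rank decreases by $1$). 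Bookkeeping gives $s + c = \beta$, $\alpha = s$, and $m_R = m - s - 2c$. Substituting yields $\alpha - m_R = s - m + s + 2c = 2(s+c) - m = 2\beta - m$, completing the argument.

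\textbf{Anticipated obstacle.} The subtle point is that the decomposition of the reduction procedure into ``simple'' and ``complex'' steps is not unique, so a priori the individual values of $\alpha$ and $m_R$ could depend on the order in which moves are applied. However, the identity $s + c = \beta$ forces the combined quantity $\alpha - m_R = 2\beta - m$ to be invariant under this choice, guaranteeing that the trace is well-defined. A smaller point to verify is that bringing $V$ into pivotal form in Step 1 may require row permutations; these correspond to relabelings of the $k$ tensor factors of $\Omega$, which manifestly preserve the trace, so no further subtlety arises there.
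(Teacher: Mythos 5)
Your Step 1 is correct and is essentially the paper's argument. The genuine gap is in Step 2, in the bookkeeping claims that a ``complex'' step (zero column with an attached phase) decreases the rank of $V$ by one and that consequently $s+c=\beta$. Neither holds in general: the companion column removed together with the zero column need not be linearly independent of the remaining columns, in which case a single complex step resolves \emph{two} linear dependencies and leaves the rank unchanged. Concretely, take $k=6$, $V=(v,v,v)$ with $v=(1,\dots,1)$ and $M=0$, i.e.\ $\Omega=\Omega_6^3$. Then $m=3$ and $\beta=m-\operatorname{rank}_2(V)=2$, yet adding column $1$ to column $2$ creates a zero column with a phase to its twin (since $|v|/2$ is odd), and one complex step removes both, leaving $\Omega_6$: so $s=0$, $c=1$, $s+c=1\neq\beta$, and indeed $\tr(\Omega_6^3)=\tr(\Omega_6)=d^{k-1}$, not $d^{k-m+2\beta}=d^{k+1}$. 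Your ``anticipated obstacle'' paragraph does not close this hole, because the invariance you invoke is deduced from the very identity $s+c=\beta$ that fails here.

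To be fair, you have reproduced the paper's own reasoning, gap included: the paper likewise asserts without justification that iterating rule (3) of \cref{th:graphrules} ``for $\beta$ many linear dependencies'' gives $m(\Omega)-\alpha=m-2\beta$, and the same example contradicts this when $\beta$ is read as $m-\operatorname{rank}_2(V)$. Moreover, no bookkeeping based on $V$ alone can work, because the trace also depends on $M$: with the same $V=(v,v,v)$ but a single phase between columns $2$ and $3$ one finds $\tr(\Omega)=d^{k+1}$, whereas with no phases (or all three phases) one finds $d^{k-1}$. What the reduction argument actually establishes, and what suffices for the later applications, is $\tr(\Omega)=d^{\alpha}\,d^{\,k-m(\Omega)}$ with $\Omega=d^{\alpha}\Omega_R$ as in \cref{lem:everypaulimonomialcanbereduced} and $m(\Omega)$ as in \cref{def:orderpaulimonomial}; the stated formula is recovered only if one \emph{defines} $\beta$ operationally as the number of applications of rule (3) in the reduction (equivalently $2\beta=m-m(\Omega)+\alpha$), not as the number of linear dependencies of the columns of $V$. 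If you want to keep your Step 2, you must either adopt that operational definition of $\beta$ or prove (which is false in general) that every complex step removes a companion column independent of the rest.
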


After defining the concept of Pauli monomials and their reduction to a more compact set of components, comprehending their features remains crucial. To this end, let us introduce the notion of unitary and projective Pauli monomials.

\begin{lemma}[Unitary and projective Pauli monomials]\label{le:unitaryprojmon}

Let $ \Omega(V, 0)\in\mathcal{P} $ be a reduced Pauli monomial, where $ V \in \mathbb{F}_2^{k \times m} $ with columns $ \{v_j\}_{j=1}^m $. The following hold:
\begin{itemize}
    \item If $ V $ consists exclusively of even column vectors satisfying $ v_i \cdot v_j = 0 \pmod{2} $ for $ i \neq j $, then $ \Omega(V, 0) $, called \emph{projective Pauli monomial} and denoted $ \Omega_P(V) \coloneqq \Omega(V, 0) $, satisfies $ \Omega_P(V) / d^m $ is a projector of rank $ d^{k - 2m} $\footnote{The projective Pauli monomials of Lemma~\ref{le:unitaryprojmon} coincide with the CSS code projectors identified by Gross, Nezami, and Walter~\cite{gross_schurweyl_2019} in their Lagrangian subspace description of the Clifford commutant. For example, the operator $\Omega_4$ is proportional to a CSS code projector onto a subspace of dimension $d^{n-2}$.}.
    \item If $ V $ consists exclusively of odd column vectors, then $ \Omega(V, 0) $, called  \emph{unitary Pauli monomial} and denoted $ \Omega_U(V) \coloneqq \Omega(V, 0) $, is a unitary operator.
\end{itemize}
We denote the set of reduced Pauli monomials which are unitary as $\mathcal{P}_U$, and reduced Pauli monomials which are projectors as $\mathcal{P}_P$.
\begin{proof}
    We can express $ \Omega(V, 0) $ as
    \begin{align}
        \Omega(V, 0) = \Omega(v_1) \cdots \Omega(v_m),
    \end{align}
    where $ \{v_j\}^m_{j=1} $ are the columns of $ V $.
    If the vectors $ \{v_j\}^m_{j=1} $ are all even, then, by \cref{lem:propertiesprimitivepauli}, each $ d^{-1} \Omega(v_j) $ is a projector (property A of \cref{lem:propertiesprimitivepauli}). Additionally, since $ v_i \cdot v_j = 0 \pmod{2} $ for $ i \neq j $, these projectors commute with each other (property C of \cref{lem:propertiesprimitivepauli}). Since the product of commuting projectors is itself a projector, we can conclude that $ \Omega(V, 0)/d^m $ is a projector. The rank of this projector is given by its trace:
    \begin{align}
        \frac{1}{d^m} \tr(\Omega(V, 0)) &= \frac{1}{d^{2m}} \sum_{\boldsymbol{P} \in \mathbb{P}_n}  
        \tr(P_1^{\otimes v_1} P_2^{\otimes v_2} \cdots P_m^{\otimes v_m}) \\
        &= d^{k - 2m},
        \nonumber
    \end{align}
where the only non-zero contribution arises when $ P_1^{\otimes v_1} P_2^{\otimes v_2} \cdots P_m^{\otimes v_m} $ equals the identity. This occurs only if all Pauli matrices \( P_1, \dots, P_m \) are equal to the identity, which follows from the linear independence of \( v_1, \dots, v_m \)~\footnote{Indeed, the bitstring representation of the product is \( b(P_1^{\otimes v_1} \cdots P_m^{\otimes v_m}) = \sum_{j=1}^m b(P_j) \otimes v_j \), which must vanish, since it corresponds to the identity (i.e., the zero vector). Projecting this sum onto canonical basis elements of the first tensor factor and using linear independence of the \( v_j \), we conclude that \( b(P_j) = 0 \) for all \( j \).}.

If the vectors $ \{v_j\}^m_{j=1} $ are all odd, then, by \cref{lem:propertiesprimitivepauli} (property B), each $ d^{-1} \Omega(v_j) $ is a unitary. Since the product of unitaries is also a unitary, $ \Omega(V, 0) $ is a unitary operator.
\end{proof}
\end{lemma}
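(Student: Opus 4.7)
The plan is to reduce the lemma to the already-established properties of primitive Pauli monomials in \cref{lem:propertiesprimitivepauli}, by observing that the hypothesis $M = 0$ lets us factor $\Omega(V, 0)$ as an ordered product of primitives
\begin{equation}
\Omega(V, 0) = \Omega(v_1)\,\Omega(v_2)\cdots\Omega(v_m),
\end{equation}
where $v_1, \ldots, v_m$ are the columns of $V$. This identity follows straightforwardly from \cref{def:paulimonomials}: expanding each $\Omega(v_j)$ as an independent Pauli sum produces exactly the monomial $\Omega(V,0)$ since no $\chi$-factors appear when $M = 0$.

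For the projective case, the hypothesis that every $v_j$ is even makes each $d^{-1}\Omega(v_j)$ a projector by property (A) of \cref{lem:propertiesprimitivepauli}, and the hypothesis $v_i \cdot v_j = 0 \pmod 2$ for $i \neq j$ makes these projectors pairwise commuting by property (C). An ordered product of commuting projectors is itself a projector, so $\Omega_P(V)/d^m$ is a projector. To compute its rank I would take the trace: either invoke \cref{lem:tracesofpaulimonomials} directly (noting that $\Omega_P(V)$ is reduced of order $m$, so its trace equals $d^{k-m}$, giving rank $d^{k-m}/d^m = d^{k-2m}$ after normalization), or recompute it by hand using that the only surviving contribution in the sum comes from setting every $P_j = \mathbb{1}$, which is forced by linear independence of the $v_j$'s via \cref{le:uniquenessPV}.

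For the unitary case, property (B) of \cref{lem:propertiesprimitivepauli} gives that each $\Omega(v_j)$ is Hermitian and unitary when $v_j$ is odd. Since a product of unitaries is unitary, $\Omega_U(V)$ is unitary; note that we do not need any commutation hypothesis here, only that each factor is individually unitary.

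The main obstacle, if any, is the rank computation in the projective case; however, this is essentially routine given \cref{lem:tracesofpaulimonomials} and the linear independence of the columns of $V$ (guaranteed since $\Omega(V,0) \in \mathcal{P}$ is reduced). The rest is a direct assembly of already-proven facts, so I expect the proof to be short and non-technical.
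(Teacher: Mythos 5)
Your proposal is correct and follows essentially the same route as the paper: factor $\Omega(V,0)=\Omega(v_1)\cdots\Omega(v_m)$, apply properties (A), (B), (C) of \cref{lem:propertiesprimitivepauli}, and obtain the rank from the trace, where the only surviving term is the all-identity one by linear independence of the columns. Your optional shortcut of citing \cref{lem:tracesofpaulimonomials} for the trace is also legitimate (that lemma precedes this one and does not depend on it), but it is only a cosmetic variation on the paper's direct computation.
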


It is clear that when $V=(v)$, this notion reduces to what is shown in \cref{lem:propertiesprimitivepauli} for primitive Pauli monomials.

Given this characterization of the Pauli monomials, we can provide a general decomposition for every Pauli monomial, which we refer to as \textit{normal form}. For the sake of simplicity, we restrict ourselves to the case of reduced Pauli monomials (\cref{def:reducedpaulimonomials}), although it also holds for general Pauli monomials due to \cref{lem:everypaulimonomialcanbereduced}.

\begin{lemma}[Normal form]\label{lem:normalform}
    Let $\Omega(V, M) \in \mathcal{P}$ be any reduced Pauli monomial. It can be decomposed as a product of a projective Pauli monomial and a unitary Pauli monomial as introduced in \cref{le:unitaryprojmon}. Specifically:
    \begin{align}
        \Omega(V, M) = \Omega_P(V') \times \Omega_U(V''),
    \end{align}
    where $\Omega_P(V')$ and $\Omega_U(V'')$ denote a projective and a unitary Pauli monomial, respectively. 

\begin{proof}
We consider a general reduced Pauli monomial $\Omega(V, M)$ along with its associated Pauli diagram. The goal is to partition $V$ into a left part ($V'$), representing the \emph{projective} component, and a right part ($V''$), representing the \emph{unitary} component. The left part will consist of even commuting columns with no phases, while the right part will contain odd columns with no phases. The procedure is as follows:

\begin{itemize}
    \item \textbf{Step 1: Move odd columns to the rightmost positions.}  
    Examine the columns of $V$. For each odd column $(\vert v \vert =  2 \pmod 4)$, move it to the rightmost position among the remaining columns, ensuring no phase is attached, as guaranteed by rule $(2)$ of \cref{th:graphrules}. Fix this column in place, mark it as settled, and exclude it from further consideration. Continue this process until no odd columns remain in $V$.  

    This process terminates in at most $m$ iterations, where $m$ is the total number of columns in $V$. If all columns are moved to the right, then $\Omega(V, M)$ is a unitary Pauli monomial, i.e., $V' = 0$.  

    \item \textbf{Step 2: Ensure the left part contains only even commuting columns.}  
    After Step 1, the remaining columns on the left are even $(\vert v\vert=0 \pmod 4)$. If any of these columns do not commute, add one to another as prescribed by \cref{le:paritysum}. By doing so, any odd column resulting from the sum can be moved to the rightmost position without a phase, as in Step 1. Repeat this process until all remaining columns on the left are even and commute with each other.

    \item \textbf{Step 3: Eliminate phases between even commuting columns.}  
If the left part contains even commuting columns with non-zero phases, proceed as follows (otherwise, we can conclude the proof). First, bring the two columns with a phase between them close to each other and position such column pairs in the rightmost part of the left part. 
Next, apply rule (4) of \cref{th:graphrules} to transform the two even commuting columns into four commuting odd columns with no phases between them. These four odd columns may share phases with other columns in the left part. To remove these phases, take each of the four odd columns in turn and use rule (1) of \cref{th:graphrules} to eliminate the phase, keeping the column fixed in its position. Since these columns commute with each other, they can then be moved to the right part as part of the \emph{unitary} component.  

After this process, the left part loses two columns, while the right part gains four odd columns. Any odd columns created in the left part during this process are moved to the rightmost positions as in Step 1. Repeat this procedure until the left part contains only even commuting columns with no phases.  
\end{itemize}

This iterative procedure ensures that $V'$ and $V''$ are properly partitioned into the projective and unitary components, respectively, satisfying the conditions of the lemma. The process terminates in a finite number of steps, as each iteration strictly reduces the number of columns of the left part.  
This concludes the proof.
\end{proof}
\end{lemma}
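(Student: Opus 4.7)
The plan is to use the graphical calculus of \cref{th:graphrules} to iteratively transform the diagram of $\Omega(V, M)$ until its columns split into a \emph{left block} of pairwise commuting even columns with no attached phases, together with a \emph{right block} of odd columns carrying no attached phases. By \cref{lem:propertiesprimitivepauli} (properties A and C), a product of primitive Pauli monomials indexed by pairwise commuting even columns with no phases is exactly a projective Pauli monomial $\Omega_P(V')$ in the sense of \cref{le:unitaryprojmon}, and a product indexed by odd columns with no phases is exactly a unitary Pauli monomial $\Omega_U(V'')$. Hence any such canonical form directly witnesses the decomposition claimed in the statement, and the only work is to reach it.

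I would organize the reduction into three iterated stages. \emph{Stage 1 (evacuate odd columns).} Scan the columns of $V$; whenever one is odd (Hamming weight $\equiv 2 \pmod 4$), invoke rule (2) of \cref{th:graphrules} to slide it to the rightmost unoccupied position while simultaneously stripping off all its phases, and freeze it there as part of $V''$. \emph{Stage 2 (force commutation on the left).} If two remaining left columns $v_i, v_j$ satisfy $v_i \cdot v_j = 1 \pmod 2$, replace $v_j$ by $v_i + v_j$ via rule (1); since $v_i,v_j$ are even, \cref{le:paritysum} gives $|v_i+v_j|/2 \equiv 1 \pmod 2$, so the new column is odd and Stage 1 immediately ships it off to the right. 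Repeat until every pair in the left block commutes. \emph{Stage 3 (kill residual phases).} If the left block still has an internal phase between two (now necessarily even, commuting) columns, swap them adjacent and apply rule (4) of \cref{th:graphrules} to replace the pair by four odd commuting columns with no shared phases; use rule (1) to remove any phases these four inherit from the other left-block columns, then evacuate them via Stage 1. Iterate Stages 1--3 until no odd column, non-commuting pair, or internal phase remains on the left.

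The only genuinely delicate point, and the main obstacle I expect, is proving termination and ruling out oscillation between stages. The argument is a monotone one: each completed Stage~2 invocation (immediately followed by Stage 1) strictly decreases the size of the left block by one, and each Stage 3 invocation strictly decreases it by two, while Stage 1 on its own strictly decreases the number of odd columns currently in the left block. Thus every full sweep either shrinks the left block or reduces the count of left-block odd columns, so the process halts in at most $O(m)$ iterations. A small bookkeeping check, resting on \cref{th:rule1}, \cref{th:rule2}, and \cref{le:paritysum}, is that the parities advertised at each substitution actually hold---in particular that the four columns produced by rule (4) are odd, pairwise commuting, and phase-free after cleaning, and that summing two non-commuting even columns does produce an odd column. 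Once the iteration halts, the left block is a set of pairwise commuting even columns with no phases and the right block is a set of odd columns with no phases, which yields the desired factorization $\Omega(V, M) = \Omega_P(V')\, \Omega_U(V'')$.
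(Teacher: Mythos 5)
Your proposal is correct and follows essentially the same route as the paper's proof: evacuate odd columns to the right via rule (2), convert non-commuting even pairs into odd columns by column addition and \cref{le:paritysum}, resolve remaining phases between even commuting columns via rule (4) plus phase cleaning with rule (1), and iterate, with termination following because each pass strictly shrinks the left block. Your explicit monotonicity bookkeeping is a slightly more careful statement of the termination argument the paper uses implicitly, but the decomposition strategy is the same.
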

Thus, we immediately obtain the following.
\begin{corollary}[Primitive Pauli monomials generate Pauli monomials]
\label{cor:primitivegen}
Any reduced Pauli monomial \( \Omega \) can be written as a product of primitive Pauli monomials.
\end{corollary}

Therefore, any reduced Pauli monomial $\Omega \in \mathcal{P}$ can be expressed as $\Omega = \Omega_P \Omega_U$, where $\Omega_P \in \mathcal{P}_P$ and $\Omega_U \in \mathcal{P}_U$. The following proposition determines the order of these respective monomials (see \cref{def:orderpaulimonomial}). 

\begin{proposition} Let $\Omega$ be a reduced Pauli monomial, reading $\Omega=\Omega_P\Omega_U$ in normal form, where $\Omega_P$ (resp. $\Omega_U$) are the corresponding projective (resp. unitary) Pauli monomials. Then, 
their order satisfies (cf.\ \cref{def:orderpaulimonomial})
\begin{align}
m(\Omega_P)&=\dim(\mathrm{kern}(\Lambda(\Omega)),\\
m(\Omega_U)&=\dim(\Lambda)-\dim(\mathrm{kern}(\Lambda(\Omega))=\mathrm{rank}(\Lambda(\Omega)),
\end{align}
where $\Lambda(\Omega)$ is defined in \cref{th:gaussOP}.
\end{proposition}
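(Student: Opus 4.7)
The plan is to directly compute $\Lambda(\Omega)$ in the normal form of \cref{lem:normalform} and read off its rank and nullity. By \cref{th:gaussOP}, both $\mathrm{rank}(\Lambda)$ and $\dim\ker(\Lambda)$ are preserved under the congruence $\Lambda \mapsto A^T \Lambda A$ induced by column operations $V \mapsto VA$ with $A \in \mathrm{GL}(\mathbb{F}_2^{m\times m})$, so they are well-defined invariants of the equivalence class of any representation of $\Omega$. It therefore suffices to pick the representative coming from the normal form and compute $\Lambda$ in a convenient basis.

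First I would write $\Omega = \Omega_P(V')\Omega_U(V'')$ in normal form, producing the representation $V = [V' \mid V'']$ and $M = 0$, where $V'$ consists of $m_P = m(\Omega_P)$ even columns that pairwise commute and $V''$ consists of $m_U = m(\Omega_U)$ odd columns. Next I would compute $\Lambda$ block-by-block from $\Lambda_{i,j} = M_{i,j} + w_i \delta_{i,j} + H_{i,j}\delta_{i<j}$ (with $H = V^T V \pmod 2$ and $w_i = |v_i|/2 \pmod 2$), writing $\Lambda = \begin{pmatrix}\Lambda_{PP} & \Lambda_{PU}\\\Lambda_{UP} & \Lambda_{UU}\end{pmatrix}$. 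The block $\Lambda_{PP}$ vanishes, since $w_i = 0$ on the diagonal (even columns), $H_{i,j} = v'_i \cdot v'_j = 0 \pmod 2$ off-diagonally (commuting even columns), and $M = 0$. The block $\Lambda_{UP}$ vanishes because every entry has unitary row index $i$ strictly greater than projective column index $j$, killing the only potentially nonzero term $H_{i,j}\delta_{i<j}$. Finally $\Lambda_{UU}$ has $w_i = 1$ along its diagonal (odd columns) and $H_{i,j}\delta_{i<j}$ above it, yielding an upper-triangular matrix with all-ones diagonal, which is invertible over $\mathbb{F}_2$.

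From this block-triangular form with invertible $\Lambda_{UU}$, the leftmost $m_P$ columns of $\Lambda$ are identically zero while the rightmost $m_U$ columns are linearly independent, so $\mathrm{rank}(\Lambda) = m_U = m(\Omega_U)$ and $\dim\ker(\Lambda) = m_P = m(\Omega_P)$, which is exactly the claim.

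The main technical subtlety I anticipate concerns the normal-form procedure of \cref{lem:normalform}, which may invoke rule (4) of \cref{th:graphrules}---introducing two extra columns and producing a representation $(V = [V'\mid V''], M=0)$ that is not necessarily the reduced representation of $\Omega$. Bridging this requires verifying that the orders $m(\Omega_P)$ and $m(\Omega_U)$, defined via the \emph{reduced} versions of the individual factors by \cref{def:orderpaulimonomial}, still coincide with the block dimensions of $\Lambda$ above, and that any residual Gaussian reduction performed inside either the $V'$ or $V''$ block preserves the stated block-triangular structure. An alternative, cleaner route that sidesteps this issue is to compute $\Lambda(\Omega_P) = 0$ (so a full kernel of dimension $m_P$) and $\Lambda(\Omega_U)$ unit upper-triangular (so full rank $m_U$) separately, and then use the congruence invariance of rank together with the multiplicativity of the normal-form decomposition to assemble the global identity.
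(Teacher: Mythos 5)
Your core computation follows the same route as the paper's own proof: write $\Omega$ in normal form, observe that in the matrix $\Lambda$ the projective columns have zero diagonal and, together with the vanishing lower-left block, give identically zero columns, while the unitary columns form a unit upper-triangular block, and then invoke the congruence invariance of rank and nullity coming from \cref{th:gaussOP}. Whenever the normal form of \cref{lem:normalform} is reached by column operations alone (Steps 1 and 2 only), your argument is complete and correct; in particular you are right that you never need $\Lambda_{PU}=0$, only the zero left columns and the invertibility of $\Lambda_{UU}$.

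The subtlety you flag at the end, however, is exactly where the paper's proof does its real work, and your proposal leaves it open. When $\Omega$ contains commuting even columns sharing a phase, Step 3 invokes rule (4) of \cref{th:graphrules}, which enlarges the representation by two columns; the normal-form $\Lambda$ then has larger size and is no longer a congruence transform of $\Lambda(\Omega)$, so the rank/nullity comparison with the reduced representation does not follow from \cref{th:gaussOP} alone. The paper closes this gap by modifying Step 3: the two auxiliary weight-two columns are placed on two \emph{fresh} registers (copies $k+1,k+2$), so they commute with every other primitive and carry no phase; at the level of $\Lambda$ this is precisely $\Lambda \mapsto \Lambda \oplus \mathbb{1}_2$, which leaves $\dim\ker$ unchanged, and all remaining normal-form moves are genuine congruences $A^T(\Lambda\oplus\mathbb{1}_2)A$. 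Hence the number of zero-diagonal (projective) columns in the final triangular form still equals $\dim\ker\Lambda(\Omega)$, giving $m(\Omega_P)=\dim\ker\Lambda(\Omega)$ and $m(\Omega_U)=m-\dim\ker\Lambda(\Omega)$. Your proposed ``alternative, cleaner route'' (computing $\Lambda(\Omega_P)$ and $\Lambda(\Omega_U)$ separately and appealing to ``multiplicativity of the normal-form decomposition'') does not substitute for this: the proposition is a statement about $\Lambda$ of the \emph{reduced} monomial $\Omega$, and no such multiplicativity relating it to the factors' $\Lambda$'s is established either in the paper or in your sketch. To complete your proof you need the extra-register bookkeeping (or an equivalent device) making the relation between $\Lambda(\Omega)$ and the normal-form $\Lambda$ explicit.
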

\begin{proof}
      In \cref{lem:normalform}, we showed that any reduced monomial $\Omega$ can be expressed as the product of a projective and a unitary monomial, $\Omega = \Omega_P \Omega_U$. In particular, $\Omega_P$ is reduced, whereas $\Omega_U$ may not be.

We observe that for step 3 of \cref{lem:normalform} to succeed, it suffices to add two new permutations commuting with any other primitive. This eliminates the phase from the even commuting columns, effectively rendering them unitary. A convenient strategy is to introduce two additional registers (i.e., increasing the number of copies from $k$ to $k+2$) and place the new permutations there. By construction, these columns commute with all others. In this setting, we can write $\Lambda \mapsto \Lambda \oplus \mathbb{1}_2$.

Now, by \cref{lem:normalform}, there exists a matrix $A \in \mathrm{GL}(\mathbb{F}_2^{m \times m})$ such that the transformation $\Lambda \oplus \mathbb{1}_2 \mapsto A^T( \Lambda \oplus \mathbb{1}_2 ) A$ yields an upper triangular matrix. In the unitary part of the normal form, the unitary primitives have a $1$ on the diagonal, implying that the corresponding columns are linearly independent. In contrast, the projectors have a $0$ on the diagonal and commute with one another. Consequently, the corresponding columns are all-zero columns, meaning that all and only the projective primitives contribute to the kernel dimension. Hence, $m(\Omega_P) = \dim \ker\left( A^T( \Lambda \oplus \mathbb{1}_2 ) A \right) = \dim \ker( \Lambda \oplus \mathbb{1}_2 ) = \dim \ker( \Lambda )$. Therefore, we have $m(\Omega_U) = m(\Omega) - m(\Omega_P) = m - \dim\left( \ker(\Lambda) \right)$, which proves the statement.
\end{proof}

As a consequence of the normal form, we have the following corollary about the Hilbert-Schmidt product between Pauli monomials.

\begin{lemma}[Hilbert--Schmidt inner product between Pauli monomials]\label{lem:hilbertschmidtproductofpaulimonomials}
Let \( \Omega \coloneqq \Omega(V, M) \) and \( \Omega' \coloneqq \Omega(V', M') \) be two reduced Pauli monomials in \( \mathcal{P} \). Then their Hilbert--Schmidt inner product satisfies:
\begin{align}
    \Tr(\Omega^\dagger \Omega') =
    \begin{cases}
        d^k & \text{if } \Omega = \Omega', \\
        \le d^{k-2} & \text{if } \Omega \ne \Omega' \text{ with } V = V' \text{ and } M \ne M', \\
        \le d^{k-1} & \text{if } \Omega \ne \Omega' \text{ with } V \ne V'.
    \end{cases}
\end{align}

\end{lemma}

\begin{proof}
The case \( \Omega = \Omega' \) follows directly by expressing \( \Omega \) in its normal form \( \Omega = \Omega_U \Omega_P \).

We now consider the case \( \Omega \neq \Omega' \), which we split into two subcases.
\begin{itemize}
    \item \textbf{Case 1:} \( V \ne V' \).\\
    Note that \( \Omega^\dagger \Omega' \) is again a (not necessarily reduced) Pauli monomial. Applying the reduction procedure, we obtain \( \Omega^\dagger \Omega' = d^\alpha \, \Omega(V + V', M_r) \), for some integer \( \alpha \le \dim(V \cap V') \), and some phase matrix \( M_r \). Here, \( \Omega(V + V', M_r) \) is in reduced form. Taking the trace gives
    \begin{align}
        \Tr(\Omega^\dagger \Omega') = d^\alpha \Tr(\Omega(V + V', M_r)) = d^\alpha \cdot d^{k - \dim(V + V')} = d^{k - \dim(V + V') + \alpha}.
    \end{align}
    Since \( \alpha \le \dim(V \cap V') \), this yields the upper bound \( \Tr(\Omega^\dagger \Omega') \le d^{k - \dim(V + V') + \dim(V \cap V')} \). Now, \( \dim(V + V') > \dim(V \cap V') \), so \( \dim(V + V') - \dim(V \cap V') \ge 1 \), and we conclude \( \Tr(\Omega^\dagger \Omega') \le d^{k - 1} \), as claimed.

    \item \textbf{Case 2:} \( V = V' \) but \( M \ne M' \).\\
    The previous bound would yield \( \Tr(\Omega^\dagger \Omega') \le d^k \), whereas we want something smaller. Since \( V = V' \), we have
    \begin{align}
        \Tr(\Omega^\dagger \Omega') = d^{k - \dim(V) + \alpha}.
    \end{align}
    To ensure this is at most \( d^{k - 1} \), we need \( \alpha < \dim(V) \). Assume that \( M' \ne 0 \). In the reduction of \( \Omega^\dagger \Omega' \), the nontrivial phases in \( M' \) survive until the final steps. When a zero column is removed as per Rule 3 in Theorem~\ref{th:graphrules}, we fall into the second case of the rule, which strictly reduces \( \alpha \) compared to \( \dim(V \cap V') = \dim(V) \). Specifically, from the rule we would get \( \alpha \le \dim(V)-2 \), and thus \( \Tr(\Omega^\dagger \Omega') \le d^{k - 2}  \).

    If instead \( M' = 0 \) but \( M \ne 0 \), we can invoke cyclicity of the trace: \( \Tr(\Omega^\dagger \Omega') = \Tr(\Omega' \Omega^\dagger) \), and apply the same reasoning to \( \Omega' \Omega^\dagger \), reducing to the previous case.
\end{itemize}

This completes the proof.
\end{proof}
From this follows that Pauli monomials are \emph{approximately orthogonal} with respect to the Hilbert-Schmidt inner product, similar to the permutations operators.

\begin{lemma}
\label{lem:stabmonomial}
Let \( \Omega \) be a Pauli monomial.  
If \( \psi \) is a stabilizer state, then \( \Tr(\Omega\, \psi^{\otimes k}) = 1 \).
\end{lemma}

\begin{proof}
Since \( \psi \) is a stabilizer state, there exists a Clifford unitary \( C \) such that \( \psi = C \ketbra{0} C^\dagger \). Using cyclicity of the trace and the fact that Pauli monomials are in the Clifford commutant, we have
\begin{align}
\Tr(\Omega\, \psi^{\otimes k}) 
&= \Tr(C^{\dagger \otimes k} \Omega\, C^{\otimes k} \ketbra{0}^{\otimes k}) 
= \Tr(\Omega\, \ketbra{0}^{\otimes k}).
\end{align}
Write \( \Omega = \Omega(V, M) \) in pivotal form, with columns \( v_1, \dots, v_m \) forming an upper-triangular binary matrix. Then
\begin{align}
\Tr(\Omega\, \ketbra{0}^{\otimes k}) 
= \frac{1}{d^m} \sum_{P_1, \dots, P_m} \Tr(P_1^{\otimes v_1} \cdots P_m^{\otimes v_m} \ketbra{0}^{\otimes k}) \cdot \prod_{1 \le i < j \le m} \chi(P_i, P_j)^{M_{i,j}},
\end{align}
where \( \chi(P_i, P_j) \) are the phase factors.

Now evaluate this trace row by row starting from the last non-zero one. At each row, we take the overlap with the zero state and use that only Pauli operators composed of \( I \) and \( Z \) have nonzero expectation on \( \ket{0} \). Each such term contributes \( \bra{0} P \ket{0} = 1 \), and only \( d \) of the \( d^2 \) Pauli matrices can contribute per index. Thus, for each row, we gain a factor \( d \), and since there are \( m \) rows, the total is \( d^m \), exactly canceling the \( 1/d^m \) prefactor. Note that the phase factors were all $1$ (regardless of $M$) since all Pauli composed only by Identity and $Z$ commute with each other. Hence, we conclude.
\end{proof}

\subsection{Alternative basis for the commutant: Graph and algebraically independent Pauli monomials}\label{sec:mhosandstuff}

In this section, we define two alternative representations of Pauli polynomials, namely graph-based Pauli monomials and algebraically independent Pauli monomials (which we already introduced in \cref{sec:comclif}). Here, instead of summing over all Pauli operators, we restrict the sum to only consider Pauli operators with a certain anticommutation relation and/or Pauli operators that are algebraically independent (i.e., cannot be written as products of each other). These representations have the advantage that they form Hilbert-Schmidt orthogonal operators, which is a useful property for twirling and the Weingarten calculus. They also have drawbacks, most notably, they do not factorize on qubits, meaning that the structure is way more dependent on the number of qubits. In this section, first, we define the different representations and show that they can be bijectively transformed into each other. 
\vspace{1em}
\subsubsection{Definition of alternative basis}
Let us start by introducing the definition of Graph-based Pauli monomial. We note that we use different scaling conventions here, as this allows for a better relation between various bases of the commutant.
\begin{definition}[Graph-based Pauli monomials]
\label{def:graphbasedpauli}
Let $ G \in \symf $, and $ V \in\even $. The graph-based Pauli monomials are defined as
\begin{equation}
\mho(V, G) \coloneqq \frac{1}{d^m} \sum_{\substack{\boldsymbol{P} \in \mathbb{P}_n \\ \mathcal{A}(\boldsymbol{P}) = G}} 
\prod_{j=1}^m P_j^{\otimes v_j},
\end{equation}
where $ \mathcal{A}(\boldsymbol{P}) $ denotes the adjacency matrix of the anticommutation graph (as defined in \cref{def:anticommgraph}) associated with the set $ \{P_1, \ldots, P_m\} $.  We denote the set of graph-based Pauli monomials as $\hat{\mathcal{P}}$, to reflect the fact that the two sets are connected by a binary Fourier transform, as we will show later in \cref{lem:linear-transformation}. 
\end{definition}

An equivalent formulation of $ \mho(V, G) $ is
\begin{align}
    \label{eq:alternativewaygraphbasedmonomials}
\mho(V, G) = \frac{1}{d^m} \sum_{\boldsymbol{P} \in \mathbb{P}_n} 
\prod_{j=1}^m P_j^{\otimes v_j} \times \left( \prod_{\substack{i, j \in [m] \\ i < j}} \frac{1 + (-1)^{G_{i,j}} \chi(P_i, P_j)}{2} \right)
\end{align}
which can be verified as the latter term gives $\delta_{\mathcal A(\boldsymbol P)=G}$.
We conclude our set of definitions on graph-based Pauli monomials, by introducing the notion of algebraically independent graph-based Pauli monomials.

\begin{definition}[Independent graph-based Pauli monomials]\label{def:indeppaulimonomials} 
Let $ G \in \symf $ be a symmetric binary matrix, and $ V \in \even $. We define independent graph-based Pauli monomials as

  \begin{align}
        \mho_I(V,G)&= \frac{1}{|S_{[V,G]}|}\sum_{\substack{\mathbf{P} \in \mathbb{P}_n^{\times m} \\ \mathcal{A}(\mathbf{P}) = G,\\\mathbf P \textrm{ alg. ind. }}} 
\prod_{j=1}^m P_j^{\otimes v_j}
    \end{align}
i.e., the sum runs over the set of algebraically independent Pauli operators $P=(P_1,\ldots, P_m)$.
Notice that $\mho_I(V,G)\propto\mho_I([V,G])$ holds, introduced in \cref{sec:comclif}. In particular, the proportionality factor is a phase $\{\pm1,\pm i\}$ that is uniquely determined by a choice of the representative $V$.
\end{definition}

\subsubsection{Transformations of different Pauli monomials}\label{Sec:invertiblemapspaulimonomials}
Thus far, we have introduced various notions of Pauli monomials. Since the set of monomials remains the same, a transformation between these definitions must exist. In the following lemmas, we show that such transformations do exist.
\begin{lemma}[Linear transformation between Pauli and graph-based monomials]
\label{lem:linear-transformation}
Let $ \Omega(V, M) $ denote a Pauli monomial (\cref{def:paulimonomials}) and $ \mho(V, G) $ a graph-based Pauli monomial (\cref{def:graphbasedpauli}). The following `Fourier transform' expresses the linear transformation between these two representations:
\begin{align}
    \Omega(V, M) &= \sum_{\substack{G \in \symftwo}} (-1)^{\sum_{i<j} G_{i,j} M_{i,j}} \mho(V, G), \label{eq:fromgraphtomonomials} \\
    \mho(V, G) &= \frac{1}{2^{\binom{m}{2}}}\sum_{\substack{M \in \symftwo}}(-1)^{\sum_{i<j} G_{i,j} M_{i,j}} \Omega(V, M). \label{eq:frommonomialstograph}
\end{align}
Here, the summation in both equations runs over all symmetric $ m \times m $ matrices $ M $ or $ G $ with zero diagonal entries.
\end{lemma}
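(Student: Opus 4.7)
The plan is to recognize both identities as instances of the ordinary Fourier transform on the finite abelian group $\mathbb{F}_2^{\binom{m}{2}}$, with the off-diagonal entries of $M,G\in\symf$ playing the role of dual coordinates. The crucial observation, which I would record first, is that $\chi(P_i,P_j)\in\{\pm1\}$ and, by \cref{eq:commsymp} together with \cref{def:anticommgraph}, one has $\chi(P_i,P_j)=(-1)^{\mathcal{A}(\boldsymbol P)_{ij}}$. Thus the phase $\prod_{i<j}\chi(P_i,P_j)^{M_{ij}}$ in $\Omega(V,M)$ and the weights $\prod_{i<j}(-1)^{G_{ij}M_{ij}}$ in the two identities are both characters of $\mathbb{F}_2^{\binom{m}{2}}$ evaluated at the anticommmutation bits $\mathcal{A}(\boldsymbol P)_{ij}$ and at $G_{ij}$, respectively.

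For \cref{eq:fromgraphtomonomials}, I would start from the alternative expression for $\mho(V,G)$ given in \cref{eq:alternativewaygraphbasedmonomials}, noting that the prefactor $\prod_{i<j}\tfrac{1+(-1)^{G_{ij}}\chi(P_i,P_j)}{2}$ is literally $\prod_{i<j}\delta_{G_{ij},\mathcal{A}(\boldsymbol P)_{ij}}$. Plugging this into the right-hand side of \cref{eq:fromgraphtomonomials} and swapping the order of the sums over $G$ and over $\boldsymbol P$ collapses the $G$-sum to the single term $G=\mathcal{A}(\boldsymbol P)$, leaving the factor $\prod_{i<j}(-1)^{\mathcal{A}(\boldsymbol P)_{ij}M_{ij}}=\prod_{i<j}\chi(P_i,P_j)^{M_{ij}}$, which is exactly the phase that appears in the definition of $\Omega(V,M)$. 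This matches $\Omega(V,M)$ term by term.

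For \cref{eq:frommonomialstograph}, I would substitute the definition of $\Omega(V,M)$ and swap the sums over $M$ and $\boldsymbol P$. The inner sum over $M\in\symf$ factorizes across pairs $i<j$, and each factor is the elementary $\mathbb{F}_2$-character identity
\[
\sum_{M_{ij}\in\mathbb{F}_2}\bigl((-1)^{G_{ij}}\chi(P_i,P_j)\bigr)^{M_{ij}}=1+(-1)^{G_{ij}}\chi(P_i,P_j).
\]
Collecting the $\binom{m}{2}$ such factors produces exactly $2^{\binom{m}{2}}$ times the indicator in \cref{eq:alternativewaygraphbasedmonomials}, so after dividing by $2^{\binom{m}{2}}=|\symf|$ one recovers $\mho(V,G)$. (The displayed constant $1/\binom{k}{2}$ should be read as $1/2^{\binom{m}{2}}$, i.e.\ the reciprocal of the number of matrices $M\in\symf$, so that the pair of maps is mutually inverse.)

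There is no substantive obstacle: once one identifies the exponents $M_{ij}$ and the anticommmutation bits $\mathcal{A}(\boldsymbol P)_{ij}$ (or $G_{ij}$) as dual Fourier coordinates on $\mathbb{F}_2^{\binom{m}{2}}$, with $\chi$ supplying the pairing, both directions reduce to the orthogonality of characters. The only item requiring care is correctly tracking the normalization $2^{\binom{m}{2}}$.
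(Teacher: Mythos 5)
Your proof is correct and follows essentially the same route as the paper: both directions are character-orthogonality computations over $\mathbb{F}_2^{\binom{m}{2}}$ starting from the indicator form of $\mho(V,G)$ in \cref{eq:alternativewaygraphbasedmonomials}, with your delta-function collapse of the $G$-sum being just a streamlined version of the paper's pairwise binary sums. Your reading of the normalization as $2^{-\binom{m}{2}}$ (rather than the literal $1/\binom{k}{2}$ in the statement) is also exactly what the paper's own computation uses.
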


\begin{proof}
To prove the lemma, we verify the validity of the transformations. We first confirm \cref{eq:fromgraphtomonomials} as
\begin{align}
    \sum_{\substack{G \in \symftwo : \\ \mathrm{diag}(G) = 0}} (-1)^{\sum_{i<j} G_{i,j} M_{i,j}} \mho(V, G)
    &\eqt{(i)}\!\!\!\!\sum_{\substack{G \in \symftwo}} \sum_{\substack{\mathbf{P} \in \mathbb{P}_n^{\times m}}} \prod_{b=1}^{m} P_b^{\otimes v_b}\!\!\! \prod_{\substack{i, j \in [m] \\ i < j}} \frac{1 + (-1)^{G_{i,j}} \chi(P_i, P_j)}{2} (-1)^{G_{i,j} M_{i,j}} \\
    \nonumber
    &=\!\!\!\!\!\! \sum_{\substack{\mathbf{P} \in \mathbb{P}_n^{\times m}}} \prod_{b=1}^{m} P_b^{\otimes v_b} \sum_{\substack{G \in \symftwo}} \prod_{\substack{i, j \in [m] \\ i < j}} \frac{1 + (-1)^{G_{i,j}} \chi(P_i, P_j)}{2} (-1)^{G_{i,j} M_{i,j}} \\
    \nonumber
    &\eqt{(ii)} \sum_{\substack{\mathbf{P} \in \mathbb{P}_n^{\times m}}} \prod_{b=1}^{m} P_b^{\otimes v_b} \prod_{\substack{i, j \in [m] \\ i < j}} \frac{1 + \chi(P_i, P_j) + (-1)^{M_{i,j}}(1 - \chi(P_i, P_j))}{2} \\
    \nonumber
    &\eqt{(iii)} \sum_{\substack{\mathbf{P} \in \mathbb{P}_n^{\times m}}} \prod_{b=1}^{m} P_b^{\otimes v_b} \prod_{\substack{i, j \in [m] \\ i < j}} \chi(P_i, P_j)^{M_{i,j}}
    \nonumber\\
    &= \Omega(V, M).
    \nonumber
\end{align}
In step (i), we have used \cref{eq:alternativewaygraphbasedmonomials}. In step (ii), we 
have employed the  identity
\begin{align}
    \sum_{\substack{G \in \symftwo : \\ \mathrm{diag}(G) = 0}} \prod_{\substack{i, j \in [m] \\ i < j}} \left( 1 + (-1)^{G_{i,j}} \chi(P_i, P_j) \right) (-1)^{G_{i,j} M_{i,j}} 
    &= \prod_{\substack{i, j \in [m] \\ i < j}} \sum_{G_{i,j} \in \{0,1\}} \left( 1 + (-1)^{G_{i,j}} \chi(P_i, P_j) \right) (-1)^{G_{i,j} M_{i,j}} \\
    \nonumber
    &= \prod_{\substack{i, j \in [m] \\ i < j}} \left( \left( 1 + \chi(P_i, P_j) \right) + \left( 1 - \chi(P_i, P_j) \right) (-1)^{M_{i,j}} \right).
\end{align}
Finally, in step (iii), we have used  the case distinction
\begin{align}
   \left( 1 + \chi(P_i, P_j) \right) + \left( 1 - \chi(P_i, P_j) \right) (-1)^{M_{i,j}}
   & = \begin{cases}
        2 & \text{if } M_{i,j} = 0, \\
        2\chi(P_i, P_j) & \text{if } M_{i,j} = 1,
    \end{cases}
    \\&= 2 \chi(P_i, P_j)^{M_{i,j}}.
\end{align}
Similarly, we verify \cref{eq:frommonomialstograph}
as
\begin{align}
    \frac{1}{2^{\binom{m}{2}}}\!\!\sum_{\substack{M \in \symftwo : \\ \mathrm{diag}(M) = 0}} \!\!\!\!\!\!\!\!\!\Omega(V, M) \times (-1)^{\sum_{i<j} G_{i,j} M_{i,j}} 
    \!\!&=\!\! \frac{1}{2^{\binom{m}{2}}}\!\sum_{\substack{M \in \symftwo : \\ \mathrm{diag}(M) = 0}} \frac{1}{d^m} \sum_{\mathbf{P} \in \mathbb{P}_n^{\times m}}\prod_{b=1}^{m} P_b^{\otimes v_b} \!\!\!\prod_{\substack{i, j \in [m] \\ i < j}} \chi(P_i, P_j)^{M_{i,j}} (-1)^{G_{i,j} M_{i,j}} \\
    \nonumber
    &= \frac{1}{d^m} \sum_{\mathbf{P} \in \mathbb{P}_n^{\times m}} \prod_{b=1}^{m} P_b^{\otimes v_b}\!\!\! \sum_{\substack{M \in \symftwo : \\ \mathrm{diag}(M) = 0}} \prod_{\substack{i, j \in [m] \\ i < j}} \frac{\chi(P_i, P_j)^{M_{i,j}} (-1)^{G_{i,j} M_{i,j}}}{2} \\
    \nonumber
    &= \frac{1}{d^m} \sum_{\substack{\mathbf{P} \in \mathbb{P}_n^{\times m}}} \prod_{b=1}^{m} P_b^{\otimes v_b}\!\!\! \prod_{\substack{i, j \in [m] \\ i < j}} \frac{1 + (-1)^{G_{i,j}} \chi(P_i, P_j)}{2} \\
    &= \mho(V, G),
    \nonumber
\end{align}
where in the final step, we have again used \cref{eq:alternativewaygraphbasedmonomials}. This completes the proof.
\end{proof}
\begin{lemma}\label{lemm:bijectiveommho}
The sets of operators \( \{ \mho_I(V, G) \} \) and \( \{ \mho(V, G) \} \) span the same vector space. In particular, there exists a bijective linear transformation between the two families.
\end{lemma}
\begin{proof}
First, we observe that the definition of graph-based monomials in \cref{def:graphbasedpauli} and the definition of independent graph-based monomials in \cref{def:indeppaulimonomials} are identical (up to a constant), except for the additional requirement in the latter that the set of Pauli operators must be algebraic independent. Consequently, starting from $\mho(V, G)$, we can write  
\begin{align}
\mho(V, G) 
= \frac{|S_{[V, G]}|}{d^m} \mho_I(V, G) 
+ \frac{1}{d^m} \sum_{\substack{\boldsymbol{P} \in \mathbb{P}_n^{\times m} \text{ not alg. ind.} \\ \mathcal{A}(\boldsymbol{P}) = G}} \prod_{i \in [m]} P_i^{\otimes v_i}.
\end{align}

Let us analyze the second term. If the set of Pauli operators $\boldsymbol{P}$ is not algebraically independent, there exists at least one linear relation within the set $\boldsymbol{P}$. As such, we can express $P_k = \theta \prod_{i \neq k} P_i^{a_i}$, where $\theta \in \{\pm i, \pm 1\}$. Notably, $\theta^2$ depends only on the anticommutation graph $G$ (as we show in \cref{le:trivialPQ}) of $\boldsymbol{P}$ and, since $V \in \even$, only $\theta^2$ appears. We can, therefore, rewrite
    \begin{align}
       \prod_{i\in [m]} P_i^{\otimes v_i} =\phi(G)\prod_{i\in [m]\backslash k} P_i^{\otimes (v_i+a_i v_k)}
    \end{align}
    where the proportionality factor depends on the anticommutation graph $G$ originating from permuting the Pauli operators in the product to the correct spot. We can reiterate the argument in the presence of more than one algebraic relations. We can,  therefore, write
    \begin{align}
        \mho(V,G)=\frac{\Phi_V |S_{[V,G]}|}{d^m}\mho_I(V,G) +\sum_{a\in \text{lin relations}} \frac{\phi_a(G)|S_{[V_a,G_a]}|}{d^{m}}\mho_I(V_a,G_a)\label{eq:linearrelations}
    \end{align}
    where $m_a<m$ and $\phi_a(G)$ only depends on the anticommutation graph $G$. \cref{eq:linearrelations} proves that there exists a map $\{\mho_I(V,G)\}\rightarrow \{\mho(V,G)\}$. This map is also bijective. To prove it, we note that for $m=0$, we have that $\mho_I(0,0)=\mho(0,0)=\mathbb{1}$. Then, by induction, we simply note that if one can generate all $\mho_I(V,G)$ from $\mho(V,G)$ for $m=m'-1$, we can invert \cref{eq:linearrelations} and get all the $\mho_I(V,M)$ for $m=m'$:
    \begin{align}
        \frac{\Phi_V|S_{[V,G]}|}{d^m}\mho_I([V,G]) =\mho(V,G)-\sum_{a} \frac{\phi_a(G)|S_{[V_a,G_a]}|}{d^{m}}\mho_I(V_a,G_a).
    \end{align}
    This shows that we can generate all the set $\{\mho_{I}(V,G)\}$ from $\hat{\mathcal{P}}$, and, therefore,  the map is bijective.
\end{proof}

\begin{corollary}\label{cor:spanningeverything} The sets $\mathcal{P}$, $\hat{\mathcal{P}}$,$\{\mho_I([V,G])\}_{[V,G]}$ generate the same vector space, i.e., 
\be
\operatorname{span}(\mathcal{P})=\operatorname{span}(\hat{\mathcal{P}})=\operatorname{span}(\{\mho_I([V,G])\}_{[V,G]}).
\ee
\begin{proof}
    The proof follows from the fact that there exists a bijective correspondence between $\mathcal{P}$ and $\hat{\mathcal{P}}$ by \cref{lem:linear-transformation}, and similarly, a bijective correspondence between $\hat{\mathcal{P}}$ and $\{\mho_I(V,G)\}$ by \cref{lemm:bijectiveommho}. Moreover, we have that $\mho_I(V,G)=\Phi_V\mho_I([V,G])$ for a choice of representative $V$.
\end{proof}    
\end{corollary}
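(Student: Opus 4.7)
The plan is to chain together the two invertible transformations already established earlier in the excerpt. First I would invoke \cref{lem:linear-transformation}, which provides an explicit Fourier-type pair of formulas relating $\Omega(V,M)$ and $\mho(V,G)$ for any fixed $V\in\even$: each $\Omega(V,M)$ is a $\pm 1$ signed sum over $\mho(V,G)$, and conversely each $\mho(V,G)$ is a $\pm 1$ signed sum over $\Omega(V,M)$ (with a normalizing factor). Since this pair of linear relations expresses every element of $\hat{\mathcal{P}}$ as a linear combination of elements of $\mathcal{P}$ and vice versa, I obtain $\operatorname{span}(\mathcal{P}) = \operatorname{span}(\hat{\mathcal{P}})$ directly.

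Next, I would use \cref{lemm:bijectiveommho}, which sets up an inductive bijective linear correspondence between $\{\mho(V,G)\}$ and $\{\mho_I(V,G)\}$. Concretely, the proof of that lemma gives the identity
\begin{equation*}
\mho(V,G) = \tfrac{\Phi_V|S_{[V,G]}|}{d^m}\mho_I(V,G) + \sum_a \tfrac{\phi_a(G)|S_{[V_a,G_a]}|}{d^m}\mho_I(V_a,G_a),
\end{equation*}
where the sum runs over strictly smaller-order terms. Induction on $m$ then lets me invert this linear system, showing that every $\mho_I(V,G)$ is a linear combination of elements of $\hat{\mathcal{P}}$ and vice versa. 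Finally, since $\mho_I(V,G) \propto \mho_I([V,G])$ with a nonzero phase, this passes to the equivalence-class indexed family: $\operatorname{span}(\hat{\mathcal{P}}) = \operatorname{span}(\{\mho_I([V,G])\}_{[V,G]})$.

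Combining the two equalities yields the result. The only mildly delicate step is the second one: one must make sure that the inductive inversion of \cref{lemm:bijectiveommho} is correctly reindexed in terms of equivalence classes $[V,G]$ (so that different representatives $V$ of the same class do not over-count). This is resolved by noting that the phase $\Phi_V$ relating $\mho_I(V,G)$ to $\mho_I([V,G])$ is nonzero and depends only on the representative, so the passage $\mho_I(V,G)\mapsto \mho_I([V,G])$ preserves spans. I do not expect any genuine obstacle beyond this bookkeeping, since both bijections have already been proven in full.
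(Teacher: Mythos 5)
Your proposal is correct and follows essentially the same route as the paper: it chains the Fourier-type correspondence of \cref{lem:linear-transformation} between $\mathcal{P}$ and $\hat{\mathcal{P}}$ with the inductive bijection of \cref{lemm:bijectiveommho} between $\hat{\mathcal{P}}$ and $\{\mho_I(V,G)\}$, and finally passes to equivalence classes via the nonzero phase $\Phi_V$ in $\mho_I(V,G)=\Phi_V\,\mho_I([V,G])$. The extra bookkeeping you flag about representatives is handled exactly as you suggest, so no gap remains.
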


\subsection{Pauli monomials form a basis of the commutant}
\label{sec:commutantCliffordgen}
In this section, building on the results of \cref{sec:Paulimon,sec:comclif}, we demonstrate that Pauli monomials form a basis for the Clifford commutant.

The next corollary shows that the set of reduced Pauli monomials as defined in \cref{def:distinctpaulimonomials} span all the commutant of the Clifford group, and are independent provided that $k\le n+1$.

\begin{corollary}[Reduced Pauli monomials form a basis of the commutant]\label{lem:linearindependency}
The span of set of reduced Pauli monomials $ \mathcal{P}$ is the commutant of the Clifford group
\begin{align}
\operatorname{span}(\mathcal{P})=\com(\mathcal{C}_n^{\otimes k});    
\end{align}
moreover, if $k\le n+1$, then $\mathcal{P}$ contains linearly independent operators and thus $\mathcal{P}$ provides a basis of $\com(\mathcal{C}_n^{\otimes k})$.
\end{corollary}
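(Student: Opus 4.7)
The spanning part follows by chaining previously established results. By \cref{cor:spanningeverything}, $\operatorname{span}(\mathcal{P}) = \operatorname{span}(\{\mho_I([V,G])\}_{[V,G]})$. By \cref{th:fullcommutantnk}, the non-zero operators $\mho_I([V,G])$ form an orthogonal basis of $\com(\mathcal{C}_n^{\otimes k})$, so their span is the entire commutant and the vanishing ones contribute nothing. Hence $\operatorname{span}(\mathcal{P}) = \com(\mathcal{C}_n^{\otimes k})$.

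For the linear independence claim when $k \leq n+1$, the plan is a dimension count that pins $|\mathcal{P}|$ to $\dim(\com(\mathcal{C}_n^{\otimes k}))$ from both sides. To upper bound $|\mathcal{P}|$, I count the gauge equivalence classes $[V, M]$ arising in its definition. By \cref{th:gaussOP}, any two pairs $(V,M)$ and $(VA, M(A))$ with $A \in \mathrm{GL}(\mathbb{F}_2^{m\times m})$ yield the same monomial, so each equivalence class contributes at most one element to the set $\mathcal{P}$. For each rank $m \in \{0, \ldots, k-1\}$, the columns of $V$ span some $m$-dimensional subspace of the $(k-1)$-dimensional even-weight subspace of $\mathbb{F}_2^k$; by \cref{lem:gaussiancoeff} there are $\binom{k-1}{m}_2$ such subspaces. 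Fixing a canonical representative $V$ for each, the stabilizer under the right action of $\mathrm{GL}(\mathbb{F}_2^{m\times m})$ is trivial (because $V$ has full column rank, $VA = V$ forces $A = \mathbb{1}$), so distinct $M \in \symf$ label distinct classes. Collecting these counts,
\begin{equation}
|\mathcal{P}| \;\leq\; \sum_{m=0}^{k-1} \binom{k-1}{m}_2 \cdot 2^{m(m-1)/2} \;=\; \prod_{i=0}^{k-2}(2^i + 1),
\end{equation}
where the last equality is the Gaussian binomial theorem at $q=2$ and $z=1$.

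To close the loop, \cref{cor:dimcom} provides the matching value $\dim(\com(\mathcal{C}_n^{\otimes k})) = \prod_{i=0}^{k-2}(2^i+1)$ in exactly the regime $n \geq k-1$, i.e., $k \leq n+1$. Since $\mathcal{P}$ spans the commutant, we also have $|\mathcal{P}| \geq \dim(\com(\mathcal{C}_n^{\otimes k}))$. The two bounds force equality, and a spanning family of cardinality equal to the dimension must be a basis, so the elements of $\mathcal{P}$ are linearly independent and provide a basis of $\com(\mathcal{C}_n^{\otimes k})$.

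The main subtlety is that I deliberately do \emph{not} need to show that distinct equivalence classes produce distinct monomials—that would essentially be the linear independence statement itself. Instead, only the easier direction (classes in the same orbit give the same monomial, guaranteed by \cref{th:gaussOP}) is required for the upper bound, and the dimension match then forces all potential coincidences to be absent in the regime $k \leq n+1$. This separation of tasks is what makes the argument go through without a direct combinatorial analysis of when two pairs $(V,M)$ and $(V',M')$ yield the same operator.
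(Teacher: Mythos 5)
Your proof is correct, and for the linear-independence part it takes a genuinely different route from the paper. The spanning claim is handled exactly as in the paper (via \cref{cor:spanningeverything} and \cref{th:fullcommutantnk}). For independence when $k\le n+1$, the paper argues through the bijective correspondences of \cref{lem:linear-transformation} and \cref{lemm:bijectiveommho}: the reduced monomials are in bijection with the operators $\mho_I([V,G])$, and in this regime the rank condition $\rank_2(G)\ge 2(m-n)$ is automatic, so none of the $\mho_I$ vanish and the cardinalities match the dimension. You instead bypass that machinery and count directly: each gauge class $[V,M]$ yields one operator (needing only the easy direction of \cref{th:gaussOP}), the trivial-stabilizer observation for full-column-rank $V$ makes the count of classes exactly $\sum_{m}\binom{k-1}{m}_2\,2^{m(m-1)/2}=\prod_{i=0}^{k-2}(2^i+1)$, and matching this against the closed-form dimension in \cref{cor:dimcom} for $n\ge k-1$ (together with the spanning property, which forces $|\mathcal{P}|\ge\dim$) squeezes $|\mathcal{P}|$ to the dimension. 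Your explicit remark that you never need to show distinct classes give distinct operators is exactly the point that makes this clean; the paper's terse bijection argument leaves that kind of issue more implicit. Two small bookkeeping notes: your count includes $m=0$ (the identity), whereas \cref{def:distinctpaulimonomials} literally writes $m\in[k-1]$ — this is a convention wrinkle of the paper (its own dimension counts also start at $m=0$), so your reading is the consistent one; and the closed form you quote is the one in the statement of \cref{cor:dimcom}, $\prod_{i=0}^{k-2}(2^i+1)$, which is the correct version (the in-proof product with upper limit $k-1$ is a typo).
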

\begin{proof}
Both claims follow directly from \cref{cor:spanningeverything,th:fullcommutantnk}, as together they establish a bijective transformation between the set of operators $\mho_{I}([V,G])$, which constitute a basis for $\rank_2(G)\ge 2(m-n)$ trivially (verified for $k\le n+1$). 
 \end{proof}

\section{The Clifford commutant is generated by permutations and three fundamental operators}
\label{sec:3gen}
\subsection{The Clifford commutant is generated by only three additional elements}

In this section, building on the formalism of Pauli monomials introduced in \cref{sec:Paulimon}, we state and prove one of the main results of this work. The following theorem shows that the $k$-th order Clifford group commutant is generated by only three additional elements compared to the commutant of the full unitary group, namely the primitives corresponding to $\frac{1}{d} \sum_{P} P^{\otimes 4}$, $\frac{1}{d} \sum_{P} P^{\otimes 6}$, and, if $k$ is a multiple of $4$, also $\frac{1}{d} \sum_{P} P^{\otimes k}$. 
In this sense—as already observed for the specific case $k = 4$ in Ref.~\cite{zhu_clifford_2016}, which showed that the Clifford group fails ``gracefully'' to be a unitary 4-design—we show that the Clifford group fails gracefully to be a unitary $k$-design for \textit{any} $k$, thereby providing a natural generalization of this result.

\begin{theorem}[The $k$-th order Clifford group commutant is generated by only three additional elements]\label{th:algebraicstructurecommutantofclifford}
Let $\com(\mathcal{C}_n^{\otimes k})$ be the commutant of the Clifford group. 
\begin{enumerate}[label=(\roman*)]
    
\item All Pauli monomials are generated by 
    \begin{equation}
\Omega_4\coloneqq\sum_{P \in \mathbb{P}_n} P^{\otimes 4}, \quad 
\Omega_6\coloneqq\sum_{P \in \mathbb{P}_n} P^{\otimes 6}, \quad \text{and, if $ k $ is divisible by 4,} \quad 
\Omega_k\coloneqq\sum_{P \in \mathbb{P}_n} P^{\otimes k},
\end{equation}
and permutations.

\item All unitary Pauli monomials are generated by $\Omega_6$ and permutations.
\item Every reduced Pauli monomial $\Omega$ can be generated by the product of at most $\min(m(\Omega)+4,k+1)$ many primitives.
\end{enumerate}
\begin{proof}
First of all, let us notice that, with $\Omega_4,\Omega_6$ and $\Omega_k$ and permutations, we can generate every primitive $ \Omega(v)$ with $|v|=4,6,k$. 
Then, thanks to the result of \cref{lem:normalform}, we know that any reduced Pauli monomial $\Omega$ can be written as $\Omega=\Omega_P\Omega_U$ with $\Omega_U\in\mathcal{P}_U$ and $\Omega_P\in\mathcal{P}_P$, a unitary and projective Pauli monomial, see \cref{le:unitaryprojmon}. This result automatically implies that the commutant of the Clifford group is generated by products of primitives $\Omega(v)$ with $|v|\in[k]$. We are therefore just left to show that we can generate every primitive Pauli monomial by $\Omega_{4}$ and $\Omega_{6}$ and permutations. First, we show that we can generate $\Omega(v')$ from $\Omega(v)$ with $|v'|=|v|+4$ and two $\Omega_{6}$ (up to the adjoint action of permutations), using one extra register, as the following diagrammatic calculation shows:
    \begin{align}
    \monomialdiagram{9}{{1,2,3,4,5,6},{6,7,8,9},{1,2,3,4,5,6}}{}=\monomialdiagram{9}{{1,2,3,4,5,6},{1,2,3,4,5,7,8,9},{1,2,3,4,5,6}}{0:1}=\monomialdiagram{9}{{1,2,3,4,5,7,8,9},{1,2,3,4,5,6},{1,2,3,4,5,6}}{}=\monomialdiagram{9}{{1,2,3,4,5,7,8,9},}{}.
\end{align}
Since, in the above diagrammatic calculation, we only use the property that the middle primitive anticommmutes with the two $\Omega_{6}$, we can generate all $\Omega(v)$ with $|v|=0\pmod2$ from $\Omega_4$, $\Omega_6$, and permutations using one extra register. This only leaves $\Omega_{k}$ if $k$ is even, which cannot be generated by this method, because there is no extra register to be used, and it commutes with all other primitives. 

Let us see how to generate $\Omega_{k}$. It turns out that there are two cases.
\begin{itemize}
    \item If $k=4\mathbb{N}$, then the primitive $\Omega_{k}$ is a projector (see \cref{lem:propertiesprimitivepauli}) and, as we show below, it cannot be generated by lower order primitives. 
To proceed with the proof, let assume, towards contradiction, that we can create $\Omega_{v_k}$ by lower primitives as
\begin{align}
    \Omega_k\propto\prod_m \Omega(v_m)
\end{align}
where $|v_m| < k$, and the proportionality factor arises from the fact that the product of lower-order primitives may not be reducible (cf.\  \cref{def:reducedpaulimonomials}). Using \cref{th:graphrules}, we then move all unitary primitives to the right and result in the following expression:
\begin{align}
    \Omega_k\propto\prod_m \Omega_{P}(v'_m)\times \prod_m \Omega_{U}(v''_m)
\end{align}

where $\Omega_{P}(v_{m}') \in \mathcal{P}_P$ and $\Omega_{U}(v_{m}'') \in \mathcal{P}_U$ for every $m$. We note that this process cannot generate an $\Omega_k$ primitive, since $\Omega_k$ commutes with everything. Hence we have $|v_m'|<k$ for every $v_{m}'$. Consequently, we have $\Omega_k = \Omega_P \times \Omega_U$, where $\Omega_P$ is the product of commuting non-full-weight primitives. Notably, $\Omega_P$ may be non-reduced, meaning that $\Omega_P = d^\alpha \Omega'_P$, as stated in \cref{lem:everypaulimonomialcanbereduced}. Thus, we obtain
\be
\Omega_k \propto \Omega_P' \Omega_U.
\ee
Now, multiplying both sides by the conjugate—specifically, $\Omega_k$ on the left and $(\Omega_P' \Omega_U)^{\dag}$ on the right—we arrive 
at
\be
d\Omega_k \propto d^{\alpha} \Omega_P',
\ee
where we have used the fact that $\Omega_U$ is a unitary operator. However, 
since both $\Omega_k$ and $\Omega_P'$ are reduced, it follows that $\Omega_k = \Omega_P'$, which contradicts our initial assumption that $|v_m| < k$. This establishes the contradiction.
\item If $k = 4\mathbb{N} + 2$, then $\Omega_{k}$ is unitary and can be generated as a product of lower-order primitives. Therefore, it lies in the algebra generated by, $\Omega_{4}$, $\Omega_{6}$, and permutations as shown by the diagrammatic calculation.
\begin{align}
    \monomialdiagram{10}{{1,2,3,4,5,6,7,8,9,10},{1,2},{3,4},{5,6}}{}=\monomialdiagram{10}{ {5,6,7,8,9,10},{3,4,5,6,7,8,9,10},{1,2,5,6,7,8,9,10},{5,6}}{1:2}
    =\monomialdiagram{10}{{5,6,7,8,9,10},{3,4,7,8,9,10},{1,2,7,8,9,10},{1,2,3,4,5,6}}{}
\end{align}
The above only applies the rules of \cref{th:graphrules}. The weight of the first primitive can be arbitrary, as long as it is unitary.
\end{itemize}
The above reasoning thus shows items (i) and (ii).

Let us prove item (iii). For the minimal number of terms, we proceed as in the normal form proof, see \cref{lem:normalform}. Hence, we can write every monomial as
\begin{align}
    \Omega=\Omega_P\tilde \Omega \Omega_U\label{eq:148}
\end{align}
where $\Omega_P\in\mathcal{P}_P$, i.e., is a multiplication of even commuting primitives with no phase, $\Omega_{U}\in\mathcal{P}_U$ i.e., is a product of odd primitives and  $\tilde \Omega$ is a monomial of commuting projective primitives, which share a phase. The form in \cref{eq:148} is obtained right before Step 3 of the proof of \cref{lem:normalform}. We note that, through \cref{th:graphrules}, we can always write $\tilde{\Omega}$ as products of \textit{pairs} of commuting projective primitives with a phase. An example of this kind is provided below.
\setmonomialscale{3.5mm}
\begin{align} 
\monomialdiagram{6}{{1,2,3,4},{3,4,5,6},{1,2,4,6},{1,3,5,6}}{0:1,0:2,2:3}
 = \monomialdiagram{6}{{1,2,5,6},{3,4,5,6},{1,2,4,6},{1,3,5,6}}{0:1,2:3}\,.
\end{align}
Crucially, to obtain the product form in \cref{eq:148}, the number $r$ of primitives is $r = m(\Omega)$, which corresponds to the order of the reduced monomial $\Omega$. Indeed, is only after Step 3 of \cref{lem:normalform} that the number of primitives increases. Indeed, Step 3 consists of adding commuting permutations and then expanding the product of two even-commuting primitives sharing a phase into four unitaries. However, this approach is suboptimal in the present case, as our goal is to develop a strategy that minimizes the number of primitives. To address this, let us carefully analyze the problem. First, we identify two possible scenarios that may arise.
\begin{enumerate}[label=(\alph*)]
   \item \textbf{$\Omega_{U} \neq \mathbb{1}$}. First, we observe that if there exists a primitive $\Omega(u)$ in the decomposition of $\Omega_U$ that commutes with every element of every pair in $\tilde{\Omega}$, then, by Rule (4) of \cref{th:graphrules}, we can decompose $\tilde{\Omega}$ into a product of unitaries, resulting in $r = \tilde{m}$ with no additional cost in terms of primitives.  

We can then assume that there exists an $\Omega(u)$ in the decomposition of $\Omega_U$ that does not commute 
with at least one element of some pairs in $\tilde{\Omega}$. Note that if ${\Omega}(u)$ does not commute with both elements of a given pair, we can sum the elements of the pair, ensuring that only one of them fails to commute with $\Omega(u)$. Suppose that there are at least two pairs whose elements do not commute with $\Omega(u)$, and, without loss of generality, consider the first element of each pair as the non-commuting one. In this case, summing the first elements of each pair together and the second elements of each pair together results in one completely commuting pair and one pair where only the first element does not commute. This implies that, without loss of generality, we can assume that in $\tilde{\Omega}$ there is exactly one pair of even-commuting primitives sharing a phase that does not commute with $\Omega(u)$ (or any other odd primitive in $\Omega_U$). If this holds, we simply apply Rule (4) of \cref{th:graphrules}, adding two commuting permutations and generating four unitaries from the non-commuting pair, while using $\Omega(u)$ to unfold all other pairs at no additional cost. Consequently, this case results in a total number of primitives $r = m(\Omega) + 2$.

\item $\Omega_{U} = \mathbb{1}$. In this case, we apply Rule (4) of \cref{th:graphrules}, adding two permutations to unfold, without loss of generality, the first pair of even-commuting primitives in $\tilde{\Omega}$, thereby creating four unitaries in the process. At this stage, $\Omega_U \neq \mathbb{1}$, allowing us to proceed as in case (i). This procedure ultimately results in a total number of primitives $r = m(\Omega) + 4$.

\end{enumerate}
Let us then bound the total number of primitives by considering two distinct scenarios.

In the first scenario (A), in the worst case, we have $m(\Omega) = k - 1$, and therefore the number of primitives satisfies $r \le (k - 1) + 2 = k + 1$.

In the second scenario (B), since $\Omega_U = \mathbb{1}$ from the outset, the worst case corresponds to having all commuting primitives contained in $\Omega_P$ and $\tilde{\Omega}$. However, the maximum number of commuting primitives is $\frac{k}{2}$, and thus $r \le \frac{k}{2} + 4 \le k$ for any $k \ge 8$. Item (iii) thus follows.

For the second scenario (B), we can also construct an example that cannot be generated using only $m(\Omega) + 2$ generators unless we extend $k \rightarrow k + 2$. Such an example can be constructed for $k = 8$ to be
\begin{align}
    \monomialdiagram{8}{{1,5,6,7},{2,5,7,8},{3,5,6,8},{4,6,7,8}}{0:1,2:3}.
\end{align}
Here, no unitary primitive commutes with all four operators. Consequently, constructing it requires the product of four additional primitives. 
\end{proof}
\end{theorem}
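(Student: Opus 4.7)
The plan is to leverage the normal-form decomposition $\Omega = \Omega_P \Omega_U$ from \cref{lem:normalform}, which reduces the problem for item (i) to showing that every primitive $\Omega(v)$ with $|v|$ even can be built from $\Omega_4$, $\Omega_6$, and (when $4 \mid k$) $\Omega_k$, together with permutations. For this I would establish an ``increment-by-four'' identity purely diagrammatically: sandwiching a primitive $\Omega(v)$ between two copies of $\Omega_6$ placed on a register shared partly with $v$ and partly with a fresh copy produces $\Omega(v')$ with $|v'| = |v|+4$. This uses only Rules (1)--(2) of \cref{th:graphrules}, and the only property invoked is that the middle primitive anticommutes (in the column-overlap sense of \cref{le:paritysum}) with both $\Omega_6$'s. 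Iterating this operation starting from $\Omega_4$ (and using a swap to climb by $2$ when needed) generates every even-weight primitive of weight strictly less than $k$, conditional on the existence of a free register to land the increment on.

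The main obstacle is the top-weight primitive $\Omega_k$, where no free register remains and $\Omega_k$ commutes with every other primitive, so the sandwich trick cannot touch it. I would split into cases by $k \bmod 4$. When $k \equiv 2 \pmod 4$, $\Omega_k$ is a unitary (\cref{lem:propertiesprimitivepauli}) and I would exhibit a direct diagrammatic identity expressing it as a product of three weight-$6$ primitives and a few swaps by applying Rule (1) to a three-column diagram whose pairwise overlaps are controlled. When $k \equiv 0 \pmod 4$, $\Omega_k$ is a projector, and I would argue it is genuinely a new generator by contradiction: suppose $\Omega_k \propto \prod_m \Omega(v_m)$ with $|v_m| < k$. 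Push all unitary primitives to the right using \cref{th:graphrules}(2), obtaining a product $\Omega_P \Omega_U$ where no factor in $\Omega_P$ has full weight (since the move preserves weight and $\Omega_k$ cannot be created from lower unitary primitives as they are odd); reduce $\Omega_P = d^{\alpha} \Omega_P'$ via \cref{lem:everypaulimonomialcanbereduced}; then right-multiplying the identity $\Omega_k \propto \Omega_P' \Omega_U$ by $\Omega_U^\dagger \Omega_k$ and using that $\Omega_k$ is a projector forces $\Omega_k \propto \Omega_P'$, contradicting $|v_m| < k$ by uniqueness of the reduced form.

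Item (ii) is a simpler variant: the sandwich-with-$\Omega_6$ identity preserves oddness, so it generates every odd primitive from $\Omega_6$ and permutations; and by \cref{lem:normalform} any unitary Pauli monomial has trivial projective part, hence decomposes into a product of odd primitives.

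For item (iii) I would revisit the normal-form proof but replace its Step~3 with a more economical accounting. After Steps 1 and 2 we have $\Omega = \Omega_P \,\tilde\Omega\, \Omega_U$, with $\tilde\Omega$ a product of even-commuting pairs sharing a phase, and the total primitive count thus far is exactly $m(\Omega)$. I would then case-split: (a) if $\Omega_U \neq \mathbb{1}$, I would argue by summing pairs together that we may assume at most one pair fails to commute with some fixed odd primitive $\Omega(u) \in \Omega_U$; resolving that single obstructing pair via Rule (4) costs two extra permutations, and all other pairs are unfolded by $\Omega(u)$ at no cost, giving $r \leq m(\Omega)+2$; (b) if $\Omega_U = \mathbb{1}$, one preliminary application of Rule (4) (two permutations) manufactures the needed odd primitives to reduce to case (a), giving $r \leq m(\Omega)+4$. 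Bounding $m(\Omega) \leq k-1$ in case (a) gives $r \leq k+1$, while in case (b) the commutativity constraint forces $m(\Omega) \leq k/2$, giving $r \leq k/2 + 4 \leq k$ for $k \geq 8$. Combining yields the $\min(m(\Omega)+4, k+1)$ bound.

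The hardest part will be the impossibility half of item (i) for $\Omega_k$ with $4 \mid k$: turning the informal observation ``$\Omega_k$ cannot be cancelled because it commutes with everything'' into a rigorous argument requires careful use of both the uniqueness of the reduced normal form and the projector property, since any product identity must survive reduction without collapsing $\Omega_k$ into a lower-weight reduced monomial.
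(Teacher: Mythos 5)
Your proposal is correct and follows essentially the same route as the paper: the normal-form reduction to primitives, the increment-by-four sandwich with two $\Omega_6$'s on an extra register, the case split on $k \bmod 4$ for $\Omega_k$ (with the same reduction-plus-projector contradiction when $4 \mid k$ and the same diagrammatic construction when $k \equiv 2 \pmod 4$), and the same $\Omega_U \neq \mathbb{1}$ versus $\Omega_U = \mathbb{1}$ accounting giving $m(\Omega)+2$ and $m(\Omega)+4$ for item (iii). No substantive differences to report.
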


Let us point out that, due to \cref{lem:linearindependency}, \cref{th:algebraicstructurecommutantofclifford} provides not only a set of generators for the set of Pauli monomials but also a set of generators for the Clifford commutant.

\subsection{Equivalence of basis}\label{ssec:equi_basis}
In this section, building on \cref{th:algebraicstructurecommutantofclifford}, we provide a brief argument supporting the claim that the basis of Pauli monomials, introduced in \cref{def:paulimonomials}, precisely coincides with the basis found in Ref.~\cite{gross_schurweyl_2019}. Let us first recall the main result given in Ref.~\cite{gross_schurweyl_2019}.
\begin{proposition}[Commutant basis of Ref.~\cite{gross_schurweyl_2019}] Let $T\subseteq\mathbb{F}_{2}^{k}\times \mathbb{F}_{2}^{k}$ be a subspace. Labelling the elements of $T$ as $(x,y)$ for $x,y\in\mathbb{F}_2^k$, $T$ is a Lagrangian subspace if the following conditions are met:
\begin{enumerate}[label=(\roman*)]
    \item $|x|=|y|\pmod 4$ for all $(x,y)\in T$;
    \item $\dim(T)=k$;
    \item $(1,1,\ldots,1)\in T$.
\end{enumerate}
For every $n\ge k-1$, the following operators 
\be
R(T)=\left(\sum_{(x,y)\in T}\ketbra{x}{y}\right)^{\otimes n}
\ee
form a basis of the Clifford commutant $\com(\mathcal{C}_n^{\otimes k})$. 
\end{proposition}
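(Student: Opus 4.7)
The strategy is to establish an explicit bijection between Lagrangian subspaces $T \subseteq \mathbb{F}_2^k \times \mathbb{F}_2^k$ satisfying conditions (i)--(iii) and equivalence classes of reduced Pauli monomial data $[V, M]$, such that under this bijection $R(T) \propto \Omega(V, M)$. Combined with \cref{lem:linearindependency}, which asserts that reduced Pauli monomials form a basis of $\com(\mathcal{C}_n^{\otimes k})$ whenever $n \geq k-1$, this will directly imply that $\{R(T)\}_T$ is also a basis.

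The first simplification is to reduce to the single-qubit case. By construction $R(T) = r(T)^{\otimes n}$ with $r(T) \coloneqq \sum_{(x,y) \in T} \ketbra{x}{y}$, and by \cref{lem:monomialsfactorizeonqubits} every Pauli monomial factorizes as $\Omega(V, M) = \omega(V, M)^{\otimes n}$. Hence it suffices to establish $\omega(V, M) \propto r(T)$ for $n = 1$, and then take $n$-fold tensor powers.

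The central computation is the evaluation of the matrix elements $\braket{\boldsymbol{x} | \omega(V, M) | \boldsymbol{y}}$ in the computational basis. Expanding each single-qubit Pauli $P_i$ via its bit string $b(P_i) = (a_i, b_i) \in \mathbb{F}_2^2$ and unrolling \cref{def:paulimonomials}, the matrix element becomes a sum over $(\boldsymbol{a}, \boldsymbol{b}) \in \mathbb{F}_2^{2m}$ of phase factors---arising from the $Y$-induced $i$-powers and the $M$-dependent signs---times the delta constraint $\boldsymbol{x} + \boldsymbol{y} = V \boldsymbol{a}$. After performing the Gaussian-style sums over $\boldsymbol{b}$ (which will yield a residual constraint relating $\boldsymbol{y}$ to the symplectic data of $V$), the support $T(V, M) \coloneqq \{ (\boldsymbol{x}, \boldsymbol{y}) : \braket{\boldsymbol{x} | \omega(V, M) | \boldsymbol{y}} \neq 0 \}$ should emerge as a $k$-dimensional subspace of $\mathbb{F}_2^k \times \mathbb{F}_2^k$. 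The three Lagrangian conditions should then follow naturally: the evenness of each column of $V$ enforces $|\boldsymbol{x}| = |\boldsymbol{y}| \pmod 4$ through the quartic structure of the $Y$-count, the identity contribution ($\boldsymbol{a} = \boldsymbol{b} = 0$) places $(1, \ldots, 1)$ in $T$, and the dimension count follows from the combinatorics of the residual sum together with the $V \boldsymbol{a}$ constraint.

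For the inverse map $T \mapsto [V, M]$, I would decompose $T$ by its \emph{difference part} $\{ \boldsymbol{x} + \boldsymbol{y} : (\boldsymbol{x}, \boldsymbol{y}) \in T \}$, which is a subspace of even-weight vectors by condition (i), and read off $V$ from a basis of this subspace; the residual symplectic data on $T$ would give $M$, and the $\mathrm{GL}(\mathbb{F}_2^{m \times m})$ gauge freedom of $\Omega(V, M)$ from \cref{th:gaussOP} would match the choice of this basis. A cardinality check via \cref{cor:dimcom}, which gives $\dim \com(\mathcal{C}_n^{\otimes k}) = \prod_{i=0}^{k-2}(2^i+1)$ for $n \geq k-1$, would close the bijection argument. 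The main obstacle is the careful bookkeeping of phase factors in the matrix-element computation: in particular, disentangling the $Y$-induced $i$-powers (which must conspire with $V \in \even$ to yield purely $\pm 1$ entries matching the $\{0,1\}$ support of $R(T)$ up to an overall scalar) and verifying that the $M$-encoded signs align with a canonical choice of coset representatives for the Lagrangian subspace---most likely through the canonical graph form of \cref{lem:can_pauli_graph}.
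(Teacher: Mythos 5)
You should first note a structural point: the paper does not actually prove this proposition—it is imported verbatim from Ref.~\cite{gross_schurweyl_2019}—and the adjacent ``equivalence of basis'' lemma, which shows $\mathcal{P}=\{R(T)\}$, runs in the opposite direction (closure of $\{R(T)\}$ under multiplication, primitives as anti-identities, generation by \cref{th:algebraicstructurecommutantofclifford}, and matching linearly independent spanning sets), and it \emph{uses} the present proposition as an input. Your plan—derive the proposition from the paper's first-principles results by exhibiting a bijection $[V,M]\leftrightarrow T$ with $R(T)\propto\Omega(V,M)$ and then invoking \cref{lem:linearindependency}—is non-circular and the intermediate facts you assert are true, so the route is viable in principle; but as written it is an outline whose hard steps are exactly the ones deferred.

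Two gaps are genuine. First, the core of the argument is the claim that, after the $\boldsymbol{b}$-sum, the nonzero computational-basis entries of $\omega(V,M)$ are all \emph{equal} (not just of equal magnitude) and supported on a subspace; ``$\pm 1$ entries matching the $\{0,1\}$ support of $R(T)$ up to an overall scalar'' is not sufficient, since mixed signs would preclude $\omega(V,M)\propto r(T)$. Relatedly, your inference that the $\boldsymbol{a}=\boldsymbol{b}=0$ (identity) term ``places $(1,\dots,1)$ in $T$'' is invalid as stated: the support of the sum is not the union of the supports of its terms. For $\omega_4$ the identity term contributes on every diagonal pair $(y,y)$, yet all pairs with $|y|$ odd cancel against the $Z$-type terms; the correct reason $(\mathbf{1},\mathbf{1})$ survives is constructive interference of the entire $\boldsymbol{a}=0$ sector, which uses the evenness of each $|v_i|$—and the same non-cancellation analysis is needed on the whole support. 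Second, the proposition quantifies over \emph{all} Lagrangian $T$, so linear independence of the full family $\{R(T)\}$ requires surjectivity of $[V,M]\mapsto T$, and neither of your closing devices is actually available: the ``cardinality check via \cref{cor:dimcom}'' only supplies one side of the comparison (it gives $\dim\com(\mathcal{C}_n^{\otimes k})=\prod_{i=0}^{k-2}(2^i+1)$, not the number of Lagrangian subspaces satisfying (i)–(iii), which appears nowhere in the paper), and the inverse construction $T\mapsto[V,M]$—recovering $M$ from the ``residual symplectic data'' and matching the $\mathrm{GL}(\mathbb{F}_2^{m\times m})$ gauge—is only gestured at. Completing either of these is essentially re-deriving the Gross–Nezami–Walter correspondence, which is the substantive content of the statement rather than a finishing touch.
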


In the next lemma, we show that the above mentioned basis corresponds to the basis of Pauli monomials.

\begin{lemma}[Equivalence of basis] The basis $\{R(T)\}$ corresponds to the set $\mathcal{P}$.
\begin{proof}
    To prove this result, we proceed in a few steps. First, in the proof of Lemma 4.25 of Ref.~\cite{gross_schurweyl_2019}, it has been established that the product of any two operators satisfies $R(T_1)R(T_2) \propto R(T)$ for some Lagrangian subspace $ T $. 

Next, the primitive Pauli monomials $ \Omega({v_m}) $ (see \cref{def:primitivepaulimonomials}) correspond to what the authors of Ref.~\cite{gross_schurweyl_2019} define as the \textit{ antiidentity} $ \bar{\mathbb{1}}_{m} $ (see, e.g., Eq.~(1.4) in Ref.~\cite{gross_schurweyl_2019}). Hence, we have the relation $    R(\bar{\mathbb{1}}_m) = \Omega({v_m})$. In \cref{th:algebraicstructurecommutantofclifford}, we showed that at most three fundamental primitive Pauli monomials (up to permutations), namely $ \Omega_{4} $, $ \Omega_{6} $, and $ \Omega_{k} $, generate the set $ \mathcal{P} $. Since the set $ \{R(T)\} $ is closed under multiplication, it follows that the above primitives generate a subset of $ \{R(T)\} $. Therefore, we immediately obtain the inclusion $    \mathcal{P} \subseteq \{R(T)\}$. 

However, since the set $ \mathcal{P} $ contains linearly independent operators (for $ n \geq k-1 $) that span $ \operatorname{Com}(\mathcal{C}_n^{\otimes k}) $, as shown in \cref{lem:linearindependency}, and since the set $ \{R(T)\} $ also contains linearly independent operators (for $ n \geq k-1 $) that span $ \operatorname{Com}(\mathcal{C}_n^{\otimes k}) $, we must conclude that $\mathcal{P} = \{R(T)\}$.

\end{proof}
\end{lemma}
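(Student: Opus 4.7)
The plan is to establish the equivalence $\mathcal{P} = \{R(T)\}$ by a short chain of two inclusions, the first from generators-and-closure and the second from a dimension count. First I would identify the primitive Pauli monomials $\Omega(v_m)$ (Definition of primitive Pauli monomials) with the \emph{antiidentity} operators $\bar{\mathbb{1}}_m = R(\bar{\mathbb{1}}_m)$ from Ref.~\cite{gross_schurweyl_2019}. Since Lemma~\ref{lem:monomialsfactorizeonqubits} says $\Omega(v_m)$ factorizes as an $n$-fold tensor power on single qubits, and the same factorization holds by definition for $R(T)$, this identification reduces to a single-qubit calculation: expanding $\frac{1}{2}\sum_{P\in\mathbb{P}_1} P^{\otimes v_m}$ in the computational basis and matching it with $\sum_{(x,y)\in \bar{\mathbb{1}}_m}\ketbra{x}{y}$. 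Similarly, permutation operators $T_\pi$ correspond to Lagrangian subspaces of the form $T_\pi = \{(x,\pi(x)) : x \in \mathbb{F}_2^k\}$, which is immediate from Definition of permutation operators.

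Next, I would invoke Lemma~4.25 of Ref.~\cite{gross_schurweyl_2019}, which states that $R(T_1) R(T_2) \propto R(T_3)$ for some Lagrangian subspace $T_3$, i.e.\ the set $\{R(T)\}$ is closed under multiplication up to scalars. Combined with the identification in the previous step and with Theorem~\ref{th:algebraicstructurecommutantofclifford}, which guarantees that every reduced Pauli monomial in $\mathcal{P}$ can be written as a product of permutations and the three fundamental primitives $\Omega_4$, $\Omega_6$, and (if $4 \mid k$) $\Omega_k$, this yields the inclusion $\mathcal{P} \subseteq \{R(T)\}$ up to proportionality constants.

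To upgrade this to the claimed equality, I would use dimension counting in the regime $n \geq k-1$. On one side, Corollary~\ref{lem:linearindependency} shows that $\mathcal{P}$ consists of linearly independent operators spanning $\com(\mathcal{C}_n^{\otimes k})$, whose cardinality equals $\dim \com(\mathcal{C}_n^{\otimes k}) = \prod_{i=0}^{k-2}(2^i + 1)$ by Theorem~\ref{cor:dimcom}. On the other side, Ref.~\cite{gross_schurweyl_2019} establishes that $\{R(T)\}$ indexed by Lagrangian subspaces is a basis of $\com(\mathcal{C}_n^{\otimes k})$ with the same cardinality. Since both sets are linearly independent bases of the same space and one is contained (up to scalars) in the other, a proportionality-adjusted version of the inclusion $\mathcal{P} \subseteq \{R(T)\}$ must in fact be an equality of sets (once matching normalizations are fixed).

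The main obstacle I anticipate is the careful bookkeeping of scalar prefactors in the two identifications: the relation $R(T_1)R(T_2) \propto R(T_3)$ from Ref.~\cite{gross_schurweyl_2019} is only up to powers of $d$, while our Pauli monomials have a specific normalization built into Definition of Pauli monomials. Getting this right would mean tracing through Lemma~\ref{lem:everypaulimonomialcanbereduced} (which controls exactly when a product of primitives is reduced versus picks up a factor $d^\alpha$) to verify that the proportionality factors in the GNW basis match those arising in $\mathcal{P}$. Once this bookkeeping is handled, the rest of the argument is essentially abstract nonsense from two-bases-of-a-vector-space-with-the-same-dimension.
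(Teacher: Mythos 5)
Your proposal is correct and follows essentially the same route as the paper: identify the primitive Pauli monomials with the GNW antiidentities, use Lemma 4.25 of Gross--Nezami--Walter for closure of $\{R(T)\}$ under multiplication together with the generating result (\cref{th:algebraicstructurecommutantofclifford}) to get $\mathcal{P}\subseteq\{R(T)\}$, and then conclude equality from both sets being linearly independent spanning sets of $\com(\mathcal{C}_n^{\otimes k})$ for $n\geq k-1$. The extra care you devote to scalar normalizations is a reasonable refinement but does not change the argument.
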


\section{Applications and examples}\label{sec:applications}
This section explores several applications and examples based on the Clifford commutant. We begin by introducing the Haar average over the Clifford group in \cref{Sec:haaraveragecliffordbeyond}. Next, we establish a connection between magic state resource theories and the Clifford commutant in \cref{Sec:magicstateresourcetheory}. We then demonstrate how the Clifford commutant serves as a useful tool to characterize the property testing of stabilizer states in \cref{sec:propertytesting}. Following this, we extend our discussion to group-symmetric states, where the group we consider is the Clifford group. Finally, we conclude with a few examples in \cref{sec:cliffordcommutantexample}, considering the Clifford commutant for values of $k$ ranging from 4 to 8.

\subsection{Haar average over the Clifford group}\label{Sec:haaraveragecliffordbeyond}
In this section, we provide the essential tools to compute the averages of the Clifford group, one of the key applications of the Clifford group commutant.

\subsubsection{Twirling over the Clifford group}
Let us provide an explicit expression for the twirling operator (see \cref{twirlingofoperatorOunitary}) over the Clifford group, for any $n$ and $k$, as a direct  of \cref{th:fullcommutantnk}.

\begin{corollary}[Twirling operator over the Clifford group]\label{cor:twirlingcliffwithmhoi} Let $n,k>0$. Consider the twirling operator, defined in \cref{def:twirlingoperatoroverthecliffordgroup}, applied on $O\in\mathcal{B}(\mathcal{H}^{\otimes k})$. Then, $\Phi_{\cl}^{(k)}(O)$ belongs to $\com(\mathcal{C}_{n}^{\otimes k})$, and can be expressed as
\begin{align}
\Phi_{\cl}^{(k)}(O)=\sum_{\mho_{I}([V,G])} \frac{\tr[O\mho_{I}([V,G])]}{\|\mho_I([V,G])\|_2^{2}}\mho_I([V,G])\,,
\end{align}
where the $2$-norm of the operators $\mho_I([V,G])$ is computed in \cref{lem:orb_ofPauli}.
\begin{proof}
    Given that the twirled operator $\Phi_{\cl}^{(k)}(O)$ belongs to the commutant of the Clifford group, it can be expressed as a linear combination of basis elements. In particular, we can use the orthogonal basis formed by the operators $\mho_I([V,G])$ for $V\in\even$ and $G\in \symf$ with $\rank_2(G)\ge2(m-n)$, see \cref{th:fullcommutantnk}. Therefore, we can express the twirled operator as
    \begin{align}
\Phi_{\cl}^{(k)}(O)=\sum_{\mho_I([V,G])}b_{\mho_I([V,G])} \mho_I([V,G])\,.
    \end{align}
To determine the coefficients, one can multiply each side of the equation by $\mho_{I}([V',G'])$ and use the orthogonality of the coefficients to obtain
\be
b_{\mho_I([V',G'])}=\frac{\tr(\Phi_{\cl}^{(k)}(O)\mho_I([V',G']))}{\|\mho_I([V',G'])\|_{2}^{2}}\,;
\ee
noticing that $\tr(\Phi_{\cl}^{(k)}(O)\mho_I)=\tr(O\mho_I)$ concludes the proof.
\end{proof}
\end{corollary}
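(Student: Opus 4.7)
The plan is to treat this as a routine orthogonal expansion in a known basis, so the proof will be short. The three ingredients are already in place: (a) by Lemma~\ref{lem:twirlingbelongstothecommutant}, $\Phi_{\cl}^{(k)}(O)$ lies in $\com(\mathcal{C}_n^{\otimes k})$; (b) by Theorem~\ref{th:fullcommutantnk}, the nonzero operators $\{\mho_I([V,G])\}$ with $V\in\even$ and $G\in\symf$ satisfying $\rank_2(G)\ge 2(m-n)$ form an Hilbert--Schmidt orthogonal basis for $\com(\mathcal{C}_n^{\otimes k})$; (c) their squared $2$-norms $\|\mho_I([V,G])\|_2^2$ are computed in Lemma~\ref{lem:orb_ofPauli}.

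First, I would expand
\begin{equation*}
\Phi_{\cl}^{(k)}(O)=\sum_{[V,G]} b_{[V,G]}\,\mho_I([V,G]),
\end{equation*}
which is justified by (a) and (b). To identify the coefficients, I take the Hilbert--Schmidt inner product of both sides with a fixed basis element $\mho_I([V',G'])$. By orthogonality (\cref{th:fullcommutantnk}), only the term with $[V,G]=[V',G']$ survives, yielding
\begin{equation*}
b_{[V',G']}=\frac{\tr(\Phi_{\cl}^{(k)}(O)\,\mho_I([V',G']))}{\|\mho_I([V',G'])\|_2^{2}}.
\end{equation*}

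The only substantive step is replacing $\Phi_{\cl}^{(k)}(O)$ by $O$ in the numerator. This follows because $\mho_I([V',G'])\in\com(\mathcal{C}_n^{\otimes k})$, so $C^{\otimes k}\mho_I([V',G'])C^{\dagger\otimes k}=\mho_I([V',G'])$ for every $C\in\mathcal{C}_n$; combining this with the cyclicity of the trace and the definition of $\Phi_{\cl}^{(k)}$ as a uniform average over $\mathcal{C}_n$ gives
\begin{equation*}
\tr(\Phi_{\cl}^{(k)}(O)\,\mho_I([V',G']))=\tr(O\,\mho_I([V',G'])),
\end{equation*}
which is precisely the adjoint action of the twirling channel on the commutant. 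Substituting this into the expression for $b_{[V',G']}$ completes the proof. There is no real obstacle: the result is a direct application of Fourier expansion in an orthogonal basis, with the only subtle point being the trivial but essential invariance identity for the inner product.
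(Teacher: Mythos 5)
Your proposal is correct and follows essentially the same route as the paper's own proof: expand $\Phi_{\cl}^{(k)}(O)$ in the orthogonal basis $\{\mho_I([V,G])\}$ from \cref{th:fullcommutantnk}, extract the coefficients by orthogonality, and use the invariance of the basis elements under Clifford conjugation (equivalently, self-adjointness of the twirl) to replace $\tr(\Phi_{\cl}^{(k)}(O)\,\mho_I)$ by $\tr(O\,\mho_I)$. No substantive difference from the paper's argument.
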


\subsubsection{Clifford-Weingarten calculus}\label{sec:cliffordweingartencalculus}
In \cref{cor:twirlingcliffwithmhoi}, we introduced a decomposition of the twirling operator using the operator basis $\mho_I([V,G])$. Here, we express the twirling super-operators in terms of the simpler basis of Pauli monomials. Based on such a decomposition, we introduce the notion of Clifford-Weingarten calculus, which plays a key role in analytical calculations involving the twirling operator over the Clifford group—analogous to the role of Weingarten calculus for the full unitary group; see \cref{cor:weingartencalculushaar}.

Consider the set of reduced Pauli monomials $\mathcal{P}$. We know that the twirling operator, because of \cref{lem:twirlingbelongstothecommutant}, can be written as a linear combination of elements of $\mathcal{P}$
\be
\Phi^{(k)}_{\cl}(O)=\sum_{\Omega\in \mathcal{P}}c_{\Omega}(O)\Omega,
\label{clwingarteneq1}
\ee
where we drop the explicit dependence on the matrix $M$ and  the $m$-dimensional vector space $V$, introduced in \cref{def:paulimonomials}. 

\begin{lemma}[Clifford-Weingarten calculus] \label{lem:weingartencalculusclifford} Let $\Phi_{\cl}^{(k)}(O)$ be the $k$-fold twirling operator $O\in\mathcal{B}(\mathcal{H}^{\otimes k})$ and let $\mathcal{P}$ be the set of (reduced) Pauli monomials. Then it reads
\be
\Phi_{\cl}^{(k)}(O)=\sum_{\Omega,\Omega'\in\mathcal{P}}(\mathcal{W}^{-1})_{\Omega,\Omega'}\tr(O\Omega)\Omega'
\ee
where we call the coefficients $(\mathcal{W}^{-1})_{\Omega,\Omega'}$ Clifford-Weingarten functions, which can be obtained as the (pseudo-)inverse of the Gram-Schmidt matrix $\mathcal{W}_{\Omega,\Omega'}\coloneqq\tr(\Omega^{\dag}\Omega')$.
\begin{proof}
   
Similarly to what done in \cref{cor:weingartencalculushaar}, we multiply both sides of \cref{clwingarteneq1} with $\Omega^{\dag}\in \mathcal{P}$ and trace both sides:
\be
\tr(\Omega^{\dag}\Phi^{(k)}_{\cl}(O))=\sum_{\Omega'\in \mathcal{P}}c_{\Omega'}(O)\tr(\Omega^{\dag}\Omega').
\ee
Noticing that the channel $\Phi_{\cl}^{(k)}(\cdot)$ is self-adjoint and acts trivially on $\Omega^{\dag}\in\mathcal{P}$, we can rewrite the coefficients $c_{\Omega}(O)$ as
\be
c_{\Omega^{\prime}}(O)=\sum_{\Omega\in\mathcal{P}}(\mathcal{W}^{-1})_{\Omega,\Omega'}\tr(\Omega^{\dag} O),
\ee
where we have defined the (symmetric) Gram-Schmidt matrix $\mathcal{W}_{\Omega,\Omega'}\coloneqq\tr(\Omega\Omega^{\prime\dag})$.
\end{proof}
\end{lemma}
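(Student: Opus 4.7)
The plan is to mirror the derivation of the unitary Weingarten calculus (\cref{cor:weingartencalculushaar}), replacing permutations by reduced Pauli monomials $\mathcal{P}$ and using that the latter span $\com(\mathcal{C}_n^{\otimes k})$ (\cref{lem:linearindependency}). First I would observe that $\Phi_{\cl}^{(k)}(O)\in\com(\mathcal{C}_n^{\otimes k})$ by \cref{lem:twirlingbelongstothecommutant}, so there exist coefficients $\{c_{\Omega}(O)\}_{\Omega\in\mathcal{P}}$ such that
\begin{equation}
\Phi_{\cl}^{(k)}(O)=\sum_{\Omega\in\mathcal{P}}c_{\Omega}(O)\,\Omega.
\end{equation}

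The next step is to determine these coefficients by pairing both sides against an arbitrary $\Omega'\in\mathcal{P}$ via the Hilbert--Schmidt inner product. Two facts drive the calculation: (i) the twirling super-operator is self-adjoint, since $\tr(A^{\dagger}\Phi_{\cl}^{(k)}(B))=\tfrac{1}{|\mathcal{C}_n|}\sum_{C}\tr((C^{\otimes k}AC^{\dagger\otimes k})^{\dagger}B)=\tr(\Phi_{\cl}^{(k)}(A)^{\dagger}B)$ by invariance of the summation and cyclicity; and (ii) every $\Omega'\in\mathcal{P}$ lies in the commutant (by \cref{lem:paulimonomialbelongtothecommutant}), so $\Phi_{\cl}^{(k)}(\Omega')=\Omega'$. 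Combining these gives $\tr(\Omega'^{\dagger}\Phi_{\cl}^{(k)}(O))=\tr(\Omega'^{\dagger}O)$, which yields the linear system
\begin{equation}
\tr(\Omega'^{\dagger}O)=\sum_{\Omega\in\mathcal{P}}\mathcal{W}_{\Omega',\Omega}\,c_{\Omega}(O),\qquad \mathcal{W}_{\Omega',\Omega}\coloneqq\tr(\Omega'^{\dagger}\Omega).
\end{equation}
Inverting this gives $c_{\Omega'}(O)=\sum_{\Omega}(\mathcal{W}^{-1})_{\Omega',\Omega}\tr(\Omega^{\dagger}O)$, and substituting back yields the claimed formula after relabelling the dummy indices.

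The main obstacle is that $\mathcal{W}$ is only guaranteed to be invertible in the ``stable'' regime $k\le n+1$, where \cref{lem:linearindependency} certifies that $\mathcal{P}$ is linearly independent; for $k>n+1$ the Pauli monomials acquire linear dependencies (characterised in \cref{sec:comclif} by the rank condition on the anticommutation graph) and $\mathcal{W}$ becomes singular. To handle this, I would argue that the right-hand-side trace vector $\tr(\Omega^{\dagger}O)$ always lies in the range of $\mathcal{W}$: indeed, if $\sum_{\Omega}\alpha_{\Omega}\Omega=0$ is a linear relation in $\mathcal{P}$, then $\sum_{\Omega}\alpha_{\Omega}\tr(\Omega^{\dagger}O)=\tr(0\cdot O)=0$, so the vector with entries $\tr(\Omega^{\dagger}O)$ is orthogonal to $\ker \mathcal{W}$. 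Consequently the Moore--Penrose pseudoinverse $\mathcal{W}^{-1}$ solves the system uniquely up to adding a kernel element, and the resulting operator $\Phi_{\cl}^{(k)}(O)$ is unambiguously defined even though the coefficients $c_{\Omega}(O)$ are not. This justifies the parenthetical ``(pseudo-)inverse'' in the statement and completes the argument.
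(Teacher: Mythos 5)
Your proposal is correct and follows essentially the same route as the paper's proof: expand $\Phi_{\cl}^{(k)}(O)$ in the reduced Pauli monomials, pair with $\Omega'^{\dagger}$, use self-adjointness of the twirl together with the fact that each $\Omega'\in\mathcal{P}$ is fixed by it, and solve the resulting linear system with the (pseudo-)inverse of the Gram matrix $\mathcal{W}$. Your extra observation that the vector $\bigl(\tr(\Omega^{\dagger}O)\bigr)_{\Omega}$ is orthogonal to $\ker\mathcal{W}$, so the Moore--Penrose pseudoinverse yields a well-defined twirled operator when $k>n+1$, is a nice explicit justification of a point the paper leaves implicit in its parenthetical ``(pseudo-)inverse''.
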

The Gram matrix is defined as the matrix generated by Hilbert-Schmidt inner products between the operators in the reduced Pauli monomial set \( \mathcal{P} \). Therefore, the Gram matrix is invertible if and only if the operators in \( \mathcal{P} \) are linearly independent. This condition holds if and only if \( k \leq n + 1 \), as established in \cref{lem:linearindependency}.

The following lemma explores the properties of the matrix $\mathcal{W}$\footnote{A closely related analysis of the commutant’s Gram matrix was performed in Lemmas 11 and 12 of~\cite{haferkamp2020QuantumHomeopathyWorks}}.

\begin{lemma}[Properties of the Gram-Schmidt matrix $\mathcal{W}$]\label{lem:propertiesofW} Let $\mathcal{P}$ be the set of reduced Pauli monomials and $|\mathcal{P}|=\dim(\com(\mathcal{C}_n^{\otimes k}))$ be the dimension of the $k$-th order commutant of the Clifford group. The matrix $\mathcal{W}$, with components $\mathcal{W}_{\Omega,\Omega'}=\tr(\Omega^{\dag}\Omega^{\prime})$, obeys the following properties:
\begin{enumerate}[label=(\roman*)]
    \item $\mathcal{W}$ is symmetric.
    \item $1\le \mathcal{W}_{\Omega,\Omega'}\le d^{k-1} ,\quad \Omega\neq \Omega'\in\mathcal{P}$.
    \item $ \mathcal{W}_{\Omega,\Omega}= d^k,\quad  \Omega\in \mathcal{P}$.
    \item $d^k-|\mathcal{P}|d^{k-1}\le\lambda(\mathcal{W})\le d^k+|\mathcal{P}|d^{k-1}$ where $\lambda(\mathcal{W})$ is any eigenvalue of $\mathcal{W}$.
    \item $\det(\diag\mathcal{W})\left(1-\frac{|\mathcal{P}|^2}{d}\right)\le \det(\mathcal{W})\le \det(\diag\mathcal{W})\left(1+\frac{2|\mathcal{P}|^2}{d}\right)$, if $n\ge k^2-3k+13$.
\end{enumerate}
\begin{proof} To prove item (i), it is sufficient to notice that $\tr(\Omega^{\dag}\Omega')$ is real, hence $(\mathcal{W}^{T})_{\Omega,\Omega'}=\mathcal{W}_{\Omega'\Omega}=\tr(\Omega^{\prime\dag}\Omega)=\tr(\Omega^{\dag}\Omega')^{*}=\tr(\Omega^{\dag}\Omega')$. Item $(ii),(iii)$ follows from \cref{lem:hilbertschmidtproductofpaulimonomials}. 
From points $(ii)$ and $(iii)$, we can lower and upper bound all the eigenvalues of $\mathcal{W}$. Denoting $\lambda(\mathcal{W})$ any eigenvalue of $\mathcal{W}$, we have, through the Gershgorin circle theorem~\cite{bhatia_matrix_1997}
\be
\min_\Omega\Big(\mathcal{W}_{\Omega,\Omega}-\sum_{\Omega'\neq \Omega}\mathcal{W}_{\Omega,\Omega'}\Big)\le \lambda(\mathcal{W})\le \max_\Omega\Big(\mathcal{W}_{\Omega,\Omega}+\sum_{\Omega'\neq \Omega}\mathcal{W}_{\Omega,\Omega'}\Big)
\ee
bounding $\sum_{\Omega'\neq \Omega}\mathcal{W}_{\Omega,\Omega'}\le |\mathcal{P}|d^{k-1}$, item $(iv)$ readily follows. 

Let us show item $(v)$. Denoting $\operatorname{diag}(\mathcal{W})$ the diagonal part of $\mathcal{W}$, and $\mathcal{W}'\coloneqq\operatorname{diag}(\mathcal{W})^{-1}\mathcal{W}-I$, where $I$ denotes the $|\mathcal{P}|\times |\mathcal{P}|$ identity matrix. We can write $\det(\mathcal{W})=\det(\diag(\mathcal{W}))\det(I+\mathcal{W}')$.  Let us bound the eigenvalues of $\mathcal{W}'$ by using the Gershgorin circle theorem.  Denoting $\lambda(\mathcal{W}')$ any eigenvalue of $\mathcal{W}'$, from item $(ii)$, it follows that 
\be
|\lambda(\mathcal{W}')|\le \max_\Omega\sum_{\Omega'\neq \Omega}\frac{\mathcal{W}_{\Omega,\Omega'}}{\mathcal{W}_{\Omega,\Omega}}\le \frac{|\mathcal{P}|}{d}.\label{eq:eigenvaluewprime}
\ee
Using the Bhatia inequality (Theorem VI.7.1 in Ref.\cite{bhatia_matrix_1997}), we have
\be
(1-|\lambda(\mathcal{W}')|)^{|\mathcal{P}|}\le \det(I+\mathcal{W}')\le (1+|\lambda(\mathcal{W}')|)^{|\mathcal{P}|}.
\ee
Let us now carefully upper and lower bound the two terms. We use $(1+x)^C\le 1+2xC$, valid for $x\le \frac{1}{2C}$, to get 
\be
(1+|\lambda(\mathcal{W}')|)^{|\mathcal{P}|}\le 1+2|\lambda(\mathcal{W}')||\mathcal{P}|\le 1+\frac{2|\mathcal{P}|^2}{d},\quad \text{if}\,\,2|\mathcal{P}|^2\le d\,.\label{eq:ineq1}
\ee
Similarly, we use $(1-x)^C\ge 1-Cx$ valid for $x\ge -1$ and $C>1$, and we get 
\be
\det(I+\mathcal{W}')\ge 1-\frac{|\mathcal{P}|^2}{d}.\label{eq:ineq2}
\ee
Hence, overall, for the inequalities \cref{eq:ineq1} and \cref{eq:ineq2} to hold, we need to require $d\ge 2|\mathcal{P}|^2$, for which a sufficient condition is $n\ge k^2-3k+13$ (where we have used the results of \cref{cor:dimcom}. We are left to compute $\det(\diag\mathcal{W})=\prod_{\Omega}\tr(\Omega\Omega)=d^{k|\mathcal{P}|}$. Putting it all together, this implies item $(v)$.
\end{proof}
\end{lemma}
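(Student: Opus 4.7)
The plan is to prove the five items of \cref{lem:propertiesofW} largely in order, since each later item builds on the earlier ones. Items (i)--(iii) are essentially bookkeeping built on facts already proven in \cref{lem:hilbertschmidtproductofpaulimonomials}, while items (iv)--(v) require spectral estimates via the Gershgorin circle theorem and a determinant inequality of Bhatia.

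For item (i), I would simply observe that $\mathcal{W}_{\Omega,\Omega'} = \tr(\Omega^{\dagger}\Omega')$ is real (this follows from the fact that Pauli monomials, up to an overall phase absorbed in $M$, are built from Hermitian sums over Paulis, so the trace is a real number determined by counting orbits; alternatively, any imaginary phase disappears after reducing the product $\Omega^{\dagger}\Omega'$ to normal form via \cref{lem:normalform,lem:tracesofpaulimonomials}). Then $\mathcal{W}_{\Omega',\Omega} = \tr(\Omega'^{\dagger}\Omega) = \overline{\tr(\Omega^{\dagger}\Omega')} = \mathcal{W}_{\Omega,\Omega'}$.

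For items (ii) and (iii), I would just invoke \cref{lem:hilbertschmidtproductofpaulimonomials} directly: the diagonal entries equal $d^k$, while the off-diagonal entries are positive integers (they count a nonzero orbit size divided by nothing bigger than a power of $d$ canceling out cleanly) bounded above by $d^{k-1}$, using the bound $k - m(\Omega) - m(\Omega') + 2\min\{m(\Omega),m(\Omega')\} \le k - 1$ already recorded there. The lower bound $\mathcal{W}_{\Omega,\Omega'} \ge 1$ also follows from the same lemma.

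For item (iv), I would apply the Gershgorin circle theorem to $\mathcal{W}$. Each eigenvalue $\lambda(\mathcal{W})$ lies in a disk centered at some diagonal entry $\mathcal{W}_{\Omega,\Omega} = d^k$ of radius at most $\sum_{\Omega' \ne \Omega} \mathcal{W}_{\Omega,\Omega'} \le |\mathcal{P}| d^{k-1}$. This gives the claimed sandwich.

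The main obstacle is item (v). My plan is to factor $\mathcal{W} = \diag(\mathcal{W}) \cdot (I + \mathcal{W}')$ where $\mathcal{W}' \coloneqq \diag(\mathcal{W})^{-1} \mathcal{W} - I$, so $\det(\mathcal{W}) = \det(\diag(\mathcal{W})) \det(I + \mathcal{W}')$. The entries of $\mathcal{W}'$ vanish on the diagonal and are bounded by $d^{-1}$ off-diagonal (from items (ii)--(iii)), so by Gershgorin any eigenvalue of $\mathcal{W}'$ satisfies $|\lambda(\mathcal{W}')| \le |\mathcal{P}|/d$. Then I would invoke Bhatia's inequality (Theorem VI.7.1 of \cite{bhatia_matrix_1997}) to get
\begin{equation}
(1 - |\lambda(\mathcal{W}')|)^{|\mathcal{P}|} \le \det(I + \mathcal{W}') \le (1 + |\lambda(\mathcal{W}')|)^{|\mathcal{P}|}.
\end{equation}
The hard part is turning these exponential bounds into the clean linear form stated in (v). For the upper bound I would use $(1+x)^C \le 1 + 2xC$ valid for $x \le 1/(2C)$, requiring $2|\mathcal{P}|^2 \le d$; for the lower bound I would use Bernoulli's inequality $(1-x)^C \ge 1 - Cx$. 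Finally, I would check that $n \ge k^2 - 3k + 7$ together with the dimension estimate $|\mathcal{P}| \lesssim 2^{(k^2-3k)/2}$ from \cref{cor:dimcom} is sufficient to guarantee $2|\mathcal{P}|^2 \le d = 2^n$, closing the argument.
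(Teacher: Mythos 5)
Your proposal is correct and follows essentially the same route as the paper's proof: items (i)--(iii) from realness of the trace and \cref{lem:hilbertschmidtproductofpaulimonomials}, item (iv) via Gershgorin, and item (v) via the factorization $\mathcal{W}=\diag(\mathcal{W})(I+\mathcal{W}')$, Gershgorin on $\mathcal{W}'$, Bhatia's determinant inequality, and the linearizations $(1+x)^C\le 1+2xC$ and $(1-x)^C\ge 1-Cx$ under the condition $2|\mathcal{P}|^2\le d$ guaranteed by $n\ge k^2-3k+7$ and \cref{cor:dimcom}. No gaps to report.
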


The following theorem, on the other hand, examines the properties of the inverse, specifically those of the Clifford-Weingarten functions.

\begin{theorem}[Asymptotics of Clifford-Weingarten functions]\label{lem:asymptoticweingarten}
Let $\mathcal{W}$ the Gram-Schmidt matrix, and let $\mathcal{W}^{-1}$ be its inverse. Let $n\ge k^2-3k+13$. Then the properties  
   \begin{align}
   \left|(\mathcal{W}^{-1})_{\Omega,\Omega}-\frac{1}{d^k}\right|&\le \frac{6|\mathcal{P}|^2}{d^{k+1}}\,,\\
    |(\mathcal{W}^{-1})_{\Omega,\Omega'}|&\le \frac{5|\mathcal{P}|^2}{d^{k+1}}, 
    \end{align}
    hold true.
    \end{theorem}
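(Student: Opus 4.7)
The plan is to analyze $\mathcal{W}^{-1}$ via a Neumann series expansion after normalizing by the diagonal. Following the strategy already used to establish item (v) of \cref{lem:propertiesofW}, I would write
\begin{equation*}
    \mathcal{W} = d^k(I+A), \qquad A \coloneqq d^{-k}\mathcal{W} - I,
\end{equation*}
where by items (ii) and (iii) of \cref{lem:propertiesofW} the matrix $A$ has zero diagonal and all off-diagonal entries bounded in magnitude by $d^{-1}$. Applying Gershgorin's theorem (exactly as in the derivation of \cref{eq:eigenvaluewprime}), every eigenvalue of $A$ satisfies $|\lambda(A)| \leq |\mathcal{P}|/d$. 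The hypothesis $n \geq k^2 - 3k + 7$, combined with the dimension estimate of \cref{cor:dimcom}, guarantees $d \geq 2|\mathcal{P}|^2$, which in particular yields $|\mathcal{P}|/d \leq 1/(2|\mathcal{P}|) \leq 1/2$. This ensures $\|A\| < 1$ and that the Neumann expansion $(I+A)^{-1} = \sum_{m \geq 0}(-A)^m$ converges absolutely.

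To extract entry-wise bounds from $\mathcal{W}^{-1} = d^{-k}\sum_{m \geq 0}(-A)^m$, I would iterate the elementary estimate
\begin{equation*}
    \bigl|(A^m)_{\Omega,\Omega'}\bigr| \leq \frac{|\mathcal{P}|^{m-1}}{d^m},
\end{equation*}
which follows from $|A_{\Omega,\Omega'}| \leq d^{-1}$ together with the fact that each matrix product introduces a sum over $|\mathcal{P}|$ many intermediate indices. For the diagonal entries, the vanishing $A_{\Omega,\Omega}=0$ kills the $m=1$ term, so
\begin{equation*}
    \bigl|(\mathcal{W}^{-1})_{\Omega,\Omega} - d^{-k}\bigr| \leq d^{-k}\sum_{m\geq 2} \frac{|\mathcal{P}|^{m-1}}{d^m} = \frac{|\mathcal{P}|}{d^{k+2}}\cdot\frac{1}{1-|\mathcal{P}|/d},
\end{equation*}
and $|\mathcal{P}|/d \leq 1/2$ closes the geometric series to give the claimed bound (in fact, a tighter one). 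For the off-diagonal entries, the analogous sum begins at $m=1$, producing
\begin{equation*}
    \bigl|(\mathcal{W}^{-1})_{\Omega,\Omega'}\bigr| \leq \frac{1}{d^{k+1}}\cdot\frac{1}{1-|\mathcal{P}|/d},
\end{equation*}
which, upon invoking the same bound on $|\mathcal{P}|/d$, yields the stated inequality.

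The argument is essentially routine once the normalization is fixed; the only real obstacle is the bookkeeping needed to pin down the explicit constants $5$ and $6$. That reduces to verifying that the geometric factor $1/(1-|\mathcal{P}|/d)$ can be uniformly bounded (by $2$, say) under the hypothesis $n \geq k^2 - 3k + 7$, which is precisely the regime in which \cref{cor:dimcom} forces $|\mathcal{P}|/d$ to be small. No finer information about the individual entries of $A$ is required, and the qubit factorization of Pauli monomials plays no role at this level of analysis — the input is purely the structural estimates on $\mathcal{W}$ already proven in \cref{lem:propertiesofW}.
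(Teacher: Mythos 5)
Your proof is correct, and it takes a genuinely different route from the paper's. The paper obtains the diagonal bound through the cofactor identity $(\mathcal{W}^{-1})_{\Omega,\Omega}=\det(\mathcal{W}^{/\Omega})/\det(\mathcal{W})$ combined with the determinant estimates of item (v) of \cref{lem:propertiesofW} (themselves proved via Gershgorin and Bhatia's inequality), and it treats the off-diagonal entries by a separate chain of operator-norm inequalities comparing $\mathcal{W}^{-1}$ with $\diag(\mathcal{W})^{-1}$. You instead normalize $\mathcal{W}=d^{k}(I+A)$, observe from items (ii)--(iii) of \cref{lem:propertiesofW} that $A$ has zero diagonal and entries of magnitude at most $1/d$, and expand $(I+A)^{-1}$ as an entry-wise absolutely convergent Neumann series; your bound $|(A^{m})_{\Omega,\Omega'}|\le |\mathcal{P}|^{m-1}/d^{m}$ is correct, the vanishing diagonal of $A$ kills the $m=1$ term in the diagonal case, and the symmetry of $\mathcal{W}$ together with Gershgorin gives $\rho(A)\le |\mathcal{P}|/d\le 1/2$, which simultaneously identifies the series limit as $\mathcal{W}^{-1}$ and closes the geometric series. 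This is shorter, bypasses determinants entirely, and in fact yields strictly sharper estimates (about $2|\mathcal{P}|/d^{k+2}$ for the diagonal deviation and $2/d^{k+1}$ off-diagonal), from which the stated constants $6$ and $5$ follow with room to spare; what it forgoes is any reuse of items (iv)--(v) of \cref{lem:propertiesofW}, on which the paper's argument leans. Both routes rest on the same external input: items (i)--(iii) of \cref{lem:propertiesofW} and the claim that $n\ge k^{2}-3k+7$ together with \cref{cor:dimcom} makes $|\mathcal{P}|/d$ small (the paper invokes the stronger form $2|\mathcal{P}|^{2}\le d$), so you introduce no assumption beyond what the paper already uses.
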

    
\begin{proof}
By the definition of inverse, $(\mathcal{W}^{-1})_{\Omega,\Omega}=\frac{\det(\mathcal{W}^{/\Omega})}{\det(\mathcal{W})}$, where $\mathcal{W}^{/\Omega}$ denotes the $(|\mathcal{P}|-1)\times (|\mathcal{P}|-1)$ matrix obtained by $\mathcal{W}$ by erasing the row and the column corresponding to $\Omega$. If $n\ge k^2-3k +13 $, and using the same proof as for item $(v)$ of \cref{lem:propertiesofW}, it can be shown that 
\be
\det(\diag\mathcal{W}^{/i})\left(1-\frac{|\mathcal{P}|^2}{d}\right)\le\det(\mathcal{W}^{/i})\le\det(\diag\mathcal{W}^{/i})\left(1+\frac{2|\mathcal{P}|^2}{d}\right).
\ee
Notice that $\det(\diag\mathcal{W}^{/i})=d^{k(|P^{(k)}|-1)}$. Let us use item $(v)$ of \cref{lem:propertiesofW}, to upper and lower bound $(\mathcal{W}^{-1})_{\Omega,\Omega}$. First
\be\label{lem2upbound}
(\mathcal{W}^{-1})_{\Omega,\Omega}=\frac{\det(\mathcal{W}^{/\Omega})}{\det(\mathcal{W})}\le \frac{\det(\diag\mathcal{W}^{/\Omega})}{\det(\diag\mathcal{W})}\frac{\left(1+\frac{2|\mathcal{P}|^2}{d}\right)}{\left(1-\frac{|\mathcal{P}|^2}{d}\right)}\le \frac{1}{d^k}\left(1+\frac{6|\mathcal{P}|^2}{d}\right),\quad \text{if}\,\, 2|\mathcal{P}|^2\le d,
\ee
where we have used  the fact that $\frac{(1+x)}{(1-x/2)}\le 1+3x$ for $x\le 1$. Similarly for the lower bound we get
\be
\mathcal{W}_{\Omega,\Omega}^{-1}=\frac{\det(\mathcal{W}^{/\Omega})}{\det(\mathcal{W})}\ge \frac{1}{d^k}\frac{\left(1-\frac{|\mathcal{P}|^2}{d}\right)}{\left(1+\frac{2|\mathcal{P}|^2}{d}\right)}\ge \frac{1}{d^k}\left(1-\frac{4|\mathcal{P}|^2}{d}\right),\quad \text{if}\,\,2|\mathcal{P}|^2\le d.
\label{lem2llbound}
\ee
Notice that together \cref{lem2upbound,lem2llbound} prove the first bound. To prove the second bound, we make use of
\be
|\mathcal{W}^{-1}_{i,j}|&\le\|\mathcal{W}^{-1}-\diag(\mathcal{W}^{-1})\|_\infty\\&\le \|\mathcal{W}^{-1}-\diag(\mathcal{W})^{-1}\|_\infty+\|\diag(\mathcal{W})^{-1}-\diag(\mathcal{W}^{-1})\|_\infty
\\&\le\|\mathcal{W}^{-1}\|_\infty\|\mathcal{W}\diag(\mathcal{W})^{-1}-I\|_\infty+\|\diag(\mathcal{W})^{-1}-\diag(\mathcal{W}^{-1})\|_\infty\\
&=\|\mathcal{W}^{-1}\|_\infty\|\mathcal{W}'\|_{\infty}+\|\diag(\mathcal{W})^{-1}-\diag(\mathcal{W}^{-1})\|_\infty
\ee
where, similarly to \cref{lem:propertiesofW}, we have defined $\mathcal{W}'\coloneqq\mathcal{W}\diag(\mathcal{W})^{-1}-I$. Now, since $\|\mathcal{W}^{-1}\|_{\infty}\le \frac{1}{\lambda_{\min}(\mathcal{W})}$, we can exploit item $(iv)$ of \cref{lem:propertiesofW}, to bound 
\be
\|\mathcal{W}^{-1}\|_{\infty}\le\frac{1}{d^k}\left(1+\frac{2|\mathcal{P}|}{d}\right),\quad \text{if}\,\, 2|\mathcal{P}|\le d.
\ee
From \cref{eq:eigenvaluewprime} in the proof of \cref{lem:propertiesofW}, we know that $\|\mathcal{W}'\|_\infty\le \frac{|\mathcal{P}|}{d}$. We are left in upper bounding $\|\diag(\mathcal{W})^{-1}-\diag(\mathcal{W}^{-1})\|_\infty$. From point $(ii)$ of \cref{lem:propertiesofW} and point (i) above, we have
\be
\|\diag(\mathcal{W})^{-1}-\diag(\mathcal{W}^{-1})\|_\infty=\max_{\Omega}\Big(\frac{1}{\mathcal{W}_{\Omega,\Omega}}-(\mathcal{W}^{-1})_{\Omega,\Omega}\Big)\le \frac{4|\mathcal{P}|^2}{d^{k+1}}.
\ee
Putting it altogether, we finally have
\be
|\mathcal{W}_{i,j}^{-1}|\le\frac{|\mathcal{P}|}{d^{k+1}}+\frac{2|\mathcal{P}|^2}{d^{k+2}}+\frac{4|\mathcal{P}|^2}{d^{k+1}}\le \frac{5|\mathcal{P}|^2}{d^{k+1}},\quad \text{if}\,\, 2|\mathcal{P}|^2\le d\,.
\label{eqeqeq2}
\ee
To conclude the proof it is sufficient to note that $n\ge k^2-3k+13$ is sufficient to ensure $2|\mathcal{P}|^2\le d$ due to the dimension of the Clifford group \cref{cor:dimcom}.
\end{proof}

As an application of the asymptotics of Clifford-Weingarten functions, in the next lemma, we show that the only contributing terms are the ones coming from unitary Pauli monomials whenever we twirl an operator $O$ with constant operator norm.
\begin{lemma}[Unitary Pauli monomials dominate]\label{lem:unitarymonomialsdominate} Fix $n\ge k^2-3k+13$. Let $O$ be an operator with $\|O\|_{\infty}\le K$, and let $c_{\Omega}(O)$ be the coefficients of the twirling operator $\Phi_{\cl}^{(k)}(O)$ for $\Omega\in\mathcal{P}$, then
\be
|c_{\Omega}(O)|\le\frac{6K|\mathcal{P}|^{3}}{d},\quad\forall\Omega\in\mathcal{P}\setminus \mathcal{P}_U
\ee
i.e., for any non-unitary monomial $\Omega$, it holds that $|c_{\Omega}(O)|=O(2^{\frac{3}{2}k^2-n})$. 
\begin{proof}
    The proof directly descends from \cref{lem:asymptoticweingarten}. Indeed, we have
    \be
|c_{\Omega}(O)|&=\left|\sum_{\Omega'}(\mathcal{W}^{-1})_{\Omega,\Omega'}\tr(\Omega' O)\right|\le |\tr(\Omega O)|\left(\frac{1}{d^k}+\frac{6|\mathcal{P}|^2}{d^{k+1}}\right)+\frac{5|\mathcal{P}|^3}{d^{k+1}}\max_{\Omega'\neq\Omega}|\tr(\Omega'O)|.
    \ee
Now notice that if $\Omega\not\in\mathcal{P}_U$, then $\|\Omega\|_1\le d^{k-1}$. In fact, in normal form (see \cref{lem:normalform}) we can write $\Omega=\Omega_U\Omega_P$ then $\|\Omega\|_1=\|\Omega_P\|_1=\tr(\Omega_P)\le d^{k-1}$, as proven in \cref{lem:tracesofpaulimonomials}. However, if $\Omega\in\mathcal{P}_U$, we have $\|\Omega_U\|_1=d^k$. Therefore, we can bound
\be
|c_{\Omega}(O)|\le \|O\|_{\infty}\left(\frac{1}{d}+\frac{6|\mathcal{P}|^2}{d^{2}}+\frac{5|\mathcal{P}|^3}{d}\right)
\ee
where we have bounded $|\tr(\Omega O)|\le \|\Omega\|_1\|O\|_{\infty}$. Upper bounding $1+6|\mathcal{P}|^2/d\le|\mathcal{P}|^3$, which holds for $n\ge k^2-3k+13$, concludes the proof.
\end{proof}
\end{lemma}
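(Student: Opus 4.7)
The plan is to bound $c_\Omega(O)$ directly via the Clifford-Weingarten expansion of Lemma \ref{lem:weingartencalculusclifford}, using the asymptotic estimates of Theorem \ref{lem:asymptoticweingarten} to control the Weingarten entries and the normal-form structure of Pauli monomials (Lemma \ref{lem:normalform} together with Lemma \ref{le:unitaryprojmon}) to control the trace factors. First I would write
\begin{equation*}
    c_\Omega(O) \;=\; \sum_{\Omega'\in\mathcal{P}} (\mathcal{W}^{-1})_{\Omega,\Omega'}\,\tr(\Omega'\,O)
\end{equation*}
and split the sum into the diagonal contribution $(\mathcal{W}^{-1})_{\Omega,\Omega}\,\tr(\Omega O)$ and the off-diagonal contribution, applying the triangle inequality to both. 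Hölder's inequality $|\tr(\Omega' O)|\le\|\Omega'\|_1\|O\|_\infty\le K\|\Omega'\|_1$ then reduces everything to bounding the Schatten-$1$ norms of the monomials and the Weingarten entries.

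The Weingarten entries are directly controlled by Theorem \ref{lem:asymptoticweingarten}, since $n\ge k^2-3k+7$ is exactly the hypothesis required there, giving the diagonal estimate of order $d^{-k}$ (with correction $6|\mathcal{P}|^2d^{-(k+1)}$) and the off-diagonal estimate of order $5|\mathcal{P}|^2 d^{-(k+1)}$. For the trace norms, the generic bound $\|\Omega'\|_1\le d^k$ (saturated by unitary monomials) is always available and is all that is needed for the off-diagonal sum, yielding an off-diagonal contribution of at most $5K|\mathcal{P}|^3/d$.

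The main technical point — and the only place where the hypothesis $\Omega\notin\mathcal{P}_U$ is used — is the sharper bound $\|\Omega\|_1\le d^{k-1}$ for the diagonal trace factor. For this I would invoke the normal form $\Omega=\Omega_P\Omega_U$ of Lemma \ref{lem:normalform}. Since $\Omega_U$ is unitary, $\|\Omega\|_1=\|\Omega_P\|_1=\tr(\Omega_P)$, and by Lemma \ref{le:unitaryprojmon}, $\Omega_P/d^{m(\Omega_P)}$ is a projector of rank $d^{k-2m(\Omega_P)}$, so $\tr(\Omega_P)=d^{k-m(\Omega_P)}$. The assumption $\Omega\notin\mathcal{P}_U$ forces $m(\Omega_P)\ge 1$, hence $\|\Omega\|_1\le d^{k-1}$. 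Combined with the diagonal Weingarten estimate, the diagonal contribution is at most $K/d+6K|\mathcal{P}|^2/d^2$.

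Adding diagonal and off-diagonal contributions gives $|c_\Omega(O)|\le K\big(1/d+6|\mathcal{P}|^2/d^2+5|\mathcal{P}|^3/d\big)$. The final step is the crude simplification $1+6|\mathcal{P}|^2/d\le |\mathcal{P}|^3$, which holds comfortably under $n\ge k^2-3k+7$ (this condition guarantees $d\ge 2|\mathcal{P}|^2$ via the dimension formula \cref{cor:dimcom}, which suffices for the inequality whenever $|\mathcal{P}|\ge 2$; the case $|\mathcal{P}|=1$ is trivial). This yields the claimed bound $6K|\mathcal{P}|^3/d$. The only place where one has to be a little careful — and hence what I expect to be the most delicate step — is verifying the normal-form identity $\|\Omega\|_1=\tr(\Omega_P)$ correctly and tracking the order $m(\Omega_P)$; everything else is an exercise in collecting the estimates already established.
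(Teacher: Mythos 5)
Your proposal is correct and follows essentially the same route as the paper: expand $c_\Omega(O)$ in the Clifford--Weingarten basis, split diagonal from off-diagonal, control the Weingarten entries with \cref{lem:asymptoticweingarten}, and use the normal form to get $\|\Omega\|_1=\tr(\Omega_P)\le d^{k-1}$ precisely because $\Omega\notin\mathcal{P}_U$, then collect the terms with the crude bound $1+6|\mathcal{P}|^2/d\le|\mathcal{P}|^3$. The only difference is cosmetic: you justify $\tr(\Omega_P)\le d^{k-1}$ via \cref{le:unitaryprojmon} rather than \cref{lem:tracesofpaulimonomials}, which is an equally valid source for the same fact.
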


\subsubsection{Clifford orbit of quantum states}\label{sec:clorbitquantumstates}
In \cref{Sec: orbitofhaarstates}, we analyzed the orbit of $k$-copies of a pure quantum state, $\ketbra{\psi}^{\otimes k}$, under the action of the full unitary group. As anticipated, the orbit remains the same for any reference state vector $\ket{\psi}$ and is proportional to the projector onto the symmetric subspace, $\Pi_{\sym} = \frac{1}{k!} \sum_{\pi \in S_k} T_{\pi}$. In contrast, the Clifford orbit of a pure quantum reference state $\ketbra{\psi}^{\otimes k}$ is more intricate and strongly depends on the choice of the reference state vector $\ket{\psi}$. However, the calculations for the Clifford orbit of pure quantum states are more tractable compared to those for general operators $O$.

Thanks to the symmetry $T_{\pi}\ket{\psi}^{\otimes k}=\ket{\psi}^{\otimes k}$ of pure states under permutations, we can write the twirling operator applied on $\ketbra{\psi}^{\otimes k}$ in terms of Pauli monomials modulo the equivalence class given by permutations. 
\begin{lemma}[Clifford orbit of a pure quantum state] \label{lem:clifforbitofpurestate}
Let $\ket{\psi}$ be a quantum state vector and define $\psi=\ketbra{\psi}$ its density matrix. Define the following equivalence relation $\Omega\sim\Omega'$ if and only if there exists $\pi,\sigma\in S_k$ such that $\Omega=T_{\pi}\Omega'T_{\sigma}$. Define $\mathcal{P}/S_k$ as the set of representative elements of the equivalence relation. The Clifford twirling operator $\Phi^{(k)}_{\cl}(\cdot)$, applied on $k$ copies of $\psi$ reads
\be
\Phi_{\cl}^{(k)}(\psi^{\otimes k})=\sum_{\Omega \in \mathcal{P}/ S_k}p_{\Omega}(\psi)\frac{\Pi_{\sym}\Omega\Pi_{\sym}}{\tr(\Omega\Pi_{\sym})}\label{eq:cliffordorbitstate}
\ee
where $p_{\Omega}(\psi)$ are quasi-probabilities, i.e., $\sum_{\Omega}p_{\Omega}(\psi)=1$, but in general they are not positive and $\Pi_{\sym}=\frac{1}{k!}\sum_{\pi\in S_k}\pi$. Moreover,
\be
p_{\Omega}(\psi)=\sum_{\Omega'\in \mathcal{P}/ S_k}(\mathcal{S}^{-1})_{\Omega,\Omega'}\frac{\tr(\Omega'\psi^{\otimes k})}{\tr(\Pi_{\sym}\Omega')}
\ee
where $\mathcal{S}_{\Omega,\Omega'}\coloneqq\frac{\tr(\Omega\Pi_{\sym}\Omega'\Pi_{\sym})}{\tr(\Pi_{\sym}\Omega)\tr(\Pi_{\sym}\Omega')}$.
\begin{proof}
The proof of \cref{eq:cliffordorbitstate} follows identically compared to the derivation of Clifford-Weingarten functions, see \cref{lem:weingartencalculusclifford}. 
Let us show that $p_{\Omega}(\psi)$ are only quasi-probabilities. The fact that they sum to $1$, simply follows from $\tr(\Phi_{\cl}^{(k)}(\psi^{\otimes k}))=1$. It suffices to compute the average over a Haar random state of $p_{\Omega}(\psi)$:
\be
\int\de\psi p_{\Omega}(\psi)=\sum_{\Omega'\in \mathcal{P}/ S_k}(\mathcal{S}^{-1})_{\Omega,\Omega'}\int\de\psi\frac{\tr(\Omega'\psi^{\otimes k})}{\tr(\Pi_{\sym}\Omega')}=\sum_{\Omega'\in \mathcal{P}/ S_k}(\mathcal{S}^{-1})_{\Omega,\Omega'}\frac{1}{\tr(\Pi_{\sym})}.
\ee
Now, it is sufficient to note that $\mathcal{S}_{\Omega',\mathbb{1}}=\frac{\tr(\Pi_{\sym})}{\tr(\Pi_{\sym})\tr(\Pi_{\sym})}=\frac{1}{\tr(\Pi_{\sym})}$. Hence
\be
\int\de \psi p_{\Omega}(\psi)=\delta_{\Omega,\mathbb{1}}
\ee
This demonstrates that $p_{\Omega}(\psi)$ is generally non-positive. Suppose, for contradiction, that $p_{\Omega}(\psi) \geq 0$ for all states. Given that the average is zero, this would imply $p_{\Omega}(\psi) = 0$ for all $\Omega \neq \mathbb{1}$ and for any state. Consequently, this would lead to $\Phi_{\cl}^{(k)}(\psi^{\otimes k}) = \Phi_{\haar}^{(k)}(\psi^{\otimes k})$ for any state $\psi$, meaning that the Clifford orbit—such as that of stabilizer states—forms a projective $k$-design for all $k$, which is clearly a contradiction. Therefore, $p_{\Omega}(\psi)$ must be non-positive in general.
\end{proof}
\end{lemma}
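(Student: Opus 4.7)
The plan is to start from the decomposition of $\Phi_{\cl}^{(k)}(\psi^{\otimes k})$ in the reduced Pauli monomial basis $\mathcal{P}$ established in \cref{lem:weingartencalculusclifford,lem:linearindependency}, and then exploit the permutation symmetry of pure tensor-power states to collapse this expansion onto the quotient $\mathcal{P}/S_k$. Concretely, since $T_\pi \ket{\psi}^{\otimes k}=\ket{\psi}^{\otimes k}$ for every $\pi\in S_k$, we have $\psi^{\otimes k}=\Pi_{\sym}\psi^{\otimes k}\Pi_{\sym}$, and therefore $\Phi_{\cl}^{(k)}(\psi^{\otimes k})=\Pi_{\sym}\Phi_{\cl}^{(k)}(\psi^{\otimes k})\Pi_{\sym}$ because permutations lie in $\com(\mathcal{C}_n^{\otimes k})$ and commute with the Clifford twirling. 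Expanding in $\mathcal{P}$ and symmetrizing, every Pauli monomial $\Omega$ gets replaced by $\Pi_{\sym}\Omega\Pi_{\sym}$, and elements in the same $S_k\times S_k$-orbit produce identical symmetrized operators, justifying the sum over $\mathcal{P}/S_k$. I will normalize each basis element by $\tr(\Omega\Pi_{\sym})$ to make the ansatz dimensionally clean and to facilitate the trace normalization argument.

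Next, I will determine the coefficients $p_\Omega(\psi)$ by the Gram-inversion trick already used in \cref{lem:weingartencalculusclifford}. Taking the Hilbert--Schmidt inner product of both sides of the ansatz with $\Omega'\in\mathcal{P}/S_k$ and using once more that $\Phi_{\cl}^{(k)}$ is self-adjoint and fixes $\Omega'$, one obtains
\be
\frac{\tr(\Omega'\psi^{\otimes k})}{\tr(\Pi_{\sym}\Omega')} \;=\; \sum_{\Omega\in\mathcal{P}/S_k} \mathcal{S}_{\Omega',\Omega}\, p_\Omega(\psi),
\ee
with $\mathcal{S}_{\Omega,\Omega'}=\tr(\Omega\Pi_{\sym}\Omega'\Pi_{\sym})/[\tr(\Pi_{\sym}\Omega)\tr(\Pi_{\sym}\Omega')]$ the reduced Gram matrix. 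Inverting yields the stated formula. To confirm that $\sum_\Omega p_\Omega(\psi)=1$, I will simply take the trace of the ansatz, noting that $\tr[\Pi_{\sym}\Omega\Pi_{\sym}]/\tr(\Omega\Pi_{\sym})=1$ since $\Pi_{\sym}$ is a projector, and that $\tr(\Phi_{\cl}^{(k)}(\psi^{\otimes k}))=1$.

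For the fact that the $p_\Omega(\psi)$ are only quasi-probabilities, I will average over the Haar measure and use $\mathcal{S}_{\Omega',\mathbb{1}}=1/\tr(\Pi_{\sym})$ together with the identity $\int \de\psi\,\tr(\Omega'\psi^{\otimes k})=\tr(\Omega'\Pi_{\sym})/\tr(\Pi_{\sym})$ from \cref{Sec: orbitofhaarstates} to conclude $\int \de\psi\, p_\Omega(\psi)=\delta_{\Omega,\mathbb{1}}$. Positivity would then force $p_\Omega(\psi)=0$ for all $\Omega\neq\mathbb{1}$ and all $\psi$, which would make the Clifford orbit of any pure state coincide with its Haar orbit---contradicting the fact that, e.g., stabilizer states do not form projective $k$-designs for large $k$ (a consequence of \cref{th:algebraicstructurecommutantofclifford} via the nontrivial primitives $\Omega_4,\Omega_6,\Omega_k$).

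The main technical subtlety, more than an obstacle, will be to verify carefully that $\{\Pi_{\sym}\Omega\Pi_{\sym}\}_{\Omega\in\mathcal{P}/S_k}$ is genuinely the right quotient basis and that $\mathcal{S}$ is invertible (at least as a pseudoinverse): one must check that two reduced Pauli monomials related by left/right multiplication by permutations produce identical symmetrized operators while inequivalent classes produce linearly independent ones (possibly only for $k\le n+1$, with a Moore--Penrose inverse handling the degenerate regime, as in the Weingarten case of \cref{cor:weingartencalculushaar}). Once this indexing point is settled, the rest of the argument is a direct repetition of the Weingarten-style derivation plus a short Haar-averaging computation.
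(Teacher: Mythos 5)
Your proposal is correct and follows essentially the same route as the paper: the paper likewise derives \cref{eq:cliffordorbitstate} by repeating the Clifford--Weingarten Gram-inversion of \cref{lem:weingartencalculusclifford} on the symmetrized quotient basis, obtains $\sum_\Omega p_\Omega(\psi)=1$ from trace normalization, and proves non-positivity via the Haar average $\int\de\psi\,p_\Omega(\psi)=\delta_{\Omega,\mathbb{1}}$ combined with the same design-based contradiction. Your explicit justification of passing to $\mathcal{P}/S_k$ via $\psi^{\otimes k}=\Pi_{\sym}\psi^{\otimes k}\Pi_{\sym}$ and the remark on using a pseudoinverse of $\mathcal{S}$ in the degenerate regime simply spell out details the paper leaves implicit.
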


\begin{lemma}[Clifford orbit of a stabilizer state~\cite{gross_schurweyl_2019}]
\label{le:staborb}
While the Clifford orbit depends on the reference state, it attains a simple value for stabilizer states. It is possible to show that, for every $n,k\ge 1$, it holds that 
\be
\Phi_{\cl}^{(k)}\left(\ketbra{0}^{\otimes k}\right)=\frac{1}{Z_{d,k}}\sum_{\Omega\in\mathcal{P}}\Omega
\ee
with $Z_{d,k}=d\prod_{i=0}^{k-2}(d+2^i)$.
\begin{proof}
This lemma has been proven in Ref.~\cite{gross_schurweyl_2019}, which we include here for completeness. For this proof, we explicitly denote the dependence of $n$ in the twirling operator as $\Phi_{\cl}^{(k)}\left(\ketbra{0_n}^{\otimes k}\right)$. For any $n' < n$, uniformity over Clifford circuits implies that  
\be
\left\langle 0_{n-n'}^{\otimes k} | \Phi_{\cl}^{(k)}\left(\ketbra{0_n}^{\otimes k}\right) | 0_{n-n'}^{\otimes k} \right\rangle \propto \Phi_{\cl}^{(k)}\left(\ketbra{0_{n'}}^{\otimes k}\right) \label{eq:proportionality}.
\ee
Now, consider $n \geq k - 1$. From \cref{lem:weingartencalculusclifford}, we know that we can express  
\be
\Phi_{\cl}^{(k)}(\ketbra{0_n}^{\otimes k}) = \sum_{ \Omega \in \mathcal{P}} c_{n,\Omega} \Omega^{(n)},
\ee
here we introduced the notation $\Omega^{(n)}$ to make explicit the dependence from the number of qubits of the system we are considering, and we use the fact that $\tr(\Omega^{(n)} \ketbra{0_n}^{\otimes k}) = 1$, which holds for any $k$-tensor copies of a stabilizer state. We also define $c_{n,\Omega} = \sum_{\Omega'} (\mathcal{W}^{-1})_{\Omega,\Omega'}$. Since $\langle 0_{n-n'}^{\otimes k} | \Omega^{(n)} | 0_{n-n'}^{\otimes k} \rangle = \Omega^{(n')}$ due to the tensor product structure $\Omega^{(n)} = \omega^{\otimes n}$ (see \cref{lem:monomialsfactorizeonqubits}), \cref{eq:proportionality} implies that  
\be
\sum_{\Omega \in \mathcal{P}} c_{n,\Omega} \Omega^{(n)} \propto \sum_{\Omega \in \mathcal{P}} c_{n',\Omega} \Omega^{(n')}.
\ee
This necessarily implies that $c_{n,\Omega} = c_{n} c_{\Omega}$. Since this equation holds for any $n'$, in particular for $n' < k - 1$, it must hold for all $n$. We can thus express the twirling operator for any $n$ as  
\begin{equation}
\Phi_{\cl}^{(k)}(\ketbra{0_n}^{\otimes k}) \propto \sum_{\Omega\in\mathcal P} c_{\Omega} \Omega^{(n)}.
\end{equation}

Next, we show that the coefficients $c_{\Omega}$ must be equal. Consider the expectation value of a Pauli monomial $\Omega'$. On one hand, we have  
\begin{equation}
\tr(\Omega^{'(n)} \Phi_{\cl}^{(n)}(\ketbra{0_n}^{\otimes k})) = \langle 0_n | \Omega^{'(n)} | 0_n \rangle = 1.
\end{equation}
On the other hand, we have  
\be
1 = c_{n} \sum_{\Omega\in \mathcal P} c_{\Omega} \tr(\Omega^{'(n)} \Omega^{(n)}) = c_{n} d^{k} c_{\Omega'} + \sum_{\Omega \neq \Omega'\in\mathcal P} c_{\Omega} \tr(\Omega^{'(n)} \Omega^{(n)}) = d^{k} c_{n} (c_{\Omega'} + O(d^{-1})) \quad \forall \Omega'.
\ee
Taking the limit of large $d$, it follows that all coefficients $c_{\Omega'}$ must be equal. However, since $c_{\Omega'}$ is independent of $n$, the result holds for every $n$. Thus for any $n$ and $k$, we can write  
\be
\Phi_{\cl}^{(k)}(\ketbra{0}^{\otimes k}) = \frac{1}{Z_{d,k}} \sum_{\Omega \in \mathcal{P}} \Omega,
\ee
where  $Z_{d,k} = \sum_{\Omega \in \mathcal{P}} \tr(\Omega)$, which can be computed directly from \cref{lem:tracesofpaulimonomials}, using the fact that $\tr(\Omega) = d^{k - m}$ and similarly from \cref{cor:dimcom} one can obtain that the number of reduced Pauli monomials with order $m$ is given by $\binom{k-1}{m}_2\times 2^{m(m-1)/2}$. 

\end{proof}
\end{lemma}

\begin{proposition}[Clifford Weingarten notable identities]
\label{prop:weingarten}
Let $G$ be the Gram matrix of Pauli monomials with entries $G_{\Omega', \Omega} \coloneqq \Tr(\Omega'^\dagger \Omega)$, and let $\mathcal{W} \coloneqq G^+$ be its Moore–Penrose pseudoinverse (the Weingarten matrix). Then, for all $\Omega' \in \mathcal{P}$,
\begin{align}
    \sum_{\Omega \in \mathcal{P}} G_{\Omega', \Omega} = Z_{d,k}\,, \quad\quad
    \sum_{\Omega \in \mathcal{P}} \mathcal{W}_{\Omega', \Omega} = \frac{1}{Z_{d,k}}\,,
\end{align}
where $Z_{d,k} = d \prod_{i=0}^{k-2}(d + 2^i)$. Notably, both sums are independent of $\Omega'$.
\end{proposition}

\begin{proof}
Let $\psi_0 \coloneqq \ketbra{0}^{\otimes k}$. The Clifford moment operator satisfies $\Phi_{\mathrm{Cl}}^{(k)}(\psi_0) = \frac{1}{Z_{d,k}} \sum_{\Omega \in \mathcal{P}} \Omega$, as shown in Lemma~\ref{le:staborb}. Taking the Hilbert--Schmidt inner product with any $\Omega' \in \mathcal{P}$, and using that $\Phi_{\mathrm{Cl}}^{(k)}$ is self-adjoint, that $\Omega'$ lies in the commutant, and that $\Tr(\Omega' \psi_0) = 1$, gives $\frac{1}{Z_{d,k}} \sum_{\Omega \in \mathcal{P}} \Tr(\Omega'^\dagger \Omega) = 1$, establishing the first identity.

For the second identity, write $\Phi_{\mathrm{Cl}}^{(k)}(\psi_0) = \sum_{\Omega'} c_{\Omega'}\, \Omega'$, where the vector $c$ of coefficients satisfies (see Lemma~\ref{lem:weingartencalculusclifford}) $c = \mathcal{W} \mathbf{1}$, with $\mathbf{1} \in \mathbb{R}^{|\mathcal{P}|}$ the all-ones vector, since $\Tr(\Omega \psi_0) = 1$ for all $\Omega$. Equivalently, $c$ satisfies $G c = \mathbf{1}$. A particular solution is given by $c = \frac{1}{Z_{d,k}} \cdot \mathbf{1}$, since all rows of $G$ sum to $Z_{d,k}$, as shown above.
It is well known that the Moore--Penrose pseudoinverse selects the unique solution of minimal Euclidean norm. Consider any other solution $\tilde{c}$ to $G c = \mathbf{1}$. Then, by basic linear algebra, we can write $\tilde{c} = \frac{1}{Z_{d,k}} \cdot \mathbf{1} + \mathbf{v}$ for some $\mathbf{v} \in \ker(G)$. Since $G$ is symmetric, its kernel is orthogonal to its image, and $\mathbf{1} \in \mathrm{Im}(G)$, so $\langle \mathbf{1}, \mathbf{v} \rangle = 0$.
It follows that
\begin{align}
\| \tilde{c} \|_2^2 = \left\| \frac{1}{Z_{d,k}} \cdot \mathbf{1} \right\|_2^2 + \| \mathbf{v} \|_2^2 + 2 \left\langle \frac{1}{Z_{d,k}} \cdot \mathbf{1}, \mathbf{v} \right\rangle = \left\| \frac{1}{Z_{d,k}} \cdot \mathbf{1} \right\|_2^2 + \| \mathbf{v} \|_2^2 \ge \left\| \frac{1}{Z_{d,k}} \cdot \mathbf{1} \right\|_2^2,
\end{align}
with equality if and only if $\mathbf{v} = 0$. Hence, the unique minimal-norm solution is $c = \mathcal{W} \cdot \mathbf{1} = \frac{1}{Z_{d,k}} \cdot \mathbf{1}$, from which it follows that each row of $\mathcal{W}$ sums to $1/Z_{d,k}$, as claimed.
\end{proof}

\subsection{Magic-state resource theory}\label{Sec:magicstateresourcetheory}
In this section, we apply the framework developed for the Clifford commutant to a closely related topic: magic-state resource theory. As in other resource theories, magic-state resource theory focuses on defining resource monotones, which are functions that do
not increase under Clifford operations~\cite{veitch_resource_2014,howard_application_2017,seddon_quantifying_2019}. In a broader context, Clifford operations encompass not only Clifford unitary transformations but also computational basis measurements, partial traces, and operations conditioned on measurement outcomes. The presence of these general operations significantly complicates the search for computable magic monotones. As a result, most magic monotones involve a minimization procedure over the entire set of stabilizer states~\cite{liu_manybody_2022,seddon_quantifying_2019}, rendering them fundamentally unmeasurable in experimental settings and analytically intractable~\cite{heinrich_robustness_2019}.  However, stabilizer entropy has emerged as the first magic monotone that is both experimentally measurable and analytically tractable~\cite{leone_stabilizer_2022}. Nevertheless, this experimental feasibility is largely limited to pure states. This limitation is not surprising, as it is a common feature across other resource theories, such as entanglement resource theory. For this reason, in the remainder of the section, we will primarily focus on magic-state resource theory as applied to pure states. 

In the following, we will use the term \textit{magic measures} to refer to quantities that are not necessarily magic monotones. The key distinction is that magic measures are required only to be resource non-increasing under Clifford unitaries, a necessary condition for being a magic monotone but not sufficient on its own, while magic monotones must exhibit monotonic behavior under a much broader class of stabilizer operations. In our terminology, every magic monotone is also a magic measure, but the converse is not true. For a more detailed discussion, see~\cite{Leone_2024}.

To begin, let us formally define the notion of \textit{measurable} function in the context of quantum information.
\begin{definition}[Measurement strategy] A measurement strategy is a quantum algorithm  $\mathcal{A}(\rho^{\otimes k})\mapsto [0,1]$ with arbitrary classical processing. Any such strategy describes a statistical process from shot-noise statistics and potentially a randomized classical processing strategy. We assume that both the classical and quantum processing can have arbitrary circuit complexity and arbitrary auxiliary bits and qubits.
\end{definition}
\begin{definition}[Measurable function]\label{def:measurablefunctions}
    A measurable function $f$ is the expectation value of a measurement strategy $f(\rho^{\otimes k})\coloneqq\langle \mathcal{A}(\rho^{\otimes k})\rangle$, where the expectation value $\langle\cdot\rangle$ is over all possible measurement outcomes of the measurement strategy $\mathcal{A}.$
\end{definition}

Moving on to magic measures, let us introduce stabilizer entropies, which will serve as the central objects of our subsequent discussion.

\begin{definition}[Stabilizer entropy~\cite{leone_stabilizer_2022}]\label{def:stabilizerentropies} Let $\ket{\psi}$ be a pure quantum state vector and $\psi=\ketbra{\psi}$ its density matrix. Then, the stabilizer entropies $M_{\alpha}(\psi)$ are defined
\be
M_{\alpha}(\psi)=\frac{1}{1-\alpha}\log\Delta_{2\alpha}(\psi),\quad \Delta_{2\alpha}(\psi)\coloneqq\frac{1}{d}\sum_{P\in\mathbb P_n}\tr^{2\alpha}(P\psi)\,.
\ee
\end{definition}
Upon closer examination, it becomes evident that the $\Delta_{2\alpha}(\psi)$ in the definition of stabilizer entropy is, in fact, the expectation value of the primitive Pauli monomial (see \cref{def:primitivepaulimonomials}): 
\be
\Omega_{2\alpha} = \frac{1}{d}\sum_{P\in\mathbb P_n}P^{\otimes 2\alpha},\quad \Delta_{2\alpha}=\tr(\Omega_{2\alpha}\psi^{\otimes 2\alpha})\,.
\ee
This equivalence and the ordering of R\'enyi entropies~\cite{leone_stabilizer_2022}, allows us to write $\Delta_{2(\alpha+1)}\le \Delta_{2\alpha}\le \Delta_{2(\alpha+1)}^{1-1/\alpha}$.

This observation hints at a profound connection between magic-state resource theory and the commutant of the Clifford group. Specifically, in what follows, we prove that any measurable magic measure can be expressed as a Pauli polynomial, that is, a linear combination of Pauli monomials. 

Before stating the main result, let us lay down the following lemma in the context of magic-state resource theory. 
\begin{lemma}[Invariance under unitary Clifford operations]\label{lem:invarianceundercliffordmonotone}
    Let $M(\psi)$ be any magic measure. Then $M(\psi)$ is invariant under Clifford unitary operations, that is, $M(\psi)=M(C\psi C^{\dag})$ for every $C\in\mathcal{C}_n$.
    \begin{proof}
Due to their invertibility, the fact that a resource does not increase under Clifford unitaries implies that it remains invariant under Clifford unitaries.
    \end{proof}
\end{lemma}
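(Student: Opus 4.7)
The plan is to exploit the defining property of a magic measure, namely that it is non-increasing under Clifford operations, together with the group structure of the Clifford group (in particular, the existence of inverses). The argument is essentially a one-line observation, but I will spell it out carefully since it is the logical backbone of why the commutant becomes the relevant object in later sections.

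First, I would unpack the definition. By assumption $M$ is a magic measure, meaning that for every Clifford unitary $C \in \mathcal{C}_n$ and every pure state $\psi$, the monotonicity inequality $M(C\psi C^\dagger) \leq M(\psi)$ holds. Applying this with an arbitrary but fixed $C$ gives one half of the desired equality.

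Next, I would use the fact that $\mathcal{C}_n$ is a group: since $C \in \mathcal{C}_n$, its inverse $C^\dagger$ also belongs to $\mathcal{C}_n$. Applying the monotonicity inequality to the state $C\psi C^\dagger$ with the Clifford $C^\dagger$ yields
\begin{equation}
M(\psi) \;=\; M(C^\dagger (C\psi C^\dagger) C) \;\leq\; M(C\psi C^\dagger).
\end{equation}
Combining the two inequalities gives $M(\psi) = M(C\psi C^\dagger)$, which is precisely the claimed Clifford-unitary invariance.

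I do not anticipate any genuine obstacle here: the result is a direct consequence of monotonicity plus group invertibility, and requires none of the machinery on Pauli monomials or the commutant developed in the earlier sections. The main conceptual point worth emphasizing in the write-up is simply that invertibility of Clifford unitaries upgrades the a priori one-sided monotonicity condition into a two-sided equality, which is what allows all subsequent statements about magic measures being expressible in terms of operators in $\com(\mathcal{C}_n^{\otimes k})$.
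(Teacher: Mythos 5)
Your proof is correct and is exactly the argument the paper gives, just written out in full: the paper's one-line proof ("due to their invertibility...") is precisely your two-sided application of monotonicity using $C$ and $C^\dagger \in \mathcal{C}_n$. Nothing further is needed.
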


\begin{lemma}\label{lem:measurablefunctionbelongtothecommutant}
    Every measurable function $M(\rho^{\otimes k})$ on $k$ copies, which is invariant under Clifford unitary operations, is equal to an expectation value of a positive operator $\Pi\le \mathbb{1}$ that belongs to $\com(\mathcal{C}_n^{\otimes k})$, i.e., 
    \begin{align}
        M(\rho^{\otimes k})=\tr(\rho^{\otimes k} \Pi),\quad \Pi=\sum_{\Omega\in\mathcal P} \alpha_{\Omega}\Omega 
        .
    \end{align}
    \begin{proof}
    Let us first show that any measurable function in the sense of \cref{def:measurablefunctions} can be recast as an expectation value of a positive operator $\Pi$. Notice that every classical process can be described by a quantum circuit. Quantum measurements can be simulated using an auxiliary qubit and a CNOT gate. To turn the distribution of classical outcome estimates into an expectation value, the experimenter implements the state 
    \begin{align}
        \rho=\begin{pmatrix}
            r&0\\ 0 &1-r
        \end{pmatrix}
    \end{align}
    where $r$ is the classical estimate. The final measurement is then $\pi=\ketbra{0}$. Therefore, supplying with $k'$ additional registers initiated in the state $\ket{0}$, any measurement strategy can be implemented. Since measurable functions are expectation values of measurement strategies, we can write 
    \be
    M(\rho^{\otimes k})=\tr(\rho^{\otimes k}\otimes\ketbra{0}^{\otimes k'}U\Pi U^{\dag})
    \ee
    where $U$ is a classical-quantum circuit and $\pi=\ketbra{0}$. Defining $\Pi\coloneqq \langle0^{\otimes k'}|U\pi U^{\dag}|0^{\otimes k'}\rangle$, we can thus write the measurable function $f(\rho^{\otimes k})=\tr(\Pi\rho^{\otimes k})$. Notice that since $\pi\le \mathbb{1}$, so is $\Pi$. Now we just have to prove $\Pi\in\com(\mathcal{C}_n^{\otimes k})$. Let us exploit the invariance under Clifford unitary operations
\be
M(C^{\otimes k}\rho^{\otimes k} C^{\dag\otimes k}) = f(\rho^{\otimes k}), \quad \forall C, \rho \implies \tr(C^{\dag\otimes k} \Pi C^{\otimes k} \rho^{\otimes k}) = \tr(\Pi \rho^{\otimes k}), \quad \forall C,\rho.
\ee  
By the requirement that this equality holds for any state $\rho$, it follows that the operator $\Pi$ must satisfy $\Pi = C^{\dag\otimes k} \Pi C^{\otimes k}$ for all $C\in\mathcal{C}_n$, implying that $\Pi \in \com(\mathcal{C}_n^{\otimes k})$, and thus a Pauli polynomial. This concludes the proof.
\end{proof}
\end{lemma}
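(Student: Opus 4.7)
The plan is to prove the statement in three stages: (i) reduce every admissible measurement strategy to the expectation value of a fixed positive operator $\Pi\le\mathbb 1$ on $\rho^{\otimes k}$; (ii) use Clifford invariance to force $\Pi$ into $\com(\mathcal C_n^{\otimes k})$; (iii) expand in the basis of reduced Pauli monomials guaranteed by \cref{lem:linearindependency}.

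For stage (i), I would use the standard Stinespring-style dilation. The strategy $\mathcal A$ consists of a (possibly adaptive) quantum circuit with intermediate measurements and arbitrary classical post-processing producing a number in $[0,1]$. Intermediate measurements can be deferred by introducing ancilla qubits and CNOT copies, and classical post-processing is itself a reversible quantum computation on bit registers. Hence there exist auxiliary registers $\mathcal H_{\mathrm{aux}}$ initialised in $\ket{0}^{\otimes k'}$, a unitary $U$ on $\mathcal H^{\otimes k}\otimes\mathcal H_{\mathrm{aux}}$, and a two-outcome projector $\pi=\ketbra 0$ on a designated output qubit, such that
\begin{equation}
\langle\mathcal A(\rho^{\otimes k})\rangle=\tr\!\bigl[(\rho^{\otimes k}\otimes\ketbra 0^{\otimes k'})\,U^{\dagger}\pi U\bigr].
\end{equation}
Defining $\Pi\coloneqq\bra{0^{\otimes k'}}U^{\dagger}\pi U\ket{0^{\otimes k'}}$ on $\mathcal H^{\otimes k}$, one obtains $M(\rho^{\otimes k})=\tr(\Pi\rho^{\otimes k})$ with $0\le\Pi\le\mathbb 1$ (positivity and the upper bound descend from $\pi\le\mathbb 1$ and the partial-trace-of-conjugation being a CP unital map).

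For stage (ii), I invoke \cref{lem:invarianceundercliffordmonotone}, which upgrades the non-increasing property into exact invariance: $M(C\rho C^{\dagger})=M(\rho)$ for all $C\in\mathcal C_n$, and therefore $M((C\rho C^{\dagger})^{\otimes k})=M(\rho^{\otimes k})$. Rewriting both sides using stage (i) gives
\begin{equation}
\tr\!\bigl(C^{\dagger\otimes k}\Pi\, C^{\otimes k}\rho^{\otimes k}\bigr)=\tr(\Pi\rho^{\otimes k})\qquad\forall\,\rho,\,C.
\end{equation}
Since the tensor powers $\rho^{\otimes k}$ of pure states span the symmetric subspace, and operators on the symmetric subspace determine any symmetric operator uniquely (polarisation identity applied to the symmetric subspace), the equality forces $C^{\dagger\otimes k}\Pi C^{\otimes k}=\Pi$, i.e.\ $\Pi\in\com(\mathcal C_n^{\otimes k})$. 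Equivalently, one can replace $\Pi$ by its Clifford twirl $\Phi_{\cl}^{(k)}(\Pi)$ without changing the value of $M$ on any symmetric input, and that twirl lives in the commutant by \cref{lem:twirlingbelongstothecommutant}.

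Finally, stage (iii) is immediate: by \cref{lem:linearindependency} the set $\mathcal P$ of reduced Pauli monomials spans $\com(\mathcal C_n^{\otimes k})$, so $\Pi=\sum_{\Omega\in\mathcal P}\alpha_{\Omega}\Omega$ for some coefficients $\alpha_{\Omega}\in\mathbb R$. The main delicate point is stage (i), where one must be careful that ``arbitrary classical processing producing a value in $[0,1]$'' really can be encoded as the expectation of a single two-outcome POVM, rather than a nonlinear functional of outcome statistics; linearity is preserved precisely because an unbiased estimator with range in $[0,1]$ is an affine combination of indicator observables whose weights can be absorbed into the amplitude of $\ket 0$ in an ancilla (the ``probabilistic encoding'' $\rho=\mathrm{diag}(r,1-r)$ noted in the statement). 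The rest of the argument is essentially bookkeeping once invariance and the basis expansion are in place.
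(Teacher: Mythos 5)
Your proposal is correct and follows essentially the same route as the paper's proof: the same dilation of the measurement strategy into a unitary acting on $\rho^{\otimes k}\otimes\ketbra{0}^{\otimes k'}$ followed by a projector, defining $\Pi=\bra{0^{\otimes k'}}U^{\dagger}\pi U\ket{0^{\otimes k'}}$, the same use of Clifford invariance of $M$, and the same expansion in the reduced Pauli monomial basis. Your fallback of replacing $\Pi$ by its Clifford twirl even treats the span-of-$\rho^{\otimes k}$ subtlety more carefully than the paper, which asserts $\Pi=C^{\dagger\otimes k}\Pi C^{\otimes k}$ directly from equality of traces against all $\rho^{\otimes k}$.
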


\cref{lem:measurablefunctionbelongtothecommutant} says that any measurable function corresponding to a measurement strategy on $k$ copies that is also invariant under Clifford operations belongs to the $k$-th order commutant of the Clifford group. This result suggests that measurable magic measures, which must be invariant under Clifford operations, see \cref{lem:invarianceundercliffordmonotone}, are nothing more than Pauli polynomials. This result is formalized in the following theorem, the proof of which descends from the above lemma.
 
\begin{theorem}[Measurable
magic measures as Pauli polynomials]\label{th:measurablemagicmonotones}
    Any magic measures $M(\rho)$, for which there exists an unbiased estimator employing $k$ copies of the state $\rho$, is a Pauli polynomial of degree $k$.
\end{theorem}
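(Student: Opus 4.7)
The plan is to derive this statement essentially as an immediate corollary of the two preceding lemmas, and the only substantive work lies in carefully unpacking what ``unbiased estimator employing $k$ copies'' means so that it plugs into the framework of \cref{def:measurablefunctions}. First, I would observe that the hypothesis of the theorem gives exactly a measurement strategy $\mathcal{A}$ acting on $\rho^{\otimes k}$ such that $M(\rho) = \langle \mathcal{A}(\rho^{\otimes k})\rangle$. This is just the statement that $M(\rho)$ is a measurable function in the sense of \cref{def:measurablefunctions}, applied to the input $\rho^{\otimes k}$.

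Next, I would invoke \cref{lem:invarianceundercliffordmonotone}, which ensures that any magic measure is invariant under Clifford unitaries, i.e.\ $M(C\rho C^\dagger) = M(\rho)$ for every $C \in \mathcal{C}_n$. Consequently, the measurable function $\rho^{\otimes k} \mapsto M(\rho)$ is Clifford-invariant on product states of the form $\rho^{\otimes k}$. The two hypotheses of \cref{lem:measurablefunctionbelongtothecommutant} — measurability via a $k$-copy strategy and Clifford invariance — are therefore satisfied, and the lemma yields an operator $\Pi \in \com(\mathcal{C}_n^{\otimes k})$, with $0 \le \Pi \le \mathbb{1}$, such that
\begin{equation}
M(\rho) \;=\; \tr\!\left(\Pi\,\rho^{\otimes k}\right).
\end{equation}

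Finally, to conclude that $M(\rho)$ is a Pauli polynomial of degree $k$, I would apply \cref{lem:linearindependency}, which guarantees that the set $\mathcal{P}$ of reduced Pauli monomials spans the full commutant $\com(\mathcal{C}_n^{\otimes k})$. Thus $\Pi = \sum_{\Omega \in \mathcal{P}} \alpha_\Omega\, \Omega$ for some real coefficients $\alpha_\Omega$, giving
\begin{equation}
M(\rho) \;=\; \sum_{\Omega \in \mathcal{P}} \alpha_\Omega \,\tr\!\left(\Omega\, \rho^{\otimes k}\right),
\end{equation}
which is by definition a Pauli polynomial of degree $k$ evaluated on $\rho^{\otimes k}$.

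The proof is essentially bookkeeping and I do not anticipate any real obstacle; the only point requiring care is justifying that the ``unbiased estimator on $k$ copies'' hypothesis literally places $M$ inside the class of measurable functions defined in \cref{def:measurablefunctions} — in particular that the estimator can be realized, up to an auxiliary register and a computational-basis measurement, as the expectation of a positive operator $\Pi$ acting jointly on the $k$ copies. This is exactly the construction carried out in the proof of \cref{lem:measurablefunctionbelongtothecommutant} (dilating the classical postprocessing into a unitary on an ancilla followed by a $\ketbra{0}$ measurement), so no new argument is needed beyond referencing it.
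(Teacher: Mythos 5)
Your proposal is correct and follows essentially the same route as the paper's proof: the paper likewise notes that the unbiased-estimator hypothesis makes $M(\rho)$ a measurable function in the sense of \cref{def:measurablefunctions}, invokes \cref{lem:invarianceundercliffordmonotone} for Clifford invariance, and then applies \cref{lem:measurablefunctionbelongtothecommutant} to write $M(\rho)=\tr(\Pi\rho^{\otimes k})$ with $\Pi\in\com(\mathcal{C}_n^{\otimes k})$ a Pauli polynomial. Your extra remark citing the spanning property of reduced Pauli monomials is harmless bookkeeping already built into that lemma's statement.
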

\begin{proof}
Let $M(\rho)$ be the magic measure and let $\hat{M}(\rho^{\otimes k})$ be an unbiased estimator for $M(\rho)$. By definition of unbiased estimator, we have that $\langle \hat{M}(\rho^{\otimes k})\rangle=M(\rho)$, where the expectation value is over all the possible outcomes of the measurement strategy used on $k$ copies. Hence, $M(\rho)$ is a measurable function according to \cref{def:measurablefunctions}. Moreover, $M(\rho)$ must be invariant under the action of Clifford unitaries, as per \cref{lem:invarianceundercliffordmonotone}. Hence, by virtue of \cref{lem:measurablefunctionbelongtothecommutant}, we can write $M(\rho)=\tr(\Pi\rho)$ with $\Pi\in\com(\mathcal{C}_n^{\otimes k})$ being a Pauli polynomial of degree $k$.
\end{proof}

This result demonstrates that any experimentally measurable magic measure must be expressible as the expectation value of an operator in the $k$-th order commutant of the Clifford group, underscoring the strong connection between magic-state resource theory and the Clifford commutant. Moreover, the argument above also reveals that the minimal polynomial order for which a magic measure can exist is $k \geq 4$. For $k < 4$, the Clifford commutant coincides with the commutant of the full unitary group and, therefore, cannot contain any useful information.

While the above general results apply to possibly mixed states, in the remainder of the section, we restrict our attention to pure states. In \cref{th:measurablemagicmonotones}, we have shown that any measurable magic measure is equivalent to an expectation value of a positive Pauli polynomial. If we restrict to pure states, we can assume, without loss of generality, that such operator lives in the symmetric subspace with projector $\Pi_{\sym}$. As shown in \cref{lem:clifforbitofpurestate}, we can define the equivalence relation of Pauli monomials under left and right actions of permutations, and define the set $\mathcal{P}/S_k$ as the set of independent Pauli monomials that are permutationally inequivalent. Consequently, without loss of generality, we can write any measurable magic measure restricted on pure states as
\be
M(\psi) = \sum_{\Omega \in \mathcal{P}/S_k} c_{\Omega} \tr(\Omega \psi^{\otimes k}),
\ee  
i.e., the magic measure $M(\psi)$ is simply a linear combination of independent, permutationally nonequivalent Pauli monomials, suggesting that these monomials represent more fundamental magic measures. This observation motivates the following definition.

\begin{definition}[Generalized stabilizer purities] \label{def:generalizedstabilizerpurity} 
Let $\Omega \in \mathcal{P}/S_k$, where $ \mathcal{P}/S_k$ is the set of reduced Pauli monomials up to permutations with representative elements Pauli monomials with the least possible order $m$. The $\Omega$-stabilizer purity is defined as  
\be
\Delta_{\Omega}(\psi) \coloneqq \Re(\tr(\Omega \psi^{\otimes k})).
\ee  
Notice that $\Delta_{\Omega}(\psi) = 1$ for stabilizer states, and it is manifestly invariant under Clifford operations. Moreover, if $\Omega(v)$ is a primitive Pauli monomial, we abbreviate it as $\Delta_{\Omega(v)}(\psi)=\Delta_{|v|}(\psi)$, consistently with \cref{def:stabilizerentropies}. The real part is used because not every $\tr(\Omega \psi^{\otimes k})$ is necessarily real. An alternative definition could also involve taking the absolute value of the expectation value.
\end{definition}

In the context of qudits, Ref.~\cite{Turkeshi_2025} introduced generalized stabilizer purity in the same fashion. In the next theorem, we show that generalized stabilizer purities attains their maximum values for stabilizer states.
\begin{theorem}[Faithfulness of generalized stabilizer purities]\label{th:faithfulness}
    Let $\psi$ be a pure quantum state, then for any $\Omega$-stabilizer purity, it holds that
    \be
    \Delta_{\Omega}(\psi)\le 1
    \ee
    while $\Delta_{\Omega}(\psi)=1$ if $\psi$ is a stabilizer state.
\end{theorem}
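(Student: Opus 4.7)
The plan is to expand $\tr(\Omega\psi^{\otimes k})$ explicitly using the definition of a Pauli monomial and then bound the resulting sum term-by-term. Since we work with a representative of $\Omega\in\mathcal P/S_k$ which is reduced (i.e.\ $\Omega=\Omega(V,M)$ with $V\in\mathrm{Even}(\mathbb F_2^{k\times m})$ of full column rank and $M\in\mathrm{Sym}_0(\mathbb F_2^{m\times m})$), every column $v_i$ is a nonzero vector of even Hamming weight, so $|v_i|\ge 2$. By \cref{def:paulimonomials}, we can write
\begin{align*}
\tr\bigl(\Omega(V,M)\,\psi^{\otimes k}\bigr)
=\frac{1}{d^{m}}\sum_{\boldsymbol P\in\mathbb P_n^{m}}\;
\prod_{i=1}^{m}\langle\psi|P_i|\psi\rangle^{|v_i|}
\prod_{i<j}\chi(P_i,P_j)^{M_{i,j}}.
\end{align*}

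For the upper bound $\Delta_\Omega(\psi)\le 1$, I would apply the triangle inequality: the phases $\chi(P_i,P_j)^{M_{i,j}}\in\{\pm 1\}$ have unit modulus, so they can be absorbed into an overall $|\,\cdot\,|$. Next, using $|\langle\psi|P|\psi\rangle|\le 1$ together with $|v_i|\ge 2$, one gets $|\langle\psi|P_i|\psi\rangle|^{|v_i|}\le\langle\psi|P_i|\psi\rangle^{2}$. This crucial monotonicity step is where the even-weight and reduced-form conditions pay off. The remaining expression then factorizes across the now-independent sums over $P_1,\ldots,P_m$, and using the pure-state identity $\sum_{P\in\mathbb P_n}\langle\psi|P|\psi\rangle^{2}=d\,\tr(\psi^{2})=d$, we obtain $d^{-m}\cdot d^{m}=1$. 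Taking real parts yields $\Delta_\Omega(\psi)=\Re\tr(\Omega\psi^{\otimes k})\le 1$.

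For the saturation on stabilizer states, I would invoke \cref{lem:paulimonomialbelongtothecommutant} (so that $\Omega$ is Clifford-invariant) together with the fact that all pure stabilizer states lie in a single Clifford orbit, reducing the problem to verifying equality on $\psi=\ketbra{0}^{\otimes n}$. On this state $\langle 0|P|0\rangle\in\{0,1\}$, with value $1$ precisely when $P\in\{I,Z\}^{\otimes n}$. Two simplifications occur simultaneously: only diagonal Pauli tuples contribute, and since any two diagonal Paulis commute, every phase $\chi(P_i,P_j)=+1$ and the factor $\prod_{i<j}\chi^{M_{i,j}}$ becomes trivial. The restricted sum simply counts the $(2^{n})^{m}=d^{m}$ diagonal tuples, so $\tr(\Omega\,\ketbra{0}^{\otimes k})=d^{-m}\cdot d^{m}=1$.

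The main obstacle is really only a bookkeeping one: making sure that when passing to the minimal-order representative of an element of $\mathcal P/S_k$ one still has $\operatorname{rank}(V)=m$ so that all $v_i$ are nonzero (hence $|v_i|\ge 2$), since this is exactly what lets $|\cdot|^{|v_i|}$ be dominated by $(\cdot)^{2}$. Once that is secured, the two inequalities $|\chi|=1$ and $|\langle\psi|P|\psi\rangle|^{|v_i|}\le\langle\psi|P|\psi\rangle^{2}$ are saturated for stabilizer states (the surviving Paulis mutually commute and $\langle 0|P|0\rangle^{2}=\langle 0|P|0\rangle^{|v_i|}\in\{0,1\}$), which explains in one stroke both the $\le 1$ bound and the tightness on the stabilizer set.
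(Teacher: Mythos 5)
There is a genuine gap, and it sits at the very first step: your expansion
$\tr\bigl(\Omega(V,M)\,\psi^{\otimes k}\bigr)=\frac{1}{d^{m}}\sum_{\boldsymbol P}\prod_{i}\langle\psi|P_i|\psi\rangle^{|v_i|}\prod_{i<j}\chi(P_i,P_j)^{M_{i,j}}$
is false whenever two columns of $V$ overlap, which is the generic situation for reduced monomials with $m\ge 2$. On the tensor copy $\alpha$ the operator is the \emph{product} $\prod_i P_i^{(v_i)_\alpha}$, so the correct expansion has per-copy factors $\langle\psi|\prod_i P_i^{(v_i)_\alpha}|\psi\rangle$ involving cross terms such as $\tr(P_iP_j\psi)$, not powers of the individual $\tr(P_i\psi)$. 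The paper's own examples make this explicit: $\Delta_{4,4}(\psi)=\frac{1}{d^{2}}\sum_{P,Q}\tr^{2}(P\psi)\tr^{2}(PQ\psi)\tr^{2}(Q\psi)$, not $\frac{1}{d^{2}}\sum_{P,Q}\tr^{4}(P\psi)\tr^{4}(Q\psi)$. Because of this, your "crucial monotonicity step" $|\langle\psi|P_i|\psi\rangle|^{|v_i|}\le\langle\psi|P_i|\psi\rangle^{2}$ has nothing to act on (the exponent with which $P_i$ appears \emph{alone} can be $0$), and the subsequent factorization into independent sums $\sum_{P}\langle\psi|P|\psi\rangle^{2}=d$ does not occur; for monomials like $\Omega_{4,4,4,4}$, where some $P_i$ never appears unaccompanied on any copy, no term-by-term discarding of factors decouples the sums, and one would need genuinely different tools (Cauchy--Schwarz over the entangled sums, or the paper's route). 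The paper instead proves the bound by splitting into cases via the normal form: for unitary monomials a one-line H\"older bound $\Delta_\Omega(\psi)\le\|\Omega\|_\infty\le 1$, and for projective monomials an iterated partial-transpose/pivot argument that reduces $\Delta_\Omega(\psi)$ to a primitive stabilizer purity $\Delta_{|v_m|}(\psi)\le 1$.

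The saturation claim for stabilizer states reaches the right conclusion but, as written, rests on the same false identity. It can be repaired: by Clifford invariance reduce to $\ket{0}^{\otimes n}$; a tuple $\boldsymbol P$ contributes only if $\prod_i P_i^{V_{\alpha i}}$ is (up to phase) in $\{I,Z\}^{\otimes n}$ for every row $\alpha$, and since $\operatorname{rank}(V)=m$ the rows of $V$ span $\mathbb{F}_2^{m}$, so taking suitable row combinations forces each individual $P_i$ to be diagonal; the surviving $d^{m}$ tuples mutually commute and each contributes $+1$, giving $\Delta_\Omega=1$. Note this uses full column rank for a different purpose than the one you cite (you invoke it only to get $|v_i|\ge 2$), so the argument needs to be rewritten, not merely patched.
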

    \begin{proof}
We divide the proof into three cases:
\begin{enumerate}[label=(\roman*)]

\item If $\Omega\in\mathcal{P}_U$ is a unitary Pauli monomial, the result follows directly from a simple H\"older inequality as
\be
\Delta_{\Omega}(\psi) \leq \|\Omega\|_{\infty} \|\psi\|_{1}^{k} \leq 1.
\ee

\item If $\Omega(V,M)\in\mathcal{P}_P$ is a projector, it corresponds to 
a vector space $V$ containing only even-weight primitives with no phases, i.e., $M = 0$. Let us now use the invariance under linear transformations, formally established in \cref{th:gaussOP}. There exists a transformation $V \mapsto VA$ that performs Gaussian elimination on the column vectors of $V$, resulting in $VA$ having a pivot structure. Since for projectors, $H = 0 \pmod{2}$, this linear operation ensures $M(A) = 0$. 

For the sake of clarity, after the transformation, we express $\Omega$ as $\Omega = \Omega(v_1) \Omega(v_2) \cdots \Omega(v_m)$, 
with the property that $(v_1)_1 = 1$, while $(v_i)_1 = 0$ for $i = 2, \ldots, m$. Denoting $T_1$ as the partial transpose on the first tensor copy, we use the invariance of the trace under partial transpose:
\be
\Delta_{\Omega}(\psi) = \tr(\psi^{\otimes k} \Omega \psi^{\otimes k}) = \tr(\psi^{T} \otimes \psi^{\otimes (k-1)} \Omega(v_1)^{T_1} \Omega_{\text{rest}} \psi^{T} \otimes \psi^{\otimes (k-1)}),
\ee
where, crucially, the partial transpose does not affect the remaining terms, denoted as $\Omega_{\text{rest}} = \prod_{i=2}^{m} \Omega(v_i)$, because of the pivot structure. 

It is straightforward to verify that $\Omega(v_1)^{T_1}$ is a unitary operator, while $\Omega_{\text{rest}}$ is proportional to a projector, i.e., $\Omega_{\text{rest}}^2 = d^{m-1} \Omega_{\text{rest}}$. We now bound $\Delta_{\Omega}(\psi)$, to get
\be
\Delta_{\Omega}(\psi) & = \frac{1}{d^{m-1}} \tr(\psi^{T} \otimes \psi^{\otimes (k-1)} \Omega(v_1)^{T_1} \Omega_{\text{rest}} \Omega_{\text{rest}} \psi^{T} \otimes \psi^{\otimes (k-1)}) \\
& \leq \frac{1}{d^{m-1}} \|\Omega_{\text{rest}} \psi^{T} \otimes \psi^{\otimes (k-1)} \Omega(v_1)^{T_1}\|_2 \|\Omega_{\text{rest}} \psi^{T} \otimes \psi^{\otimes (k-1)}\|_2 \\
& = \frac{1}{d^{m-1}} \|\Omega_{\text{rest}} \psi^{T} \otimes \psi^{\otimes (k-1)}\|_2^2,
\ee
where we applied H\"older inequality and, in the last step, used the unitary invariance of the $2$-norm to eliminate $\Omega(v_1)^{T_1}$. 

Thus, we obtain
\be
\Delta_{\Omega}(\psi) \leq \frac{1}{d^{m-1}} \|\Omega_{\text{rest}} \psi^{T} \otimes \psi^{\otimes (k-1)}\|_2^2 
= \tr(\Omega_{\text{rest}} \psi^{T} \otimes \psi^{\otimes (k-1)} \psi^{T} \otimes \psi^{\otimes (k-1)})
= \tr(\Omega_{\text{rest}}^{T_1} \psi \otimes \psi^{\otimes (k-1)}).
\ee
In the last step, we have used the pivot structure of $\Omega_{\text{rest}}$, which ensures that $\Omega_{\text{rest}}^{T_1} = \Omega_{\text{rest}}$. Iterating this argument $m-1$ more times, we arrive at: $\Delta_{\Omega}(\psi) \leq \Delta_{|v_m|}(\psi)$, where $\Delta_{|v_m|}$ is a stabilizer purity corresponding to a primitive $\Omega(v_m)$. Since it is known that $\Delta_{|v_m|} \leq 1$~\cite{leone_stabilizer_2022}, the bound holds.

\item When $\Omega = \Pi U$ is a product of a projector $\Pi$ and a unitary $U$, we can apply the same reasoning as above to obtain $\tr(\Omega \psi^{\otimes k}) \leq \tr(\Pi \psi^{\otimes k})$, which completes the proof.
\end{enumerate}
\end{proof}
Before presenting explicit examples of generalized stabilizer purities, let us compute the average $\Omega$-stabilizer purity for Haar random states. This result will aid in establishing a hierarchy of generalized stabilizer purities based on their order $m(\Omega)$, as defined in \cref{def:orderpaulimonomial}. Haar random states can be thought of as states with the maximal level of magic since they asymptotically (in the number of qubits $n$) saturate the bounds of all known magic monotones (see Refs.~\cite{liu_manybody_2022,leone_stabilizer_2022}). Consequently, generalized stabilizer purities that achieve larger values for Haar random states are the most sensitive, as they would vary within a smaller range, making them more practical for experimental measurements.

\begin{lemma}[Haar average of $\Omega$-stabilizer purity]\label{lem:haaraveragestabilizerpurity} Let $\Delta_\Omega(\psi)$ be a $\Omega$-stabilizer purity with order $m(\Omega)$. The average value over Haar random state is
\be
\int\de\psi\Delta_{\Omega}(\psi)=\frac{c_{\Omega}}{d^{m(\Omega)}}+O\left(\frac{1}{d^{m(\Omega)+1}}\right)
\ee
where $c_{\Omega}\le k!$.
\begin{proof}
    We know from \cref{Sec:haaraveragecliffordbeyond} that the average $\Omega$-stabilizer purity is simply
    \be
\int\de\psi \Delta_{\Omega}(\psi)=\tr(\Omega\Psi_{\haar}^{(k)})=\frac{k!}{\tr(\Pi_{\sym})}\sum_{\pi\in S_k}\tr(\Omega T_{\pi})=\frac{c_\Omega}{d^{m(\Omega)}}+O\left(\frac{1}{d^{m(\Omega)+1}}\right),
    \ee
where $c_{\Omega}\coloneqq|\{\Omega T_{\pi}\,:\, m(\Omega T_{\pi})=m(\Omega)\}|$ is the number of permutations not increasing the order of the monomial $\Omega$. The above statement follows from \cref{lem:tracesofpaulimonomials}, from the fact that $\frac{\tr(\Pi_{\sym})}{k!}=d^{k}+O(d^{k-1})$, and from the fact that the $\Omega$-stabilizer purity is defined to be the one within the equivalence class with the smallest order $m(\Omega)$.
\end{proof}
\end{lemma}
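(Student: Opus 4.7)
The plan is to convert the Haar integral into a sum over permutations and exploit the Pauli-monomial structure of $\Omega T_\pi$. By Proposition \ref{Sec: orbitofhaarstates}, the Haar average over pure states projects onto the symmetric subspace: $\int \de\psi\,\psi^{\otimes k} = \Pi_{\sym}/\tr(\Pi_{\sym})$, with $\tr(\Pi_{\sym})=\binom{d+k-1}{k} = \tfrac{d^k}{k!}(1+O(1/d))$. Combining this with $\Delta_\Omega(\psi) = \Re\tr(\Omega\psi^{\otimes k})$ and $\Pi_{\sym} = \tfrac{1}{k!}\sum_{\pi\in S_k} T_\pi$ gives
\begin{align*}
\int \de\psi\,\Delta_\Omega(\psi) = \frac{1}{k!\,\tr(\Pi_{\sym})}\sum_{\pi\in S_k}\Re\tr(\Omega T_\pi),
\end{align*}
so the problem reduces to estimating $\sum_\pi\tr(\Omega T_\pi)$ up to a relative $O(1/d)$ correction.

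Next, I would exploit the fact that every permutation operator is a product of nearest-neighbour swaps $T_{(i\,i+1)}$, and each swap is a primitive Pauli monomial (Equation \eqref{eq:swapoperator}). Therefore $\Omega T_\pi$ is itself a product of primitive Pauli monomials, i.e., a (possibly non-reduced) Pauli monomial in the sense of Definition \ref{def:paulimonomials}. Lemma \ref{lem:tracesofpaulimonomials} then yields $\tr(\Omega T_\pi) = d^{k-m(\Omega T_\pi)}$, where $m(\Omega T_\pi)$ denotes the order of its reduced form (Definition \ref{def:orderpaulimonomial}).

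The decisive step is the uniform lower bound $m(\Omega T_\pi)\ge m(\Omega)$ for every $\pi\in S_k$. This is forced by the definition of $\mathcal{P}/S_k$ used in Definition \ref{def:generalizedstabilizerpurity}: its representatives are chosen to have the least order within each equivalence class under left and right multiplication by permutations, and $\Omega T_\pi$ belongs to the same class as $\Omega$. Setting $c_\Omega \coloneqq |\{\pi\in S_k : m(\Omega T_\pi)=m(\Omega)\}|\le k!$, the permutations with strictly larger order contribute at most $(k!-c_\Omega)\,d^{k-m(\Omega)-1}$, hence $\sum_\pi \Re\tr(\Omega T_\pi) = c_\Omega d^{k-m(\Omega)}+O(d^{k-m(\Omega)-1})$. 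Dividing by $k!\,\tr(\Pi_{\sym}) = d^k(1+O(1/d))$ yields the claimed asymptotic.

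The main obstacle lies in rigorously establishing the order lower bound: although intuitively clear from the minimality of $\Omega$ as a representative, a fully careful argument must track how the Gaussian-elimination and linear-dependence reductions of Theorem \ref{th:graphrules} interact with the swap primitives comprising $T_\pi$, ensuring that no reduction step can lower the order of the product $\Omega T_\pi$ below $m(\Omega)$.
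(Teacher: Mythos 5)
Your proposal is correct and follows essentially the same route as the paper: reduce the Haar average to $\frac{1}{k!\,\tr(\Pi_{\sym})}\sum_{\pi\in S_k}\tr(\Omega T_\pi)$ via the symmetric projector, evaluate each term with the trace formula for (possibly non-reduced) Pauli monomials, and count the leading contributions by the permutations that preserve the order, using the minimality of the representative $\Omega$ within its permutation class to control the rest — which is exactly the paper's argument, and the one step you flag as delicate (that no reduction of $\Omega T_\pi$ can drop the effective exponent below $k-m(\Omega)$) is asserted at the same level of detail in the paper itself. If anything, your bookkeeping of the normalization, $k!\,\tr(\Pi_{\sym})=d^{k}(1+O(1/d))$, is cleaner than the paper's displayed prefactor.
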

\cref{lem:haaraveragestabilizerpurity} tells us that $\Omega$-stabilizer purities \emph{vary} in a range depending on their order $m(\Omega)$, i.e., $[O(d^{-m(\Omega)}),1]$. From a practical point of view, it is convenient to consider those with small $m(\Omega)$ to reduce the number of measurement shots to measure magic in quantum states. This is particularly relevant for those stabilizer purities corresponding to the expectation value of unitary operators that can be efficiently measured in quantum experiments via a simple Hadamard test~\cite{nielsen_quantum_2000}.

Given that $\Omega$-stabilizer purities corresponding to unitary monomials are efficiently measurable magic measures, it is natural to expect that every measurable magic measure can be expressed as a linear combination of unitary generalized stabilizer purity. We prove this intuition to be true in the following lemma.

\begin{lemma}\label{lem:measurablemonotoneunitarymonomials}
Let $M(\psi)$ be a measurable magic monotone, for which there exists an unbiased estimator on $k$ copies of $\psi$. Then for any $n\ge k^2-3k+13$, it holds that
\be
M(\psi)=\sum_{\Omega\in\mathcal{P}_U}\alpha_{\Omega}\Delta_{\Omega}(\psi) +O(2^{\frac{3}{2}k^2-n}).
\ee
\begin{proof}
Thanks to \cref{th:measurablemagicmonotones}, we can write $M(\psi)=\tr (\hat{M}\psi^{\otimes k})$. Since $\hat{M}$ belongs to the commutant and, without loss of generality, lives in the symmetric subspace because $\psi$ is pure, we can express it as $\hat{M}=\sum_{\Omega\in\mathcal{P}_U}\alpha_{\Omega}(\hat{M})\Omega$ with $\Omega$ being Pauli monomials. Using \cref{lem:unitarymonomialsdominate}, and using that $\|\hat{M}\|_{\infty}\le 1$, we know
\be
\alpha_{\Omega}(\hat{M})\le O(2^{\frac{3}{2}k^2-n})
\ee
Since $|\Delta_{\Omega}(\psi)|\le 1$ (see \cref{th:faithfulness}), then it immediately follows that
\be
M(\psi)=\sum_{\Omega\in\mathcal{P}_U}\alpha_{\Omega}(\hat{M})\Delta_{\Omega}(\psi)+O(2^{2k^2-n})
\ee
From the condition of $M(\sigma)=1$ for every stabilizer state and recalling that $\tr(\Omega\sigma^{\otimes k})=1$ for every stabilizer state, one gets the condition $\sum_{\Omega}\alpha_{\Omega}=1$, from which it immediately follows that $\sum_{\Omega\in\mathcal{P}/S_k}\alpha_{\Omega}=1-O(2^{2k^2-n})$.
\end{proof}
\end{lemma}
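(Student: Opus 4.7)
The plan is to build the representation of $M(\psi)$ in two steps: first realize it as the trace of a bounded operator in the Clifford commutant, then argue that only the unitary Pauli monomials survive in the large-$n$ limit. For the first step, I would invoke \cref{th:measurablemagicmonotones} to write $M(\psi)=\tr(\hat M\,\psi^{\otimes k})$ with $\hat M\in\com(\mathcal{C}_n^{\otimes k})$ and $\|\hat M\|_\infty\leq 1$. Since $\psi^{\otimes k}$ lies in the symmetric subspace, I would replace $\hat M$ by $\Pi_{\sym}\hat M\Pi_{\sym}$ without altering $M(\psi)$, and then use \cref{lem:linearindependency} (together with \cref{cor:spanningeverything}) to expand $\hat M$ in the reduced Pauli monomial basis as $\hat M=\sum_{\Omega\in\mathcal{P}}\alpha_\Omega\,\Omega$.

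Next, I would split the expansion into its unitary part $\mathcal{P}_U$ and its non-unitary part $\mathcal{P}\setminus\mathcal{P}_U$:
\begin{equation*}
M(\psi)=\sum_{\Omega\in\mathcal{P}_U}\alpha_\Omega\Delta_\Omega(\psi)+\sum_{\Omega\in\mathcal{P}\setminus\mathcal{P}_U}\alpha_\Omega\Delta_\Omega(\psi).
\end{equation*}
For the second sum I would apply \cref{lem:unitarymonomialsdominate} directly: under the hypothesis $n\geq k^{2}-3k+7$, every coefficient of a non-unitary monomial is bounded as $|\alpha_\Omega|\leq 6|\mathcal{P}|^{3}/d$, which is $O(2^{3k^{2}/2-n})$ by the dimension estimate in \cref{cor:dimcom}. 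Combining this with the faithfulness bound $|\Delta_\Omega(\psi)|\leq 1$ from \cref{th:faithfulness}, and absorbing the (at most $|\mathcal{P}|$) many non-unitary terms into the constant, the second sum contributes at most $O(2^{3k^{2}/2-n})$.

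The remaining bookkeeping is the normalization statement. Evaluating the identity $M(\sigma)=1$ on any pure stabilizer state $\sigma$, and using that $\Delta_\Omega(\sigma)=1$ for all Pauli monomials by construction, forces $\sum_{\Omega\in\mathcal{P}_U}\alpha_\Omega=1-O(2^{3k^{2}/2-n})$, which matches the normalization of the claimed decomposition up to the stated error.

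The main obstacle is not the algebraic decomposition but the quantitative control on the non-unitary coefficients. The point is that the total number of Pauli monomials grows like $2^{k^{2}/2-3k/2}$, so a pointwise bound on each $\alpha_\Omega$ is useless unless the decay is strong enough to beat the counting; this is precisely why one needs the Clifford-Weingarten asymptotics of \cref{lem:asymptoticweingarten} (and the regime $n\geq k^{2}-3k+7$ in which $2|\mathcal{P}|^{2}\leq d$) to push the argument through. Once that input is granted, the rest is a straightforward triangle-inequality estimate.
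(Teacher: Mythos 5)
Your proposal is correct and follows essentially the same route as the paper's own proof: write $M(\psi)=\tr(\hat M\,\psi^{\otimes k})$ via \cref{th:measurablemagicmonotones}, expand $\hat M$ in the reduced Pauli monomial basis on the symmetric subspace, suppress the non-unitary coefficients with \cref{lem:unitarymonomialsdominate}, bound the remainder using $|\Delta_\Omega(\psi)|\le 1$ from \cref{th:faithfulness}, and fix the normalization by evaluating on stabilizer states. Your explicit splitting into $\mathcal{P}_U$ versus $\mathcal{P}\setminus\mathcal{P}_U$ and the remark about the counting of non-unitary terms is just a slightly more careful bookkeeping of the same argument (and mirrors the paper's own mild mismatch between the stated $O(2^{\frac{3}{2}k^2-n})$ and the $O(2^{2k^2-n})$ obtained after summing over the $|\mathcal{P}|$ terms), so no substantive difference.
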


\subsubsection{Examples of generalized stabilizer purities}
Having defined generalized stabilizer purity and having drawn intuition of their hierarchy, let us discuss some practical examples.

In \cref{sec:cliffordcommutantexample}, we identify all non-permutationally independent operators up to $k=8$. As previously noted, for $k < 4$, no polynomial can capture the magic of pure quantum states. At $k = 4$, the first magic measure is provided by the Pauli monomial $\Omega_{4}$, which corresponds to the $\alpha=2$ stabilizer entropy. Thus, the $\alpha=2$ stabilizer entropy represents \textit{the minimal expectation value} capable of quantifying magic. Using the graphical calculus, this equivalence can be expressed as  
\be
 \Delta_{4}(\psi) = \tr( \Omega_4 \psi^{\otimes 4}) = \tr(\monomialdiagram{4}{{1,2,3,4}}{}\,\,\,\, \psi^{\otimes 4}) = \frac{1}{d}\sum_{P\in\mathbb P_n} \tr^4(P\psi).
\ee  
where for the sake of clarity we use the shorthand notation $\Delta_x = \Omega_x$. 
For $\Delta_{4}(\psi)$, we can compute the leading order Haar average as done in \cref{lem:haaraveragestabilizerpurity}. In fact, from \cref{sec:cliffordcommutantexample} we know that $c_{\Omega}=4$, and therefore $\mathbb{E}_{\psi}[\Delta_{4}(\psi)]=4d^{-1}+O(d^{-2})$. Despite being the minimal stabilizer purity that can quantify magic and having an order $m(\Omega_4)=1$, the second stabilizer entropy has a significant limitation. For large numbers of qubits, which are often the most common case of interest, it cannot be measured efficiently. Specifically, there is no known algorithm with non-exponential sample complexity in $n$ that can measure $\Delta_{4}$ to additive precision~\cite{miller2024experimentalmeasurementphysicalinterpretation}. This inefficiency arises because the primitive $\Omega_4$ is proportional to a projector with a proportionality factor of $d$, see \cref{lem:propertiesprimitivepauli}. As a result, it has an exponential operator norm. Moreover, there are strong arguments suggesting that the efficiency of unbiased estimators for measuring the expectation values of observables depends on the operator norm of the observables. These facts motivate the search for magic measures corresponding to higher-order polynomials.

For $k = 6$, two additional elements arise that correspond to Pauli monomials. The first is $\Omega_6$, which corresponds to the third stabilizer entropy:  
\be
 \Delta_{6}(\psi) = \tr( \Omega_6 \psi^{\otimes 6}) = \tr\left(\monomialdiagram{6}{{1,2,3,4,5,6}}{}\,\,\,\, \psi^{\otimes 6}\right) = \frac{1}{d} \sum_{P\in\mathbb P_n} \tr^6(P \psi).
\ee  
The second arises from the product of two weight-4 primitives as  
\be
\Delta_{4,4}(\psi) = \tr(\Omega_{4,4} \psi^{\otimes 6}) = \tr\left(\monomialdiagram{6}{{1,2,3,4},{3,4,5,6}}{}\,\,\,\, \psi^{\otimes 6}\right) = \frac{1}{d^2} \sum_{P,Q\in\mathbb P_n} \tr^2(P \psi) \tr^2(PQ \psi) \tr^2(Q \psi),
\ee  
where we have used  the shorthand $\Omega_{4,4} \coloneqq \Omega(1,1,1,1,0,0)\Omega(0,0,1,1,1,1)$ (Note that we will use this shorthand notation from now on for all the $\Omega$s).

We can compute the Haar average of the two above explicitly, being $\mathbb{E}_{\psi}[\Delta_6(\psi)]=d^{-1}+O(d^{-2})$ thanks to its unitarity, while $\mathbb{E}_{\psi}[\Delta(\psi)]=24d^{-2}+O(d^{-3})$

While the latter is proportional to a projector, being the product of two operators proportional to projectors (see \cref{lem:propertiesprimitivepauli}, and hence cannot be measured efficiently, $\Delta_{6}(\psi)$ can be measured efficiently as it corresponds to a unitary operator. This observation motivates the introduction of a second criterion to give a hierarchy to the $\Omega$-stabilizer purities. Specifically, generalized stabilizer purities corresponding to Pauli monomials that are projectors are less appealing because of their inefficiency in measurements, whereas those corresponding to unitary Pauli monomials are more practical.

From the previous discussion, we know that for $k = 6$, there exists a unitary Pauli monomial corresponding to the third stabilizer entropy. This fact makes the $\alpha=3$ stabilizer entropy to be the minimal-order monomial that quantifies magic and can be experimentally measured efficiently. As we will see in \cref{sec:propertytesting}, this fact also has profound consequences for stabilizer property testing.

However, it is intriguing to explore whether higher-order polynomials give rise to other measures with similar properties.
Extending this to $k = 8$, as further discussed in \cref{sec:cliffordcommutantexample}, there is only one unitary Pauli monomial in addition to $\Omega_6$, that is, $\Omega_{6,6}\coloneqq \Omega(1,1,1,1,1,1,0,0)\Omega(0,0,1,1,1,1,1,1)$. Graphically, we find
\be
\Delta_{6,6}(\psi) = \tr(\Omega_{6,6}\psi^{\otimes 8}) \tr\left(\monomialdiagram{8}{{1,2,3,4,5,6},{3,4,5,6,7,8}}{}\,\,\,\, \psi^{\otimes 8}\right) = \frac{1}{d^2} \sum_{P,Q\in\mathbb P_n} \tr^2(P \psi) \tr^4(PQ \psi) \tr^2(Q \psi).
\ee  
However, since the Pauli monomial described above is the product of two primitives, we know that $m(\Omega_{6,6})=2$ and, therefore, achieves a typical value, when averaged over Haar random states, of $d^{-2}+O(d^{-3})$. 

\begin{open} As noted above, invariance under Clifford operations alone is not sufficient for a quantity to qualify as a monotone. The generalized stabilizer purities defined in \cref{def:generalizedstabilizerpurity} have not yet been proven to be monotones for magic-state resource theory. Specifically, the question of whether $\Omega$-stabilizer purity for $\Omega$ not being a primitive Pauli monomial (hence corresponding to stabilizer entropies) is non-decreasing under stabilizer operations remains an open question. See also Ref.~\cite{Turkeshi_2025}. 
\end{open}

\subsubsection{Stabilizer entropy and Bell magic}
Bell magic is a magic measure, i.e., a quantity invariant under Clifford unitaries (though it is not known to be a magic monotone), introduced in Ref.~\cite{haugbellmagic}. It has the property of being experimentally measurable via Bell difference sampling~\cite{Hangleiter_2024}. In \cref{th:measurablemagicmonotones}, we proved that any measure of this kind, denoted as $M(\psi)$, belongs to the $k$-th order commutant of the Clifford group, where $k$ corresponds to the number of copies in the unbiased estimator. In this section, we first express Bell magic in terms of Pauli monomials and then show that it is tightly upper and lower bounded by stabilizer entropies.

\begin{definition}[Bell magic~\cite{haugbellmagic}]\label{def:bellmagic} Let $\psi$ be a pure quantum state. The Bell magic is defined as
\be
B(\psi)=\sum_{P_1,P_2\in\mathbb{P}_n }\mathcal{Q}(P_1)\mathcal{Q}(P_2)\|[P_1,P_2]\|_\infty,\quad \mathcal{Q}(P_1)\coloneqq\frac{1}{d^2}\sum_{Q\in\mathbb P_n}\tr^2(Q\psi)|\tr(PQ\psi)|^2
\ee
and it obeys 1) $B(\psi)=0$ if and only if $\psi$ is a stabilizer state and 2) it is invariant under Clifford unitaries.  
\end{definition}
\begin{lemma}[Bell magic in terms of Pauli monomials]\label{lem:bellmagicpaulimonomials}
The Bell magic can be expressed as an expectation value of the unitary Pauli monomial $\Omega_{6,6}$ belonging to $\com(\mathcal{C}_n^{\otimes 8})$. In particular, we find
    \be
B(\psi)=1-\tr(\Omega_{6,6}\psi^{\otimes 8})=1-\tr\left(\monomialdiagram{8}{{1,2,3,4,5,6},{3,4,5,6,7,8}}{}\,\,\,\psi^{\otimes 8}\right).
    \ee
\begin{proof}
To prove the result, let us first observe that we can rewrite
\be
\|[P_1,P_2]\|_\infty&=1-\chi(P_1,P_2)\\
|\tr(PQ\psi)|^2&=\tr^2(QP\psi)\chi(P,Q)
\ee
Hence, we can write the Bell magic as the following expression:
    \be
B(\psi)=1-\frac{1}{d^4}\sum_{P_1,P_2,Q_1,Q_2\in\mathbb P_n}\tr^2(Q_1\psi)\tr^{2}(Q_2\psi)\tr^2(Q_1P_1\psi)\tr^2(Q_2P_2\psi)\chi(P_2,Q_2)\chi(P_1,Q_1)\chi(P_1,P_2)
    \ee
This describes a Pauli polynomial, where the latter term can be expressed with the following diagrammatic representation
\be
\frac{1}{d^4}\sum_{P_1,P_2,Q_1,Q_2\in\mathbb P_n}\tr^2(Q_1\psi)\tr^{2}(Q_2\psi)\tr^2(Q_1P_1\psi)\tr^2(Q_2P_2\psi)\chi(P_2,Q_2)\chi(P_1,Q_1)\chi(P_1,P_2)=\tr(\monomialdiagram{8}{{1,2,5,6},{5,6},{3,4,7,8},{7,8}}{0:1,2:3,1:3}\psi^{\otimes 8}).
\ee
To conclude the proof, it is sufficient to use the machinery developed in \cref{sec:graphcalc} and manipulate this diagram
\be 
\monomialdiagram{8}{{1,2,5,6},{5,6},{3,4,7,8},{7,8}}{0:1,2:3,1:3} = \monomialdiagram{8}{{1,2,5,6},{5,6},{7,8},{3,4,7,8}}{0:1,1:2,2:3} = \monomialdiagram{8}{{1,2,5,6},{5,6,7,8},{7,8},{3,4,7,8}}{0:1,1:3,2:3} = \monomialdiagram{8}{{1,2,5,6},{5,6,7,8},{7,8},{3,4}}{0:1,1:2} = \monomialdiagram{8}{{1,2,5,6},{3,4},{3,4,5,6,7,8},{7,8}}{0:2} =  \monomialdiagram{8}{{3,4},{1,2,3,4,7,8},{3,4,5,6,7,8},{7,8}}{}
\ee 
Notice that the last diagram is the product of odd primitives without phases and, as such, it is a unitary operator. Moreover, as we discussed in \cref{Sec:magicstateresourcetheory}, the operator corresponding to the expectation value of a magic measure can be projected in the symmetric subspace without loss of generality. As a consequence, the permutation can be canceled and we are left with the following identity:
\be\label{eqbell2}
\frac{1}{d^4}\sum_{P_1,P_2,Q_1,Q_2\in\mathbb P_n}\tr^2(Q_1\psi)\tr^{2}(Q_2\psi)\tr^2(Q_1P_1\psi)\tr^2(Q_2P_2\psi)\chi(P_2,Q_2)\chi(P_1,Q_1)\chi(P_1,P_2)=\tr\left(\monomialdiagram{8}{{1,2,3,4,7,8},{3,4,5,6,7,8}}{}\,\,\,\psi^{\otimes 8}\right).
\ee
The $\Omega$ in \cref{eqbell2}  under the adjoint action of permutations corresponds to the unitary Pauli monomial $\Omega_{6,6}$ and since $\psi^{\otimes 8}$ is invariant under the action of permutation, we can rearrange the diagram concluding the proof. 
\end{proof}
\end{lemma}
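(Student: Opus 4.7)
The plan is to start from the definition of Bell magic and rewrite its ingredients using the $\chi$ function in order to expose the expression as a sum over Pauli operators with the structure of a Pauli polynomial on eight copies. Specifically, for two Pauli operators one has $\|[P_1,P_2]\|_\infty = 1 - \chi(P_1,P_2)$, since the commutator vanishes when they commute and has operator norm $2$ when they anticommute. Similarly, because $\tr(PQ\psi)$ is real when $P$ and $Q$ commute and purely imaginary when they anticommute (as $PQ$ is Hermitian in the first case and anti-Hermitian in the second), one obtains $|\tr(PQ\psi)|^2 = \chi(P,Q)\,\tr^2(PQ\psi)$.

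Substituting these identities into $\mathcal{Q}(P)$ and then into $B(\psi)$ produces an expression of the form
$B(\psi) = 1 - \frac{1}{d^4}\sum_{P_1,P_2,Q_1,Q_2 \in \mathbb{P}_n}\tr^2(Q_1\psi)\tr^2(Q_2\psi)\tr^2(Q_1P_1\psi)\tr^2(Q_2P_2\psi)\,\chi(P_1,Q_1)\chi(P_2,Q_2)\chi(P_1,P_2)$,
where the isolated ``$1$'' comes from the normalization $\sum_{P}\mathcal{Q}(P)=1$ together with the constant $1$ in the decomposition of $\|[P_1,P_2]\|_\infty$. The second term is manifestly the expectation value of a Pauli polynomial of degree $8$: four independent Pauli sums, each $\tr^2(\cdot\,\psi)$ contributing two copies, match the definition of a Pauli monomial $\Omega(V,M)$ with four columns (one per independent Pauli $Q_1,Q_2,P_1,P_2$) whose overlaps are dictated by which pairs of $Q$'s and $P$'s appear inside the same trace, and whose off-diagonal $M$-entries encode the three $\chi$ factors.

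The central technical step is then to apply the graphical rules of \cref{th:graphrules} --- column additions, swaps, and especially rule (4), which trades a commuting even pair sharing a phase for four odd columns --- to reduce this four-column diagram, up to left/right multiplication by permutations, to a two-column diagram representing $\Omega_{6,6}$. Concretely, each column has Hamming weight $4$ (two dots from its ``own'' $\tr^2$ and two from the $\tr^2$ it shares with a neighbor), the three phases form a path among the four columns, and iteratively summing the extremal columns into their neighbors converts columns into higher-weight ones while cancelling phases in accordance with \cref{th:rule1,th:rule2}. Once all phases are eliminated and two of the four columns have been turned into permutation-like columns (weight-$2$ with appropriate support), one is left with exactly $\Omega_{6,6}$ dressed by permutations.

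Finally, since $\psi^{\otimes 8}$ is invariant under every permutation operator $T_\pi$, one has $\tr(T_\sigma \Omega_{6,6} T_\pi \psi^{\otimes 8}) = \tr(\Omega_{6,6}\psi^{\otimes 8})$, so the permutation dressing is inessential, yielding $B(\psi) = 1 - \tr(\Omega_{6,6}\psi^{\otimes 8})$ as claimed. I expect the hardest step to be the graphical reduction: it demands a carefully chosen sequence of column operations that simultaneously tracks parity changes (via \cref{le:paritysum}) and the evolving phase pattern, and getting the intermediate diagrams consistent is the real bookkeeping bottleneck. The algebraic setup and the final permutation-invariance argument are, by contrast, routine.
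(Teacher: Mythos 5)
Your proposal follows essentially the same route as the paper: the same identities $\|[P_1,P_2]\|_\infty=1-\chi(P_1,P_2)$ and $|\tr(PQ\psi)|^2=\chi(P,Q)\tr^2(PQ\psi)$, the same four-Pauli-sum expression for $B(\psi)$ (with the leading $1$ from $\sum_P\mathcal{Q}(P)=1$), a graphical reduction up to permutations to $\Omega_{6,6}$, and permutation invariance of $\psi^{\otimes 8}$ to discard the dressing, which is exactly the paper's proof. One slip in your sketch: in the starting diagram the $P_1,P_2$ columns have weight $2$ (each appears only in $\tr^2(Q_iP_i\psi)$), not weight $4$ — only $Q_1,Q_2$ have weight $4$ — as your own algebraic formula already encodes and as the paper's explicit diagram manipulation uses.
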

As speculated in \cref{Sec:magicstateresourcetheory}, any computable magic measure can be recast as the expectation value of an operator residing in the commutant of the Clifford group. Since any operator in the commutant of the Clifford group is a linear combination of Pauli monomials, which, in \textit{normal form}, can be expressed as products of primitive Pauli monomials, and since the expectation values of these primitives correspond to stabilizer entropies, it is natural to conjecture that any such computable monotone is equivalent to the stabilizer entropy. In the following theorem, we demonstrate that for Bell magic, this conjecture holds true: we establish upper and lower bounds for Bell magic in terms of the stabilizer entropy, thereby proving the universality of the stabilizer entropy in relation to Bell magic.
\begin{theorem}[Stabilizer entropy upper and lower bounds Bell magic]\label{th:stabentropyownsbellmagic} Let $B(\psi)$ the Bell magic defined in \cref{def:bellmagic} and let $M_{\alpha}=\frac{1}{1-\alpha}\log\Delta_{2\alpha}(\psi)$ be the stabilizer entropy defined in \cref{def:stabilizerentropies}. The  bounds 
\be
1-\Delta_{4}^{2}(\psi)\le B(\psi)\le 1-\Delta_{6}^{2}(\psi)
\ee
hold true, which, in terms of stabilizer entropies and \textit{additive Bell magic} defined in \cite{haugbellmagic} $-\log(1-B(\psi))$, the above bounds can be expressed in terms of the stabilizer entropy as
\be
2M_{3}(\psi)\le 2M_{2}(\psi)\le -\log(1-B(\psi))\le 4M_{3}(\psi).
\ee

\begin{proof}
    Using the result of \cref{lem:bellmagicpaulimonomials}, we can write Bell magic in terms of $\Omega_{6,6}$:
\be
B(\psi)=1-\frac{1}{d^2}\sum_{P,Q\in\mathbb P_n}\tr^2(P\psi)\tr^2(Q\psi)\tr^4(PQ\psi). = 1 - \tr[\left(\frac{1}{d}\sum_{P\in\mathbb P_n}\tr^2(P\psi) P^{\otimes 4} \psi^{\otimes 4}\right)^2]
\ee
Then, we have the following inequality between expectation values of Pauli monomials
\be
\Delta_{6,6}(\psi)=\tr\left(\monomialdiagram{8}{{1,2,3,4,5,6},{3,4,5,6,7,8}}{}\,\,\,\psi^{\otimes 8}\right)\ge \tr^2\left(\monomialdiagram{6}{{1,2,3,4,5,6}}{}\,\,\,\psi^{\otimes 6}\right)=\Delta_{6}^2(\psi).
\ee
where we used the fact that given a Hermitian operator $H$ $\tr(H^2\psi^{\otimes 4})\ge\tr^2(H\psi^{\otimes 4})$, this holds in our case because $\frac{1}{d}\sum_{P\in\mathbb P_n}\tr^2(P\psi) P^{\otimes 4} $ is a Hermitian operator with infinity norm smaller than $1$. 
Now, let us proceed with the upper bound. 
\begin{align}
\frac{1}{d^2}\sum_{P,Q}\tr^2(P\psi)\tr^2(Q\psi)\tr^4(PQ\psi)& =\frac{1}{d^2}\sum_{P,Q}\tr^2(P\psi)\tr^2(PQ\psi)\tr^2(Q\psi)\tr^2(PQ\psi) \nonumber\\ 
& \le \sqrt{\frac{1}{d^2}\sum_{P,Q\in\mathbb P_n} \tr^4(P\psi)\tr^4(PQ\psi)  }\sqrt{\frac{1}{d^2}\sum_{P,Q\in\mathbb P_n} \tr^4(Q\psi)\tr^4(PQ\psi)  }\nonumber\\ 
& = \frac{1}{d^2}\sum_{P,Q\in\mathbb{P}_n}\tr^{4}(P\psi)\tr^4(PQ\psi) = \tr\left(\monomialdiagram{8}{{1,2,3,4,5,6,7,8},{5,6,7,8}}{}\,\,\,\psi^{\otimes 8}\right)= \tr\left(\monomialdiagram{8}{{1,2,3,4},{5,6,7,8}}{}\,\,\,\psi^{\otimes 8}\right). 
\end{align}
where we have used H\"older inequality to get the 
upper bound. Putting it all together gives
\be
\Delta_{6,6}(\psi)=\tr\left(\monomialdiagram{8}{{1,2,3,4,5,6},{3,4,5,6,7,8}}{}\,\,\,\psi^{\otimes 8}\right)\le \tr\left(\monomialdiagram{8}{{1,2,3,4},{5,6,7,8}}{}\,\,\,\psi^{\otimes 8}\right)=\Delta_{4}^{2}(\psi),
\ee
which concludes the proof. 

\end{proof}
\end{theorem}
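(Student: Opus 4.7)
The plan is to immediately collapse both inequalities onto a statement about a single monomial expectation. By the preceding \cref{lem:bellmagicpaulimonomials}, I have $B(\psi) = 1 - \tr(\Omega_{6,6}\psi^{\otimes 8})$, so the sandwich $1-\Delta_4^2(\psi)\le B(\psi)\le 1-\Delta_6^2(\psi)$ is equivalent to the two-sided bound
$$\Delta_6^2(\psi) \;\le\; \tr(\Omega_{6,6}\psi^{\otimes 8}) \;\le\; \Delta_4^2(\psi),$$
and I can attack each side independently using the explicit expansion $\tr(\Omega_{6,6}\psi^{\otimes 8}) = \frac{1}{d^2}\sum_{P,Q\in\mathbb P_n} \tr^2(P\psi)\tr^4(PQ\psi)\tr^2(Q\psi)$ that comes from reading off the diagram.

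For the lower bound, I would introduce the auxiliary Hermitian operator
$$A \coloneqq \frac{1}{d}\sum_{P\in\mathbb P_n} \tr^2(P\psi)\,P^{\otimes 4}$$
acting on four copies. A one-line calculation gives $\tr(A\psi^{\otimes 4}) = \frac{1}{d}\sum_P \tr^6(P\psi) = \Delta_6(\psi)$, while $\tr(A^2\psi^{\otimes 4})$ expands into precisely the double sum above and therefore equals $\tr(\Omega_{6,6}\psi^{\otimes 8})$. Since $A$ is Hermitian and $\psi^{\otimes 4}$ is a state, the elementary variance inequality $\tr(A^2\rho)\ge \tr(A\rho)^2$ yields $\tr(\Omega_{6,6}\psi^{\otimes 8})\ge \Delta_6(\psi)^2$.

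For the upper bound, I would split the symmetric factor $\tr^4(PQ\psi) = \tr^2(PQ\psi)\cdot \tr^2(PQ\psi)$ and apply Cauchy--Schwarz by pairing one copy with the $\tr^2(P\psi)$ factor and the other with the $\tr^2(Q\psi)$ factor. This gives
$$\sum_{P,Q}\tr^2(P\psi)\tr^2(Q\psi)\tr^4(PQ\psi) \le \sqrt{\sum_{P,Q}\tr^4(P\psi)\tr^4(PQ\psi)} \cdot \sqrt{\sum_{P,Q}\tr^4(Q\psi)\tr^4(PQ\psi)}.$$
In each factor the change of variables $R=PQ$ (a bijection on $\mathbb P_n$ for each fixed $P$ or $Q$) decouples the sum into $(d\Delta_4(\psi))^2$, and the overall $1/d^2$ delivers $\Delta_4(\psi)^2$.

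The logarithmic version follows by applying $-\log$ to the sandwich, giving the two outer inequalities $2M_2(\psi)\le -\log(1-B(\psi))\le 4M_3(\psi)$. The remaining $2M_3(\psi)\le 2M_2(\psi)$ is independent of Bell magic and amounts to showing $\Delta_6(\psi)\ge \Delta_4(\psi)^2$; this is a one-line Cauchy--Schwarz applied to $\sum_P \tr^3(P\psi)\cdot \tr(P\psi)$ using $\sum_P \tr^2(P\psi)=d$. I expect no serious obstacle here, and the only step requiring any care is recognizing the squared structure $\tr(\Omega_{6,6}\psi^{\otimes 8})=\tr(A^2\psi^{\otimes 4})$ — once that pairing is seen in the diagram, both the Cauchy--Schwarz for the upper bound and the variance bound for the lower bound are forced.
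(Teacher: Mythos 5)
Your proposal is correct and follows essentially the same route as the paper's proof: the same reduction of $B(\psi)$ to $\tr(\Omega_{6,6}\psi^{\otimes 8})$ via \cref{lem:bellmagicpaulimonomials}, the same variance inequality $\tr(A^2\psi^{\otimes 4})\ge \tr(A\psi^{\otimes 4})^2$ with the identical auxiliary operator $A=\frac{1}{d}\sum_P\tr^2(P\psi)P^{\otimes 4}$ for the lower bound, and the same Cauchy--Schwarz splitting of $\tr^4(PQ\psi)$ for the upper bound. The only cosmetic differences are that you identify the resulting sums with $\Delta_4^2$ by the change of variables $R=PQ$ instead of a diagram manipulation, and you prove $M_3\le M_2$ by a direct Cauchy--Schwarz step rather than citing the R\'enyi ordering; both are fine.
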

The above theorem confirms our intuition: stabilizer entropies tightly bound Bell magic from above and below, effectively an intermediate between $\Delta_4^2$ and $\Delta_6^2$. 
Bell magic gained significant attention upon its introduction because of its efficient measurability via Bell difference sampling, while the efficient measurability of stabilizer entropies has initially been unclear. However, as shown, odd stabilizer entropies are related (through the non-linear function $-\log$) to the expectation values of unitary primitive Pauli monomials, which can be directly measured as we know that Pauli monomials factorize on qubits (see \cref{lem:monomialsfactorizeonqubits}). 

\subsubsection{The triple purity}
So far, we have shown that any magic measure is a Pauli polynomial and, therefore, a linear combination of generalized stabilizer purities. These generalized stabilizer purities provide a meaningful extension of stabilizer purities (entropies), as they can be expressed as the expectation value of an operator obtained from the product of $\Omega_{2\alpha}$, whose expectation value yields $\Delta_{2\alpha}$ (see \cref{def:stabilizerentropies}). The term "entropy" in stabilizer entropy arises from the fact that these quantities serve as entropy measures for the Pauli distribution of a pure quantum state. Specifically, for any pure state vector $\ket{\psi}$, one can define a normalized distribution over Pauli operators given by $\Xi_{\psi}(P) \coloneqq \frac{\tr^2(P\psi)}{d}$, known as the \textit{characteristic distribution}. The R\'enyi entropies of $\Xi_{\psi}$ correspond, up to an offset, to stabilizer entropies. Since stabilizer entropies are magic monotones~\cite{Leone_2024}, the property of magic can be understood purely as a function of the Pauli distribution $\Xi_{\psi}$. However, not every magic measure depends solely on the Pauli distribution. This distinction becomes evident when considering generalized stabilizer purities. In particular, for $k=9$, we encounter the first measurable generalization of stabilizer purities—one that extends beyond being a mere function of the Pauli distribution.
 
\begin{definition}[Triple purity] The triple purity $\Delta_{\text{triple}}$ is defined as the expectation value of the following Pauli monomial
\begin{align}
    \Delta_{\text{triple}}(\psi)=\tr\left(\monomialdiagram{9}{{1,2,3,4,5,6},{4,5,6,7,8,9}}{}\,\,\,\, \psi^{\otimes 9}\right) =\frac{1}{d^2}\sum_{P,Q\in\mathbb P_n}\tr^3(P\psi)\tr^3(Q\psi)\tr^3(PQ\psi).
\end{align}
\end{definition}
From the definition of triple purity it becomes evident that $\Delta_{\text{triple}}$ is not a mere function of the Pauli distribution, as not only the absolute value of expectation values of Pauli operators $|\tr(P\psi)|$ matters but also their signs.

\subsection{Property testing of stabilizer states}\label{sec:propertytesting}
In this subsection, we revisit the well-studied problem of property testing for stabilizer states~\cite{montanaro2018surveyquantumpropertytesting,gross_schurweyl_2019,bu2023stabilizertestingmagicentropy,grewalImprovedStabilizerEstimation2023,arunachalam2024polynomialtimetoleranttestingstabilizer,chen2024stabilizerbootstrappingrecipeefficient,hinsche2024singlecopystabilizertesting,bao2024toleranttestingstabilizerstates}, connecting it with the commutant structure of the Clifford group. The idea of connecting property testing with the commutant—or the invariant symmetries of the problem—is well established in the literature (see, e.g.,~\cite{montanaro2018surveyquantumpropertytesting,gross_schurweyl_2019,hinsche2024singlecopystabilizertesting}). Our aim here is not to introduce new methods, but to emphasize how the characterization of the Clifford group commutant constrains the design of optimal measurement strategies in this context. In particular, we show that \emph{generalized stabilizer purities} emerge as essentially the only viable measurements for stabilizer property testing when the number of available copies is fixed.

We begin by briefly introducing the problem. Property testing of stabilizer states involves determining whether an unknown quantum state is close to or far from the stabilizer set, based on measurements on multiple copies of the state. Formally, we define the problem as follows.
\begin{problem}[Property testing of stabilizer states]  \label{problem:stabilizer-testing}
    Let $ 1 \ge \varepsilon_B \geq \varepsilon_A > 0 $. Given access to $ k $ copies of an unknown quantum state $ \rho $, assume that $ \rho $ satisfies one of the following conditions:
    \begin{itemize}
        \item $ \rho $ is $ \varepsilon_A $-close to some stabilizer state, i.e., there exists $ \phi \in \stab(n) $ such that $ \|\rho - \phi\|_1 \leq \varepsilon_A $.
        \item $ \rho $ is at least $ \varepsilon_B $-far from all stabilizer states, that is, $ \min_{\phi \in \stab(n)} \|\rho - \phi\|_1 \geq \varepsilon_B $.
    \end{itemize}
    The goal is to design a measurement strategy that acts on $ k $ copies of $ \rho $ that distinguishes these two cases with probability at least $ 2/3 $.
\end{problem}

Equivalently, the task is to find a two-outcome POVM $ \{E, I - E\} $ such that
\begin{align}
\label{eq:POVMtest}
   &\text{if } \rho \text{ is }  \varepsilon_A \text{-close to a stabilizer state, then } \tr(E \rho^{\otimes k}) \geq a, \\
   &\text{if } \rho \text{ is } \varepsilon_B \text{-far from all stabilizer states, then } \tr(E \rho^{\otimes k}) \leq b.
\end{align}
We note that the threshold values $ a  $ and $ b  $ are arbitrary, and any constants satisfying $ a \coloneqq \frac{1}{2} + \Theta(1) $ and $ b \coloneqq \frac{1}{2} - \Theta(1) $ suffice to solve the property testing problem. Moreover, if an tester is given $k$-copies of a stabilizer state or $k$-copies of a non-stabilizer state with $1/2$ probability, the probability of success of the property testing algorithm associated with the POVM $E$ is given by
\be\label{successprobability}
p_{\text{succ}}= \frac{1}{2}(a+(1-b))\,.
\ee
We observe that a natural figure of merit for the $k$-copy testing problem is the maximal success probability achievable over all possible POVMs acting on $k$ copies.
We now show that the optimal POVM must lie in the commutant. This follows from the observation that any optimal measurement strategy must respect the symmetry constraints imposed by the problem. Since the commutant consists of operators that commute with the symmetry group, it forms the natural space in which the optimal POVM resides.

\begin{proposition}[Characterization of the optimal two-outcome POVM]
    An optimal two-outcome POVM $ E $ solving the property testing problem (\cref{problem:stabilizer-testing}) acting on $ k $-copies of the unknown state must lie in the commutant.
\end{proposition}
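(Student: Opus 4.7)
The plan is to prove this by the standard symmetrization argument: any POVM can be replaced by its Clifford-twirled version without loss of performance, and the twirled version automatically lies in the commutant. The key input is the Clifford-invariance of the stabilizer set, combined with the unitary invariance of trace distance.

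More concretely, let $E$ be any two-outcome effect ($0 \le E \le I$) that solves the testing problem of \cref{problem:stabilizer-testing}, achieving the thresholds $a$ and $b$ of \cref{eq:POVMtest}. I would define the twirled effect
\begin{equation}
E' \coloneqq \Phi_{\cl}^{(k)}(E) = \frac{1}{|\mathcal{C}_n|}\sum_{C\in \mathcal{C}_n} C^{\otimes k} E \, C^{\dagger \otimes k}.
\end{equation}
By construction $E' \in \com(\mathcal{C}_n^{\otimes k})$, and since $\Phi_{\cl}^{(k)}$ is a unital CPTP map, $0 \le E' \le I$, so $\{E', I-E'\}$ is a valid two-outcome POVM. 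The goal is then to show that $E'$ attains thresholds at least as good as $E$, so that the success probability in \cref{successprobability} is preserved.

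The main step is a change-of-variables argument combined with Clifford invariance of the stabilizer set. For any state $\rho$ and any $C\in \mathcal{C}_n$,
\begin{equation}
\tr\bigl(C^{\otimes k} E \, C^{\dagger\otimes k}\rho^{\otimes k}\bigr) = \tr\bigl(E \,(C^\dagger \rho C)^{\otimes k}\bigr),
\end{equation}
so $\tr(E'\rho^{\otimes k}) = \mathbb{E}_{C}\tr(E(C^\dagger\rho C)^{\otimes k})$. Now, because $\stab(n)$ is closed under Clifford conjugation and the trace distance is unitarily invariant, $\rho$ is $\varepsilon_A$-close to a stabilizer state if and only if $C^\dagger \rho C$ is, and analogously for being $\varepsilon_B$-far from $\stab(n)$. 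Therefore, if $\rho$ is $\varepsilon_A$-close, each term in the average satisfies $\tr(E(C^\dagger\rho C)^{\otimes k}) \ge a$, giving $\tr(E'\rho^{\otimes k})\ge a$; and if $\rho$ is $\varepsilon_B$-far, every term is bounded by $b$, so $\tr(E'\rho^{\otimes k})\le b$. Thus $E'$ solves the same instance of \cref{problem:stabilizer-testing} with at least the same success probability, while lying in $\com(\mathcal{C}_n^{\otimes k})$.

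The conclusion is then that for any POVM $E$ attaining the optimal success probability, its twirl $E'$ also does, and $E'$ lies in the commutant; hence one may always choose an optimal tester whose effects belong to $\com(\mathcal{C}_n^{\otimes k})$. I do not expect any genuine obstacle here: the argument is essentially a textbook symmetrization, and the only subtlety is the careful use of the Clifford-invariance of $\stab(n)$ to ensure that the far/close classification is preserved under the change of variables $\rho \mapsto C^\dagger \rho C$. This reduction is what makes the characterization of $\com(\mathcal{C}_n^{\otimes k})$ developed earlier in the paper the natural object to analyze for optimal stabilizer property testing.
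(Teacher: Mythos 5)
Your proposal is correct and follows essentially the same symmetrization argument as the paper: twirl the effect $E$ over the Clifford group, use Clifford-invariance of $\stab(n)$ together with unitary invariance of the trace distance to show the thresholds are preserved, and conclude the twirled effect lies in $\com(\mathcal{C}_n^{\otimes k})$. The only addition beyond the paper's proof is your explicit remark that the twirl preserves $0\le E'\le I$, which is a harmless (and correct) extra detail.
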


\begin{proof}
    To prove this, we show that for any POVM $ E $ satisfying \cref{eq:POVMtest}, the POVM given by
    \begin{align}
    E' =\Phi_{\cl}^{(k)}(E) =\frac{1}{|\mathcal{C}_n|}\sum_{C \in \mathcal{C}_n}  C^{\otimes k} E C^{\dagger \otimes k}
    \end{align}
    performs at least as well as $ E $ in the testing problem.

    First, if $ \rho $ is $ \varepsilon_A $-close to a stabilizer state $ \phi $, then for any Clifford $ C $, the state $ C \rho C^\dagger $ is also $ \varepsilon_A $-close to a stabilizer state (i.e., $ C^\dagger \phi C $), due to the unitary invariance of the trace distance. This gives:
    \begin{equation}
    \tr(E' \rho^{\otimes k}) = \frac{1}{|\mathcal{C}_n|}\sum_{C \in \mathcal{C}_n}  \tr(E (C \rho C^\dagger)^{\otimes k}) \geq \frac{2}{3}.
    \end{equation}

    Similarly, if $ \rho $ is $ \varepsilon_B $-far from all stabilizer states, then $ C \rho C^\dagger $ is $ \varepsilon_B $-far from all stabilizer states for any Clifford $ C $. This gives:
    \begin{equation}
    \tr(E' \rho^{\otimes k}) \leq \frac{1}{3}.
    \end{equation}

    Therefore, we conclude that $ E' $ satisfies the property testing criteria and performs as well as $ E $, which implies that the optimal POVM must lie in the commutant of the Clifford group.
\end{proof}

A concrete family of operators in the commutant are the projectors onto stabilizer tensor powers constructed by Gross, Nezami, and Walter~\cite[Theorem~5.6]{gross_schurweyl_2019} (denoted \(\Pi_k^{\min}\) there). In the language of Pauli monomials, these projectors are linear combinations of products of primitive monomials \(\Omega(v)\). For qubits and \(k=6\), the relevant projector is proportional to \(\mathbb{1} + \Omega_6\).

\subsubsection{No efficient testing algorithm with \texorpdfstring{$\leq 5$}{ ≤ 5} copies}

We now prove that no stabilizer property testing algorithm can succeed using at most five copies of the unknown state, as also noted in Refs.~\cite{Damanik2018Optimality,gross_schurweyl_2019}.. This follows from the fact that the commutant of the Clifford group up to $k=5$ contains only primitives proportional to projectors. 

\begin{proposition}
    No quantum algorithm can solve the stabilizer property testing problem (\cref{problem:stabilizer-testing}) using at most five copies of the unknown state.
\end{proposition}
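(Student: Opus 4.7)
The plan is to show that for $k \le 5$ every POVM in $\com(\mathcal{C}_n^{\otimes k})$ produces expectation values on stabilizer states that differ from those on typical far-from-stabilizer pure states by only an exponentially small amount in $n$. By the preceding proposition I may assume without loss of generality that the testing POVM $E$ lies in $\com(\mathcal{C}_n^{\otimes k})$. Since both instances in \cref{problem:stabilizer-testing} can be chosen pure and $T_\pi \ket{\psi}^{\otimes k} = \ket{\psi}^{\otimes k}$ for all $\pi \in S_k$, I further compress to the symmetric subspace via $E \mapsto \Pi_{\sym} E \Pi_{\sym}$ without altering any expectation on such inputs.

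Next I apply \cref{th:algebraicstructurecommutantofclifford} to enumerate the commutant for small $k$. For $k \le 3$ the Clifford commutant coincides with the unitary-group commutant and is spanned by permutations; restricted to the symmetric subspace it collapses to scalar multiples of $\Pi_{\sym}$, so $\tr(E \psi^{\otimes k}) = a$ for every pure $\psi$, which already gives a success probability of exactly $1/2$. For $k \in \{4,5\}$ the generator $\Omega_6$ is not yet available (it requires $k \ge 6$), and the generator $\Omega_k$ exists only when $k$ is divisible by $4$, in which case (namely $k=4$) it coincides with $\Omega_4$. Hence the commutant is generated, up to permutations, by $\Omega_4$ alone, and since every $T_\pi$ acts as the identity on $\Pi_{\sym}$ I obtain
\begin{equation}
    \Pi_{\sym} E \Pi_{\sym} \;=\; a\,\Pi_{\sym} + b\,\Pi_{\sym}\,\Omega_4,
\end{equation}
for some real scalars $a,b$. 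By \cref{lem:propertiesprimitivepauli}, $\Omega_4/d$ is a projector, so $\Omega_4$ has eigenvalues in $\{0,d\}$; the POVM condition $0 \le E \le I$ therefore enforces $a \in [0,1]$ and $a + bd \in [0,1]$, giving $|b| \le 1/d$.

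Evaluating on pure states yields $\tr(E\psi^{\otimes k}) = a + b\,\Delta_4(\psi)$. For any stabilizer state $\sigma$ one has $\Delta_4(\sigma) = 1$, whereas a Haar-random pure state $\psi$ satisfies, with overwhelming probability, both $\Delta_4(\psi) = O(1/d)$ (cf.\ \cref{lem:haaraveragestabilizerpurity} combined with Levy-type concentration) and $\min_{\sigma \in \stab(n)} \|\psi - \sigma\|_1 \ge 2 - o(1)$ (a standard volume-packing argument on the unit sphere using $|\stab(n)| = 2^{O(n^2)}$ versus $d = 2^n$). Hence $\psi$ qualifies as an $\varepsilon_B$-far instance for any fixed $\varepsilon_B < 2$ once $n$ is large, and
\begin{equation}
    \bigl|\tr(E\sigma^{\otimes k}) - \tr(E\psi^{\otimes k})\bigr|
    \;=\; |b|\,\bigl(1 - \Delta_4(\psi)\bigr)
    \;\le\; \frac{1}{d}\,\bigl(1 - O(1/d)\bigr) \;=\; O(2^{-n}),
\end{equation}
so via \cref{successprobability} the success probability of any such test is bounded by $\tfrac12 + O(2^{-n}) < 2/3$ for $n$ sufficiently large. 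The main subtlety of the argument is the pair of concentration statements for Haar-random states (smallness of $\Delta_4$ and distance from $\stab(n)$), but both are standard and follow from \cref{lem:haaraveragestabilizerpurity} together with well-known covering bounds; up to this input, the conclusion is forced by the algebraic structure of the commutant.
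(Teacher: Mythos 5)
Your argument is correct in substance, but it takes a genuinely different route from the paper. The paper's proof is an ensemble-indistinguishability argument: it assumes a tester exists, notes that it would then distinguish the $k$-th moment of random stabilizer states from that of Haar-random states (using L\'evy's lemma plus a union bound to guarantee Haar states are far from $\stab(n)$), and derives a contradiction from the explicit computation that $\| \mathbb{E}_{\psi\sim\mathrm{STAB}}\psi^{\otimes k}-\mathbb{E}_{\psi\sim\mathrm{Haar}}\psi^{\otimes k}\|_1=\exp(-\Omega(n))$ for $k\le 5$ (\cref{example:k4,example:k5}, which rest on the quasi-probability decomposition of the Clifford orbit). You instead argue pointwise on the measurement: after Clifford twirling and compression to the symmetric subspace, every admissible POVM element collapses to the two-dimensional family $a\,\Pi_{\sym}+b\,\Pi_{\sym}\Omega_4\Pi_{\sym}$, and positivity forces $|b|=O(1/d)$, so the bias between any stabilizer state and any far pure state is $O(2^{-n})$. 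What your route buys is that it avoids the moment-operator trace-norm computation entirely (and, in fact, you do not even need concentration of $\Delta_4$ for Haar states, since $0\le\Delta_4\le 1$ already suffices); what the paper's route buys is that the needed trace-distance bounds are anyway computed there as standalone results with independent use.

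Two points in your write-up deserve tightening, though neither is fatal. First, passing from ``the commutant is generated by $\Omega_4$ and permutations'' (\cref{th:algebraicstructurecommutantofclifford}) to the two-dimensional compressed span tacitly requires that arbitrary products of (conjugated) $\Omega_4$'s and permutations collapse back into $\mathrm{span}\{T_\pi,\,T_\sigma\Omega_4 T_\tau\}$; this is exactly what $\Omega_4^2=d\,\Omega_4$ (\cref{lem:propertiesprimitivepauli}) and the $k=5$ graphical identities in \cref{eq:5commutantgraphical} provide, so you should invoke them (or simply cite the explicit $k=4,5$ bases) rather than the generator statement alone. Second, for $k=5$ one has $[\Omega_4,\Pi_{\sym}]\neq 0$, so the spectral constraint must be run on the compression $\Pi_{\sym}\Omega_4\Pi_{\sym}$ rather than on $\Omega_4$ itself; its largest eigenvalue is still $\Theta(d)$ (e.g.\ at least $\tr(\Omega_4\Pi_{\sym}\Omega_4\Pi_{\sym})/\tr(\Omega_4\Pi_{\sym})=(d+4)/5$ by the paper's $k=5$ computation), which yields $|b|\le 5/(d+4)=O(1/d)$ and leaves your conclusion intact.
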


\begin{proof}
To prove the statement and thus the hardness of stabilizer testing using $k=5$ copies, we can focus on showing the hardness, for the less hard case for which $\varepsilon_A=0$ and the unknown state is assumed to be pure.

Suppose, towards contradiction, that there exists a two-outcome POVM tester $ \{E, I - E\} $ satisfying \cref{eq:POVMtest}, capable of testing stabilizer states. First, let us show that the existence of such POVM also implies the existence of a POVM capable of separating Haar random states from uniformly random stabilizer states.  First, by definition of stabilizer tester we have
\begin{equation}
    \mathbb{E}_{\psi \sim \text{STAB}} \tr(E \psi^{\otimes k}) \geq \min_{\phi\in\stab}\tr(E\psi^{\otimes k})\ge  \frac{2}{3}.
\end{equation}
On the other hand, we also have
\begin{align}
    \mathbb{E}_{\psi \sim \text{Haar}} \tr((I - E) \psi^{\otimes k}) 
    &\geq  \frac{2}{3} \cdot \Pr_{\psi \sim \text{Haar}} 
    \left( \min_{\phi \in \text{STAB}} \|\psi - \phi\|_1 \geq \varepsilon_B \right)   \\
    \nonumber 
    &= \frac{2}{3} \cdot \Pr_{\psi \sim \text{Haar}} 
    \left( 2\min_{\phi \in \text{STAB}}\sqrt{1- |\braket{\psi}{\phi}|^2 } \geq \varepsilon_B \right)   \\
     \nonumber
    &=  \frac{2}{3} \left(1 - \Pr_{\psi \sim \text{Haar}} 
    \left( \max_{\phi \in \text{STAB}} |\braket{\psi}{\phi}|^2 \ge 1 - \frac{\varepsilon_B^2}{4} \right)\right)  \\
     \nonumber
    &\geq  \frac{2}{3}\left(1 - \exp \left(0.52n^2 - (2^{n} - 1) \left( 1 - \frac{\varepsilon_B^2}{4} \right) \right)\right) \\
     \nonumber
    &= \frac{1}{2} + \Omega(1),
     \nonumber
\end{align}
where in the first step we have used  that $1-\tr(E \psi^{\otimes k}) \geq \frac{2}{3}$ if $\psi$ is $\varepsilon_B$-far from all stabilizer states, while the fourth step follows from the union bound and L\'evy's lemma (as done explicitly in Ref.~\cite{liu_manybody_2022}).
Therefore, we have
\begin{align}
    \tr\left(E \left(\mathbb{E}_{\psi \sim \text{STAB}} \psi^{\otimes k} - \mathbb{E}_{\psi \sim \text{Haar}} \psi^{\otimes k}\right)\right) 
    &= \mathbb{E}_{\psi \sim \text{STAB}} \tr(E \psi^{\otimes k}) + \mathbb{E}_{\psi \sim \text{Haar}} \tr((I - E) \psi^{\otimes k}) - 1 \\
    &= \Omega(1).
    \nonumber
\end{align}
However, this provides a contradiction, since the one-norm distance between the moment operators satisfies:
\begin{equation}
   \sup_{E: 0 < \|E\|_{\infty} \leq 1} \left|\tr\left(E \left(\mathbb{E}_{\psi \sim \text{STAB}} \psi^{\otimes k} - \mathbb{E}_{\psi \sim \text{Haar}} \psi^{\otimes k}\right)\right)\right| 
   = \left\| \mathbb{E}_{\psi \sim \text{STAB}} \psi^{\otimes k} - \mathbb{E}_{\psi \sim \text{Haar}} \psi^{\otimes k} \right\|_1=\exp(-\Omega(n))\,,
\end{equation}
i.e., it is exponentially small in $ n $ for $ k \leq 5 $, as we show in \cref{example:k5}.
\end{proof}

\subsubsection{Optimal testing by measuring the unitaries in the commutant basis}
In the previous section, we established that there is no strategy to solve the stabilizer testing problem when strictly fewer than six copies of the unknown state are available. This result has been derived by reducing the problem of testing stabilizer states to a simpler problem: testing Haar-random states against random stabilizer states, for which we showed hardness.

In this section, we first demonstrate that testing Haar-random states from random stabilizer states becomes feasible when six copies of the unknown state are provided. 
Specifically, we show that the two-outcome POVM element $ E = \frac{1 + \Omega_6}{2} $, where $ \Omega_6 \coloneqq \frac{1}{d} \sum P^{\otimes 6} $, enables the testing of Haar-random states against random stabilizer states.

First, note that $ E $ is a valid POVM element since $ \Omega_6 $ is a unitary operator, and its operator norm is equal to one (see \cref{le:unitaryprojmon}). Moreover, we have
\begin{align}
   \tr\left(E \left(\mathbb{E}_{\psi \sim \text{STAB}} \psi^{\otimes 6} - \mathbb{E}_{\psi \sim \text{Haar}} \psi^{\otimes 6}\right)\right)
    &= \frac{1}{2}\mathbb{E}_{\psi \sim \text{STAB}} \tr\left(\Omega_6 \psi^{\otimes 6}\right) - \frac{1}{2}\mathbb{E}_{\psi \sim \text{Haar}} \tr\left(\Omega_6 \psi^{\otimes 6}\right) \\
    &= \frac{1}{2} - \exp(\Omega(-n)) = \Omega(1),
    \nonumber
\end{align}
where in the second line we have used \cref{example:k6tracedistance}. Thus, we conclude that $ \Omega_6 $ allows for efficient testing of Haar-random states against random stabilizer states, i.e., $ \left\| \mathbb{E}_{\psi \sim \text{STAB}} \psi^{\otimes 6} - \mathbb{E}_{\psi \sim \text{Haar}} \psi^{\otimes 6} \right\|_1 = \Omega(1) $.

In general, we can state the following.

\begin{proposition}[Best POVM for testing given any $ k = O(n^{2-c}) $ for $c>0$ copies must be in the span of the unitary basis elements of the commutant]\label{prop:bestpovmisaunitary}
    For any $ k = O(1) $ copies of the unknown state, the best POVM which solves the property testing problem must lie in the span of the unitary elements in the commutant basis. If we have the additional assumption that the unknown state is pure, the best POVM is a linear combination of basis elements of the commutant, projected onto the symmetric subspace that are also unitary.
\end{proposition}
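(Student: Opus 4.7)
The plan is to closely mirror the proof of \cref{lem:measurablemonotoneunitarymonomials}, splitting the optimal POVM in the reduced Pauli monomial basis into a unitary part and a non-unitary part that turns out to contribute negligibly to the success probability. By the previous proposition, an optimal POVM $E$ for \cref{problem:stabilizer-testing} may be taken in $\com(\mathcal{C}_n^{\otimes k})$ with $0 \leq E \leq \mathbb{1}$. Expanding in the reduced Pauli monomial basis of \cref{def:distinctpaulimonomials}, I would write $E = E_U + E_P$, where $E_U \coloneqq \sum_{\Omega \in \mathcal{P}_U} c_\Omega \Omega$ collects the unitary monomials and $E_P$ the remaining non-unitary ones, and then argue that $E_U$ achieves essentially the same success probability as $E$.

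The key quantitative input is \cref{lem:unitarymonomialsdominate} applied with $K=1$: for $k=O(1)$ and $n$ large enough that $n \geq k^2 - 3k + 7$, every non-unitary coefficient satisfies $|c_\Omega| \leq 6|\mathcal{P}|^3/d = O(2^{3k^2/2 - n})$, which is exponentially small in $n$. Then I would control the deviation in success probability $|\tr(E_P \rho^{\otimes k})|$ as follows. Restricting to pure unknown states $\rho = \psi$ (as in the second part of the statement), the identity $\tr(E\psi^{\otimes k}) = \tr(\Pi_{\sym} E \Pi_{\sym}\psi^{\otimes k})$ lets me replace $E$ by its symmetric-subspace projection and expand in the quotient basis $\mathcal{P}/S_k$. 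The Hermiticity of $E$ then groups the sum into Hermitian pairs $c_\Omega \Omega + c_\Omega^* \Omega^\dagger$, so that the contribution of each basis element is real and \cref{th:faithfulness} bounds $|\Re\tr(\Omega\psi^{\otimes k})| \leq 1$. Summing over at most $|\mathcal{P}| = O(2^{k^2/2})$ non-unitary terms yields
\begin{equation}
|\tr(E_P\psi^{\otimes k})| = O(2^{2k^2 - n}),
\end{equation}
so $E$ and $E_U$ produce success probabilities differing by at most this exponentially small quantity. Hence the optimal POVM, up to an exponentially small correction, is a linear combination of symmetric-subspace projections of unitary Pauli monomials, which is exactly the second conclusion of the proposition.

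The main obstacle is in the third step. The faithfulness inequality of \cref{th:faithfulness} controls only the real part of $\tr(\Omega\psi^{\otimes k})$, whereas a priori the sum defining $\tr(E_P\psi^{\otimes k})$ involves complex inner products. The cleanest remedy is to symmetrize the basis under $\Omega \leftrightarrow \Omega^\dagger$ \emph{before} splitting into $E_U$ and $E_P$, so that $E$ admits a real-coefficient expansion in Hermitian basis elements and \cref{th:faithfulness} applies termwise. A related subtlety is the first part of the statement, which does not assume purity: for mixed unknown states $\rho$, the naive H\"older bound $|\tr(\Omega \rho^{\otimes k})| \leq \|\Omega\|_\infty$ is insufficient because projective primitives have operator norm $d^{m(\Omega_P)}$; here I would argue either by purification of $\rho$ into a larger Hilbert space and reduction to the pure-state bound, or by noting that the tester can, without loss of optimality, be restricted to act on the symmetric subspace of $\rho^{\otimes k}$, to which the pure-state faithfulness argument extends with only polynomial overhead.
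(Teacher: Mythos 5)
Your proposal follows essentially the same route as the paper's proof: expand the commutant-supported POVM in the reduced Pauli monomial basis, invoke \cref{lem:unitarymonomialsdominate} (with $\|E\|_{\infty}\le 1$) to make every non-unitary coefficient $O(2^{\frac{3}{2}k^2-n})$, bound each $\tr(\Omega\rho^{\otimes k})$ via \cref{th:faithfulness}, and sum over the $O(2^{k^2/2})$ basis elements to conclude that dropping the non-unitary part changes the success probability only by $O(2^{2k^2-n})$. The two subtleties you flag (real part versus modulus in \cref{th:faithfulness}, and the non-pure case in the first claim) are glossed over in the paper's own one-paragraph proof as well, so your added care is compatible with, rather than a departure from, its argument.
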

\begin{proof}
    This follows directly from \cref{lem:unitarymonomialsdominate} by expressing the POVM element as $E\in\com(\mathcal{C}_n^{\otimes k})$. Indeed, we can express it as $E=\sum_{\Omega\in\mathcal{P}}\alpha_{\Omega}(E)\Omega$ with $\Omega$ being a Pauli monomial. Using \cref{lem:unitarymonomialsdominate}, and using that $\|E\|_{\infty}\le 1$, we know
\be
\alpha_{\Omega}(E)\le O(2^{\frac{3}{2}k^2-n})
\ee
Therefore, we can express the expectation value of $E$ on $\psi^{\otimes k}$ as $\tr(E\psi^{\otimes k})=\sum_{\Omega}\alpha_{\Omega}(E)\tr(\Omega\psi^{\otimes k})$. 
Since $\tr(\Omega\rho)\le 1$ (see \cref{th:faithfulness}), then it immediately follows that
\be
\tr(E\psi^{\otimes k})=\sum_{\Omega\in\mathcal{P}_U}\alpha_{\Omega}(E)\tr(\Omega \psi^{\otimes k})+O(2^{2k^2-n}),
\ee
where we have used the fact the dimension of the $k$-th order commutant in this regime is $O(2^{k^2/2})$~\cite{nebe2000invariantscliffordgroups}.
Therefore, the operator $E=\sum_{\Omega\in\mathcal{P}_U}\alpha_{\Omega}(E)\Omega$ performs as well as the POVM $E$ up to an exponentially small error in $n$ (in the hypothesis that $k^2=o(n^{2-c})$ for $c>0$. This proves the statement.
\end{proof}
In contrast to Theorem~5.6 in~\cite{gross_schurweyl_2019},  Proposition~\ref{prop:bestpovmisaunitary} shows that the optimal distinguishing POVM is essentially unitary (i.e., supported on unitary Pauli monomials). This explains why the minimal number of copies required for a dimension-independent test in~\cite{gross_schurweyl_2019} corresponds to the first $k$ for which the commutant contains a non-trivial unitary element (e.g., $k=6$ for qubits).

\begin{corollary}[Best POVM for testing given $ k = 6 $ copies must be given by $ \Omega_6 $]\label{cor:testingP6}
    For $k=6$ copies of the unknown pure state, the best POVM which solves the property testing problem (even in the possibly general tolerant scenario) must be given by $ E = \alpha_1\mathbb{1} + \alpha_2\Omega_6 $.
\end{corollary}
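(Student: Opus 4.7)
The plan is to invoke \cref{prop:bestpovmisaunitary} with $k=6$ and then classify the unitary Pauli monomials at this order. By the proposition, any optimal tester $E$ (up to an exponentially small error in $n$, which is irrelevant to the $k=6=O(1)$ regime) must lie in the span of the unitary elements of the commutant basis. Since we are testing a pure state $\psi^{\otimes 6}$, we may further project $E$ onto the symmetric subspace without loss of generality, so that $E$ is determined by its equivalence class in $\mathcal{P}_U/S_6$ under left/right multiplication by permutations (as in \cref{lem:clifforbitofpurestate}). Hence it suffices to enumerate the representatives of $\mathcal{P}_U/S_6$.

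The main step is this enumeration. A unitary Pauli monomial $\Omega_U(V)$ is built from columns $v_j \in \mathbb{F}_2^6$ all of which are odd in the sense $|v_j|/2 \equiv 1 \pmod 2$, so $|v_j|\in\{2,6\}$. For weight $|v|=6$ there is a unique column, namely $v=(1,1,1,1,1,1)^T$, producing the primitive $\Omega_6$. For weight $|v|=2$, the corresponding primitive is a swap operator $T_{(ij)}$, which is a permutation. I would then show:
\begin{itemize}
\item Products composed purely of weight-$2$ columns are permutations, and hence equivalent to $\mathbb{1}$ modulo $S_6$.
\item A product containing a weight-$6$ column and any number of weight-$2$ columns is of the form $\Omega_6 \cdot T_\pi$ (using that the unique weight-$6$ vector commutes or trivially absorbs adjacent swaps via the graphical moves of \cref{th:graphrules}), hence equivalent to $\Omega_6$ modulo $S_6$.
\item More than one weight-$6$ column would be linearly dependent (there is only one such vector in $\mathbb{F}_2^6$), and by \cref{lem:everypaulimonomialcanbereduced} reduces to at most one; using $\Omega_6^2 = \mathbb{1}$ from \cref{lem:propertiesprimitivepauli} takes care of two identical weight-$6$ columns.
\end{itemize}

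Putting this together, $\mathcal{P}_U/S_6 = \{\mathbb{1}, \Omega_6\}$, so $E = \alpha_1 \mathbb{1} + \alpha_2 \Omega_6$. The main obstacle is the rigorous verification of the case enumeration, in particular that no subtle combination of odd-weight primitives (e.g., sums of overlapping weight-$2$ columns with a weight-$6$ column) yields a further class modulo $S_6$; this is handled by a short diagrammatic argument using the substitution rules in \cref{th:graphrules}, which guarantees that any such diagram is equivalent to one of the two canonical forms above. Finally, one notes that the (exponentially small) correction from \cref{prop:bestpovmisaunitary} does not affect the success probability at leading order for constant $k$, which is why the conclusion holds also in the tolerant scenario.
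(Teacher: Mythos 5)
Your proposal is correct and takes essentially the same route as the paper: the paper also deduces the corollary directly from \cref{prop:bestpovmisaunitary} together with the fact that $\Omega_6$ is the only nontrivial unitary element of the $k=6$ commutant once projected onto the symmetric subspace, citing its explicit construction of $\com(\mathcal{C}_n^{\otimes 6})$ in \cref{ssec:k6}. Your inline enumeration of odd-weight columns (weights $2$ and $6$, handled via the moves of \cref{th:graphrules}) is just that $k=6$ classification spelled out rather than cited.
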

\begin{proof}
    This follows directly from the previous proposition and the fact that $\Omega_6$ is the only unitary residing in the $k = 6$ order commutant of the Clifford group, when projected onto the symmetric subspace. For an explicit construction of the $k = 6$ order commutant, see \cref{ssec:k6}.
\end{proof}
In fact, Ref.~\cite{gross_schurweyl_2019} has shown that $ \Omega_6 $ enables testing using only $6$ copies. Using the same argument and our computation of the $k=8$ commutant, it follows that an optimal algorithm with access to $8$ copies can only use $\Omega_6$ and $\Omega_{6,6}$ associated with Bell magic.

\subsubsection{An operational meaning for stabilizer entropies}
In this section, we build on the results of the previous section to show that stabilizer property testing provides a simple and powerful operational interpretation of the $(\alpha=3)$ stabilizer entropy.

Specifically, we formulate the stabilizer testing problem as follows. A tester is given $k = 6$ copies of an unknown pure quantum state, which is either a stabilizer state $\sigma$ with probability $1/2$, or a pure (non-stabilizer) state $\ket{\psi}$ with probability $1/2$. The task of the tester is to determine which of the two cases applies. From \cref{cor:testingP6}, we know that the optimal POVM is of the form $E = \alpha_1 \mathbb{1} + \alpha_2 \Omega_6$. The condition $E \ge 0$ for a valid POVM implies $\alpha_1 \ge \alpha_2$, since $\Omega_6$ is a unitary operator (cf. \cref{lem:propertiesprimitivepauli}).

Using \cref{successprobability}, the success probability for correctly identifying the state in this scenario is given by
\begin{align}
p_{\text{succ}} &= \frac{1}{2} \tr(E \sigma^{\otimes 6}) + \frac{1}{2} \tr((\mathbb{1} - E) \psi^{\otimes 6}) \\
\nonumber
&= \frac{1}{2}[(\alpha_1 + \alpha_2) + (1 - \alpha_1) - \alpha_2 \Delta_6(\psi)] \\
&= \frac{1}{2}[1 + \alpha_2(1 - \Delta_6(\psi))],
\nonumber
\end{align}
where we recall the definition of the stabilizer purity $\Delta_6(\psi)$ from \cref{def:stabilizerentropies}, and the fact that $\Delta_6(\sigma) = 1$ for all stabilizer states $\sigma$. 

Maximizing over $\alpha_2$ under the constraint $\alpha_1 \ge \alpha_2$, we obtain the optimal success probability for the $k=6$ property testing problem:
\begin{equation}\label{eq21212}
p_{\text{succ}}^{\text{optimal}} = \frac{1}{2} + \frac{1}{4}(1 - 2^{-2M_3(\psi)}),
\end{equation}
and the optimal property testing algorithm (i.e., the one with the highest success probability) is given by $\frac{\mathbb{1}+\Omega_6}{2}$.  \cref{eq21212} endows the $(k=3)$ stabilizer entropy with a clear operational meaning, revealing a deep connection between magic-state resource theory and stabilizer property testing.

\subsection{Clifford-symmetric quantum states}\label{sec:cliffordsymmetricstates}

The findings here immediately translate into a framework of symmetric states. Group-symmetric states play an important role in quantum information
theory, prominently in entanglement theory, where states that are $U\otimes U$-symmetric go under the name of \emph{Werner states}. The commutant $ \com(U\otimes U:U\in {\cal U}(2^n))$ is spanned by the identity  $ \mathbb 1 $ and the swap operator $ T$ \cite{PhysRevA.40.4277}. Hence,
the set of Werner states is a one-dimensional set of quantum states. In fact, many interesting research questions can be decided on this set, such as their entanglement distillability. Being able to precisely characterize the commutant of symmetry groups is also instrumental for computing asymptotic entanglement measures under symmetry \cite{PhysRevLett.87.217902}. The importance of 
being able to characterize sets of group 
symmetric 
quantum states for the field of quantum information 
theory can hardly be overestimated. 

Similar statements can also be made for Clifford-symmetric states that are invariant under ${\cal C}_n$, again with
\begin{align}
    \com(\mathcal{C}_n^{\otimes k}) \coloneqq \{ O \in \mathcal{B}(\mathcal{H}^{\otimes k}) : C^{\dagger \otimes k} O C^{\otimes k} = O \, \text{ for all } C \in \mathcal{C}_n \}.
\end{align}
The set of symmetric quantum states is then nothing but the intersection of the commutant with the quantum mechanical state space, 
\begin{align}
 {\cal D}_{\cl}({\mathcal{H}^{\otimes k}}) \coloneq \com(\mathcal{C}_n^{\otimes k})\cap {\cal D}(\mathcal{H}^{\otimes k}),
 \end{align}
 which in turn is the positive semi-definite cone intersected with unit trace operators.
 Since the commutant is spanned by Pauli monomials $\Omega\in\mathcal{P}$, membership to
 the
 set $ {\cal D}_{\cl}({\mathcal{H}^{\otimes k}}) $ of 
 Clifford-symmetric quantum states is characterized by the semi-definite feasibility problem
 \cite{Boyd2004}
 \begin{eqnarray}
 	& \sum_{\Omega\in\mathcal{P}}c_{\Omega}\Omega\geq 0,\\
	&\sum_{m=0}^{k-1}d^{k-m}\sum_{\Omega\,:\, m(\Omega)=m}c_{\Omega}=1,
 \end{eqnarray}
 which can be solved for relatively small $k$, with interior point 
 methods, given the dimension  
 $\dim(\com(\mathcal{C}_n^{\otimes k}))=\exp(\Theta(k^2))$ (see \cref{cor:dimcom})
 of $\com(\mathcal{C}_n^{\otimes k})$.

\subsection{Explicit 
characterization of the commutant up to \texorpdfstring{$k=8$}{k=8}}\label{sec:cliffordcommutantexample}

In this section, we explicitly construct the commutant of the Clifford group for low values of $ k $. Specifically, we begin with $ k=4 $, which represents the first non-trivial case, and extend our analysis up to $ k=8 $. We provide explicit constructions of the Clifford-Weingarten functions (when restricted to pure quantum states), as introduced in \cref{sec:cliffordweingartencalculus}, and illustrate their applications with concrete examples.

\subsubsection{\texorpdfstring{$k=4$}{k=4}} Let us start from the first non-trivial case, i.e., $k=4$. There are two possible types of Pauli monomials to be defined. One class corresponds to permutations $S_4$. The second non-trivial element are generated by $\Omega_4$ equal to
\be
\Omega_4=\frac{1}{d}\sum_{P}P^{\otimes 4}
\ee
which is proportional to a projector $(\Omega_4)^2=d\Omega_4$. The dimension of the commutant is $\dim(\com(\mathcal{C}_n^{\otimes 4}))=30$. As such, Weingarten calculus is straightforward. Through graphical calculus, one can verify the following identities
\be
\Omega_4=\Omega_4 T_{(12)(34)}=\Omega_4 T_{(13)(24)}=\Omega_4 T_{(14)(23)}
\ee
For example, the 
first identity is verified via the following diagrammatic calculation
 \setmonomialscale{3.5mm}
       \begin{align}
        \monomialdiagram{4}{{1,2,3,4},{1,2},{3,4}}{}\quad\textbf{=}\quad\monomialdiagram{4}{{1,2,3,4},{3,4},{3,4}}{}\quad\textbf{=}\quad\monomialdiagram{4}{{1,2,3,4}}{}
    \end{align}
and the others follow identically. It follows that, in addition to permutations, there are only $6$ additional elements, that is, $\Omega_4$, $\Omega_4T_{(12)}$, $\Omega_4T_{(13)}$, $\Omega_4T_{(14)}$, $\Omega_4T_{(123)}$, $\Omega_4T_{(132)}$. 
The Gram-Schmidt matrix is a $30\times 30$ matrix, which in a compact notation reads
\be
\mathcal{W}^{(4)}=\begin{pmatrix}
    \tr(T_{\pi}T_{\sigma}) &  \tr(T_{\kappa}\Omega_4)\\ \tr(T_{\kappa}\Omega_4)  & \tr(T_{\kappa}T_{\kappa'}\Omega_4)    \end{pmatrix}
\ee
where $\pi,\sigma\in S_{4}$ runs over all the permutation group $S_{4}$, while $\kappa,\kappa'\in\{(),(12),(13),(14),(123),(132)\}$, for a total of $30$ elements. While in Ref.~\cite{leone_mathematica_2023} we explicitly provide 
the Gram-Schmidt matrix and its inverse for readers interested in utilizing them, here we apply this framework to derive the Clifford orbit of a pure quantum state.
Following \cref{lem:weingartencalculusclifford}, we can now simply compute the Clifford twirling operator $\Phi_{\cl}^{(4)}$, by computing the Clifford-Weingarten functions $(\mathcal{W}^{-1})_{\Omega,\Omega'}$. As such, one 
can express 
the Clifford twirling of any 
operator $O\in\mathcal{B}(\mathcal{H}^{\otimes 4})$ as
\be
\Phi_{\cl}^{(4)}(O)=\begin{pmatrix}
    \tr(T_{\pi}T_{\sigma}) &  \tr(T_{\kappa}\Omega_4)\\ \tr(T_{\kappa}\Omega_4)  & d\tr(T_{\kappa}T_{\kappa'}\Omega_4)    \end{pmatrix}^{-1}
    \begin{pmatrix}
    \tr(T_{\pi}O)\\\tr(T_{\kappa}\Omega_4O)
\end{pmatrix}.
\ee
One can express the Clifford orbit of a pure quantum state $\psi^{\otimes 4}=\ketbra{\psi}^{\otimes 4}$ as
\begin{align}\label{eq:4foldchannel}
\Phi_{\cl}^{(4)}(\psi^{\otimes 4})=p(\psi)\frac{\Pi_{\sym}}{\tr(\Pi_{\sym})}+(1-p(\psi))\frac{\Omega_4\Pi_{\sym}}{\tr(\Omega_4\Pi_{\sym})}
\end{align}
where the (quasi)-probability is equal to
\be
p(\psi)=\frac{(d+3)(d-\tr(\Omega_4\psi^{\otimes 4}))}{(d+4)(d-1)}.
\ee

\begin{example}[Purity fluctuations of Clifford orbit states~\cite{leone_quantum_2021,oliviero_transitions_2021}] 
    In \cref{exampleaverageentropy}, we computed starting from the average purity in a subsystem; we computed the average entropy in a subsystem. Let us now focus on the average purity fluctuations. Let us show the example for both the unitary and Clifford cases to also show the relevant differences. 
    Starting from the unitary case, we have seen in the previous section how the $4$-fold channel is built; let us use these tools to compute the average purity fluctuations. Let us start from the definition of purity fluctuations for a state $\psi$. 
    \begin{align}    
    \Delta_{\psi\sim\haar}\pur(\psi_A) \coloneqq \int \de \psi \pur^2(\psi_A) -\left(\int \de\psi \pur(\psi_A\right)^2 .
    \end{align}    
    That can also be rewritten in a more convenient form as 
    \begin{align}
    \label{purityfluc1}
    \Delta_{\psi\sim\haar}(\pur(\psi_A)) & = \tr(T_{(12)(34)}^{A}\Phi_{\haar}(\psi^{\otimes 4})) - \frac{d_A^2+d_B^2+2d_Ad_B}{(d_Ad_B+1)^2}\\
    \nonumber
    & =  \frac{\sum_{\pi}T^{A}_{(12)(34)}T_{\pi}^{A}T_{\pi}^{B}}{d_Ad_B(d_Ad_B+1)(d_Ad_B + 2)(d_Ad_B + 3)} - \frac{d_A^2+d_B^2+2d_Ad_B}{(d_Ad_B+1)^2} \\ 
     \nonumber
    & = \frac{2 \left(d_A^2-1\right) \left(d_B^2-1\right)}{\left(d_A d_B+1\right)^2 \left(d_A d_B+2\right) \left(d_A d_B+3\right)}
     \nonumber
  \end{align}    
  where we have used 
  the results of \cref{exampleaverageentropy}. 
Shown, as an example, are the 
purity fluctuations with respect to the unitary group; we are now able to show what happens when we instead consider the Clifford group. 
The purity fluctuations read 
\begin{align}
\Delta_{\psi\sim\stab_n }(\pur(\psi_A)) & = \tr(T_{(12)(34)}^{A}\Phi_{\cl}^{(4)}(\psi^{\otimes 4})) - \frac{d_A^2+d_B^2+2d_Ad_B}{(d_Ad_B+1)^2}\\ 
\nonumber
&+ \tr(T_{(12)(34)}^{A}\left(p(\psi)\frac{\Pi_{\sym}}{\tr(\Pi_{\sym})}+(1-p(\psi))(\psi)\frac{\Omega_4\Pi_{\sym}}{\tr(\Omega_4\Pi_{\sym})}\right)) - \frac{d_A^2+d_B^2+2d_Ad_B}{(d_Ad_B+1)^2} \\
\nonumber
& = \frac{\left(d_A^2-1\right) \left(d_B^2-1\right) \left(\left(d_A d_B+1\right) \tr( \Omega_4 \psi ^{\otimes 4})-2\right)}{\left(d_A d_B+1\right){}^2 \left(d_A^2 d_B^2+d_A d_B-2\right)}.
\nonumber
\end{align}
\end{example}
As a last application, we compute the trace norm between the Clifford twirling of a pure state and a Haar random state, which is exponentially vanishing.

\begin{example}[Trace distance of random Clifford orbit]\label{example:k4}
 Let us compute $\|\Phi_{\cl}^{(4)}(\psi^{\otimes 4})-\Phi_{\haar}^{(4)}(\psi^{\otimes 4})\|_1$. 
Let us start by recalling that the twirling over the whole unitary group of a pure quantum state is equivalent to \cref{Sec: orbitofhaarstates}
\begin{align}
\Phi_{\haar}^{(4)}(\psi^{\otimes 4})=\frac{\Pi_{\sym} }{\tr(\Pi_{\sym}) }.
\end{align}
Now by using \cref{eq:4foldchannel} we have that the $1$-norm can be rewritten as follows
\be
\norm{\Phi_{\cl}^{(4)}(\psi^{\otimes 4})-\Phi_{\haar}^{(4)}(\psi^{\otimes 4}) }_{1}=|1-p(\psi)|\norm{\frac{\Omega_4\Pi_{\sym}}{\tr( \Omega_4\Pi_{\sym}) }-\frac{\Pi_{\sym} }{\tr\Pi_{\sym} }}_{1}.
\ee
To proceed, let us 
notice that  given an operator in its spectral decomposition $A=\sum_{i}a_i \ketbra{a_i}{a_i}$ its $1$-norm is equal to $\norm{A}_{1}=\sum_{i}|a_{i}|$. Then, note the following: (i) $\Omega_4\Pi_{\sym}/\tr( \Omega_4\Pi_{\sym}) $ is a mixed quantum state of $\mathcal{H}^{\otimes 4}$ with purity $\tr^{-1}(\Omega_4\Pi_{\sym})$. (ii) $\Pi_{\sym}/\tr\Pi_{\sym}$ and $\Omega_4\Pi_{\sym}^{(4)}/\tr( \Omega_4\Pi_{\sym}) $ commute $[\Pi_{\sym},\Omega_4\Pi_{\sym}]=0$ and thus admit a common basis in the Hilbert space $\mathcal{H}^{\otimes 4}$. (iii) The product of $\Pi_{\sym}$ and $\Omega_1 \Pi_{\sym}$ is proportional to $\Omega_4 \Pi_{\sym}$ as we have $\Omega_4 \Pi_{\sym} \Pi_{\sym}=\Omega_4 \Pi_{\sym}$. (iv) Both $\Pi_{\sym}$ and $\Omega_4 \Pi_{\sym}$ are proportional to projectors and therefore have flat spectra.

Therefore, the eigenvalues $\lambda_{1},\lambda_2$ of $\Phi_{\cl}^{(4)}(\psi^{\otimes 4})-\Phi_{\haar}^{(4)(\psi^{\otimes 4})}$ are 
\be
\lambda_{1}=\frac{1}{\tr \Omega_4\Pi_{\sym} }-\frac{1}{\tr\Pi_{\sym} }
\ee
with multiplicity given by the rank of $\Omega_4\Pi_{\sym}$, i.e., $m_{\lambda_{1} }=\tr(\Omega_4\Pi_{\sym})$ and
\be
\lambda_{2}=-\frac{1}{\tr\Pi_{\sym} }
\ee
with multiplicity given by the difference of the ranks, i.e., $m_{\lambda_{2}}=\tr(\Pi_{\sym})-\tr(\Omega_4\Pi_{\sym})$. Therefore, we get
\be
\norm{\Phi_{\cl}^{(4)}(\psi^{\otimes 4})-\Phi_{\haar}^{(4)}(\psi^{\otimes 4})}_{1}& =(1-p(\psi))(m_{\lambda_{1} }\lambda_{1}+m_{\lambda_{2}}|\lambda_{2}|) \\ & =(1-p(\psi))
\left(2-2\frac{\tr \Omega_4\Pi_{\sym} }{\tr \Pi_{\sym} }\right) \\& =\frac{2}{d(d+1)}\left|\Delta_4(\psi)-4\right|\ee
and for large $d$, we obtain $\norm{\Phi_{\cl}^{(4)}(\psi^{\otimes 4})-\Phi_{\haar}^{(4)}(\psi^{\otimes 4})}_{1}\simeq 2\Delta_4(\psi)/d$.
\end{example}

\subsubsection{\texorpdfstring{$k=5$}{k=5}}
Let us analyze the commutant for the Clifford group for $k=5$. The cardinality of the commutant is given by $|\com(\mathcal{C}_n^{\otimes 5})|=270$. According to \cref{th:algebraicstructurecommutantofclifford}, the commutant of the Clifford group is generated by permutations and $\Omega_4$. It turns out that, other than permutations, there are only $5$ more additional elements that, on top of left and right multiplication for permutations, generate the entire commutant, which are $T_{\pi}\Omega_4 T_{\pi}$. While $\Omega_4$ is the only one with $m(\Omega)=1$, we can ask if there are others with a higher weight.
\setmonomialscale{3.5mm}
       \begin{align}
        \monomialdiagram{5}{{1,2,3,4},{2,3,4,5}}{}\quad\textbf{=}\quad\monomialdiagram{5}{{1,2,3,4},{1,5}}{}\quad\textbf{=}\Omega_4T_{(15)},\quad\quad\quad \monomialdiagram{5}{{1,2,3,4},{2,3,4,5}}{0:1}\quad\textbf{=}\quad\monomialdiagram{5}{{2,3,4,5},{1,2,3,4}}{}\quad\textbf{=}T_{(15)}\Omega_4\label{eq:5commutantgraphical}
    \end{align}
This tells us that the product of any two adjoint permutations of $\Omega_4$ for $i\in[5]$, regardless of whether a phase is included, results in another instance of $\Omega_4$, up to permutations.

Therefore, the $5$ order commutant looks very similar to the $4$ order commutant embedded in a larger space. The reason is that there is no additional register in which to embed the unitary $\Omega_{6}$ that, together with $\Omega_{4}$ and permutations, generates the commutant for any $k$.

Exploiting the fact that we do not really have a different non-trivial object in the commutant for $k=5$, but only $\Omega_4$ embedded in $k=5$ copies of the Hilbert space, it is easy to see that there are $30$ independent elements for each non-trivial $\Omega$ generated by right multiplication for permutations. This can be seen by the fact that any permutation in $S_{5}$ can be written as a permutation $\pi\in S_{4}$ times a permutation in the set $S_{5/4}\coloneqq\{e,(15),(25),(35),(45)\}$, thus generating a total of $30\times 5+5! =270$, recovering the dimension of the full Clifford group. 

 We can thus express the Clifford twirling operator $\Phi_{\cl}^{(5)}(O)$ for any $O\in\mathcal{B}(\mathcal{H}^{\otimes 5})$ as
 \be
\Phi_{\cl}^{(5)}(O)=\sum_{\pi\sigma \in S_5}\alpha_{\sigma,\pi}T_{\sigma}\Omega_4T_{\sigma^{-1}}T_{\pi}+\beta_{\pi}T_{\pi}
 \ee
where the coefficients can be determined by computing the Gram-Schmidt matrix and inverting it to get the Clifford-Weingarten functions. Although we omit the explicit computation of the Clifford-Weingarten functions in this case, we focus again on the computation of the Clifford twirling applied on $5$ tensor copies of a pure quantum state, giving explicit details for its computation.

Note that, from \cref{sec:clorbitquantumstates}, we can write
\begin{align}\label{eq:stateorbitcliffordk5}
\Phi_{\cl}^{(5)}(\psi)=p(\psi)\frac{\Pi_{\sym}}{\tr(\Pi_{\sym})}+(1-p({\psi}))\frac{\Pi_{\sym}\Omega_4\Pi_{\sym}}{\tr(\Omega_4\Pi_{\sym})}
\end{align}
which is very similar to the one for the case $k=4$, but not exactly. Indeed note that while for $k=4$ $[\Omega_4,\Pi_{\sym}]=0$, for $k=5$ we have
\be
\Pi_{\sym}\Omega_4\Pi_{\sym}\propto \sum_{\sigma}(T_{\sigma}\Omega_{4}T_{\sigma^{-1} })\Pi_{\sym}
\ee
and thus clearly $[\Omega_4,\Pi_{\sym}]\neq 0$. To find the probabilities, we proceed in the way sketched in \cref{sec:clorbitquantumstates} and write the matrix $\mathcal{S}$ down as
\be
\mathcal{S}=\frac{1}{\tr(\Pi_{\sym})}\begin{pmatrix}
1&1\\
1&\frac{\tr(\Omega_4\Pi_{\sym}\Omega_4\Pi_{\sym})\tr(\Pi_{\sym})}{\tr(\Omega_4\Pi_{\sym})^2}\\
\end{pmatrix}.
\ee
Let us compute the matrix elements. 
First of all, 
$\tr(\Pi_{\sym})=d(d+1)(d+2)(d+3)(d+4)/120$. Then, because of the structure of the symmetric group, one can write the projector onto the symmetric subspace as $\Pi_{\sym}=\frac{1}{5!}\sum_{\pi\in S_4, \sigma\in S_{5/4}}T_{\pi}T_{\sigma}$, which allows us to compute the trace right away as
\be
\tr(\Omega_4 \Pi_{\sym})=\frac{1}{5!}\sum_{\pi\in S_4, \sigma\in S_{5/4}}\tr(\Omega_4T_{\pi}T_{\sigma})=\frac{1}{5!}\sum_{\pi\in S_4, \sigma\in S_{5/4}}\tr(\Omega_4T_{\pi}\tr_5T_{\sigma})=\frac{(d+4)}{5}\tr(\Omega_4\Pi_{\sym}),
\ee
where we have used  the fact that 
$
\tr_{5}T_{4,5}=d^{-1}\sum_{P}\mathbb{1}^{\otimes3}\otimes P\otimes \tr(P)=\mathbb{1}^{\otimes4}
$. Thus, we have $\tr(\Omega_4 \Pi_{\sym})=\frac{d+4}{5}\tr(\Omega_4\Pi_{\sym})$. Similarly
\be
\tr(\Omega_4\Pi_{\sym}\Omega_4\Pi_{\sym})& =\frac{1}{5!}\sum_{\pi,i}\tr(\Pi_{\sym}\Omega_4\Omega_4T_{\pi})\\
& =\frac{1}{5}\tr(\Pi_{\sym}\Omega_4(d\mathbb{1}^{\otimes5}+T_{(15)}+T_{(25)}+T_{(35)}+T_{(45)}))\\
& =\frac{(d+4)}{5}\tr(\Omega_4\Pi_{\sym}),
\ee
where we have used the identity $\Omega_4T_{\pi}\Omega_4T_{\pi}=\Omega_{4}T_{(i5)}$ for any $i\in[2,\ldots,5]$, which we showed by the graphical calculus in \cref{eq:5commutantgraphical}. Finally, we can express the matrix $\mathcal{S}$, and its inverse
\be
\mathcal{S}=\frac{1}{\tr(\Pi_{\sym})}\begin{pmatrix}
1&1\\
1&\frac{(d+4)(d+3)}{5}\\
\end{pmatrix}\,,\quad \mathcal{S}^{-1}=\frac{d(d+1)(d+2)(d+3)(d+4)}{6(d+8)(d-1)}\begin{pmatrix}
\frac{(d+4)(d+3)}{20}&-1\\
-1&1\\
\end{pmatrix}.
\ee
Therefore, according to \cref{sec:clorbitquantumstates}, we can compute the quasi-probabilities as
\begin{align}\label{eq:p1mpk5}
\begin{pmatrix}
    p(\psi)\\
    1-p(\psi)
\end{pmatrix}=\mathcal{S}^{-1}\begin{pmatrix}
\frac{\tr(\Pi_{\sym}\psi^{\otimes 5})}{\tr(\Pi_{\sym})}\\
    \frac{\tr(\Pi_{\sym}\Omega_4\Pi_{\sym}\psi^{\otimes 5})}{\tr(\Pi_{\sym}\Omega_4)}
\end{pmatrix}=\begin{pmatrix}
        \frac{(d+3)(d+4-5\Delta_{4}(\psi)))}{(d+8)(d-1)}\\
        \frac{5[(d+3)\Delta_{4}(\psi))-4]}{(d+8)(d-1)}
    \end{pmatrix}
\end{align}
where $\Delta_{4}(\psi)$ is the stabilizer purity, see \cref{def:stabilizerentropies}.

As an application, let us compute the trace distance between $\Phi_{\cl}^{(5)}(\psi^{\otimes 5})$ and $\Phi_{\haar}^{(5)}(\psi^{\otimes 5})$ and show that $5$ copies of a stabilizer state cannot be distinguished by $5$ copies of a Haar random state.

\begin{example}[Trace distance of random Clifford orbit]\label{example:k5} Let us compute $\|\Phi_{\cl}^{(5)}(\psi^{\otimes 5})-\Phi_{\haar}^{(5)}(\psi^{\otimes 5})\|_1$. Using \cref{eq:stateorbitcliffordk5}, we can immediately write
\be
\left\|\Phi_{\cl}^{(5)}(\psi^{\otimes 5})-\Phi_{\haar}^{(5)}(\psi^{\otimes 5})\right\|_1&=\left(1-p(\psi)\right)\left\|\frac{\Pi_{\sym} }{\tr(\Pi_{\sym})}-\frac{\Pi_{\sym}\Omega_4\Pi_{\sym} }{\tr(\Pi_{\sym}\Omega_4)}\right\|\\
&\left(1-p(\psi)\right)\left(\left\|\frac{\Pi_{\sym} }{\tr(\Pi_{\sym})}\right\|+\left\|\frac{\Pi_{\sym}\Omega_4\Pi_{\sym} }{\tr(\Pi_{\sym}\Omega_4)}\right\|\right)
&\le 2(1-p(\psi))
\ee
where we have used  the fact that $p(\psi)\ge0$. Using \cref{eq:p1mpk5}, we get
\be
\left\|\Phi_{\cl}^{(5)}(\psi^{\otimes 5})-\Phi_{\haar}^{(5)}(\psi^{\otimes 5})\right\|_1\le \frac{10[(d+3)\Delta_{4}(\psi))-4]}{(d+8)(d-1)}=O(d^{-1}).
\ee
\end{example}

\subsubsection{\texorpdfstring{$k=6$}{k=6}}\label{ssec:k6}
Let us analyze the commutant for the Clifford group for $k=6$. As done before, let us consider the independent Pauli monomials that generate the Clifford commutant. We do it iteratively, by iterating over $m$, i.e., the number of primitives in the matrix $V$. For $m=0$, we only have the identity. $m=1$, we have the primitive Pauli monomials,. 
\setmonomialscale{3.5mm}
\begin{align}        \Omega_4\quad\textbf{=}\quad\monomialdiagram{4}{{1,2,3,4}}{},\quad\quad\quad\Omega_6\quad\textbf{=}\quad\monomialdiagram{6}{{1,2,3,4,5,6}}{}\quad\quad\quad \,.
\end{align}
Starting from these monomials, we need to add to the vector spaces other primitives with two premises: 1) the primitive should be independent, so the full Pauli monomial is reduced. 2) the vector space does not contain weight $2$ vector, which would correspond to a permutation, 3) avoiding repeated monomials. We get only two possible candidates 
\setmonomialscale{3.5mm}
\begin{align}\monomialdiagram{6}{{1,2,3,4},{3,4,5,6}}{}\quad\quad\quad\monomialdiagram{6}{{1,2,3,4},{3,4,5,6}}{0:1}
\end{align}
Thanks to \cref{sec:graphcalc}, we know that two even primitives overlapping evenly with a phase are equivalent to an odd primitive up to right multiplication for permutations. Therefore, the second term is permutationally equivalent to $\Omega_2$. So the only possible candidate for $m=1$ is  
\begin{align}\Omega_{4,4} \textbf{=}\quad\monomialdiagram{6}{{1,2,3,4},{3,4,5,6}}{}
\end{align}
This concludes the possible terms, since it is not difficult to see that adding an additional $\Omega_1$ term would return us still $\Omega_3$ up to permutations.
Hence, the commutant, other than permutations, is spanned by the following operators, up to left and right action of permutations
\begin{align}
\Omega_4 & =  \frac{1}{d}\sum_{P}P^{\otimes 4}\otimes \mathbb{1}^{\otimes 2}\nonumber,\\
\Omega_6& = \frac{1}{d}\sum_{P}P^{\otimes 6}\nonumber,\\
\Omega_{4,4}& =\frac{1}{d^2}\sum_{P,Q} P^{\otimes2}\otimes(PQ)^{\otimes2}\otimes Q^{\otimes2}.
\end{align}
While it is clear that multiplying the different $\Omega$ among themselves does not generate new elements of the commutant, the same is not true when considering permutations. To generate all possible elements of the commutant, we must \emph{permute} the representative elements. Thus, the most general element of the commutant for $k=6$ 
is $T_{\pi} \Omega^{(6)} T_{\sigma}$, where $T_\pi$ and $T_\sigma$ are permutation matrices. In \cref{tab: cardC6}, we show the cardinality of the different classes generated.
\begin{table}[H]
    \centering
\begin{tabular}{|c|c|}
\hline
    $\Omega$& $\#(T_{\pi}\Omega T_{\sigma})$  \\
    \hline 
    $\Omega_2$ & $720$ \\
    \hline
    $\Omega_4$ &$2700$\\
    \hline
    $\Omega_6$ &$720$\\
    \hline
    $\Omega_{4,4}$ &$450$\\
    \hline
\end{tabular}
    \caption{Cardinality of the independent class forming the basis for $\com(\mathcal{C}_n^{\otimes 6})$.}
    \label{tab: cardC6}
\end{table}
Once we have given the Clifford commutant for $k=6$, we can then introduce the action of the twirling operator. By noticing that all the terms are given by $T_{\pi}\Omega T_{\sigma}$, the twirling operator for $k=6$ can be written as 
\begin{align}
\Phi_{\cl}^{(6)}(O)=\sum_{\substack{\Omega \in \mathcal{P} \\\pi\sigma\in S_{6} } }T_{\pi}\left(\alpha_{\pi,\sigma}^{(\Omega)} \Omega \right)T_{\sigma}
\end{align}
where each of the coefficients can be computed through the inverse of the Gram-Schmidt matrix through the rules introduced in \cref{sec:cliffordweingartencalculus}.

As done before, instead of directly computing the Clifford Weingarten functions, let us focus our attention on the twirling of a pure quantum state. We can write
\begin{align}
\Phi_{\cl}^{(6)}(\psi)=\sum_{\Omega, \in \mathcal{P}/S_6 }p_{\Omega}(\psi)\frac{\Pi_{\sym} \Omega \Pi_{\sym} }{\tr(\Pi_{\sym}\Omega)}
\end{align}
where the $p_{\Omega}$s are equal to
\be
p_{\Omega}(\psi)=\sum_{\Omega^{\prime}\in \mathcal{P}/ S_6}(\mathcal{S}^{-1})_{\Omega\Omega^{\prime}}\frac{\tr(\Omega\psi^{\otimes k})}{\tr(\Pi_{\sym}\Omega)}.
\ee
One can then use the definition of $\mathcal{S}$ from \cref{sec:clorbitquantumstates} with $k=6$, leading to the computation of
\begin{align}
S = \frac{1}{\tr(\Pi_{\sym})}
\begin{pmatrix}
1 & 1 & 1 & 1 \\
1 & \frac{\tr(\Omega_4\Pi_{\sym}\Omega_4\Pi_{\sym} )\tr(\Pi_{\sym})}{\tr(\Omega_4 \Pi_{\sym})^2} & \frac{\tr(\Omega_4\Pi_{\sym}\Omega_6\Pi_{\sym})\tr(\Pi_{\sym})}{\tr(\Omega_4 \Pi_{\sym})\tr(\Omega_6 \Pi_{\sym})} & \frac{\tr(\Omega_4\Pi_{\sym}\Omega_{4,4}\Pi_{\sym})\tr(\Pi_{\sym})}{\tr(\Omega_4 \Pi_{\sym})\tr(\Omega_{4,4} \Pi_{\sym})} \\ 
1 & \frac{\tr(\Omega_6\Pi_{\sym}\Omega_4\Pi_{\sym} )\tr(\Pi_{\sym})}{\tr(\Omega_6 \Pi_{\sym})\tr(\Omega_4 \Pi_{\sym})} & \frac{\tr(\Omega_6\Pi_{\sym}\Omega_6\Pi_{\sym})\tr(\Pi_{\sym})}{\tr(\Omega_6 \Pi_{\sym})^2} & \frac{\tr(\Omega_6\Pi_{\sym}\Omega_{4,4}\Pi_{\sym})\tr(\Pi_{\sym})}{\tr(\Omega_6 \Pi_{\sym})\tr(\Omega_{4,4} \Pi_{\sym})} \\
1 & \frac{\tr(\Omega_{4,4}\Pi_{\sym}\Omega_4\Pi_{\sym} )\tr(\Pi_{\sym})}{\tr(\Omega_{4,4} \Pi_{\sym})\tr(\Omega_4 \Pi_{\sym})} & \frac{\tr(\Omega_{4,4}\Pi_{\sym}\Omega_6\Pi_{\sym})\tr(\Pi_{\sym})}{\tr(\Omega_{4,4} \Pi_{\sym})\tr(\Omega_6 \Pi_{\sym})} & \frac{\tr(\Omega_{4,4}\Pi_{\sym}\Omega_{4,4}\Pi_{\sym})\tr(\Pi_{\sym})}{\tr(\Omega_{4,4} \Pi_{\sym})^2} 
\end{pmatrix}.
\end{align}
The computation of $\mathcal S$ reduces to the computation of a few elements, and then combines them. So we have, by brute force computation, the  result
\begin{align}
\tr(\Pi_{\sym}\Omega_4)& =\frac{1}{180}d(d+5)(d+4)(d+2)(d+1),\\
\tr(\Pi_{\sym}\Omega_{4,4})& = \frac{1}{30}d(d+4)(d+2)(d+1),\\
\tr(\Pi_{\sym}\Omega_6)& = \frac{1}{6!}d(d+23)(d+4)(d+2)(d+1),\\
\tr(\Pi_{\sym}\Omega_4\Pi_{\sym}\Omega_4)& = \frac{1}{2700}d(d+1)(d+2)(d+4)
(76+13d+d^{2}),\\
\tr(\Pi_{\sym}\Omega_{4,4}\Pi_{\sym}\Omega_4)& = \frac{(d+4)}{5}\tr(\Pi_{\sym} \Omega_{4,4}),\\
\tr(\Pi_{\sym}\Omega_6\Pi_{\sym}\Omega_4)& = \tr(\Pi_{\sym}\Omega_4),\\
\tr(\Pi_{\sym}\Omega_{4,4}\Pi_{\sym}\Omega_6)& = \frac{1}{450}d(d+4)^2(d+2)^2(d+1),\\
\tr(\Pi_{\sym}\Omega_6\Pi_{\sym}\Omega_{4,4})& = \tr(\Pi_{\sym}\Omega_{4,4}). 
\end{align}
Consequently, the matrix $\mathcal S$ can be written as
\begin{align}\mathcal{S}=\frac{1}{\tr(\Pi_{\sym})}\begin{pmatrix}
   1& 1 &1&1\\
    1& \frac{(d^2+13d+76)(d+3)}{60(d+5)}&\frac{(4+d)(d+3)}{20}&\frac{(5+d)(d+3)}{23+d}\\
    1&\frac{(d+3)(4+d)}{20}&\frac{(d+5)(d+4)(d+3)(d+2)}{360}&\frac{(5+d)(d+3)}{23+d}\\
    1&\frac{(5+d)(d+3)}{23+d}&\frac{(5+d)(d+3)}{23+d}&\frac{(5+d)^2(d+3)^2}{(23+d)^2}
\end{pmatrix}.
\end{align}
Therefore, according to \cref{sec:clorbitquantumstates} we get that the quasi-probabilities are
\begin{align}\begin{pmatrix}
    p_{2} \\
    p_{4} \\
    p_{6} \\
    p_{4,4} 
\end{pmatrix}
& = 
\left[\frac{1}{\tr(\Pi_{\sym})}\begin{pmatrix}
   1& 1 &1&1\\
    1& \frac{(d^2+13d+76)(d+3)}{60(d+5)}&\frac{(4+d)(d+3)}{20}&\frac{(5+d)(d+3)}{23+d}\\
    1&\frac{(d+3)(4+d)}{20}&\frac{(d+5)(d+4)(d+3)(d+2)}{360}&\frac{(5+d)(d+3)}{23+d}\\
    1&\frac{(5+d)(d+3)}{23+d}&\frac{(5+d)(d+3)}{23+d}&\frac{(5+d)^2(d+3)^2}{(23+d)^2}
\end{pmatrix}\right]^{-1} \begin{pmatrix}
    \frac{\tr(\Pi_{\sym}\psi^{\otimes 6})}{\tr(\Pi_{\sym})} \\ 
    \frac{\tr(\Pi_{\sym}\Omega_4\Pi_{sym}^{(6)}\psi^{\otimes 6})}{\tr(\Omega_4\Pi_{\sym})} \\
    \frac{\tr(\Pi_{\sym}\Omega_6\Pi_{sym}^{(6)}\psi^{\otimes 6})}{\tr(\Omega_6\Pi_{\sym})} \\
    \frac{\tr(\Pi_{\sym}\Omega_{4,4}\Pi_{sym}^{(6)}\psi^{\otimes 6})}{\tr(\Omega_{4,4}\Pi_{\sym})} 
\end{pmatrix} \\
& = \begin{pmatrix}
    \frac{(d+5)(d+3)[(d^2+13d)-\tr(\Omega_{4,4}\psi^{\otimes 6})(d-32)-(15d+60)\tr(\Omega_4\psi^{\otimes 4})+30\tr(\Omega_6\psi^{\otimes 6})]}{(d+16)(d+8)(d-1)(d-2)}\\
\frac{15(d+5)[-4(d+4)-4(d+4)\tr(\Omega_{4,4}\psi^{\otimes 6})-(d^2+8d+52)\tr(\Omega_4\psi^{\otimes 4})-3(d+6)\tr(\Omega_6\psi^{\otimes 6})]}{(d+16)(d+8)(d-1)(d-2)}\\
\frac{720+720\tr(\Omega_{4,4}\psi^{\otimes 6})+270(d+6)\tr(\Omega_4\psi^{\otimes 4})+15(d+14)(d+1)\tr(\Omega_6\psi^{\otimes 6})}{(d+16)(d+8)(d-1)(d-2)}\\
\frac{(d+23)[-(d-32)+d(d+13)\tr(\Omega_{4,4}\psi^{\otimes 6})-(15d+60)\tr(\Omega_4\psi^{\otimes 4})+30\tr(\Omega_6\psi^{\otimes 6})]}{(d+16)(d+8)(d-1)(d-2)}
\end{pmatrix},
\nonumber
\end{align}
where we are adopting the lighter notation of $p_x \coloneqq p_{\Omega_x}$. 
Let us point out that, for the sake of completeness, the inverse matrix of $\mathcal S$ is given in Ref.~\cite{leone_mathematica_2023}.

In \cref{example:k4,example:k5}, we showed that the trace distance between five coherent copies of Haar-random states and random stabilizer states is exponentially small. In the following, we show that for $k = 6$, the trace distance becomes large. This occurs because the commutant supports the first non-trivial unitary primitive, $\Omega_{6}$.

\begin{example}[Trace distance of random Clifford orbit]\label{example:k6tracedistance} Let us compute the lower bound $\|\Phi_{\cl}^{(6)}(\psi^{\otimes 6})-\Phi_{\haar}^{(6)}(\psi^{\otimes 6})\|_1$. To do it, let us start by considering the unitary primitive $\Omega_{6}$. The expectation value over $\psi^{\otimes 6}$ corresponds to the stabilizer purity $\Delta_{6}(\psi)=\tr(\Omega_6\psi^{\otimes 6})$, see \cref{def:stabilizerentropies}. If $\psi$ is a pure quantum state vector it follows 
\be
\Delta_{6}(\psi)-\tr(\Phi_{\haar}^{(6)}\Omega_6)=\tr(\Omega_6(\Phi_{\cl}^{(6)}(\psi^{\otimes 6})-\Phi_{\haar}^{(6)}(\psi^{\otimes 6})))\le \|\Omega_{6}\|_{\infty}\|\Phi_{\cl}^{(6)}(\psi^{\otimes 6})-\Phi_{\haar}^{(6)}(\psi^{\otimes 6})\|_1
\ee
which implies that
\be
\|\Phi_{\cl}^{(6)}(\psi^{\otimes 6})-\Phi_{\haar}^{(6)}(\psi^{\otimes 6})\|_1\ge \Delta_{6}(\psi)-O(d^{-1}).
\ee
For the case of $\sigma\in\stab$, since $\Delta_{6}(\sigma)=1$ (see \cref{def:stabilizerentropies}), then $\|\Phi_{\cl}(\sigma^{\otimes 6})-\Phi_{\haar}^{(6)}(\sigma^{\otimes 6})\|_1\ge 1-O(d^{-1})$.
\end{example}

\subsubsection{\texorpdfstring{$k=8$}{k=8}}
In this section, we construct the commutant for $k=8$. We want to construct all the Pauli monomials that are independent up to left and right multiplication for permutations. The analysis is quite similar to the one done for $k=6$, and previously for $k=5$ and $k=4$. We do it iteratively, by iterating over $m$, i.e., the number of primitives in the matrix $V$. For $m=0$, we only have the identity. $m=1$, we have the primitive Pauli monomials, which are listed below.
 \setmonomialscale{3.5mm}
       \begin{align}
    \Omega_4\quad\textbf{=}\quad\monomialdiagram{4}{{1,2,3,4}}{},  
\quad\quad\quad  
\Omega_6\quad\textbf{=}\quad\monomialdiagram{6}{{1,2,3,4,5,6}}{},  
\quad\quad\quad  
\Omega_8\quad\textbf{=}\quad\monomialdiagram{8}{{1,2,3,4,5,6,7,8}}{}.
    \end{align}
Then, we use the same prescriptions used in \cref{ssec:k6} in order to build the other terms for $m>1$. Starting from $\Omega_4$, we obtain the following Pauli monomials obeying such prescriptions:
\be\label{eq:com8-1}
\monomialdiagram{6}{{1,2,3,4},{3,4,5,6}}{}\quad\quad\quad\monomialdiagram{7}{ {1,2,3,4},{4,5,6,7} }{}\quad\quad\quad\monomialdiagram{8}{{1,2,3,4},{5,6,7,8}}{}\quad\quad\quad \monomialdiagram{6}{{1,2,3,4},{3,4,5,6}}{0:1} \quad\quad\quad\monomialdiagram{7}{{1,2,3,4},{4,5,6,7}}{0:1}\quad\quad\quad\monomialdiagram{8}{{1,2,3,4},{5,6,7,8}}{0:1}\quad\quad\quad\monomialdiagram{8}{{1,2,3,4},{3,4,5,6,7,8}}{}\quad\quad\quad\monomialdiagram{8}{{1,2,3,4},{3,4,5,6,7,8}}{0:1}.
\ee
Thanks to \cref{sec:graphcalc}, we know that two even primitives overlapping evenly with a phase are equivalent to an odd primitive up to right multiplication for permutations. Therefore, the fourth and the sixth monomial in \cref{eq:com8-1} are permutationally equivalent to $\Omega_6$. Through graphical calculus, we can manipulate the fifth one as
\be
\monomialdiagram{7}{{1,2,3,4},{4,5,6,7}}{0:1}\quad\textbf{=}\quad \monomialdiagram{7}{{4,5,6,7},{1,2,3,4}}{}=T_{(17)}T_{(26)}T_{(35)}\monomialdiagram{7}{{1,2,3,4},{4,5,6,7}}{}\,\,T_{(17)}T_{(26)}T_{(35)}
\ee
which shows that the second and the fifth are (non-trivially) the same up to the action of the adjoint permutation $T_{(17)}T_{(26)}T_{(35)}$. For convenience, we can manipulate the last one to take the phase away
\be
\monomialdiagram{8}{{1,2,3,4},{3,4,5,6,7,8}}{0:1}=\monomialdiagram{8}{{1,2,5,6,7,8},{3,4,5,6,7,8}}{}\mapsto\monomialdiagram{8}{{1,2,3,4,5,6},{3,4,5,6,7,8}}{}
\label{seccom8:p4p6intop6p6}\ee
where the last arrow refers to permutational equivalence. This leaves us in defining three more Pauli monomials that are independent up to permutations
as
\be
\Omega_{4,4}\quad\textbf{=}\quad\monomialdiagram{6}{{1,2,3,4},{3,4,5,6}}{}
,\quad\quad\quad\Omega_{4,4^{1}},\textbf{=}\quad\monomialdiagram{7}{{1,2,3,4},{4,5,6,7}}{},\quad\quad\Omega_{4,4^{\scriptscriptstyle 2}}\quad\textbf{=}\quad\monomialdiagram{8}{{1,2,3,4},{5,6,7,8}}{},\quad\quad\Omega_{4,6}\quad\textbf{=}\quad\monomialdiagram{8}{{1,2,3,4},{3,4,5,6,7,8}}{},\quad\quad\Omega_{6,6}\quad\textbf{=}\quad\monomialdiagram{8}{{1,2,3,4,5,6},{3,4,5,6,7,8}}{}.
\ee
We can now start adding primitives to $\Omega_2^{(8)}$. However, we have already exhausted all the possibilities allowed by analyzing $\Omega_4$ by \cref{seccom8:p4p6intop6p6} and $\Omega_{4,6}$. 

Now, we can construct Pauli monomials with $m=3$ from $\Omega_{4,4}, \cdots, \Omega_{6,6}$. We observe that it is only sufficient to add the weight four primitives. Indeed, adding a weight $8$ primitive would necessarily make the Pauli monomial not reducible anymore, and adding a weight $6$ primitive would necessarily result in a Pauli monomial corresponding to a $V$ equivalent to one with weight 2 vectors in the span. Therefore, we can add weight 2 vectors only to $\Omega_4^{(8)},\Omega_5^{(8)},\Omega_6^{(8)}$. However, it will be sufficient to add all the possible weight 2 primitives to $\Omega_4^{(8)}$ to exhaust all possibilities. Let us first write all the possibilities as
\be
\Omega_{4,4,4}\quad\textbf{=}\quad\monomialdiagram{8}{{1,2,3,4},{3,4,5,6},{5,6,7,8}}{},\quad\quad\quad\Omega_{4,4,4^{\scriptscriptstyle 1}}\quad\textbf{=}\quad\monomialdiagram{8}{{1,2,3,4},{3,4,5,6},{1,3,5,7}}{},\quad\quad\quad\Omega_{4,4,4^{\scriptscriptstyle 2}},\quad\textbf{=}\quad\monomialdiagram{8}{{1,2,3,4},{3,4,5,6},{1,3,7,8}}{}.
\ee
Notice that we did not consider any phases yet, as it turns out that all the Pauli monomials with phases are dependent on all the others that we already defined. Indeed, since in $\Omega_{4,4,4}$ and $\Omega_{4,4,4^{\scriptscriptstyle 1}}$ all primitives commute, by virtue of \cref{sec:graphcalc}, we know that any commuting weight 2 primitive with a phase can be reduced to a weight $6$ primitive up to permutation, which has already been covered by the Pauli monomials defined above with $m=2$. However, for $\Omega_{4,4,4^{\scriptscriptstyle 2}}$, the argument is even simpler: the third and the second primitive anticommute with each other, and therefore any phases put between them can be canceled by a trivial swapping, cf.\cref{sec:graphcalc}. 

We are now left with the last case, that is, $m=4$. Following the reasoning above, the only possibility is to add a weight 4 primitive to $\Omega_{4,4,4},\Omega_{4,4,4^{\scriptscriptstyle 1}}, \Omega_{4,4,4^{\scriptscriptstyle 2}}$. However, by the same argument as before, it will be sufficient to add it to just the first one. In particular, for $\Omega_{4,4,4}$ and $\Omega_{4,4,4^{\scriptscriptstyle 1}}$ there is one possibility in each case that would result in the same Pauli monomial with $m=4$, namely
\be
\Omega_{4,4,4,4}\quad\textbf{=}\quad\monomialdiagram{8}{{1,2,3,4},{3,4,5,6},{3,4,7,8},{1,3,5,7}}{}
\ee
whereas, it is not possible to add anything to $\Omega_{4,4,4^{\scriptscriptstyle 2}}$ as it would result in a vector space with a weight 2 vector in it. By the same argument as in the case $m=3$, adding any phase would result in a lower $m$ Pauli monomial, since two weight 4 primitives sharing a phase are permutationally equivalent to a weight 6 primitive. At this point, it is easy to get convinced that no more primitive can be added to $\Omega_{4,4,4,4}$ without violating our $3$ above principles. Hence, the commutant with $k=8$ is spanned by the following operators, up to left and right action for permutations
\begin{align}
\Omega_{2}&= \mathbb{1},  \\
\Omega_{4}&=  \frac{1}{d}\sum_{P}P^{\otimes 4}\otimes \mathbb{1}^{\otimes 4},\\
\Omega_{6}&= \frac{1}{d}\sum_{P}P^{\otimes 6}\otimes \mathbb{1}^{\otimes 2},\\
\Omega_{8}&= \frac{1}{d}\sum_{P}P^{\otimes 8},\\
\Omega_{4,4}&=\frac{1}{d^2}\sum_{P,Q} P^{\otimes2}\otimes(PQ)^{\otimes2}\otimes Q^{\otimes2}\otimes I^{\otimes 2},\\
\Omega_{4,4^{\scriptscriptstyle 1} }&= \frac{1}{d^2}\sum_{P,Q} P^{\otimes3}\otimes(PQ)\otimes Q^{\otimes3},\\
\Omega_{4,4^{\scriptscriptstyle 2}}&=\frac{1}{d^2}\sum_{P,Q}P^{\otimes 4}\otimes Q^{\otimes4},\\
\Omega_{4,6}&=\frac{1}{d^2}\sum_{P,Q} P^{\otimes2}\otimes(PQ)^{\otimes2}\otimes Q^{\otimes4},\\
\Omega_{6,6}&= \frac{1}{d^2}\sum_{P,Q} P^{\otimes2}\otimes(PQ)^{\otimes4}\otimes Q^{\otimes2},\\
\Omega_{4,4,4}&= \frac{1}{d^3}\sum_{P,Q,K}P^{\otimes 2}\otimes (PK)^{\otimes 2}\otimes (QK)^{\otimes2}\otimes Q^{\otimes2},\\
\Omega_{4,4,4^{\scriptscriptstyle 1}}&=\frac{1}{d^3}\sum_{P,Q,K}P\otimes (PK)\otimes(PQ)\otimes (PQK)\otimes(QK)\otimes Q\otimes K\otimes\mathbb{1},\\
\Omega_{4,4,4^{\scriptscriptstyle 2}}&= \frac{1}{d^3}\sum_{P,Q,K}(PK)\otimes P\otimes (PQK)\otimes (PQ)\otimes Q^{\otimes 2}\otimes K^{\otimes 2},\\
\Omega_{4,4,4,4}&=\frac{1}{d^4}\sum_{P,Q,K,L} (PL)\otimes P\otimes (PQKL)\otimes (PQK)\otimes (QL)\otimes Q\otimes (KL)\otimes K.\label{eq:commutantfork8}
\end{align}
Extending the logic from the $k=6$ case, the commutant for $k=8$ is generated by permuting its representative elements, $\Omega$. Specifically, any element can be expressed as $T_{\pi} \Omega T_{\sigma}$, where $T_{\pi}$ and $T_{\sigma}$ are permutation matrices. The cardinality of the resulting classes is presented in \cref{tab: cardC8}.

\begin{table}[H]
    \centering
\begin{tabular}{|c|c|}
\hline
    $\Omega $& $\#(T_{\pi_i}\Omega T_{\sigma_i})$  \\
    \hline
    $\Omega_2$ &$40320$\\
    \hline
    $\Omega_4$ &$705600$\\
    \hline
    $\Omega_6$ &$1128960$\\
    \hline
    $\Omega_8$ &$40320$\\
    \hline
     $\Omega_{4,4}$ &$705600$\\
    \hline
    $\Omega_{4,4^{\scriptscriptstyle 1}}$ &$2822400$\\
    \hline
    $\Omega_{4,4^{\scriptscriptstyle 2}}$ &$88200$\\
    \hline
     $\Omega_{4,6}$ &$2116800$\\
    \hline
    $\Omega_{6,6}$ &$1411200$\\
    \hline
     $\Omega_{4,4,4}$ &$22050$\\
    \hline
    $\Omega_{4,4,4^{\scriptscriptstyle 1}}$ &$57600$\\
    \hline
    $\Omega_{4,4,4^{\scriptscriptstyle 2}}$ &$705600$\\
    \hline
    $\Omega_{4,4,4,4}$ &$900$\\
    \hline
\end{tabular}
    \caption{Cardinality of the independent class forming the basis for $\com(\mathcal{C}_n^{\otimes 8})$.}
    \label{tab: cardC8}
\end{table}

Once built the commutant for $k=8$, we can also write the general equation for the twirling of an operator $O$ as
\begin{align}\Phi_{\cl}^{(8)}(O)=\sum_{\substack{\Omega \in \mathcal{P} \\\pi\sigma\in S_{8} } }T_{\pi}(\alpha_{\pi,\sigma}^{(\Omega)} \Omega )T_{\sigma}
\end{align}where each of the coefficients can be computed through the inverse of the Gram-Schmidt matrix through the rules introduced in \cref{sec:cliffordweingartencalculus}. As done before, instead of directly computing the Clifford Weingarten functions, let us focus our attention on the twirling of a pure quantum state. We can write
\begin{align}
\Phi_{\cl}^{(8)}(\psi^{\otimes 8})=\sum_{\Omega \in \mathcal{P}/S_8 }p_{\Omega }(\psi)\frac{\Pi_{\sym} \Omega \Pi_{\sym} }{\tr(\Pi_{\sym}\Omega)}
\end{align}
where the $p_{\Omega }$s are equal to
\be
p_{\Omega }(\psi)=\sum_{\Omega^{\prime} \in \mathcal{P}/ S_8}(\mathcal{S}^{-1})_{\Omega\Omega^{\prime}}\frac{\tr(\Omega\psi^{\otimes k})}{\tr(\Pi_{\sym}\Omega)}.
\ee

\section{Clifford commutant in the multi-qudit scenario}
\label{sec:genquditsapp}
\paragraph{Preliminaries:} We consider a system of $n$ qudits with local dimension $q$, being a prime number. It is crucial for $q$ to be a prime number, so that the generalized Pauli operators have the same eigenvalues and the Clifford group can be defined analogously to the qubit case, as those unitary operators that map Pauli operators to Pauli operators. In the reminder of the section, we adopt the same notation that we have used  for the qubit case, and $\mathbb{P}_n$, $\mathcal{C}_n$ denotes the Pauli group and the Clifford group on $n$ qudits. Similarly, the dimension of the Hilbert space is denoted as $d\equiv q^n$. The generalized Pauli operators on $n$ qudits, that in the following will be called as \textit{Weyl operators} to adapt better to the current literature, are generated by the following $X$- and $Z$-like operators:
\begin{align}
    X\ket{i}=\ket{i+1}\quad \mathrm{and} \quad Z\ket{i}=\omega^i \ket{i}
\end{align}
where $\omega=e^{-2\pi i/q}$. We also define $\tau=\sqrt{\omega}=e^{i\pi (q^2+1)/q}$. Notice that for $q>2$, the square root can be uniquely defined as a power of $\omega$. 
The single qudit Weyl operators are given by
\begin{align}
    P=\tau^{ac}X^{a}Z^{c}\,.
\end{align}
for $a,c\in\mathbb{F}_q$. Analogously to the qubit case, we can encode each Weyl operator $P\in\mathbb{P}_n$ into a string $b(P)\in\mathbb{F}_{q}^{2n}$, which gives the rule  
\begin{align}
    PQ&=\tau^{b(P)^T Jb(Q)}K,\\
    PQ&=\omega^{b(P)^T J b(Q)}QP
\end{align}
of the product between Weyl operators.
In the above \begin{align}
J\coloneqq \bigoplus_{i=1}^{n}\begin{pmatrix}
    0&-1\\1&0
\end{pmatrix},
\end{align}
while $b(K)=b(P)+b(Q) \pmod q$. By recalling \cref{def:chidef}, and from the equations above, it follows that one can identify
\begin{align}
    \chi(P,Q)= \omega^{b(P)^T J b(Q)}.
\end{align}
Since the branch of the square root is fixed by demanding that it remains a $q$-th root of unity, we can define
\begin{align}
    \sqrt{\chi(P,Q)}\coloneqq\tau^{b(P)^T J b(Q)}\,.
\end{align}
\paragraph{Orthogonal basis for the commutant and its dimension:} Let us now outline a proof analogous to that of \cref{th:fullcommutantnk}. The key step in the proof of \cref{th:fullcommutantnk} is the definition of the equivalence class $[V, G]$. We begin by identifying the analog of $V$. Observe that for $k$ tensor copies and any $Q \in \mathbb{P}_n^{\otimes k}$, we have $b(Q) \in \mathbb{F}_q^{2nk}$, which allows us to construct $B(Q) \in \mathbb{F}_q^{k \times 2n}$ via the vectorization isomorphism. One can also prove an analog of \cref{le:uniquenessPV}. Specifically, we can decompose $B(Q)$ as $B(Q) = V B_{\boldsymbol{P}}^{T}$, where $V \in \mathbb{F}_q^{k \times m}$ and $B_{\boldsymbol{P}} \in \mathbb{F}_q^{2n \times m}$, with $m$ denoting the minimal rank. This means we can write
\begin{align}
    Q \propto P_1^{\otimes v_1} \cdots P_m^{\otimes v_m} \,,
\end{align}
where $V = (v_1, \ldots, v_m)$ and $P_1, \ldots, P_m \in \mathbb{P}_n$. As in \cref{le:uniquenessPV}, all such decompositions are related by multiplication with an invertible matrix $A \in \operatorname{GL}(\mathbb{F}_q^{m \times m})$.

We also define the \textit{weighted anticommutation} graph $G$, with associated matrix $\mathcal{A}(\boldsymbol{P})_{i,j}$, via the relation
\begin{align}
    P_i P_j - \omega^{\mathcal{A}(\boldsymbol{P})_{i,j}} P_j P_i = 0\,, \quad P_i, P_j \in \mathbb{P}_n \,.
\end{align}
From this, it follows that $\mathcal{A}(\boldsymbol{P})$ is anti-symmetric, i.e., $\mathcal{A}(\boldsymbol{P})^T = -\mathcal{A}(\boldsymbol{P})$, and $\chi(P_i, P_j) = \omega^{\mathcal{A}(\boldsymbol{P})_{i,j}}$.

We define the phase $\varphi(P)$ (as in the statement of \cref{th:fullcommutantnk}) for Weyl operators analogously to be
\begin{align}
    \varphi(P) = \frac{\tr(P T_{(k \cdots 2\ 1)})}{d} \,.
\end{align}
The condition $\varphi(P) \neq 0$ is necessary and sufficient for $\Phi_{\cl}^{(k)}(P) \neq 0$. For $\varphi(P) \neq 0$, we require $\sum_i v_i = 0 \pmod q$. For each $v \in \mathbb{F}_q$, this condition determines the last entry, effectively yielding a $(k-1)$-dimensional vector space.

Clifford operators on qudits can likewise map any list of Weyl operators to any other, provided the algebraic and anticommutation relations are preserved, as one can similarly prove an analog of \cref{lem:alg_transforms}. Hence, the equivalence class $[V, G]$ is defined identically.

This procedure defines the independent-based graph monomials $\mho_I([V,G])$, which form a basis for any $n$ and $k$, analogous to the qubit case as
\begin{equation}\label{eq:indepgraphbasedforqudits}
    \mho_I(V, G) \coloneqq \frac{1}{|S_{[V,G]}|} \sum_{\substack{\boldsymbol{P} \in \mathbb{P}_n^{\times m} \\ \mathcal{A}(\boldsymbol{P}) = G\\\boldsymbol{P} \text{ is alg. indep.} }} 
    \prod_{j=1}^m P_j^{\otimes v_j},
\end{equation}
where $S_{[V,G]}$ denotes the set of all such Weyl operators, which forms the Weyl orbit under Clifford operations.

We now consider which graphs are admissible. The strategy used in \cref{th:fullcommutantnk}—constructing pairwise noncommuting and otherwise commuting Weyl operators by multiplication—works also for $\mathbb{F}_q$. Therefore, we can replicate the same argument from \cref{th:fullcommutantnk}, sketched below. Any matrix $G$ can be brought by invertible transformations to the canonical (anti-symmetric) form:
\begin{align}
    G = \bigoplus_{i=1}^r \begin{pmatrix}
        0 & 1 \\ -1 & 0
    \end{pmatrix} \oplus 0_{m-2r, m-2r}.
\end{align}
We can describe a noncommutation graph on $m - r$ qudits by choosing appropriate Weyl operators for each noncommuting pair. Since any two algebraically independent Weyl operators on a single qudit do not commute, we get $\mathrm{rank}_q(G) \geq 2(m - n)$. Consequently, in analogy with \cref{cor:dimcom}, the dimension of the commutant is given by
\begin{align} \label{eq:dimensioncommutantcliffordgroupqudit}
    \dim(\com(\mathcal{C}_n^{\otimes k})) &= \sum_{m=0}^{k-1} \sum_{r=\max(m - n, 0)}^{\lfloor m/2 \rfloor}
    \binom{k-1}{m}_q N_q(m, r) \\
    &\simeq
    \begin{cases}
        q^{\frac{k^2 - 3k}{2}+1} & \text{if } 2n \geq k - 1, \\
        q^{2kn - 2n^2 - 3n} & \text{if } 2n < k - 1,
    \end{cases}
    \nonumber
\end{align}
where $N_q(m, r)$ (see, e.g., Ref.~\cite{carlitz1954representations}) denotes the number of $m \times m$ anti-symmetric matrices over $\mathbb{F}_q$ with rank $2r$, and the Gaussian binomial coefficient $\binom{k-1}{m}_q$ count the number of even vector subspaces of $\mathbb{F}_q$ of dimension $m$. The approximation in \cref{eq:dimensioncommutantcliffordgroupqudit} follows identically to the one in \cref{cor:dimcom}, since the structure of the expressions remains the same apart from the base of the exponentials. The bounds should also become increasingly accurate with increasing $q$.

\paragraph{Weyl monomials $(\Omega)$:} Similarly to the qubit case, alternative bases—beyond the one formed by independent graph-based monomials—can be defined. These alternative bases often offer more manageable or insightful representations. The strategy to define such bases is to directly generalize the linear transformations that connect them in the qubit setting, as explored in \cref{Sec:invertiblemapspaulimonomials}. 

Starting from \cref{eq:indepgraphbasedforqudits}, we extend the definition to include ordered lists of Pauli operators that are not necessarily algebraically independent:
\begin{equation}
    \mho(V, G) \coloneqq \frac{1}{d^m} \sum_{\substack{\boldsymbol{P} \in \mathbb{P}_n^{\times m} \\ \mathcal{A}(\boldsymbol{P}) = G}} 
    \prod_{j=1}^m P_j^{\otimes v_j}.
\end{equation}
These are precisely the qudit analogues of the “$\mho$” basis introduced in \cref{sec:mhosandstuff}. Following the approach of \cref{Sec:invertiblemapspaulimonomials}, we employ a Fourier-like transform over $\mathbb{F}_q$ to define Weyl monomials. Starting from the definition above, we proceed as follows:
\begin{align}
    \mho(V, G) &\coloneqq \frac{1}{d^m} \sum_{\substack{\boldsymbol{P} \in \mathbb{P}_n^{\times m} \\ \mathcal{A}(\boldsymbol{P}) = G}} 
    \prod_{j=1}^m P_j^{\otimes v_j} \\
    \nonumber
    &= \frac{1}{d^m} \sum_{\boldsymbol{P} \in \mathbb{P}_n^{\times m}}  
    \prod_{j=1}^m P_j^{\otimes v_j} \prod_{i<j} \delta_{\mathcal{A}(\boldsymbol{P})_{i,j} = G_{i,j}} \\
    \nonumber
    &= \frac{1}{d^m q^{m(m-1)/2}} \sum_{\boldsymbol{P} \in \mathbb{P}_n^{\times m}}  
    \prod_{j=1}^m P_j^{\otimes v_j} \prod_{i<j} \sum_{M_{i,j} \in \mathbb{F}_q} \omega^{M_{i,j} (\mathcal{A}(\boldsymbol{P})_{i,j} - G_{i,j})} \\
    \nonumber
    &= \frac{1}{d^m q^{m(m-1)/2}} \sum_{M \in \operatorname{Asym}(\mathbb{F}_q^{m \times m})} \omega^{-\sum_{i<j} M_{i,j} G_{i,j}} 
    \sum_{\boldsymbol{P} \in \mathbb{P}_n^{\times m}}  
    \prod_{j=1}^m P_j^{\otimes v_j} \prod_{i<j} \chi(P_i, P_j)^{M_{i,j}} \\
    \nonumber
    &\eqqcolon \frac{1}{q^{m(m-1)/2}} \sum_{M \in \operatorname{Asym}(\mathbb{F}_q^{m \times m})} \omega^{-\sum_{i<j} M_{i,j} G_{i,j}} \Omega(V, M),
\label{eq:fourierforqudits}
\end{align}
where we have used the identity $\frac{1}{q} \sum_{i=1}^{q} \omega^{qi} = \delta_{i0}$, and the fact that $\chi(P_i, P_j) = \omega^{\mathcal{A}(\boldsymbol{P})_{i,j}}$. Since the Fourier-like transform is invertible, it follows that the monomials $\Omega(V, M)$ form a well-defined basis of the commutant for $n \geq k-1$. From \cref{eq:fourierforqudits}, they are defined as
\begin{align}
\Omega(V, M) \coloneqq \frac{1}{d^m} \sum_{\boldsymbol{P} \in \mathbb{P}_n^m}  
P_1^{\otimes v_1} P_2^{\otimes v_2} \cdots P_m^{\otimes v_m} \times \left( \prod_{\substack{i, j \in [m] \\ i < j}} \chi(P_i, P_j)^{M_{i,j}} \right).
\end{align}

\paragraph{Primitive monomials:} Among the Weyl monomials, some are particularly simple and fundamental: the \emph{primitive Weyl monomials}, defined as
\be
\Omega(v) \coloneqq \frac{1}{d} \sum_{P \in \mathbb{P}_n} P^{\otimes v},
\ee
where \( v \in \mathbb{F}_q^k \) satisfies the constraint \( \sum_i v_i = 0 \pmod q \). 

Analogously to \cref{lem:propertiesprimitivepauli}, we can determine when these primitives are unitary or proportional to projectors. Specifically, by using the identity 
\be
\sum_{P \in \mathbb{P}_n} \chi(P, Q) = d^2 \delta_{P = \mathbb{1}},
\ee
we obtain the following classification:

\begin{itemize}
    \item If \( \sum_i v_i^2 = 0 \), then \( \Omega(v) \) is proportional to a projector of rank \( d^{k-2} \), i.e., $\Omega^2(v)=d\Omega(v)$;
    \item Otherwise, \( \Omega(v) \) is a unitary operator.
\end{itemize}

\paragraph{Substitution rules:} In \cref{sec:Paulimon}, we introduced the graphical calculus—an intuitive method for manipulating Pauli monomials, which are generated as products of primitive Pauli monomials. At the core of this calculus lies the substitution rule for Pauli monomials, formalized in \cref{th:graphrules}.

While it is possible to extend this framework to a \emph{colored} graphical calculus for qudits, such an extension can become increasingly intricate as the local dimension grows. Therefore, in this section we restrict ourselves to sketching the \emph{algebraic} version of the substitution rule, which serves as a generalization of \cref{th:gaussOP}.
The relations
\begin{align}
    P^{\otimes v} Q^{\otimes w} 
    &= \left( \sqrt{\chi(K, Q)}\, KQ \right)^{\otimes v} Q^{\otimes w} 
    = \left( \sqrt{\chi(K, Q)} \right)^{\sum_i v_i^2} K^{\otimes v} Q^{\otimes (v + w)} ,\\
    \chi(P, L) 
    &= \chi\left( \sqrt{\chi(K, Q)}\, KQ, L \right) 
    = \chi(KQ, L) 
    = \chi(K, L)\chi(Q, L),
\end{align}
hold, 
where \( P = \sqrt{\chi(K, Q)}\, KQ \). Then, the substitution rules result in the mapping:
\begin{align}
v_{i+1} \mapsto v_{i+1} + v_i, \qquad M_{\cdot, i+1} \mapsto M_{\cdot, i+1} + M_{\cdot, i} + \frac{\sum_\alpha (v_i)_\alpha^2}{2} \vec e_i,
 \qquad M_{i+1,\cdot} \mapsto M_{ i+1,\cdot} + M_{ i,\cdot} -\frac{\sum_\alpha (v_i)_\alpha^2}{2} \vec e_i.
\end{align}
This operation mirrors the substitution rule used in the qubit case. Consequently, one can employ the same algebraic expressions to compute equivalences and define the matrix \(\Lambda\), as introduced in \cref{th:gaussOP}
\begin{align}
    H^{1/2}_{i,j} &\coloneqq 
    \begin{cases}
        (V^T V)_{i,j} & \text{if } i < j, \\
        (V^T V)_{i,i} / 2 & \text{if } i = j, \\
        0 & \text{otherwise},
    \end{cases} \\
    \Lambda &\coloneqq H^{1/2} + M.
\end{align}

\paragraph{Normal form and generators:} The final aspect to address is the structure of the multi-qudit Clifford commutant from the perspective of algebra. In \cref{sec:3gen}, we demonstrated that primitive monomials generate the full commutant in the qubit case. In line with other generalizations, we now discuss how this result extends to the multi-qudit setting.

The reduced form, as introduced in \cref{def:reducedpaulimonomials}, is inherently preserved in the qudit generalization and remains well-defined. Similarly, the notion of a normal form (see \cref{lem:normalform}) carries over naturally. The proof of \cref{lem:normalform} relies solely on the substitution rules established in \cref{th:graphrules}, which can be replicated in the qudit case with analogous arguments. This is because these rules depend only on the underlying algebraic transformations, and as shown in the preceding paragraphs, the matrix \( \Lambda \) governing these transformations behaves identically in the qudit setting.

In particular, a primitive corresponding to a unitary can be shifted to the right, effectively absorbing any global phases. Additionally, two projectors that do not commute can be rewritten as a combination of a unitary and a projector. Commuting projectors with non-trivial relative phases can also be rewritten as a product of unitary primitives via conjugation with a commuting unitary.
Consequently, we can define a \emph{normal form} for Weyl monomials
\begin{align}
    \Omega = \Omega_P \times \Omega_U,
\end{align}
where \( \Omega_P \) is a product of commuting projective primitives and \( \Omega_U \) is a product of unitary primitives. This implies that the entire commutant in the multi-qudit setting can also be generated by products of primitive Weyl monomials.

\paragraph{Applications:} In \cref{sec:applications}, we employed the framework developed in the previous sections to explore scenarios in which the Clifford commutant plays a central role—most notably, in magic-state resource theory and stabilizer property testing. Just as in the qubit case, these two topics are of significant interest in the qudit setting as well. In particular, the results of \cref{Sec:magicstateresourcetheory} can be generalized to qudits, allowing for the definition of a measurable family of magic measures that meaningfully extend stabilizer entropies to multi-qudit quantum states.

We note that parts of this generalization have already been explored in Ref.~\cite{Turkeshi_2025,magni2025anticoncentrationcliffordcircuitsbeyond}, where the authors showed that generalized stabilizer entropies for qutrits ($q=3$), defined via the expectation values of primitive Weyl monomials, are not only valid magic measures (according to our definition in \cref{Sec:magicstateresourcetheory}) but also satisfy the stronger condition of being magic monotones.

For stabilizer property testing, similar conclusions to those presented in \cref{sec:propertytesting} can be drawn in the qudit context. In particular, we can determine, as a function of the local dimension $q$, the minimal number of copies required to achieve a non-exponentially vanishing success probability for testing stabilizerness. This critical threshold depends on whether the commutant—beyond the permutation operators—is spanned by unitary or projective Weyl monomials, just as discussed in \cref{sec:propertytesting}.

As an example, we can see why qudit Cliffords are not $3$-designs. This follows as non-trivial Weyl monomials exist
\begin{equation}
    \Omega(1,1,-2)=\frac{1}{d} \sum_{P \in \mathbb{P}_n} P \otimes P \otimes P^{\dagger2},
\end{equation}
For qutrits ($q=3$), we only have this $\Omega(1,1,-2)=\Omega(1,1,1)$ and permutations.  As $\Omega(1,1,1)$ is a projector, testing stabilizer properties using only $k=3$ copies is not possible for qutrits, meaning stabilizer states form an approximate state $3$-design. In contrast, for local dimension $q=5$, the same monomial $\Omega(1,1,-2)$ becomes unitary, making testing feasible with just three copies.

\section*{Acknowledgments} We thank Marcel Hinsche, Gerard Aguilar, Daniel Miller, Jayant Raul Rao, David Gross, Alioscia Hamma for useful discussions. This work has been supported by the BMBF (FermiQP, MuniQC-Atoms, DAQC), the BMWK (EniQmA), QuantERA (HQCC), the Bavarian Ministry of Science and the Arts through the Munich Quantum Valley (MQV-K8), Berlin Quantum, the Cluster of Excellence MATH+, Berlin Quantum, Millenion, PASQuanS2.1, the European Research Council (DebuQC) and the DFG (CRC 183). S.F.E.O. acknowledges support from PNRR MUR project PE0000023-NQSTI.

\appendix

\setcounter{secnumdepth}{2}
\setcounter{equation}{0}
\setcounter{figure}{0}
\setcounter{table}{0}
\setcounter{section}{0}
\renewcommand{\thetable}{A\arabic{table}}
\renewcommand{\theequation}{A\arabic{equation}}
\renewcommand{\thefigure}{A\arabic{figure}}
\renewcommand{\thesection}{A.\arabic{section}} 

\begin{center}
\textbf{\large Appendix}
\end{center}

\section{Examples of Haar average over the unitary group step by step}\label{app:haaraverage}
In this section, we are going to show an example for relatively small order $k$ of the $k$-fold twirling in \cref{twirlingofoperatorOunitary}. For a more comprehensive reference and tutorial, see also Refs.~\cite{zhang_matrix_2014,Mele_2024}. Let us start with the following definition
\be
\Gamma^{(k)}_{\haar}\coloneqq\int \de U U^{\otimes k}\otimes U^{\dag\otimes k}
\ee
It is useful to write down the relation between the $k$-fold channel  $\Phi^{(k)}_{\haar}(\cdot)$ and the $\Gamma^{(k)}_{\haar}$ as
\be
\Phi_{\haar}^{(k)}(O)=\tr_{1,\ldots,k}(\hat{T}\Gamma_{\haar}^{(k)}O\otimes \mathbb{1}).
\label{relationfoldchanneltogamma}
\ee
Note, indeed, that $\Gamma_{\mathcal{E}}^{(k)}$ is an operator on $\mathcal{H}^{\otimes 2k}$. Let us compute the operator $\Gamma_{\haar}^{(k)}\in\mathcal{B}(\mathcal{H}^{\otimes 2k})$. Let us recall that 
\be
\Phi_{\haar}^{(k)}(O)=\sum_{\pi,\sigma}(\boldsymbol{\Lambda}^{-1})_{\pi,\sigma}\tr(T_{\pi}O)T_{\sigma}=\sum_{\pi,\sigma}(\boldsymbol{\Lambda}^{-1})_{\pi,\sigma}\tr_{1,\ldots, k}(T_{\pi}\otimes T_{\sigma}O\otimes \mathbb{1}).
\label{calculationGamma1}
\ee
By comparing \cref{calculationGamma1,relationfoldchanneltogamma} which are valid for every $O\in \mathcal{B}(\mathcal{H}^{\otimes k})$, it is easy to be convinced that 
\be
\Gamma^{(k)}_{\haar}=\hat{T}\sum_{\pi,\sigma}(\boldsymbol{\Lambda}^{-1})_{\pi,\sigma}T_{\pi}\otimes T_{\sigma}
\label{expressionforGammaintermsofpermutations}
\ee
where $\hat{T}$ has been defined above as a product of elementary swaps $\hat{T}=\prod_{j=1}^{k}T_{k,k+j}$. To check the validity of \cref{expressionforGammaintermsofpermutations}, it suffices to note that $\hat{T}^2=\mathbb{1}$.

As a useful application, let us begin by considering the Haar $1$-fold channel in \cref{twirlingofoperatorOunitary}. The symmetric group $S_1=\{e \}$ is filled by just the identity. Therefore, the matrix $\boldsymbol{\Lambda}$ is just a number equal to $\Omega=\tr(\mathbb{1})=d$; consequently, we only have one Weingarten function $W=d^{-1}$ and thus it is straightforward to write the operator $\Gamma_{\haar}^{(1)}$, defined below in \cref{expressionforGammaintermsofpermutations}, as
\be
\Gamma_{\haar}=T\left(\frac{\mathbb{1}\otimes\mathbb{1}}{d}\right)=\frac{T}{d}
\ee
where $T$ is the swap operator between the two copies of $\mathcal{H}$. Now, consider the operator $A\in\mathcal{B}(\mathcal{H})$. The $1$-fold channel applied in $A$ reads
\be
\Phi_{\haar}^{(1)}(A)=\frac{\tr(A)}{d}\mathbb{1}.
\ee

The $2$-fold channel, i.e., $k=2$, is the first non-trivial example of a Haar channel that has found an enormous number of applications. In this section, we will explore just a couple of those. 

The symmetric group $S_2=\{e,(12)\}$ is given by the identity $\mathbb{1}$ and the swap operator $T$ on $\mathcal{H}^{\otimes 2}$. The matrix $\boldsymbol{\Lambda}$ reads
\be
\boldsymbol{\Lambda}=\begin{pmatrix}
\tr(\mathbb{1})& \tr(T)\\ \tr(T)& \tr(TT)
\end{pmatrix}=\begin{pmatrix}
d^2& d\\ d& d^2
\end{pmatrix}
\ee
and the inverse, whose components are the Weingarten functions, reads
\be
\boldsymbol{\Lambda}^{-1}=\frac{1}{d^2-1}\begin{pmatrix}
1& -\frac{1}{d}\\ -\frac{1}{d}& 1
\end{pmatrix}.
\ee
Thus, the operator $\Gamma_{\haar}^{(2)}$, defined later on in \cref{expressionforGammaintermsofpermutations}, reads
\be
\Gamma_{\haar}^{(2)}=\frac{T_{(13)(24)}}{d^2-1}\left(\mathbb{1}\otimes \mathbb{1}+T\otimes T-\frac{\mathbb{1}\otimes \mathbb{1}+T\otimes T}{d}\right)
\ee
where $T_{(13)(24)}$ is the permutation unitary operator corresponding to the cycle $(13)(24)\in S_{4}$. Recall in fact that the operator $\Gamma_{\haar}^{(2)}\in\mathcal{B}(\mathcal{H}^{\otimes 4})$. Multiplying  permutations one gets $T_{(13)(24)}T_{(12)}T_{(34)}=T_{(14)(23)}$ and thus
\be
\Gamma_{\haar}^{(2)}=\frac{1}{d^2-1}\left(T_{(13)(24)}+T_{(14)(23)}-\frac{T_{(13)(24)}+T_{(14)(23)}}{d}\right)
\ee
while the $2$-fold channel applied on an operator $A\in\mathcal{B}(\mathcal{H}^{\otimes2})$ reads
\be
\Phi_{\haar}^{(2)}(A)=\frac{1}{d^2-1}\left(\tr(A)\mathbb{1} -\frac{\tr(A)}{d}T-\frac{\tr(TA)}{d}\mathbb{1}+\tr(AT)T\right).
\ee

\begin{example}[Average subsystem entropy]\label{exampleaverageentropy} Over many application of the Haar orbit of pure states, we find the computation of average entropy on subsystem of \textit{typical states}. With the tool of the Weingarten calculus (cf.\  \cref{cor:weingartencalculushaar}), we can compute it right away. Let $A|B$ be a bipartition of the system of $n$ qubits, with $n_A, n_B$ qubits respectively. Let $\ket{\psi}$ an $n$-qubit state vector, then the purity of the reduced density matrix $\rho_A\coloneqq\tr_B\ketbra{\psi}$ is defined as
\be
\pur(\psi_A)\coloneqq\tr(\psi_A^{2 })=\tr(T_{A}\psi^{\otimes 2})
\ee
Let us compute the average purity by integrating over $\de\psi$
\be
\int\de\psi \pur(\psi_A)=\frac{1}{\tr(\mathbb{1}+T)}\tr(T_{A}\Pi_{\sym})=\frac{1}{\tr(\Pi_{\sym})}\tr(T_{A}+T_{B})=\frac{d_{A}^{2}d_B+d_Ad_B^2}{d(d+1)}
\ee
where we have denoted $d_{A(B)}=2^{n_{A(B)}}$. Note that we have used the fact that permutation operators are factorized on qubits, i.e., $T=T_{A}T_{B}$, where $T_{A(B)}$ acts non-identically on $A(B)$. If one considers the subsystem $A$ to be much smaller than $B$, i.e., $d_A\ll d_B$, one has
\be
\int\de\psi\pur(\psi_A)=\frac{1}{d_A}+O(d^{-1})
\ee
which tells us that, on average, states are maximally entangled, up to an error exponentially small in the total number of qubits $n$.
\end{example}

\begin{example}[Twirling of a quantum channel]\label{Sec:application2design} An interesting and non-trivial application of the $2$-fold channel is the following: consider a completely positive trace preserving map $\mathcal{K}$ with kraus operators $A_k$, i.e., $\mathcal{K}(\cdot)=\sum_{k}A_{k}(\cdot)A_{k}^{\dag}$. In the same fashion as the operator $\Gamma_{\haar}^{(1)}$, we can define
\be
\tilde{\mathcal{K}}=\sum_{k}A_{k}\otimes A_{k}^{\dag}
\label{application2designcptpmap2}
\ee
and let $\tilde{\mathcal{K}}$ act on states $\rho\in\mathcal{H}$ via the following isomorphism
\be
\mathcal{K}(\rho)\mapsto \tr_{1}(T\tilde{\mathcal{K}}\rho\otimes\mathbb{1}).
\label{application2designcptpmap}
\ee
Note that $\tilde{\mathcal{K}}$ is an operator on $\mathcal{H}^{\otimes 2}$, thus we can apply the $2$-fold Haar channel $\Phi_{\haar}^{(2)}(\cdot)$ 
\be
\Phi_{\haar}^{(2)}(\tilde{\mathcal{K}})=\frac{1}{d^2-1}\left(\tr(\tilde{\mathcal{K}})[\mathbb{1}-T/d]-\mathbb{1}+Td\right)
\ee
where we have used  the fact that $\tr(T\tilde{\mathcal{K}})=\sum_{k}\tr(A_{k}^{\dag}A_{k})=\tr(\mathbb{1})=d$ for every CPTP map. Let us act with $\Phi_{\haar}^{(2)}(\tilde{\mathcal{K}})$ on a state $\rho$ as in \cref{application2designcptpmap}
\be
\tr_{1}(T\Phi_{\haar}^{(2)}(\tilde{\mathcal{K}})\rho\otimes\mathbb{1})=\frac{\tr(\tilde{\mathcal{K}})-1}{d^2-1}\rho+\left(1-\frac{\tr(\tilde{\mathcal{K}})-1}{d^2-1}\right)\frac{\mathbb{1}}{d}
\label{everychannelisdepolirizing2designapplication}
\ee
that is a depolarizing channel $\mathcal{D}(\rho)=p\rho+(1-p)\mathbb{1}/d$ with $p= \frac{\tr(\tilde{\mathcal{K}})-1}{d^2-1}$. We just found out that, on average, over Kraus operators $A_{k}\mapsto UA_kU^{\dag}$ every channel is a depolarizing channel with probability depending on the trace of $\tilde{\mathcal{K}}$ in \cref{application2designcptpmap2}.
\end{example}

Let us now go a step further and consider the $3$-fold channel. The symmetric group $S_{3}$ contains $3!=6$ permutation operators $S_{3}= \{e,(12),(23),(13),(123),(132)\}$. Denote $T_{\sigma}$ the unitary representation of $\sigma$ on $\mathcal{H}^{\otimes 3}$ and let us compute the matrix $\boldsymbol{\Lambda}$
\be
\scalebox{0.8}{
$
\boldsymbol{\Lambda}=\begin{pmatrix}
    \tr(\mathbb{1})&\tr(T_{(12)})&\tr(T_{(23)})&\tr(T_{(13)})&\tr(T_{(123)})&\tr(T_{(132)})\\
    \tr(T_{(12)})&\tr(\mathbb{1})&\tr(T_{(132)})&\tr(T_{(123)})&\tr(T_{(13)})&\tr(T_{(23)})\\
    \tr(T_{(23)})&\tr(T_{(123)})&\tr(\mathbb{1})&\tr(T_{(132)})&\tr(T_{(12)})&\tr(T_{(13)})\\
\tr(T_{(13)})&\tr(T_{(132)})&\tr(T_{(123)})&\tr(\mathbb{1})&\tr(T_{(23)})&\tr(T_{(12)})\\
    \tr(T_{(123)})&\tr(T_{(23)})&\tr(T_{(13)})&\tr(T_{(12)})&\tr(T_{(132)})&\tr(\mathbb{1})\\
    \tr(T_{(132)})&\tr(T_{(13)})&\tr(T_{(12)})&\tr(T_{(23)})&\tr(\mathbb{1})&\tr(T_{(123)})\\
\end{pmatrix}=\begin{pmatrix}
    d^3&d^2&d^2&d^2&d&d\\
    d^2&d^3&d&d&d^2&d^2\\
    d^2&d&d^3&d&d^2&d^2\\
d^2&d&d&d^3&d^2&d^2\\
    d&d^2&d^2&d^2&d&d^3\\
    d&d^2&d^2&d^2&d^3&d\\
\end{pmatrix}
$},
\ee
where we have  computed the above matrix 
following the rule sketched in \cref{asymptoticweingarten}, i.e., $\tr(T_{\sigma})=d^{|\sigma|}$ for $\sigma\in S_{3}$ where $|\sigma|$ is the length of $\sigma$, the number of cycles in the cycle-representation. For example, we have $|e|=|(1)(2)(3)|=3$ or $|(123)|=|(132)|=2$ because $(123)=(12)(23)$ and $(132)=(23)(12)$. The inverse reads
\be
\boldsymbol{\Lambda}^{-1}=\frac{1}{d(d^3-4)(d^2-1)}\left(
\begin{array}{cccccc}
 d^2-2 & -d & -d & -d & 2 & 2 \\
 -d & d^2-2 & 2 & 2 & -d & -d \\
 -d & 2 & d^2-2 & 2 & -d & -d \\
 -d & 2 & 2 & d^2-2 & -d & -d \\
 2 & -d & -d & -d & 2 & d^2-2 \\
 2 & -d & -d & -d & d^2-2 & 2 \\
\end{array}
\right).
\ee
We avoid explicitly writing the operator $\Gamma_{\haar}^{(3)}$ and the twirling $\Phi_{\haar}^{(3)}$. We instead show an application by computing the average value of a $6$-point OTOC.

\begin{example}[Average $6$-point OTOC] Consider $P,Q\in\mathbb{P}_n$ $2$ non-identity Pauli operator and the following correlation function
\be
\otoc_{6}(U)=\frac{1}{d}\tr(P(U)QQ(U)PQ(U)P(U)PQ)
\ee
where $P(U)= U^{\dag}PU$. Note that for $U=\mathbb{1}$ one has $\otoc_{6}(U)=1$. Let us use the following trick $\tr(ABC)=\tr(T_{(132)}A\otimes B\otimes C)=\tr(T_{(123)}A\otimes C\otimes B)$ and write $\otoc_6(U)$ as
\be
\otoc_{6}(U)=\frac{1}{d}\tr(T_{(132)}U^{\dag\otimes 3}[P\otimes Q\otimes QP ]U^{\otimes 3}[Q\otimes P\otimes PQ]).
\ee
Let us average over the realizations of the unitary operator $U$ according to the Haar distribution
\be
\int\de U\otoc_{6}(U)=\frac{1}{d}\tr(T_{(132)}\Phi_{\haar}^{(3)}(P\otimes Q\otimes QP )Q\otimes P\otimes PQ ).
\ee
In order to compute the above twirling using the matrix formalism, and thus $\boldsymbol{\Lambda}^{-1}$, let us compute the following $6$-components vectors $v_{\sigma}=\tr(T_{\sigma}P\otimes Q\otimes QP)$ and $w_{\sigma}=\tr(T_{\sigma}Q\otimes P\otimes PQ)$. First note that $v_{\sigma}=w_{\sigma}$ for two non-identity Pauli operators. Thus
\be
\boldsymbol{v} =
\begin{pmatrix}
\tr(P\otimes Q\otimes QP)\\
\tr(T_{(12)}P\otimes Q\otimes QP)\\
\tr(T_{(23)}P\otimes Q\otimes QP)\\
\tr(T_{(13)}P\otimes Q\otimes QP)\\
\tr(T_{(123)}P\otimes Q\otimes QP)\\
\tr(T_{(132)}P\otimes Q\otimes QP)
\end{pmatrix}=\begin{pmatrix}
0\\
0\\
0\\
0\\
d\chi(P,Q)\\
d\chi(P,Q)
\end{pmatrix}
\ee
where $\chi(P,Q)=\frac{1}{d}\tr(PQPQ)$ and $\chi(P,Q)=1$ for commuting Pauli operators $[P,Q]=0$ and $-1$ for anticommuting Pauli operators $\{P,Q\}=0$. We can write the average $\otoc_{6}$ as
\be
\int\de U\otoc_{6}(U)=\frac{1}{d}\sum_{\pi,\sigma}\boldsymbol{\Lambda}^{-1}_{\pi,\sigma}v_{\sigma}v_{\pi}=\frac{2d^2}{(d^2-4)(d^2-1)}
\ee
cf.\ also the results in Ref.~\cite{roberts_chaos_2017}.
\end{example}

We now consider the $4$-fold channel, i.e., $k=4$. Being the symmetric group of $4$ objects of $4!$ elements, calculations with the $4$-fold channel are long and tedious. However, in this section, we outline a systematic and general method to conduct such calculation by reducing the possibility of making typos around. First, let us list the permutations in $S_{4}$ ordered by conjugacy class
\be
&\text{1-cycle:}\quad &e\\
&\text{2-cycle:} &(12), (13), (14), (23), (24), (34)\\
&\text{3-cycle:}&(123), (132), (142), (234), (243), (132), (124), (143)\\
&\text{4-cycle:}&(1234), (1324), (1423), (1342), (1243),(1432)\\
&\text{22-cycle:}&(12)(34), (13)(24), (14)(32)
\ee

For clarity, we omit to report the $\boldsymbol{\Lambda}$ matrices for $k=4$ due to their size, but it is possible to find them in the \textit{.nb} file attached to the manuscript together with its inverse. 
Let us apply the above formalism to see in what conditions a channel can be well approximated with a depolarizing channel. 

\begin{example}[Approximating a channel with the depolarizing channel] In  \cref{Sec:application2design}, we showed that on average every channel is a depolarizing channel with probability depending on the trace of $\tilde{\mathcal{K}}=\sum_{k}A_{k}\otimes A_{k}^{\dag}$. Let us compute the fluctuations around the average evaluated for an expectation value of a Hermitian operator $A$ with null trace (for simplicity). We need to evaluate the following average square distance from the average (variance)
\be
\tr(T_{(12)(34)}\Phi_{\haar}^{(4)}(\mathcal{K}^{\otimes 2})(\rho\otimes A)^{\otimes 2})-\tr(T_{(12)}\Phi_{\haar}^{(4)}(\mathcal{K})\rho\otimes A)^{2}
\label{4designapplicaation}
\ee
where $\tilde{\mathcal{K}}$ is defined in \cref{Sec:application2design}, see \cref{application2designcptpmap2}. Let us specialize the discussion to a specific, although quite general, quantum channel referred to as \textit{Pauli channel}~\cite{flammia_efficient_2020}
\be
\mathcal{K}(\rho)=\sum_{P}k_{P}P\rho P
\ee
where $P$ are Pauli operators $\sum_{P}k_{P}=1$ to preserve the trace. Although the second term in \cref{4designapplicaation} has already been computed in \cref{everychannelisdepolirizing2designapplication}, let us consider the first term and define the vectors $\boldsymbol{v},\boldsymbol{w}$ with components in $S_{4}$ defined as $v_{\sigma}=\tr(\tilde{\mathcal{K}}^{\otimes2}T_{\sigma})$ and $w_{\sigma}=\tr(T_{(12)(34)}T_{\sigma}(\rho\otimes\mathbb{1})^{\otimes2}A^{\otimes 2})$. In terms of $\boldsymbol{v},\boldsymbol{w}$ we have
\be
\tr(T_{(12)(34)}\Phi_{\haar}^{(4)}(\mathcal{K}^{\otimes 2})(\rho\otimes\mathbb{1})^{\otimes 2}A^{\otimes 2})=\sum_{\pi,\sigma}\boldsymbol{\Lambda}^{-1}_{\pi,\sigma}v_{\sigma}w_{\pi}.
\ee
Let us compute the first vector
\be
\boldsymbol{v}&=(d^4k_{\mathbb{1}}^2,d^3k_{\mathbb{1}}, d^3k_{\mathbb{1}}^2, d^{3}k_{\mathbb{1}}^2,d^{3}k_{\mathbb{1}}^2,d^{3}k_{\mathbb{1}}^2,d^{3}k_{\mathbb{1}},d^2k_{\mathbb{1}},d^2k_{\mathbb{1}},\\&d^2k_{\mathbb{1}},d^2k_{\mathbb{1}},d^2k_{\mathbb{1}},d^2k_{\mathbb{1}},d^2k_{\mathbb{1}},d^2k_{\mathbb{1}},s,s,s,d,d,d,d^2,d^2\pur(k),d^2\pur(k)),
\ee
where $s=\sum_{P,Q}k_{P}k_{Q}\tr(PQPQ)$, and $\pur(k)=\sum_{P}k_{P}^{2}$ is the purity of the probability distribution $k$.
\be
\boldsymbol{w}=(&\tr^2(\rho A),0,\tr^2(\rho A),\tr(\rho A^2),\tr(\rho A^2),\tr^2(\rho A),0,0,\tr(\rho A^2),\tr(\rho A^2),0,\tr(\rho A^2),\tr(\rho A^2),0,\tr(\rho A^2),\\& 0,\tr(\rho A^2),\tr(\rho A^2),0,0,\tr(A^2),0,\tr^2(\rho A),\tr(A^2)),
\ee
where we have taken into account that $\rho^2=\rho$ and $\tr(A)=0$. We obtain
\be
\tr(T_{(12)(34)}\Phi_{\haar}^{(4)}(\mathcal{K}^{\otimes 2})(\rho\otimes A)^{\otimes 2})-\tr(T_{(12)}\Phi_{\haar}^{(4)}(\mathcal{K})\rho\otimes A)^{2}=\tr^2(A\rho)(k_{\mathbb{1}}-k_{\mathbb{1}}^2)+O(d^{-1}).
\ee
We thus conclude that, on average, every channel behaves like a depolarizing channel and the fluctuations around the average are small in two cases: $(i)$ in the case of an almost unitary channel $k_{\mathbb{1}}-k_{\mathbb{1}}^{2}\ll 1$ or $(ii)$ for noisy channels $k_{\mathbb{1}}\ll1$.

\end{example}
\end{document}